\documentclass[12pt]{article}

\usepackage[titlenumbered,ruled]{algorithm2e}
\usepackage{bm}
\usepackage{amsmath}
\usepackage{graphicx,psfrag,epsf}
\usepackage{enumerate}
\usepackage{natbib}
\usepackage{url} % not crucial - just used below for the URL 
\usepackage{enumitem}
\usepackage{subcaption} % Required for subfigures
\usepackage{mathtools}
\usepackage{booktabs}
\usepackage{multirow}
\usepackage{array}

\usepackage{algpseudocode}
\usepackage{float} % Allows use of [h] float specifier
\usepackage{color}
\usepackage{amsmath}
\usepackage{amssymb}
\usepackage{amsthm}

\usepackage[colorlinks=true,
            linkcolor=blue,    % 章节、公式等的颜色
            citecolor=blue,    % 文献引用的颜色
            urlcolor=blue,     % 超链接的颜色
            anchorcolor=blue]{hyperref}

\newcommand{\blind}{1}

\DeclareMathOperator*{\argmin}{arg\,min}

\def\sign{\operatorname{sign}}
\def\T{\top}

\def\E{\mathrm{E}}
\def\cov{\mathrm{cov}}
\def\var{\mathrm{var}}

\def\I{\mathrm{I}}
\def\II{\mathrm{II}}
\def\III{\mathrm{III}}

\def\P{\Pr}
\def\tr{\mathrm{tr}}

\def\mo{\mathrm{o}}
\def\mO{\mathrm{O}}

\def\mc{\mathrm{c}}
\def\mC{\mathrm{C}}
\def\md{\mathrm{d}}
\def\me{\mathrm{e}}

\def\mr{\mathrm{r}}

\def\RR{\mathbb{R}}

\def\tilde{\widetilde}
\def\hat{\widehat}

\def\polyLog{\operatorname{polyLog}}
\def\prox{\operatorname{prox}}

\def\bone{\mathbf{1}}
\def\bzero{\mathbf{0}}

\def\bfa{\boldsymbol{a}}
\def\bfA{\boldsymbol{A}}
\def\bfb{\boldsymbol{b}}
\def\bfB{\boldsymbol{B}}

\def\bfb{\boldsymbol{b}}

\def\bfD{\boldsymbol{D}}

\def\bfM{\boldsymbol{M}}

\def\bfq{\boldsymbol{q}}

\def\bfS{\boldsymbol{S}}

\def\bfu{\boldsymbol{u}}

\def\bfv{\boldsymbol{v}}
\def\bfV{\boldsymbol{V}}
\def\bfw{\boldsymbol{w}}
\def\bfx{\boldsymbol{x}}
\def\bfX{\boldsymbol{X}}

\def\bfz{\boldsymbol{z}}

\def\bbeta{\boldsymbol{\beta}}
\def\bdelta{\boldsymbol{\delta}}
\def\bepsilon{\boldsymbol{\epsilon}}
\def\boldeta{\boldsymbol{\eta}}
\def\bgamma{\boldsymbol{\gamma}}

\def\bSigma{\boldsymbol{\Sigma}}

\def\cR{\mathcal{R}}
\def\cV{\mathcal{V}}
\def\cX{\mathcal{X}}

\begin{document}

\def\spacingset#1{\renewcommand{\baselinestretch}%
{#1}\small\normalsize} \spacingset{1}

\newtheorem{assumption}{Assumption}
\newtheorem{proposition}{Proposition}
\newtheorem{theorem}{Theorem}
\newtheorem{remark}{Remark}
\newtheorem{lemma}{Lemma}
\newtheorem{corollary}{Corollary}
\newtheorem{definition}{Definition}
\newtheorem{example}{Example}
\newtheorem{assumptionalt}{Assumption}[assumption]
\newenvironment{assumptionp}[1]{
  \renewcommand\theassumptionalt{#1}
  \assumptionalt
}{\endassumptionalt}

%%%%%%%%%%%%%%%%%%%%%%%%%%%%%%%%%%%%%%%%%%%%%%%%%%%%%%%%%%%%%%%%%%%%%%%%%%%%%%

\if1\blind
{
  \title{\bf Transfer Learning for Moderate-Dimensional Ridge-Regularized Robust Linear Regression}
  \author{Lingfeng Lyu, Xiao Guo\thanks{Corresponding author: xiaoguo@ustc.edu.cn} \ and Zongqi Liu\bigskip \\
    \small 
    Department of Statistics and Finance, University of Science and Technology of China}
    \date{ }
  \maketitle
} \fi

\if0\blind
{
  \bigskip
  \bigskip
  \bigskip
  \begin{center}
    {\LARGE\bf Transfer Learning for Moderate-Dimensional Ridge-Regularized Robust Linear Regression}
\end{center}
  \medskip
} \fi

\bigskip

\begin{abstract}
This paper studies transfer learning for ridge-regularized robust linear regression in the moderate-dimensional regime, where the number of predictors is of the same order as the sample size and the regression coefficients are not assumed to be sparse.
		We propose Trans-RR, which combines a robust ridge estimator from a source study with a robust ridge correction based on the target study.
		Under mild assumptions, we characterize the asymptotic estimation error of the proposed estimator and show that leveraging source data can substantially improve estimation accuracy relative to the traditional single-study ridge-regularized robust estimator.
		Simulation results and a real-data analysis support the theory and illustrate both positive and negative transfer as the discrepancy between the source and target studies varies.
\end{abstract}

\noindent{\it Keywords:}
Moderate dimension; Non-sparse; Robust regression; Transfer learning

\vfill

\newpage
\spacingset{1.45} % DON'T change the spacing!
\section{Introduction}
\label{sec:intro}

Modern statistical analyses often involve several related datasets collected from different studies, populations, or experiments.
A basic question is how to use these datasets together to improve prediction and estimation for a study of interest.
Transfer learning addresses this question by borrowing useful information from related tasks.
It is now standard in machine learning and has been successful in applications such as natural language processing, remote sensing, and computer vision \cite{chen2022analysis,MA2024113924,gopalakrishnan2017deep}.
In statistics, it has also become an important tool for improving performance in multi-study problems.

In many contemporary applications, regression problems arise in a moderate-dimensional regime, where the number of predictors is of the same order as the sample size and sparsity is often not a reasonable structural assumption.
At the same time, heavy-tailed errors or outlying observations may substantially affect estimation accuracy.
This setting arises naturally in multisite metabolomics studies, where many metabolites are measured simultaneously, between-cohort heterogeneity is often present, and outliers or other data contamination can be an important practical concern \citep{liu2024modeling}.
These features make transfer learning particularly challenging: related source studies may contain useful information, but effective borrowing requires methods that are robust to contamination and suitable for moderate-dimensional, non-sparse settings.

Existing theoretical work on transfer learning covers several important settings.
For linear regression, \cite{chen2015data} studied data-enriched regression in a fixed-dimensional setting, and \cite{pmlr-v139-tripuraneni21a} analyzed linear models with a shared low-dimensional representation across tasks.
In the high-dimensional sparse regime, \cite{bastani2021predicting} considered transfer learning with proxy data, while \cite{li2022transfer} established prediction and estimation guarantees for sparse linear regression.
For high-dimensional generalized linear models, \cite{Tian2023Transfer} and \cite{Li2024Estimation} developed transfer learning methods with theoretical guarantees.
Transfer learning has also been studied for nonparametric classification \citep{10.1214/20-AOS1949}, nonparametric regression \citep{cai2024transfer}, and settings with unreliable source data \citep{fan2023robust}.
However, these works do not apply to the moderate-dimensional robust setting considered here.

Robust regression has, in contrast, been extensively studied in the single-study setting.
For classical M-estimation, a substantial body of work has established asymptotic results when $p/n \to 0$ while $p \to \infty$; see, for example, \cite{huber1973robust,portnoy1984asymptotic,portnoy1985asymptotic,portnoy1986asymptotic,portnoy1987central,mammen1989asymptotics}.
When $p/n \to \kappa \in (0,1)$, robust regression has a qualitatively different asymptotic behavior \citep{el2013robust}.
When $p/n \to \kappa > 0$, \cite{el2018impact} proposed a ridge-regularized robust estimator to address the nonexistence of the ordinary robust estimator.
However, these results do not directly extend to the transfer learning framework when related source data are available.

Motivated by these gaps, we study transfer learning under a moderate-dimensional robust linear model with one target study and one related source study.
We allow the predictor dimension to be of the same order as the target and source sample sizes, do not impose sparsity assumptions on the regression coefficients, and permit heavy-tailed errors.
Within this framework, our goal is to leverage information from the source study to improve the estimation performance of traditional single-task approaches.

Our main contribution is to develop and analyze a transfer learning method for moderate-dimensional ridge-regularized robust linear regression.
First, we propose Trans-RR, a transfer learning procedure for ridge-regularized robust linear regression.
It combines a robust ridge fit on the source study with a robust ridge correction on the target study and is designed for non-sparse coefficients.
Second, we derive an asymptotic characterization of the $\ell_2$ risk of the resulting estimator under mild assumptions on the design and error distributions.
The theory shows how auxiliary source data can improve estimation accuracy relative to the single-study ridge-regularized robust estimator, while also clarifying the possibility of negative transfer.
Third, we conduct simulation studies and a real-data analysis to examine the practical performance of the proposed method.

The rest of the paper is organized as follows.
Section \ref{sec_method} introduces the model setup and the proposed algorithm.
Section \ref{sec_theory} presents the technical assumptions and theoretical results.
Section \ref{sec_simulation} presents simulation studies to evaluate the performance of the proposed method.
Section \ref{sec_realdata} applies the proposed method to a real-data example.
The proofs of the main theoretical result as well as the lemmas are included in the Supplementary Material.

\textbf{Notation.}
Denote by $\I_m$ the $m \times m$ identity matrix.
Let $\bzero_m \in \RR^m$ and $\bone_m \in \RR^m$ be the vectors of zeros and ones, respectively.
For a vector $\bfv=(v_1, \ldots, v_m)^\T$, the $\ell_2$ norms are $\|\bfv\|=(\sum_{i=1}^m v_i^2)^{1 / 2}$, whereas $\|v\|_{\infty}=\max _{1 \leq k \leq p}|v_k|$.
For an $m \times m$ matrix $\bfA=\{a_{i j}\}_{1 \leq i, j \leq m}$, denote by $\lambda_{\max }(\bfA)$ and $\lambda_{\min }(\bfA)$ the maximum and minimum eigenvalues of $\bfA$, respectively.
The $L_2$ norm of $\bfA$ is defined as $\|\bfA\|=\{\lambda_{\max }(\bfA^{\T} \bfA)\}^{1 / 2}$.

\section{Methodology} \label{sec_method}

\subsection{Problem setup}

We consider a transfer learning problem with one target study and one related source study.
In the target study, we observe $n$ samples $\boldsymbol{x}_i \in \mathbb{R}^p$ and $y_i \in \RR$, $i=1, \ldots, n$, generated from
\begin{equation}
    \label{eq:target_model}
    y_i=\bfx_i^{\T} \bbeta_0 + \epsilon_i,
\end{equation}
where $\epsilon_i, i=1, \ldots, n$, are independently distributed errors and $\bbeta_0 \in \RR^p$ is the unknown regression parameter of interest.

In addition to the target data, we observe $n_1$ samples $(\bfx_i^{(1)}, y_i^{(1)})$, $i = 1,\ldots, n_1$, from the source study satisfying
\begin{equation}
    \label{eq:source_model}
    y_i^{(1)}=(\bfx_i^{(1)})^{\T} \bfw_0 + \epsilon_i^{(1)},
\end{equation}
where $\bfw_0 \in \mathbb{R}^p$ is the regression parameter for the source study and $\epsilon_i^{(1)}, i=1, \ldots, n_1$, are independently distributed errors.
Throughout, both $\epsilon_i$ and $\epsilon_i^{(1)}$ are allowed to be heavy-tailed.

We work in the moderate-dimensional regime, where $p$ is of the same order as both $n$ and $n_1$, with $p/n \to \kappa \in (0,\infty)$ and $p/n_1 \to \kappa_1 \in (0,\infty)$.
We also do not impose sparsity assumptions on $\bbeta_0$ or $\bfw_0$.
Let $\bdelta_0 = \bbeta_0 - \bfw_0$ denote the source--target discrepancy and let $h = \|\bdelta_0\|$ measure its size.
Smaller values of $h$ correspond to a more informative source study and hence a greater potential for useful transfer.

\subsection{Trans-RR algorithm}

Based on this setup, we now introduce the proposed transfer learning algorithm, referred to as Trans-RR.
Following the general two-stage strategy used in \cite{bastani2021predicting}, \cite{li2022transfer}, and \cite{Tian2023Transfer}, the core of the procedure consists of two estimation steps.
The first step estimates the source coefficient vector $\bfw_0$ from the source data.
The second step estimates the source--target discrepancy $\bdelta_0 = \bbeta_0 - \bfw_0$ from the target data, after which the two estimates are combined.
Algorithm \ref{algo1} summarizes the procedure.

\vspace{0.1in}\begin{algorithm}
%\NoCaptionOfAlgo
 \SetKwInOut{Input}{Input}
    \SetKwInOut{Output}{Output}
\SetAlgoLined
 \Input{target data $\{ (\bfx_i, y_i) \}_{i=1}^{n}$ and source data $\{ (\bfx_i^{(1)}, y_i^{(1)}) \}_{i=1}^{n_1}$}
 \Output{the estimated coefficient vector $\hat{\bbeta}$}

\underline{Step 1}. Compute
\begin{equation}
\label{eq:step_1}
\hat \bfw = \argmin_{\bfw \in \RR^p} \Big[ \frac{1}{n_1} \sum_{i=1}^{n_1} \tilde \rho \big\{ y^{(1)}_i-(\bfx^{(1)}_i)^{\T} \bfw \big\} + \frac{\tau_1}{2} \|\bfw\|^2 \Big]
\end{equation}
for some constant $\tau_1$.

\underline{Step 2}. Compute
\begin{equation}
    \label{eq:step_2}
    \hat \bdelta = \argmin_{\bdelta \in \mathbb{R}^p} \Big[\frac{1}{n} \sum_{i=1}^{n} \rho \big\{y_i-\bfx_i^{\T} (\hat \bfw + \bdelta) \big\} + \frac{\tau}{2} \|\bdelta\|^2 \Big]
\end{equation}
for some constant $\tau$.

\underline{Step 3}. Let 
\begin{equation}
    \label{eq:opt_ori}
    \hat{\bbeta}=\hat{\bfw}+\hat{\bdelta}.
\end{equation}

\underline{Step 4}. Output $\hat{\bbeta}$.

 \caption{\textbf{Trans-RR algorithm}}
 \label{algo1}
\end{algorithm}

The idea behind the construction is straightforward.
Step 1 computes a ridge-regularized robust estimator from the source study.
Step 2 then estimates the discrepancy relative to the source-stage fit by solving a second ridge-regularized robust regression problem on the target study.
The final estimator is obtained by combining these two pieces, namely $\hat \bbeta=\hat \bfw+\hat \bdelta$.
The main difference between our procedure and those of \cite{bastani2021predicting,li2022transfer,Tian2023Transfer} is that we use ridge/$\ell_2$ regularization in both steps, whereas they use $\ell_1$ regularization.
This choice is motivated by our diffuse-coefficient setting: the regression parameters $\bbeta_0$ and $\bfw_0$ have many small coordinates and are not well approximated by sparse vectors.
In this setting, lasso-based methods are not well suited to the problem, whereas ridge penalization is natural.
Another difference is that we use robust loss functions rather than the quadratic loss, which makes the procedure less sensitive to outliers and heavy-tailed errors.

\begin{remark}
When robustness to heavy-tailed errors or outliers is needed, Huber-type loss functions are natural candidates for $\rho$ and $\tilde \rho$.
Specific choices of $\rho$ and $\tilde \rho$ under our theoretical framework are discussed in Section \ref{sec_theory}.
The regularization parameters $\tau$ and $\tau_1$ may be selected by standard data-driven tuning methods such as cross-validation.
\end{remark}

% In step 1, $\hat \bfw$ is studied in \cite{el2018impact}.
% We will elaborate it in the next section.

\section{Theoretical Results} \label{sec_theory}

This section introduces the assumptions for the analysis and then presents the main asymptotic error results for Trans-RR.

\subsection{Technical assumptions}

We study the estimation error of the estimator in Algorithm \ref{algo1} under the following assumptions.
We state the assumptions separately for the target study (Assumption \ref{ass:target}) and the source study (Assumption \ref{ass:source}), since the two stages are based on different samples.
The two sets of conditions are largely parallel.

\begin{assumptionp}{A} \label{ass:target}
    \
      \begin{itemize}
        \item \textbf{A1:} $p/n \to \kappa \in (0,\infty)$.
        \item \textbf{A2:}
        Suppose $\rho$ is an even and convex function.
        Assume that $\psi = \rho'$ is bounded and $\psi'$ is Lipschitz and bounded.
        Moreover, we assume that $\sign(\psi(x)) = \sign(x)$ and that $\rho(x) \geq \rho(0) = 0$ for all $x \in \RR$.
        \item \textbf{A3:}
        Assume that there exist independent variables $\lambda_i$'s and $\cX_i$'s such that $\bfx_i=\lambda_i \cX_i$.
        Suppose that $\cX_i$'s are i.i.d. with independent entries, and they have mean $\bzero_p$ and $\cov(\cX_i)=\I_p$.
        Suppose there exist $\mc_n$ and $C_n$ that vary with $n$, where $1/\mc_n = \mO(\polyLog(n))$ and $C_n$ is bounded in $n$, such that for any convex $1$-Lipschitz function $G$ of $\cX_i$, $P(|G(\cX_i) - m_G| > t) \leq C_n \exp(-\mc_n t^2)$ holds for all $t > 0$, where $m_G$ is the median of $G(\cX_i)$.
        We require the same assumption to hold for the columns of the $n \times p$ design matrix $\cX$.
        Additionally, we assume that the coordinates of $\cX_i$ have moments of all orders, and the $k$-th moment of the entries of $\cX_i$ is assumed to be uniformly bounded independently of $n$ and $p$ for all $k$.

        \item \textbf{A4:}
        Suppose that $\lambda_i$'s are independent, with $\E(\lambda_i^2) = 1$, $\E(\lambda_i^4)$ being bounded, and $\sup_{1 \leq i \leq n} |\lambda_i|$ growing at most like $C_\lambda(\log n)^k$ for some $k$.
        $\lambda_i$'s may have finitely many possible distributions.
        \item \textbf{A5:}
        Suppose that $\epsilon_i$'s are independent and are also independent of $\cX_i$'s and $\lambda_i$'s.
        They may have finitely many possible distributions, each with a density that is differentiable, symmetric, and unimodal.
        If $f_i$ is the density of one such distribution, we assume that $\lim _{x \to \infty} x f_i(x) = 0$.
        \item \textbf{A6:}
        The fraction of occurrences for each possible combination of distributions for $(\epsilon_i, \lambda_i)$ has a limit as $n \rightarrow \infty$.
        \item \textbf{A7:}
        There exist constants $C_{\bbeta}$ and $\me > 1/3$ such that $\|\bbeta_0\| \leq C_{\bbeta}$ and $\|\bbeta_0\|_{\infty}=\mO(n^{-\me})$.
      \end{itemize}
\end{assumptionp}

\begin{assumptionp}{B}
    \label{ass:source}
    \
      \begin{itemize}
        \item \textbf{B1:}
        $p/n_1 \to \kappa_1 \in (0,\infty)$.
        \item \textbf{B2:}
        $\tilde \rho$ and $\tilde \psi$ satisfy Assumption \ref{ass:target}2.
        \item \textbf{B3:}
        $\bfx_i^{(1)}$, $\cX_i^{(1)}$'s and $\lambda_i^{(1)}$'s satisfy Assumption \ref{ass:target}3.
        \item \textbf{B4:}
        $\lambda_i^{(1)}$'s satisfy Assumption \ref{ass:target}4.
        \item \textbf{B5:}
        $\epsilon_i^{(1)}$'s, $\cX_i^{(1)}$'s and $\lambda_i^{(1)}$'s satisfy Assumption \ref{ass:target}5.
        \item \textbf{B6:}
        $\lambda_i^{(1)}$'s and $\epsilon_i^{(1)}$'s satisfy Assumption \ref{ass:target}6.
        \item \textbf{B7:}
        $\|\bfw_0\|_2$ remains bounded.
        Furthermore, $\|\bfw_0\|_{\infty}=\mO(n_1^{-\me})$, where $\me > 1 / 3$.
      \end{itemize}
\end{assumptionp}

For Assumptions \ref{ass:target}2 and \ref{ass:source}2, it is quite common in robust statistics to require $\psi$ to be bounded.
For example, the Huber loss
\begin{equation*}
    \label{eq:rho_Huber}
    \rho_H(x)= \begin{dcases}
        \frac{x^2}{2} & \text { if }  |x| \leq \delta, \\
        \delta \cdot \big(|x| - \frac{1}{2} \delta \big)  & \text { otherwise} .
    \end{dcases}
\end{equation*}
is chosen to grow linearly to infinity, which reduces the influence of outliers on the resulting regression estimator.
Although the Huber loss does not fully satisfy the assumptions because it is not differentiable at $|x| = \delta$, these assumptions hold for a smoothed approximation such as
\begin{equation}
    \label{eq:rho_eta}
    \rho_\eta(x)= \begin{dcases}
        \frac{x^2}{2} & \text { if }  |x| \leq \delta-\eta, \\
        \Big(\delta - \frac{\eta}{2}\Big) \cdot |x| + \frac{(\delta - |x|)^3}{6\eta} + C_\rho & \text { if } |x| \in(\delta-\eta, \delta), \\
        \Big(\delta - \frac{\eta}{2}\Big) \cdot |x| + C_\rho & \text { if } |x| \geq \delta,
    \end{dcases}
\end{equation}
where $C_\rho = - \eta^2/6 + \eta \delta / 2 - \delta^2/2$.
The corresponding $\psi_\eta$ is given by
\begin{equation}
    \psi_\eta(x)= \begin{dcases}
        x & \text { if }  |x| \leq \delta-\eta, \\
        \sign(x) \cdot \Big\{\delta - \frac{\eta}{2} - \frac{(\delta-|x|)^2}{2 \eta} \Big\} & \text { if } |x| \in(\delta-\eta, \delta), \\
        \sign(x) \cdot \Big(\delta-\frac{\eta}{2} \Big) & \text { if } |x| \geq \delta.
    \end{dcases}
\end{equation}
Another example that fully satisfies the assumptions is the Pseudo-Huber loss function \citep{Charbonnier1997,hartley2003multiple}, defined by
\begin{equation}
    \label{eq:rho_pseudo}
    \rho_{P}(x)=\delta^2 \Big(\sqrt{1+\frac{x^2}{\delta^2}}-1 \Big).
\end{equation}

% For Assumptions \ref{ass:target}3, \ref{ass:target}4, \ref{ass:target}5, \ref{ass:source}3, \ref{ass:source}4 and \ref{ass:source}5,
The assumptions on $\bfx_i$'s and $\bfx_i^{(1)}$'s, in particular that they have mean $\bzero_p$ and covariance matrix $\I_p$, are common in the study of M-estimators for linear models.
These assumptions have been used in the low-dimensional regime $p/n \to 0$ studied in \cite{huber1973robust,yohai1979asymptotic,portnoy1984asymptotic}, in the moderate-dimensional regime $p/n \to \kappa \in (0,1)$ considered in \cite{el2013robust}, and in the regime $p/n \to \kappa > 0$ analyzed in \cite{karoui2013asymptotic,el2018impact}.

The concentration assumption on $\cX_i$'s and $\cX_i^{(1)}$'s is weaker than the Gaussian assumptions often imposed in robust statistics.
This assumption has also been studied in \cite{Karoui2009Concentration,el2018impact} and holds for a broad class of distributions.
Corollary 4.10 in \cite{ledoux2001concentration} demonstrates that our assumptions are satisfied if $\cX_i$ has independent entries bounded by $1 / (2 \sqrt{c_1})$ for some $c_1 > 0$.
Additionally, Theorem 2.7 in \cite{ledoux2001concentration} shows that the assumptions hold when $\cX_i$ has independent entries with density $f_k$, $1 \leq k \leq p$, such that $f_k(x) = \exp(-u_k(x))$ and $u_k''(x) \geq \sqrt{c_2}$ for some $c_2 > 0$.
In particular, this condition holds when $\cX_i$ has i.i.d. $N(0,1)$ entries, where $c_2 = 1$.
As will be seen in the proof, the functions $G$ that arise in our analysis are either linear functions or square roots of quadratic forms.
A similar discussion applies to the $\cX_i^{(1)}$'s.

The introduction of $\lambda_i$ and $\lambda_i^{(1)}$, as also considered in \cite{el2013robust,el2018impact}, is used to induce a nonspherical geometry on the predictors.
Although the assumption $\E(\lambda_i^2) = 1$ can be relaxed to the requirement that $\E(\lambda_i^2)$ be uniformly bounded, it remains statistically important because it guarantees that $\cov(\bfx_i) = \I_p$ in all the models we consider.
This construction shows that many models can share the same covariance $\cov(\bfx_i)$ while exhibiting substantially different estimation errors for $\hat\bbeta$.
This contrasts with the low-dimensional setting studied in \cite{huber2011robust}, where $\cov(\bfx_i)$ is the key quantity for robust regression.
A similar discussion applies to $\lambda_i^{(1)}$'s and $\bfx_i^{(1)}$'s.

In Assumptions \ref{ass:target}5 and \ref{ass:source}5, no moment restriction is imposed on the $\epsilon_i$'s and $\epsilon_i^{(1)}$'s.
For instance, smooth symmetric log-concave densities satisfy all of these assumptions; see \cite{karlin1968total,ibragimov1956composition}.
Furthermore, the Cauchy distribution also satisfies these conditions; see Theorem 1.6 in \cite{dharmadhikari1988unimodality}.
This makes the framework compatible with heavy-tailed errors, which are of particular interest in robust regression.

Assumptions \ref{ass:target}7 and \ref{ass:source}7 impose a non-sparse structure on $\bbeta_0$ and $\bfw_0$, meaning that these vectors cannot be well approximated by sparse vectors in $\ell_2$ norm.
This setting is common in moderate-dimensional statistics and contrasts with the sparse regime, where only a small fraction of coefficients are substantial.
Under these assumptions, the proposed Trans-RR estimator may outperform lasso-based methods.

We now turn to the target-stage result and the resulting error characterization for Trans-RR.

\subsection{Asymptotic characterization of estimation error}

Our main theorem characterizes the asymptotic $\ell_2$ error of the Trans-RR estimator.
Recall that $\hat \bbeta$ is defined in \eqref{eq:opt_ori}, and let $\tau > 0$ be fixed as $n$ and $p$ vary.
To state the result, let $\prox(c\rho)$ denote the proximal mapping of the function $c \rho$; see \cite{moreau1965proximite}.
It is given by
$$
\prox(c\rho)(x)=\argmin_{y \in \RR}(c \rho(y)+\frac{1}{2}(x-y)^2).
$$
Under Assumptions \ref{ass:target} and \ref{ass:source}, the estimation error admits the following limit.

\begin{theorem}
    \label{thm:step_2}
    Under Assumption \ref{ass:target} and Assumption \ref{ass:source}, conditional on the source-stage estimator $\hat \bfw$, which is independent of the target sample, we have $\|\hat \bbeta - \bbeta_0\| \to r_\rho(\kappa)$ in probability, where $r_\rho(\kappa)$ is deterministic for the given value of $\hat \bfw$.
    Let $W_i = \epsilon_i + r_\rho(\kappa) \lambda_i Z_i$, where $Z_i$ is a standard normal random variable independent of $\epsilon_i$ and $\lambda_i$.
    Then there exists a constant $c_\rho(\kappa)$ such that
    \begin{equation}
        \label{eq:thm_step_2}
        \begin{dcases}
            \lim _{n \rightarrow \infty} \frac{1}{n} \sum_{i=1}^n \E \big\{[\prox\{c_\rho(\kappa) \lambda_i^2 \rho\}]' (W_i) \big\} &= 1-\kappa+\tau c_\rho(\kappa), \\
            \kappa \Big[ \lim _{n \rightarrow \infty} \frac{1}{n} \sum_{i=1}^n \E \Big(\frac{[W_i-\prox\{c_\rho(\kappa) \lambda_i^2 \rho\}(W_i)]^2}{\lambda_i^2} \Big) \Big] +\tau^2\|\bbeta_0 - \hat \bfw\|^2 c_\rho^2(\kappa) &=\kappa^2 r_\rho^2(\kappa) .\end{dcases}
    \end{equation}
\end{theorem}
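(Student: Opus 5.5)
Conditional on the source-stage estimator $\hat\bfw$, which is independent of the target sample, the plan is to reduce Step 2 to a single-study ridge-regularized robust regression with a shifted target, and then adapt the leave-one-out analysis of \cite{el2018impact}. Set $\bgamma_0 = \bbeta_0 - \hat\bfw$ and $\tilde y_i = y_i - \bfx_i^\T \hat\bfw = \bfx_i^\T \bgamma_0 + \epsilon_i$. Then \eqref{eq:step_2} reads
\[
\hat\bdelta = \argmin_{\bdelta} \frac{1}{n}\sum_{i=1}^n \rho(\tilde y_i - \bfx_i^\T \bdelta) + \frac{\tau}{2}\|\bdelta\|^2,
\]
and $\hat\bbeta - \bbeta_0 = \hat\bdelta - \bgamma_0$. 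So it suffices to show that $\|\hat\bdelta - \bgamma_0\| \to r_\rho(\kappa)$ and that this limit satisfies \eqref{eq:thm_step_2} with $\bbeta_0$ replaced by $\bgamma_0$. This is exactly the setting of \cite{el2018impact}, with true parameter $\bgamma_0$ and ridge penalty centred at $\bzero$ rather than at the truth. Because the penalty is not centred at the truth, the term $\tau^2\|\bgamma_0\|^2 c_\rho^2$ appears in the second equation.

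Before invoking that machinery, I would check that $\bgamma_0$ satisfies the analogue of Assumption A7 with probability tending to one over the source sample.
\begin{itemize}
\item \textbf{Norm bound.} $\|\hat\bfw\|$ is bounded with high probability: compare the objective in \eqref{eq:step_1} at $\hat\bfw$ and at $\bzero$, which gives $\tau_1\|\hat\bfw\|^2/2 \le n_1^{-1}\sum_i \tilde\rho(y_i^{(1)})$, and $\tilde\rho$ grows at most linearly since $\tilde\psi$ is bounded.
\item \textbf{Delocalization.} We need $\|\hat\bfw\|_\infty = \mO_P(n^{-\me})$ up to polylog factors. I would get this from the first-order condition $\hat\bfw = (\tau_1 n_1)^{-1}\sum_i (\bfx^{(1)}_i)\tilde\psi(\cdot)$, together with a leave-one-predictor-out argument as in \cite{el2018impact}.
\item \textbf{Fallback.} If delocalization is delicate, the derivation below only uses $\|\bgamma_0\|$ and the fact that $\bfx_i^\T\bgamma_0$ is asymptotically Gaussian given $\lambda_i$. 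For fixed $\bgamma_0$ the latter follows from the concentration in A3 applied to the linear function $\cX_i^\T\bgamma_0/\|\bgamma_0\|$. By the independence of the two samples, this is legitimate conditionally on $\hat\bfw$.
\end{itemize}

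The core argument proceeds in three steps, following \cite{el2018impact}.
\begin{enumerate}
\item \textbf{Leave-one-observation-out approximations.} For each $i$, let $\hat\bdelta_{(i)}$ be the estimator fit without observation $i$. Write $\tilde r_{i} = \tilde y_i - \bfx_i^\T\hat\bdelta_{(i)}$ for the leave-one-out residual and $R_i = \tilde y_i - \bfx_i^\T\hat\bdelta$ for the full-fit residual. Set $c_i = n^{-1}\bfx_i^\T S_i^{-1}\bfx_i$, where $S_i = n^{-1}\sum_{j\neq i}\psi'(\cdot)\bfx_j\bfx_j^\T + \tau \I_p$. Then show $R_i \approx \prox(c_i\rho)(\tilde r_i)$ uniformly in $i$, with error $\mO_P(\polyLog(n)/\sqrt n)$.
\item \textbf{Concentration of $c_i$ and of the residual law.} Show that $c_i \approx \lambda_i^2 c_\rho$, using concentration of quadratic forms under A3–A4. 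Show that $\tilde r_i \approx \epsilon_i + \lambda_i \|\hat\bdelta_{(i)} - \bgamma_0\| Z_i$, because $\bfx_i$ is independent of $\hat\bdelta_{(i)}$. Show that $\|\hat\bdelta - \bgamma_0\|$ concentrates around a deterministic $r_\rho$; I would try an Efron–Stein / variance bound on this norm via the leave-one-out differences.
\item \textbf{System of equations.} Two identities give the system.
\begin{itemize}
\item The first equation comes from the trace identity $\tr(\I_p - \tau S^{-1})/n \approx \kappa - \tau c_\rho$. Combined with $[\prox(c\lambda^2\rho)]' = 1/(1+c\lambda^2\psi')$ evaluated at the prox, this yields the mean of $[\prox]'(W_i)$.
\item For the second equation, use the first-order condition $\tau\hat\bdelta = n^{-1}\sum_i \bfx_i\psi(R_i)$ and write $\tau(\hat\bdelta - \bgamma_0) + \tau\bgamma_0 = n^{-1}\sum_i \bfx_i\psi(R_i)$. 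Taking squared norms and using $\psi(R_i) = (\tilde r_i - R_i)/c_i$ gives $\kappa\, n^{-1}\sum_i \E[(W_i - \prox)^2/\lambda_i^2]/c_\rho^2$ on the right. The cross term involving $\tau\bgamma_0$ is handled by the same leave-one-out expansion; after multiplying by $c_\rho^2$ it produces $\tau^2\|\bgamma_0\|^2c_\rho^2$ alongside $\kappa^2 r_\rho^2$.
\end{itemize}
Assumption A6 lets the averages over $i$ converge. Finally I would argue uniqueness of the solution $(r_\rho, c_\rho)$ by monotonicity in $c$, as in \cite{el2018impact}.
\end{enumerate}

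I expect the main obstacle to be the second equation, specifically the exact bookkeeping of the $\bgamma_0$ cross terms. In the non-ridge-at-truth case these cross terms do not vanish, and one must show that $\bgamma_0^\T(\hat\bdelta - \bgamma_0)$ concentrates and combines into precisely the stated $\tau^2\|\bgamma_0\|^2 c_\rho^2$. I would first try computing $\E\|n^{-1}\sum\bfx_i\psi(R_i)\|^2$ directly via the leave-one-out representation, and verify the constant on the Gaussian design case before extending to general $\lambda_i$.
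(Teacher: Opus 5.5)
Your reduction is the one the paper uses. The paper treats Step~2 as a single-study ridge-robust fit with noise $\epsilon_i+\bfx_i^\T(\bfw_0-\hat\bfw)$ and coefficient $\bdelta_0$, i.e.\ with true coefficient $\bgamma_0=\bbeta_0-\hat\bfw$. Your plan to get delocalization of $\hat\bfw-\bfw_0$ from source-stage leave-one-predictor-out results is exactly Lemma~\ref{lem:w-coordinate-rate}.

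\textbf{The gap: your derivation of the second equation.} Squaring $\tau(\hat\bdelta-\bgamma_0)+\tau\bgamma_0=n^{-1}\sum_i\bfx_i\psi(R_i)$ and replacing the right side by $\kappa\,n^{-1}\sum_i\lambda_i^2\psi^2(R_i)$ keeps only the diagonal of $n^{-2}\sum_{i,j}\bfx_i^\T\bfx_j\psi(R_i)\psi(R_j)$. That treats the residuals as independent of the design. The off-diagonal part is of order one, because every $R_j$ depends on $\bfx_i$ through $\hat\bdelta$.

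Write $M$ for the limit of $n^{-1}\sum_i\E[\{W_i-\prox(c_\rho\lambda_i^2\rho)(W_i)\}^2/\lambda_i^2]$. With $\bgamma_0=\bzero$ your identity reads $\kappa M=\tau^2c_\rho^2r_\rho^2$, whereas the theorem says $\kappa M=\kappa^2r_\rho^2$. The first equation forces $\tau c_\rho=\kappa-\lim n^{-1}\sum_i\E\{1-[\prox(c_\rho\lambda_i^2\rho)]'(W_i)\}<\kappa$, so the two cannot agree. You can see this directly for the squared loss with $\lambda_i\equiv1$, $\bgamma_0=\bzero$, noise variance $\sigma^2$ and small $\kappa$: $\tau^2\|\hat\bdelta\|^2\approx\kappa\sigma^2\tau^2/(1+\tau)^2$, while $\kappa\,n^{-1}\sum_iR_i^2\approx\kappa\sigma^2$. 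So the obstacle is the leading term, not the $\bgamma_0$ cross term.

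\textbf{The fix: leave-one-predictor-out on the target stage.} This is Section~\ref{sec_p2}; you invoke it only for the source stage.
\begin{itemize}
\item For each coordinate $k$, let $r_{i,[k]}$ be the residuals of the target fit without predictor $k$; they are independent of the $k$th column.
\item Up to negligible terms, the $k$th coordinate of your gradient is $n^{-1}\sum_ix_i(k)\psi(r_{i,[k]})-\xi_n\{\hat\delta(k)-\gamma_0(k)\}$. The feedback term $\xi_n\{\hat\delta(k)-\gamma_0(k)\}$ is what your computation drops.
\item Hence $(\tau+\xi_n)\{\hat\delta(k)-\gamma_0(k)\}\approx n^{-1}\sum_ix_i(k)\psi(r_{i,[k]})-\tau\gamma_0(k)$. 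Independence makes the second moment of the sum exactly $n^{-2}\sum_i\E\{\lambda_i^2\psi^2(r_{i,[k]})\}$ and kills the cross term with $\tau\gamma_0(k)$.
\item The relation $\mc_{\tau,k}(\tau+\xi_n)\approx(p-1)/n$ (Proposition~\ref{prop:mc_xi_relation}) turns your $\tau c_\rho$ into $\kappa$. Summing over $k$ gives the stated equation.
\end{itemize}

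\textbf{The same bounds are needed in your Step~2.} Independence of $\bfx_i$ and $\hat\bdelta_{(i)}$ does not make $\cX_i^\T(\hat\bdelta_{(i)}-\bgamma_0)$ asymptotically Gaussian when $\cX_i$ is non-Gaussian. The paper uses a Lindeberg argument that needs $\sum_k|\hat\delta_{(i)}(k)-\gamma_0(k)|^3\to0$. That follows from per-coordinate errors of order $\polyLog(n)/(n^{1/2}\wedge n^{\me})$ together with $\me>1/3$. Your fallback does not repair this, because concentration of a linear form is not a central limit theorem. For example, Rademacher entries satisfy A3, yet $\cX_i^\T\bgamma_0$ is not Gaussian when $\bgamma_0$ points along a coordinate axis.

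\textbf{A shorter correct route.} After your reduction the problem is literally that of \cite{el2018impact}. You can verify A7 for $\bgamma_0$ with high probability and then apply its Theorem~2.1 conditionally on $\hat\bfw$. For the norm part, use $\|\hat\bfw-\bfw_0\|\to r_{\tilde\rho}(\kappa_1)$. Comparing the objective at $\bzero$, as you propose, only gives $\|\hat\bfw\|^2=\mO_P(\log n_1)$ under Cauchy errors.
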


The proof of Theorem \ref{thm:step_2} is given in the Supplementary Material.
Here and below, the dependence of $r_\rho(\kappa)$ and $c_\rho(\kappa)$ on $\hat \bfw$ is suppressed for notational simplicity.
The limits on the left-hand side of \eqref{eq:thm_step_2} exist because Assumption \ref{ass:target}6 guarantees convergence of the proportions associated with each pair $(\mathcal{L}(\epsilon_i), \mathcal{L}(\lambda_i))$, where $\mathcal{L}(\epsilon_i)$ and $\mathcal{L}(\lambda_i)$ denote the laws of $\epsilon_i$ and $\lambda_i$.
For the second equation in \eqref{eq:thm_step_2}, the ratio can be interpreted through the identity
$$
\frac{[x-\prox\{c_\rho(\kappa) \lambda^2 \rho\}(x)]^2}{\lambda^2}=c_\rho^2(\kappa) \lambda^2 \psi^2(\prox\{c_\rho(\kappa) \lambda^2 \rho\}(x))
$$
which is well defined when $\lambda = 0$.
Equivalently, \eqref{eq:thm_step_2} can be written as
\begin{equation*}
    \begin{dcases}
        \lim _{n \rightarrow \infty} \frac{1}{n} \sum_{i=1}^n \E \big\{[\prox\{c_\rho(\kappa) \lambda_i^2 \rho\}]' (W_i) \big\} &= 1-\kappa+\tau c_\rho(\kappa), \\
        \kappa \Big[ \lim _{n \rightarrow \infty} \frac{1}{n} \sum_{i=1}^n \E \big\{ c_\rho^2(\kappa) \lambda_i^2 \psi^2(\prox\{c_\rho(\kappa) \lambda_i^2 \rho\}(W_i)) \big\} \Big] +\tau^2\|\bbeta_0 - \hat \bfw\|^2 c_\rho^2(\kappa) &=\kappa^2 r_\rho^2(\kappa) .\end{dcases}
\end{equation*}

This representation shows that the expectation in \eqref{eq:thm_step_2} is well defined, and Assumption \ref{ass:target}6 ensures that the relevant limits exist.
Unlike several recent transfer learning analyses, such as \cite{bastani2021predicting,li2022transfer,Tian2023Transfer}, our theory does not impose direct structural restrictions on $\bdelta_0$, the difference between the target and source coefficients.
Theorem \ref{thm:step_2} also shows that the performance of $\hat \bbeta$ depends on the distribution of the $\lambda_i$'s in the representation $\bfx_i = \lambda_i \cX_i$ from Assumption \ref{ass:target}3.
Thus, in the moderate-dimensional regime, the geometry of the target predictors encoded by $\lambda_i$ materially affects the estimation error.
This again contrasts with low-dimensional robust regression, where $\cov(\bfx_i)$ is the dominant quantity.

When $\lambda_i^2 = 1$ for all $i$ and the errors $\epsilon_i$ are i.i.d., Theorem \ref{thm:step_2} simplifies as follows.

\begin{corollary}
    \label{thm:step_2_s}
    Under the same assumptions as in Theorem \ref{thm:step_2}, if $\lambda_i^2 = 1$ for all $i$ and the errors $\epsilon_i$ are i.i.d., then, conditional on $\hat \bfw$, we have $\|\hat \bbeta - \bbeta_0\| \to r_\rho(\kappa)$ in probability, where $r_\rho(\kappa)$ is deterministic for the given value of $\hat \bfw$.
    Let $w = \epsilon + r_\rho(\kappa) z$, where $\epsilon$ has the same distribution as the $\epsilon_i$'s and $z$ is a standard normal random variable independent of $\epsilon$.
    Then there exists a constant $c_\rho(\kappa)$ such that
    \begin{equation*}
        \begin{dcases}
        \E \big\{[\prox\{c_\rho(\kappa) \rho\}]' (w) \big\} &= 1-\kappa+\tau c_\rho(\kappa), \\
        \kappa \E \big([w-\prox\{c_\rho(\kappa) \rho\}(w)]^2 \big) +\tau^2\|\bbeta_0 - \hat \bfw\|^2 c_\rho^2(\kappa) &=\kappa^2 r_\rho^2(\kappa) .\end{dcases}
    \end{equation*}
\end{corollary}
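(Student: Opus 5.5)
The plan is to derive the corollary directly from Theorem \ref{thm:step_2} by specializing the system \eqref{eq:thm_step_2}. Nothing new needs to be proved about the estimator itself. The convergence $\|\hat\bbeta-\bbeta_0\|\to r_\rho(\kappa)$ in probability, conditional on $\hat\bfw$, and the existence of a constant $c_\rho(\kappa)$ are inherited verbatim from the theorem. The assumptions of the corollary, $\lambda_i^2=1$ and i.i.d.\ errors, are special cases of Assumptions \ref{ass:target}4--\ref{ass:target}6. Indeed, $\E(\lambda_i^2)=1$ holds, $\sup_i|\lambda_i|=1$ is trivially polylogarithmic, there is a single distribution for each of $\epsilon_i$ and $\lambda_i$, and the fraction in \ref{ass:target}6 is identically one. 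So the theorem applies as stated.

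The key step is to simplify the two equations. With $\lambda_i^2=1$ we have $\lambda_i\in\{-1,1\}$, and $\lambda_i Z_i$ is again standard normal independent of $\epsilon_i$, by symmetry of $Z_i$ and its independence from $(\epsilon_i,\lambda_i)$. Hence each $W_i$ has the law of $w=\epsilon+r_\rho(\kappa)z$. Moreover, $\prox\{c_\rho(\kappa)\lambda_i^2\rho\}=\prox\{c_\rho(\kappa)\rho\}$ does not depend on $i$. Each summand in the first equation therefore equals $\E\{[\prox\{c_\rho(\kappa)\rho\}]'(w)\}$, so the Cesàro average is constant in $n$ and the limit is this single expectation. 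The same reasoning applies to the second equation: dividing by $\lambda_i^2=1$ is harmless, and each summand equals $\E([w-\prox\{c_\rho(\kappa)\rho\}(w)]^2)$. Substituting these into \eqref{eq:thm_step_2} gives exactly the displayed system.

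The only point requiring care is that these expectations are finite even though $\epsilon$ may have no moments. This follows from the identity stated after the theorem: $[x-\prox(c\rho)(x)]^2=c^2\psi^2(\prox(c\rho)(x))$ is bounded because $\psi$ is bounded. In addition, $0\le[\prox(c\rho)]'=1/\{1+c\psi'(\prox(c\rho))\}\le 1$ by convexity of $\rho$. Both integrands are therefore bounded and the expectations are well defined. I do not expect any genuine obstacle here beyond checking that the symmetry argument correctly handles the sign of $\lambda_i$.
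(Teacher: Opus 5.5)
Your proposal is correct and follows the paper's approach: the paper gives no separate proof and presents the corollary as the direct specialization of Theorem~\ref{thm:step_2} to $\lambda_i^2=1$ with i.i.d.\ errors, which is exactly what you do using $\lambda_i Z_i \sim N(0,1)$ independent of $\epsilon_i$ and $\prox\{c_\rho(\kappa)\lambda_i^2\rho\}=\prox\{c_\rho(\kappa)\rho\}$. Re-verifying Assumptions A4--A6 is unnecessary, since the corollary assumes them, and $\lambda_i^2=1$ does not force a single distribution for $\lambda_i$ because the signs may vary with $i$; your sign-symmetry argument makes this harmless.
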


Corollary \ref{thm:step_2_s} shows that, under a homogeneous target design, the general characterization in Theorem \ref{thm:step_2} reduces to a simpler scalar system.
This special case is useful for interpretation and will also serve as a convenient benchmark in the simulation study.

To discuss when transfer can be beneficial, we also recall the asymptotic characterization of the source-stage estimator from \cite{el2018impact}.

\begin{corollary}[{{\citet[Theorem 2.1]{el2018impact}}}]
    \label{thm:step_1}
    Under Assumption \ref{ass:source}, we have $\|\hat \bfw - \bfw_0\| \to r_{\tilde \rho}(\kappa_1)$ in probability, where $r_{\tilde \rho}(\kappa_1)$ is deterministic.
    Let $\tilde W_i = \epsilon_i^{(1)} + r_{\tilde \rho}(\kappa_1) \lambda_i^{(1)} \tilde Z_i$, where $\tilde Z_i$ is a standard normal random variable independent of $\epsilon_i^{(1)}$ and $\lambda_i^{(1)}$.
    Then there exists a constant $c_{\tilde \rho}(\kappa_1)$ such that
    \begin{equation*}
        \begin{dcases}
            \lim _{n_1 \to \infty} \frac{1}{n_1} \sum_{i=1}^{n_1} \E \big\{[\prox\{c_{\tilde \rho}(\kappa_1) (\lambda_i^{(1)})^2 \tilde \rho\}]' (\tilde W_i) \big\}  &= 1-\kappa_1+\tau_1 c_{\tilde \rho}(\kappa_1), \\
            \kappa_1 \Big[\lim_{n_1 \to \infty} \frac{1}{n_1} \sum_{i=1}^{n_1} \E \Big(\frac{[\tilde W_i-\prox\{c_{\tilde \rho}(\kappa_1) (\lambda_i^{(1)})^2 \tilde \rho\}(\tilde W_i)]^2}{(\lambda_i^{(1)})^2} \Big) \Big]+\tau_1^2\|\bfw_0 \|^2 c_{\tilde \rho}^2(\kappa_1)  &=\kappa_1^2 r_{\tilde \rho}^2(\kappa_1) .\end{dcases}
    \end{equation*}
\end{corollary}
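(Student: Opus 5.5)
The plan for Corollary \ref{thm:step_1} is to treat it as a direct specialization of \citet[Theorem 2.1]{el2018impact}, not to reprove it. Step 1 of Algorithm \ref{algo1} uses only the source sample $\{(\bfx_i^{(1)}, y_i^{(1)})\}_{i=1}^{n_1}$. On that sample, \eqref{eq:step_1} is exactly the ridge-regularized M-estimator studied there, with:
\begin{itemize}
\item loss $\tilde\rho$,
\item penalty $\tau_1\|\bfw\|^2/2$,
\item sample size $n_1$,
\item dimension $p$,
\item true parameter $\bfw_0$.
\end{itemize}
So the argument reduces to a dictionary check: every hypothesis of \citet[Theorem 2.1]{el2018impact} must be implied by Assumption \ref{ass:source}, under the substitutions $n \mapsto n_1$, $\kappa \mapsto \kappa_1$, $\rho \mapsto \tilde\rho$, $\tau \mapsto \tau_1$, $\bbeta_0 \mapsto \bfw_0$, $(\lambda_i,\cX_i,\epsilon_i) \mapsto (\lambda_i^{(1)},\cX_i^{(1)},\epsilon_i^{(1)})$.

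I will carry out the check item by item. The key structural requirements are as follows.
\begin{itemize}
\item B1 gives the proportional regime $p/n_1\to\kappa_1\in(0,\infty)$.
\item B2 gives that $\tilde\rho$ is even and convex, with $\tilde\psi$ bounded, $\tilde\psi'$ bounded and Lipschitz, and the sign condition. These are the loss conditions of \cite{el2018impact}.
\item B3 and B4 give the elliptical-type design $\bfx_i^{(1)}=\lambda_i^{(1)}\cX_i^{(1)}$ with the same concentration, moment and growth conditions.
\item B5 and B6 give the error and mixture-proportion conditions.
\item B7 gives the bounded $\ell_2$ norm and the $\ell_\infty$ delocalization $\|\bfw_0\|_\infty=\mO(n_1^{-\me})$ with $\me>1/3$. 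These are used there to handle the non-null signal.
\end{itemize}
Since Assumption \ref{ass:target} was written to mirror the hypotheses of \cite{el2018impact}, I expect each item to match essentially verbatim.

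Given that check, the conclusion transfers directly: $\|\hat\bfw-\bfw_0\|\to r_{\tilde\rho}(\kappa_1)$ in probability, with $(r_{\tilde\rho}(\kappa_1), c_{\tilde\rho}(\kappa_1))$ solving the displayed system. The term $\tau_1^2\|\bfw_0\|^2 c_{\tilde\rho}^2(\kappa_1)$ is the signal-dependent bias contribution of the ridge penalty. In \cite{el2018impact} it arises through the reparametrization $\bfw-\bfw_0$, under which the penalty becomes $\tau_1\|\bfw_0 + (\bfw-\bfw_0)\|^2/2$. I will also note that $\|\bfw_0\|^2$ need not converge. The statement should therefore be read either along subsequences where it converges, or with $r_{\tilde\rho}$ depending on $n_1$ through $\|\bfw_0\|$, exactly as in the original theorem.

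The only real obstacle is bookkeeping rather than mathematics. The conventions of \cite{el2018impact} must match those used here:
\begin{itemize}
\item normalization of the loss by $1/n_1$;
\item the factor $1/2$ in the penalty;
\item the definition of $\kappa_1$ as $p/n_1$;
\item the interpretation of the ratio at $\lambda_i^{(1)}=0$, via the identity stated after Theorem \ref{thm:step_2}.
\end{itemize}
If any of these differ, I will rescale $\tau_1$ and $c_{\tilde\rho}$ accordingly and confirm that the system takes the displayed form.
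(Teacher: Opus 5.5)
Your proposal is correct and follows the paper's route. The paper gives no separate argument for this corollary: it imports \citet[Theorem 2.1]{el2018impact} for the source-stage fit \eqref{eq:step_1} under exactly your substitutions ($n\mapsto n_1$, $\kappa\mapsto\kappa_1$, $\rho\mapsto\tilde\rho$, $\tau\mapsto\tau_1$, $\bbeta_0\mapsto\bfw_0$), and the $1/n_1$ loss and $\tau_1/2$ penalty normalizations already agree. The one hypothesis not stated verbatim in Assumption \ref{ass:source} is that $\epsilon_i^{(1)}+rz$ has, for every $r$, a differentiable density increasing on $(-\infty,0)$ and decreasing on $(0,\infty)$. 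You can obtain it from B5, because convolving a symmetric unimodal density with a Gaussian (a strongly unimodal law) preserves symmetric unimodality.
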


Corollary \ref{thm:step_1} shows that the source-stage estimator admits a deterministic asymptotic error characterization.

A central question in transfer learning is how the source study affects the comparison between Trans-RR and the single-task ridge-regularized estimator
\begin{equation}
    \label{eq:single-task}
    \hat \bbeta_{st} = \argmin_{\bbeta \in \mathbb{R}^p} \Big[\frac{1}{n} \sum_{i=1}^{n} \rho (y_i-\bfx_i^{\T} \bbeta ) + \frac{\tau}{2} \|\bbeta\|^2 \Big].
\end{equation}
Theorem \ref{thm:step_2} shows that, conditional on $\hat \bfw$, this comparison is governed by the quantity $\|\bbeta_0-\hat \bfw\|$.
To better understand this dependence, Section \ref{sec:sim_theoretical_curves} numerically solves the scalar system in Corollary \ref{thm:step_2_s} under the smoothed Huber loss \eqref{eq:rho_eta} and plots $r_\rho$ as a function of $\|\bbeta_0-\hat \bfw\|$.
In the three cases considered there, the resulting curves are increasing over the displayed range.
Combined with Corollary \ref{thm:step_1}, this numerical evidence suggests that transfer is more favorable when the source coefficient $\bfw_0$ is close to the target coefficient $\bbeta_0$ and the source-stage estimation error is small, so that $\hat \bfw$ provides a more accurate approximation to $\bbeta_0$.
By contrast, if either the source--target discrepancy or the source-stage estimation error is large, the advantage of transfer may diminish or disappear.

\section{Simulation} \label{sec_simulation}

In this section, we conduct numerical studies to support the theoretical results.
We set the dimension of both target and source data to be $p \in \{200, 400, 800\}$.
We set $n = p, p/4$ and $n_1 = 2p, p/2$, corresponding to moderate-dimensional settings with $\kappa = 1, 4$ and $\kappa_1 = 1/2, 2$.
To generate data, we set $\bfx_i=\lambda_i \cX_i$ and $\bfx_i^{(1)}=\lambda_i^{(1)} \cX_i^{(1)}$, where $\cX_i$ and $\cX_i^{(1)}$ have i.i.d. $N(0,1)$ entries.
We consider three different cases for the choices of $\lambda_i$'s, $\epsilon_i$'s, $\lambda_i^{(1)}$'s and $\epsilon_i^{(1)}$'s:
\begin{itemize}
    \item \textbf{Case $\I$:} $\lambda_i = 1$ for $i=1,\ldots,n$ and $\lambda_j^{(1)} = 1$ for $j=1,\ldots,n_1$. The target errors $\epsilon_i$ are i.i.d. $N(0,1)$, and the source errors $\epsilon_j^{(1)}$ are i.i.d. $N(0,2^2)$.
    \item \textbf{Case $\II$:} The variables $\lambda_i$ and $\lambda_j^{(1)}$ are i.i.d. Unif$(0,\sqrt{3})$, while $\epsilon_i$ and $\epsilon_j^{(1)}$ are i.i.d. $Cauchy(0,1)$ and $Cauchy(0,2)$, respectively.
    \item \textbf{Case $\III$:} In both the target and source studies, half of the observations are generated as in Case $\I$ and the other half are generated as in Case $\II$.
\end{itemize}
Case $\I$ is a standard Gaussian setup for linear regression.
Case $\II$ features a non-Gaussian design and heavy-tailed errors.
Case $\III$ is a mixture of the two cases and is designed to test the effectiveness of our theoretical results under non-identical $\bfx_i$'s and $\epsilon_i$'s.

\subsection{Validity of theoretical results}
We first evaluate the validity of the proposed scalar $r_\rho$ in Theorem \ref{thm:step_2}.
For each setting, we generate $\bbeta^*$ and $\bfw^*$ once with i.i.d. Unif$(0,1)$ entries and set $\bbeta_0=\bbeta^*/\sqrt{n}$ and $\bfw_0=\bfw^*/\sqrt{n}$.
This construction yields diffuse coefficients whose Euclidean norms remain bounded as $n$ grows.
These coefficient vectors are fixed across the $1000$ replications, while the target and source samples are regenerated in each replicate.
In each replicate, we first compute $\hat \bfw$ from the source sample and then obtain $\hat \bbeta$ by applying Algorithm \ref{algo1}, using the smoothed Huber loss \eqref{eq:rho_eta} for both $\tilde \rho$ and $\rho$ with parameters $\delta = 1.35$ and $\eta = 0.1$.
We fix $\tau = \tau_1 = 1$ and repeat each setup $1000$ times.

Figure \ref{fig1} presents boxplots of the estimation error $\| \hat\bbeta - \bbeta_0 \|^2$ for cases $\I$--$\III$ and $\kappa = 1, 4$.
The red point in each boxplot marks the theoretical value $r_\rho^2$, obtained by numerically solving the system in Theorem \ref{thm:step_2} under the corresponding simulation specification.
We observe that the empirical distribution of $\| \hat\bbeta - \bbeta_0\|^2$ is centered close to this value, and its dispersion decreases as $n$ and $p$ become larger.
Table \ref{table1} shows the mean and standard deviation (SD) of $\| \hat\bbeta - \bbeta_0\|^2$ (denoted as $\hat r^2$) and the corresponding $r_\rho^2$ for each setup.
As dimensionality increases, that is, as $\kappa$ increases from $1$ to $4$, both the mean error and its variability increase, indicating that estimation becomes more difficult in more challenging moderate-dimensional regimes.
The average estimation error also grows with heavier-tailed errors, highlighting the difficulty of estimation under such conditions.
Results under case $\III$ demonstrate that Theorem \ref{thm:step_2} is effective in handling non-identical $\bfx_i$'s and $\epsilon_i$'s.
Overall, the findings in Figure \ref{fig1} and Table \ref{table1} align well with the theoretical predictions of Theorem \ref{thm:step_2}.

\begin{figure}[H]
    \centering
    \includegraphics[width=\textwidth]{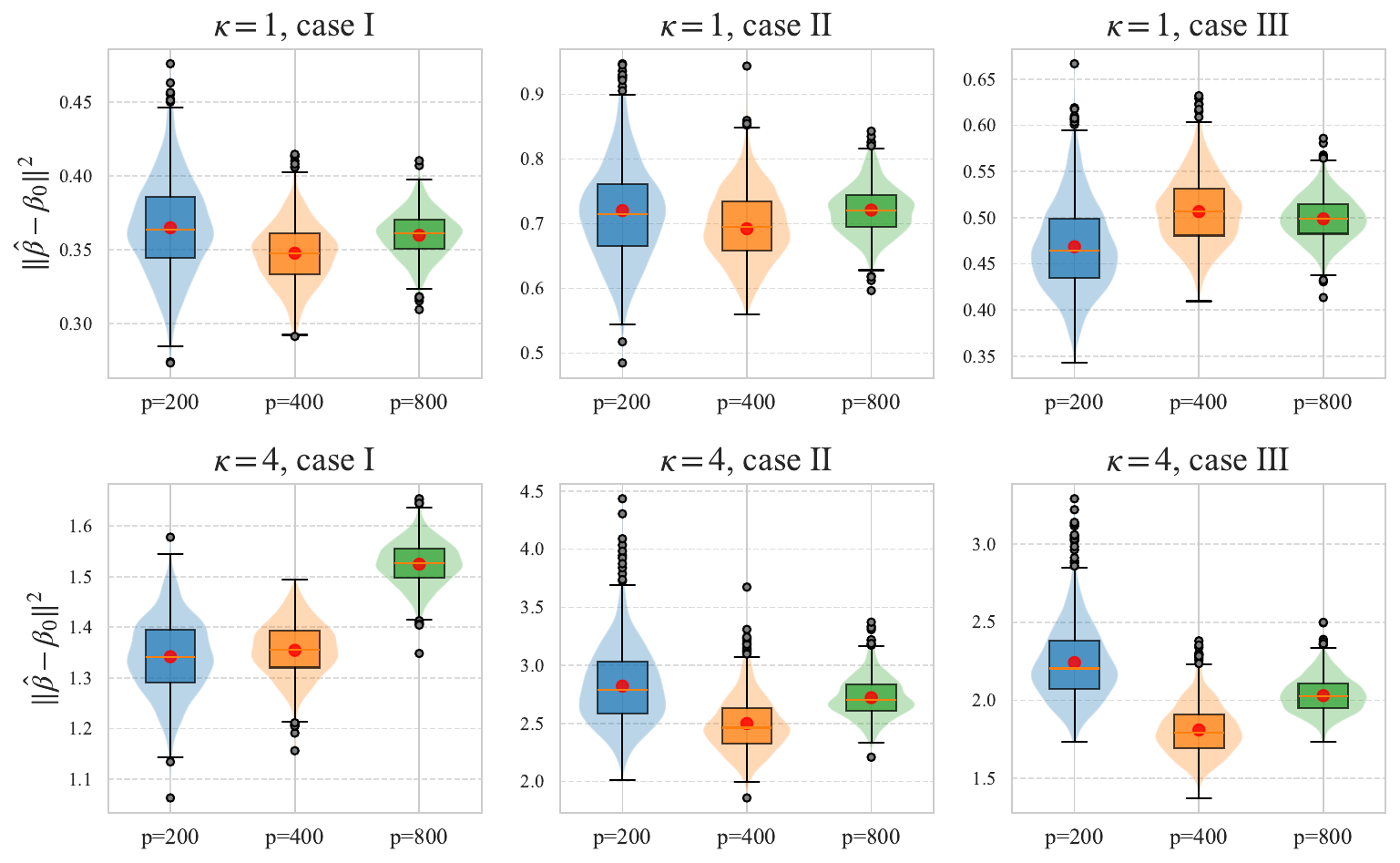}
    \caption{Boxplot of $\| \hat\bbeta - \bbeta_0\|^2$ over $1000$ simulations.
    The red point in each boxplot represents the theoretical value $r_\rho^2$ from Theorem \ref{thm:step_2}.
    Panels from top to bottom are for $\kappa = 1, 4$, respectively, while panels from left to right are for cases $\I, \II, \III$,  respectively.}
    \label{fig1}
\end{figure}

\begin{table}[H] % Use [htbp] for suggested float placement
    \centering
    % Adjust \tabcolsep as needed. The template's 20pt might be too wide for 7 columns.
    % 6pt is default, 8-10pt might be a good start here.
    \setlength\tabcolsep{7pt} 
    \caption{Mean and SD (in parentheses) of $\| \hat{\bm{\beta}} - \bm{\beta}_0\|^2$ (denoted as $\hat r^2$) and the corresponding $r_\rho^2$ over $1000$ simulations. Rows are indexed by $(p,n)$, with $n_1=2p$ when $\kappa=1$ and $n_1=p/2$ when $\kappa=4$.}
    \begin{tabular}{@{}lcccccc@{}} % Use @{} to remove extra horizontal space at the ends of the table
                                   % l for the first column, c for the others
        \toprule
        \multirow{2}{*}{\textbf{($p, n$)}} & \multicolumn{2}{c}{\textbf{Case $\mathrm{I}$}} & \multicolumn{2}{c}{\textbf{Case $\mathrm{II}$}} & \multicolumn{2}{c}{\textbf{Case $\mathrm{III}$}} \\
        \cmidrule(lr){2-3} \cmidrule(lr){4-5} \cmidrule(lr){6-7} % Mid-rules for grouped columns
        & \textbf{$\hat r^2$} & \textbf{$r^2_\rho$} & \textbf{$\hat r^2$} & \textbf{$r^2_\rho$} & \textbf{$\hat r^2$} & \textbf{$r^2_\rho$} \\
        \midrule
        \multicolumn{7}{@{}c@{}}{\textbf{$\kappa = 1$}} \\ % Centered section header for kappa = 1
        \midrule % Rule to separate kappa header from its data
        (200, 200) & 0.3653(0.0318) & 0.3649 & 0.7163(0.0738) & 0.7204 & 0.4683(0.0478) & 0.4685 \\
        (400, 400) & 0.3472(0.0208) & 0.3477 & 0.6970(0.0549) & 0.6923 & 0.5076(0.0368) & 0.5065 \\
        (800, 800) & 0.3603(0.0151) & 0.3598 & 0.7206(0.0374) & 0.7212 & 0.4989(0.0238) & 0.4986 \\
        \midrule
        \multicolumn{7}{@{}c@{}}{\textbf{$\kappa = 4$}} \\ % Centered section header for kappa = 4
        \midrule % Rule to separate kappa header from its data
        (200, 50)  & 1.3415(0.0734) & 1.3419 & 2.8222(0.3311) & 2.8216 & 2.2410(0.2407) & 2.2415 \\
        (400, 100) & 1.3565(0.0531) & 1.3544 & 2.4896(0.2363) & 2.4996 & 1.8087(0.1590) & 1.8102 \\
        (800, 200) & 1.5261(0.0427) & 1.5247 & 2.7226(0.1693) & 2.7219 & 2.0342(0.1153) & 2.0301 \\
        \bottomrule
    \end{tabular}
    \label{table1} % Changed the label to avoid potential conflicts
\end{table}

\subsection{Theoretical estimation error curves}\label{sec:sim_theoretical_curves}

To further illustrate the theoretical characterization in Corollary \ref{thm:step_2_s}, we numerically solve the associated scalar system while varying the discrepancy term $\|\bbeta_0-\hat \bfw\|$.
We again use the smoothed Huber loss \eqref{eq:rho_eta} with $(\delta,\eta)=(1.35,0.1)$ and fix $\kappa=1$.
Figure \ref{fig:theoretical_curves} displays the resulting curves of $r_\rho$ for five values of $\tau$ under cases $\I$--$\III$.
In all three cases, the numerical curves are increasing over the displayed range, indicating that larger values of $\|\bbeta_0-\hat \bfw\|$ are associated with larger asymptotic estimation errors in these examples.
This experiment complements Figure \ref{fig1} by showing how the theoretical prediction varies with the discrepancy term that enters the second equation of Theorem \ref{thm:step_2}.

\begin{figure}[H]
    \centering
    \includegraphics[width=\textwidth]{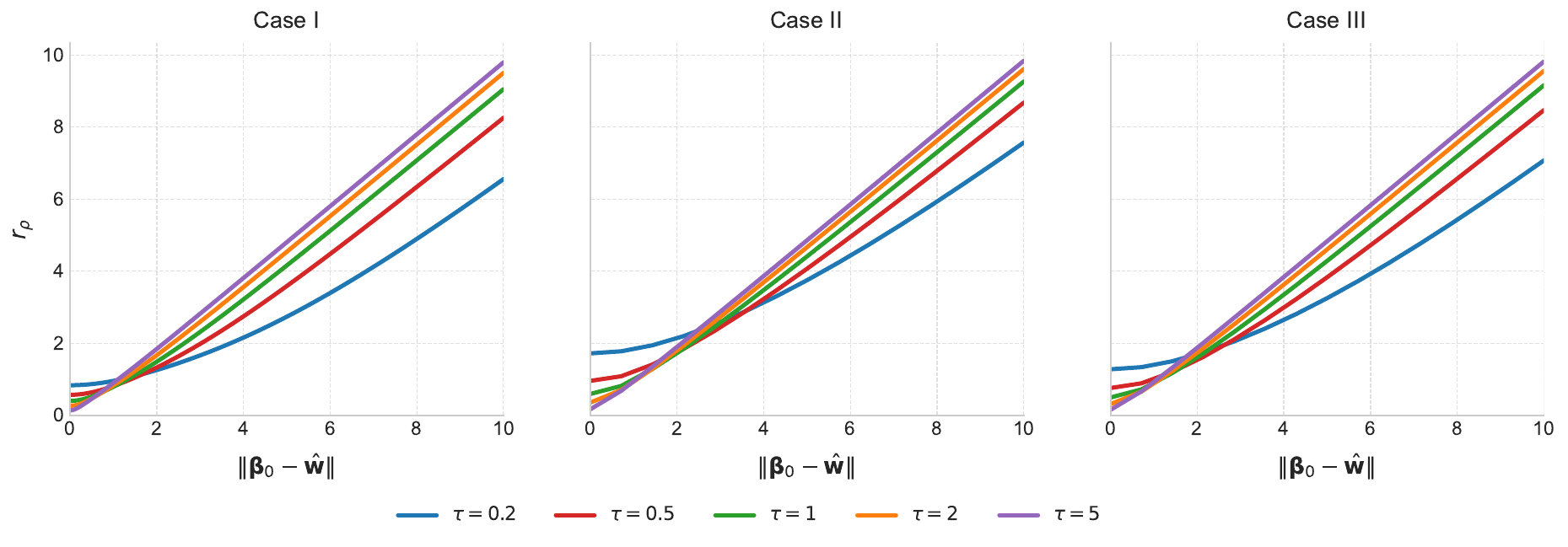}
    \caption{Theoretical curves of $r_\rho$ as a function of $\|\bbeta_0-\hat \bfw\|$ for five values of $\tau$ under cases $\I$--$\III$, obtained by numerically solving Corollary \ref{thm:step_2_s}. The three panels correspond to cases $\I$, $\II$, and $\III$, respectively.}
    \label{fig:theoretical_curves}
\end{figure}

\subsection{Comparison with existing methods}

To evaluate when transfer is beneficial, we compare our method with several competing procedures across the three scenarios described above.
We set $p = 400$, $n = p$ and $n_1 = 2p$.
We generate $\bbeta_0=\bbeta^*/\|\bbeta^*\|$, where $\bbeta^*=(\beta_1^*, \ldots, \beta_{p}^*)^{\top}$ has i.i.d. Unif$(0,1)$ entries.
To control the transfer strength $h = \|\bdelta_0\|$, we set
\[
\bdelta_0 = \exp(c_d) \cdot \mathbf{1}_p / \sqrt{p}, \qquad c_d \in \{-2.0,-1.5,-1.0,-0.5,0,0.5,1.0\},
\]
and define the source coefficient by $\bfw_0 = \bbeta_0 - \bdelta_0$.
By varying $c_d$ from $-2.0$ to $1.0$, we obtain $h$ ranging from approximately $0.135$ to $2.718$, providing a comprehensive evaluation across different levels of source-target similarity.
For each value of $c_d$, the pair $(\bbeta_0,\bfw_0)$ is fixed across replications and only the data are regenerated.

The methods under comparison are Single-RR, Trans-RR, Pooled-RR, Single-Lasso, and a transfer-lasso baseline denoted by Trans-Lasso.
Performance is summarized by the relative estimation error $\|\hat \bbeta-\bbeta_0\|^2/\|\bbeta_0\|^2$.
For the ridge-type methods, we use the smoothed Huber loss \eqref{eq:rho_eta} with $(\delta,\eta)=(1.35,0.1)$.
Each ridge tuning parameter is selected by $5$-fold cross-validation over a common grid, using the validation mean absolute error as the criterion.
Single-Lasso is implemented by $5$-fold cross-validated Lasso.
Trans-Lasso is implemented by the analogous two-stage procedure, with a source-stage cross-validated Lasso estimator followed by a target-stage cross-validated Lasso fit to the offset-adjusted response.
For each setup, we repeat the simulation $1000$ times.

\begin{figure}[ht]
    \centering
    \includegraphics[width=\textwidth]{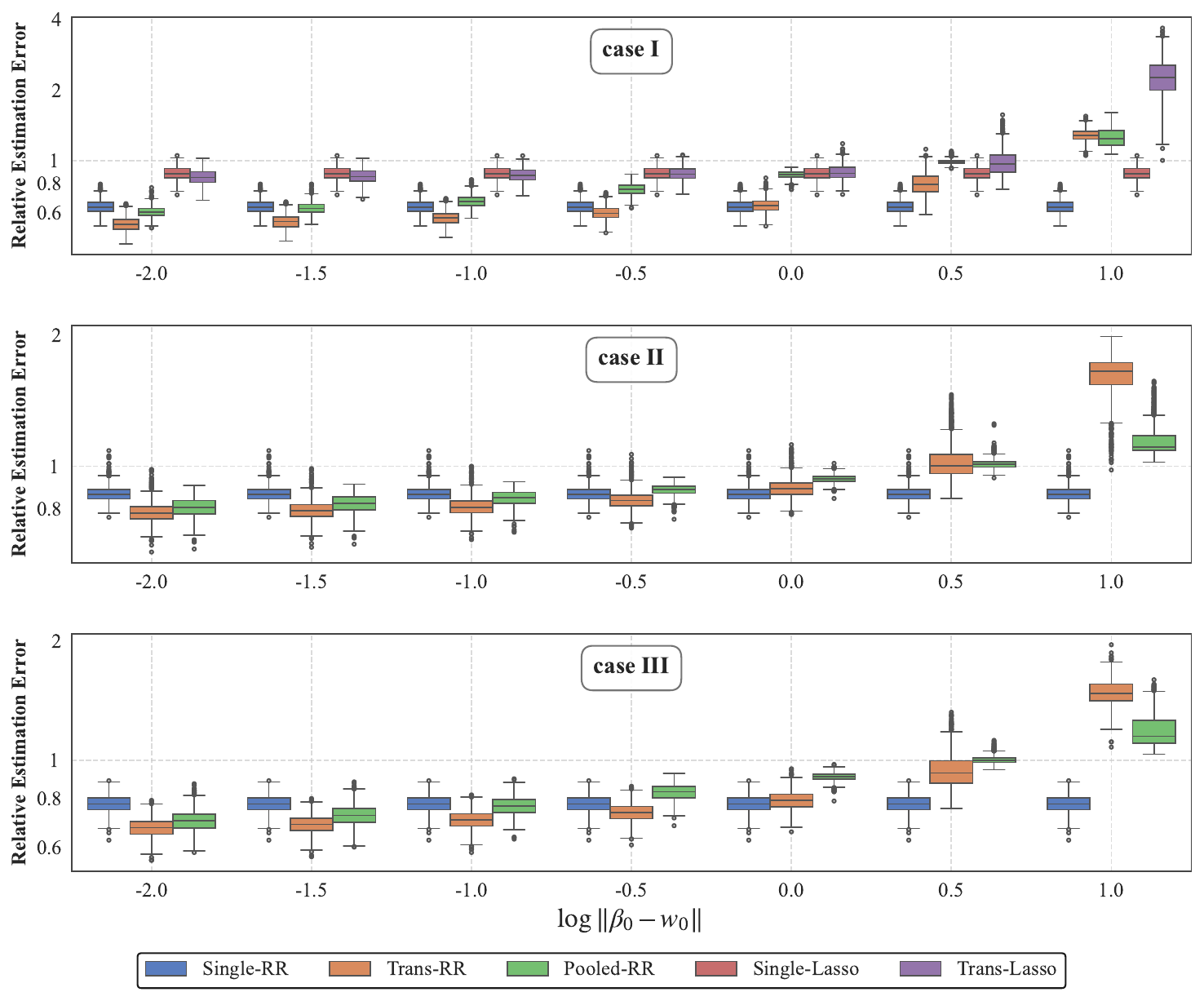}
    \caption{Boxplots of relative estimation errors (log scale) across $1000$ replications for varying $\|\bbeta_0 - \bfw_0\|$ under cases $\I$--$\III$, with $p=400$ and $n=400$. Case $\I$ includes all five methods, while cases $\II$ and $\III$ include only the three ridge-type procedures.
    Panels from top to bottom are for cases $\I, \II, \III$, respectively.
    }
    \label{figmethods}
\end{figure}

Figure \ref{figmethods} presents boxplots of the relative estimation errors on a logarithmic scale for varying values of $h$ across cases $\mathrm{I}$--$\mathrm{III}$.
 
Because our main focus in cases $\mathrm{II}$ and $\mathrm{III}$ is robustness under heavy-tailed settings, we restrict those panels to ridge-type procedures.
Under case $\mathrm{I}$, Trans-RR compares favorably with both lasso baselines, which is consistent with the nonsparse structure assumed throughout the paper.

Across all scenarios, Trans-RR consistently outperforms Pooled-RR.
This superiority can be attributed to the fact that Pooled-RR suffers from two sources of error: differences between $\bbeta_0$ and $\bfw_0$, and distributional differences between the predictors in the source and target domains.
As $h$ decreases, the performance gap between Pooled-RR and our method narrows, which aligns with our theoretical expectations since smaller $h$ indicates greater similarity between source and target domains.

More importantly, the comparison with Single-RR reveals the transition between positive and negative transfer.
When $h$ is small, Trans-RR achieves the lowest relative error.
As $h$ increases, its advantage shrinks and eventually reverses; in our experiments, this turnover occurs around $h = 1$.
This is consistent with the numerical evidence in Section \ref{sec_theory}: transfer tends to be more favorable when the source-stage estimator provides a better approximation to the target coefficient.

The same qualitative pattern appears in cases $\mathrm{II}$ and $\mathrm{III}$, showing that the method remains competitive under heavy-tailed and heterogeneous settings while still exhibiting the expected risk of negative transfer when the source is not sufficiently similar.

\section{Real Data Analysis}\label{sec_realdata}

We evaluate the proposed transfer procedure on the near-infrared (NIR) spectral dataset from the 2002 International Diffuse Reflectance Conference (IDRC) ``Shootout'' competition. 
The data consist of pharmaceutical tablet measurements collected by two spectrometers, with \texttt{ASSAY} as the response. 
For each instrument, the dataset contains a training sample of size $460$ and an external test sample of size $155$. 
Each spectrum is recorded at $650$ wavelengths, yielding a moderate-dimensional regression problem.

Let $X$ and $X_1$ denote the spectra from the two instruments. 
We consider two transfer directions: in Direction A, $X$ is the target domain and $X_1$ is the source domain; in Direction B, the roles are reversed. 
In each repetition, we randomly split the training sample of the target instrument into two parts of sizes $160$ and $300$. 
The subset of size $160$ is used as the target training sample, while the remaining $300$ tablets are matched with their measurements from the other instrument to form the source training sample. 
Thus, when $X$ is the target domain, the target fit is constructed from $160$ observations in $X$, and the source fit is constructed from the corresponding $300$ observations in $X_1$; the same scheme is applied symmetrically in the reverse direction. 
The external test sample of the target instrument is used for evaluation. 
To reduce Monte Carlo variability, we repeat this procedure $20$ times.

The following preprocessing procedure is used in both transfer directions.
Since adjacent wavelengths are highly correlated, we retain every fourth wavelength, resulting in $163$ predictors.
For each domain, predictors are whitened using the corresponding training sample,
\[
\tilde X=(X-\hat{\mu})\hat{\Sigma}^{-1/2},
\]
where $\hat{\mu}$ and $\hat{\Sigma}$ are estimated from that training sample, and the same transformation is applied to the associated test sample. 
The response is centered using the mean of the corresponding training response, and the same centering constant is used for the associated test response. 
Thus, all preprocessing parameters are estimated using training data only and then carried over to the corresponding test set, thereby avoiding information leakage.

We compare five methods: Single-RR, Trans-RR, Pooled-RR, Single-Lasso, and Trans-Lasso. 
For the ridge-type methods, we use the smoothed Huber loss in \eqref{eq:rho_eta} with $(\delta,\eta)=(1.35,0.1)$. 
All ridge and lasso tuning parameters are selected by cross-validation from the same grid
\[
\mathcal{G}=\{10^{-4},10^{-3.5},10^{-3},\ldots,10^{1}\}.
\]

\begin{table}[htbp]
\centering
\setlength\tabcolsep{16pt}
\caption{Prediction performance on the NIR spectral dataset over 20 repeated random splits. Each entry reports RMSE, with RMSE(sd) in parentheses.}
\label{tab:realdata-rmse}
\begin{tabular}{lcc}
\toprule
Method & Direction A & Direction B \\
\midrule
Trans-RR     & \textbf{4.6607} (0.2097) & \textbf{4.8161} (0.2664) \\
Pooled-RR    & 5.0963 (0.0814) & 5.4774 (0.1384) \\
Trans-Lasso  & 5.5668 (0.3650) & 5.6803 (0.3131) \\
Single-RR    & 6.3579 (2.1726) & 6.8375 (2.2707) \\
Single-Lasso & 8.0672 (2.8904) & 8.5607 (3.3739) \\
\bottomrule
\end{tabular}
\end{table}

Table~\ref{tab:realdata-rmse} reports the average RMSE and its standard deviation over the $20$ repetitions. 
Trans-RR performs best in both transfer directions, with the smallest RMSE and relatively small variability. 
Pooled-RR is the second-best method but is uniformly dominated by Trans-RR, suggesting that naive pooling fails to account for cross-instrument differences.
The lasso-based methods are less competitive overall, especially the single-domain lasso fit. 
This is consistent with the nonsparse nature of NIR spectra, where relevant information is typically distributed over broad wavelength regions rather than concentrated on a small subset of predictors.
The results therefore support the effectiveness of the proposed Trans-RR method.

\section{Discussion}

This paper introduces Trans-RR, a robust transfer-learning approach for moderate-dimensional linear regression.
It extends transfer-learning ideas of \cite{bastani2021predicting,li2022transfer} to a setting with non-sparse coefficients and heavy-tailed errors, without relying on sparsity assumptions or moment restrictions on the errors.
The theoretical results, simulation studies, and real-data analysis support the effectiveness of Trans-RR.
At the same time, both the theory and the numerical results show that negative transfer can occur when the source study is not sufficiently informative for the target study.

Several directions merit further study.
First, it would be useful to understand which choice of loss function leads to the smallest asymptotic estimation error in our framework.
One natural criterion is to compare candidate losses through the risk quantity characterized in Theorem \ref{thm:step_2}.
Second, in the present paper we restrict attention to a single source study.
Extending the framework to multiple source studies would require introducing a separate source coefficient and a separate source--target discrepancy for each source study.
This would make both the algorithm and the theoretical analysis substantially more involved; see \cite{li2022transfer} for related ideas in sparse settings.
Third, our theoretical analysis relies heavily on the assumption that the predictor covariance matrix is the identity matrix ($\cov(\bfx_i) = \mathbf{I}_p$).
This assumption, while mathematically convenient and commonly used in moderate-dimensional statistical theory \cite{el2018impact, sur2019modern}, is quite restrictive in practice.
It is therefore of interest to extend our analysis to more general covariance structures.
Finally, the possibility of negative transfer suggests the need for data-driven diagnostics or source-screening rules that can decide when transfer should be used and when a single-task fit is preferable.

\bibliographystyle{chicago}
\bibliography{ref}

\setcounter{section}{0}
\renewcommand{\thesection}{S.\arabic{section}}
\setcounter{equation}{0}
\counterwithout{equation}{section}
\renewcommand{\theequation}{S.\arabic{equation}}
\setcounter{theorem}{0}
\counterwithout{theorem}{section}
\renewcommand{\thetheorem}{S.\arabic{theorem}}
\renewcommand{\thefigure}{S.\arabic{figure}}
\renewcommand{\thetable}{S.\arabic{table}}
\setcounter{lemma}{0}
\renewcommand{\thelemma}{S.\arabic{lemma}}
\renewcommand{\theproposition}{S.\arabic{proposition}}
\setcounter{proposition}{0}

\newpage

\begin{center}
\textit{\large Supplementary Material to}
\end{center}
\begin{center}
{\LARGE Transfer Learning for Moderate-Dimensional Ridge-Regularized Robust Linear Regression}
\vskip10pt
\end{center}

\section{Assumptions} \label{Appendix_assumptions}

\textbf{Notation.}
Let $\polyLog(p)$ replace a power of $\log(p)$.
Denote by $\I_m$ the $m \times m$ identity matrix.
When this does not create problems, we also use the standard notation $\I$ for $\I_p$.
Let $\bzero_m \in \mathbb{R}^m$ and $\bone_m \in \mathbb{R}^m$ be the vectors of zeros and ones, respectively.
For an $m \times m$ matrix $\bfA=\{a_{i j}\}_{1 \leq i, j \leq m}$, denote by $\lambda_{\max }(\bfA)$ and $\lambda_{\min }(\bfA)$ the maximum and minimum eigenvalues of $\bfA$, respectively.
The $L_2$ norm of $\bfA$ is defined as $\|\bfA\|=\{\lambda_{\max }(\bfA^{\T} \bfA)\}^{1 / 2}$.
We call $\hat \Sigma = n^{-1} \sum_{i=1}^n \bfx_i \bfx_i^\T$ the sample covariance matrix of $\bfx_i$'s when $\bfx_i$'s are known to have zero mean.
We say that $X \leq Y$ in $L_k$ if $\E(|X|^k) \leq \E(|Y|^k)$.
We use the notation $(a, b)$ to represent either the interval $(a, b)$ or $(b, a)$, as in some cases we need to localize quantities within intervals defined by two values, $a$ and $b$, without knowing in advance whether $a < b$ or $b > a$.
We denote by $\bfX$ the $n \times p$ design matrix whose $i$-th row is $\bfx_i^\T$, and by $\bfX_{(i)}$ the matrix obtained by removing the $i$-th row of $\bfX$.
We write $a \wedge b$ for $\min (a, b)$ and $a \vee b$ for $\max (a, b)$.
For two symmetric matrices $A$ and $B$, $A \succeq B$ means that $A-B$ is positive semi-definite.
For the random variable $W$, we use the definition $\|W\|_{L_k}=\{\E(|W|^k) \}^{1 / k}$.
For sequences of random variables $W_n, Z_n$, we use the notation $W_n=\mO_{L_k}\left(Z_n\right)$ and $W_n=\mo_{L_k}(Z_n)$ when $\|W_n\|_{L_k}=\mathrm{O}(\|Z_n\|_{L_k})$ and $\|W_n\|_{L_k}=\mo(\|Z_n\|_{L_k})$, respectively.
For a vector $\bfv=(v_1, \ldots, v_m)^\T$, the $L_2$ norms are $\|\bfv\|=(\sum_{i=1}^m v_i^2)^{1 / 2}$, whereas $\|v\|_{\infty}=\max _{1 \leq k \leq p}|v(k)|$.
For a function $f$ from $\RR$ to $\RR$, denote $\|f\|_{\infty}=\sup _{x \in \RR}|f(x)|$.

In this appendix, we refine the assumptions in the main text into the forms needed for the proof of Theorem \ref{thm:step_2}.
The proof is divided into three parts, and each part uses a slightly different set of technical conditions.
In particular, the assumptions on $\bbeta_0$ are specified below according to the needs of the different proof subsections, whereas the source-side conditions remain as in Assumption \ref{ass:source}.

\paragraph{Assumptions under which the whole proof goes through}

\begin{itemize}
    \item \textbf{M1.}
    $p/n \to \kappa \in (0,\infty)$.

    \item \textbf{M2.}
    There exists constants $C_{\bbeta}$ and $\me > 1/3$ such that $\|\bbeta_0\| \leq C_{\bbeta}$ and $\|\bbeta_0\|_{\infty}=\mO(n^{-\me})$.

    \item \textbf{M3.}
    Suppose $\rho$ is an even function.
    Assume that $\psi = \rho'$ is bounded and $\psi'$ is Lipschitz and bounded.
    Moreover, we assume that $\sign(\psi(x)) = \sign(x)$ and that $\rho(x) \geq \rho(0) = 0$ for all $x \in \RR$.
    \item \textbf{M4.}
    Assume that there exist independent variables $\lambda_i$'s and $\cX_i$'s such that $\bfx_i=\lambda_i \cX_i$.
    Suppose that $\cX_i$'s are i.i.d. with independent entries, and they have mean $\bzero_p$ and $\cov(\cX_i)=\I_p$.
    Suppose there exist $\mc_n$ and $C_n$ that vary with $n$, where $1/\mc_n = \mO(\polyLog(n))$ and $C_n$ is bounded in $n$ ,such that for any convex $1$-Lipschitz function $G$ of $\cX_i$, $P(|G(\cX_i) - m_G| > t) \leq C_n \exp(-\mc_n t^2)$ holds for all $t > 0$, where $m_G$ is the median of $G(\cX_i)$.
    We require the same assumption to hold for the columns of the $n \times p$ design matrix $\cX$.
    Additionally, we assume that the coordinates of $\cX_i$ have moments of all orders, and the $k$-th moment of the entries of $\cX_i$ is assumed to be uniformly bounded independently of $n$ and $p$ for all $k$.
    Also, for any $1 \leq k \leq p$, the vectors $\Theta_k=(\cX_1(k), \ldots, \cX_n(k))$ in $\RR^n$ satisfy: for any $1$-Lipschitz (with respect to Euclidean norm) convex function $G$, if $m_{G(\Theta_k)}$ is a median of $G(\Theta_k)$, for any $t>0, P(|G(\Theta_k)-m_{G(\Theta_k)}|>t) \leq C_n \exp (-\mc_n t^2)$, $C_n$ and $\mc_n$ can vary with $n$.
    As above, we assume that $1 / \mc_n=\mathrm{O}(\polyLog(n))$.

    \item \textbf{M5.}
    $\lambda_i$'s are independent, with $\E(\lambda_i^2) = 1$, $\E(\lambda_i^4)$ being bounded, and $\sup_{1 \leq i \leq n} |\lambda_i|$ growing at most like $C_\lambda(\log n)^k$ for some $k$.
    $\lambda_i$'s may have finitely many possible distributions.
    \item \textbf{M6.}
    Suppose that $\epsilon_i$'s are independent and independent with $\cX_i$'s and $\lambda_i$'s.
    They may have finitely many possible distributions, each with a density that is differentiable, symmetric, and unimodal.
    Furthermore, for any $r \in \RR$, if $z \sim N(0,1)$, independent of $\epsilon_i$, $\epsilon_i + rz$ has a differentiable density $f_{i,r}$ which is increasing on $(-\infty, 0)$ and decreasing on $(0, \infty)$.
    $\lim _{x \to \infty} x f_i(x) = 0$.

    \item \textbf{M7.}
    $\epsilon_i$'s can have different distributions.
    Similarly, $\lambda_i$'s can have different distributions.
    The fraction of occurrences for each possible combination of distributions for $(\epsilon_i, \lambda_i)$ has a limit as $n \rightarrow \infty$.

\end{itemize}

\paragraph{First part of the proof (Section \ref{sec_p1})}

\begin{itemize}
    \item \textbf{O1.}
    $p/n \to \kappa \in (0, \infty)$.

    \item \textbf{O2.}
    $\rho$ is twice differentiable, convex and non-linear. $\psi=\rho'$.
    Note that $\psi' \geq 0$ since $\rho$ is convex. We assume that $\sign(\psi(x))=\sign(x)$ and $\rho \geq 0=\rho(0)$.

    \item \textbf{O3.}
    $\sup_x |\psi(x)| \leq C \polyLog(n)$ and $\|\psi^2\|_{\infty} \leq C$ for some constant $C$.
    Furthermore, $\psi'$ is assumed to be $\mathrm{L}(n)$-Lipschitz with $\mathrm{L}(n) \leq C n^\alpha, \alpha \geq 0$.
    We also assume that $\|\psi'\|_{\infty} \leq C \polyLog(n)$.

    \item \textbf{O4.}
    Assume that there exist independent variables $\lambda_i$'s and $\cX_i$'s such that $\bfx_i=\lambda_i \cX_i$.
    Suppose that $\cX_i$'s are i.i.d. with independent entries, and they have mean $\bzero_p$ and $\cov(\cX_i)=\I_p$.
    Suppose there exist $\mc_n$ and $C_n$ that vary with $n$, where $1/\mc_n = \mO(\polyLog(n))$ and $C_n$ is bounded in $n$ ,such that for any convex $1$-Lipschitz function $G$ of $\cX_i$, $P(|G(\cX_i) - m_G| > t) \leq C_n \exp(-\mc_n t^2)$ holds for all $t > 0$, where $m_G$ is the median of $G(\cX_i)$.
    We require the same assumption to hold for the columns of the $n \times p$ design matrix $\cX$.
    Additionally, we assume that the coordinates of $\cX_i$ have moments of all orders, and the $k$-th moment of the entries of $\cX_i$ is assumed to be uniformly bounded independently of $n$ and $p$ for all $k$.

    \item \textbf{O5.}
    $\{\cX_i\}_{i=1}^n$ and $\{\lambda_i\}_{i=1}^n$ are independent of $\{\epsilon_i\}_{i=1}^n$.
    $\epsilon_i$'s are independent of each other.

    \item \textbf{O6.}
    $\sup _{1 \leq i \leq n}|\lambda_i| \triangleq \mathcal{L}_n=\mO_{L_k}(\polyLog(n))$ and $\lambda_i$ 's are independent.
    Moreover, $\E(\lambda_i^2)=1$.
    % (Note that this implies that $\cov(X_i)=\cov(\cX_i)$.)

    \item \textbf{O7.}
    $1-2 \alpha>0$ and $\|\bbeta_0\|=\mO(\polyLog(n))$.
\end{itemize}

\paragraph{Second part of the proof (Section \ref{sec_p2})}

\begin{itemize}
    \item \textbf{P1.}
    $\cX_i$ 's have independent entries.
    Furthermore, for any $1 \leq k \leq p$, the vectors $\Theta_k=(\cX_1(k), \ldots, \cX_n(k))$ in $\RR^n$ satisfy: for any $1$-Lipschitz (with respect to Euclidean norm) convex function $G$, if $m_{G(\Theta_k)}$ is a median of $G(\Theta_k)$, for any $t>0, P(|G(\Theta_k)-m_{G(\Theta_k)}|>t) \leq C_n \exp (-\mc_n t^2)$, $C_n$ and $\mc_n$ can vary with $n$.
    As above, we assume that $1 / \mc_n=\mathrm{O}(\polyLog(n))$.

    \item \textbf{P2.}
    $\|\psi'\|_{\infty}=\mO(1)$.

    \item \textbf{P3.}
    $\|\bbeta_0\|_{\infty}=\mO(n^{-\me})$, where $\me>0$.
    Furthermore, $\|\bbeta_0\|_2 \leq C$, where $C$ is a constant independent of $p$ and $n$.
    $\me$ satisfies $\alpha+1 / 4-\me<0$.

    \item \textbf{P4.}
    $1 / 2-2 \alpha>0$ and $\min (1 / 2, \me)-\alpha-1 / 4>0$. The latter implies that $\min (1 / 2, \me )-\alpha>0$
\end{itemize}

\paragraph{Last part of the proof (Section \ref{sec_p3})}

\begin{itemize}
    \item \textbf{F1.}
    $\epsilon_i$'s may have different distributions; however, they may only come from finitely many distributions.
    Furthermore, for any $r \in \RR$, if $z \sim N(0,1)$, independent of $\epsilon_i$, $\epsilon_i + rz$ has a differentiable density $f_{i,r}$ which is increasing on $(-\infty, 0)$ and decreasing on $(0, \infty)$.
    $\lim _{x \to \infty} x f_i(x) = 0$.

    \item \textbf{F2.}
    $\| \psi\|_\infty = \mO(1)$.
    $\psi'$ has Lipschitz constant $\mathrm{L}(n)$.
    Furthermore, $\mathrm{L}(n) \| \psi\|_\infty = \mO(1)$.

    \item \textbf{F3.}
    $\alpha < 1/6$ and $\alpha + 1/3 < 2 \min(1/2, \me)$.

    \item \textbf{F4.}
    there exists constant $C$ such that $\E (\lambda_i^4) \leq C$.

    \item \textbf{F5.}
    $\lambda_i$'s may have different distributions.
    The fraction of occurrences for each possible combination of distributions for $(\epsilon_i, \lambda_i)$ has a limit as $n \rightarrow \infty$.

\end{itemize}

\section{Proof for Theorem \ref{thm:step_2}}

% Suppose $\bfw_0$ is the solution to Equation \eqref{eq:w_solution}, we denote $\bdelta_0 = \bbeta_0 - \bfw_0$.
We call
\begin{equation*}
F(\bdelta)=\frac{1}{n} \sum_{i=1}^{n} \rho \{\epsilon_i + \bfx_i^\T (\bfw_0 - \hat \bfw) +  \bfx_i^\T (\bdelta_0 -  \bdelta) \}+\frac{\tau}{2}\|\bdelta\|^2.
\end{equation*}
$\hat\bdelta$ is defined as the solution of
\begin{equation*}
\begin{aligned}
f(\widehat{\bdelta}) & =0 \ \text { with } \\
\nabla F=f(\bdelta) & =\frac{1}{n} \sum_{i=1}^{n}- \bfx_i \psi\{\epsilon_i + \bfx_i^\T (\bfw_0 - \hat \bfw) +  \bfx_i^\T (\bdelta_0 -  \bdelta) \}+\tau \bdelta .
\end{aligned}
\end{equation*}

We further define
\begin{equation} \label{eq:proof_def}
\tilde\epsilon_i = \epsilon_i + \bfx_i^\T (\bfw_0 - \hat \bfw), \quad
R_i  = \tilde\epsilon_i +  \bfx_i^\T (\bdelta_0 -  \hat\bdelta) ,   \\
\bfS  =\frac{1}{n} \sum_{i=1}^{n} \psi'(R_i) \bfx_i \bfx_i^{\T}, \quad
c_\tau  =\frac{1}{n} \tr(\bfS+\tau \I)^{-1}  .
\end{equation}

\subsection[\appendixname~\thesubsection]{Preliminaries}

\begin{lemma}
\label{lem:w-coordinate-rate}
Under Assumptions \textbf{P3}--\textbf{P4}, for any $l=1,\ldots,p$, where $w(l)$ denotes the $l$th coordinate of a vector $\bfw$,
\[
|\hat w(l)-w_0(l)|
=
O_{L_k}\!\left(\polyLog(n_1)\,n_1^{-1/2}+n_1^{-\me}\right)
=
\mO_{L_k}\!\left(\frac{\polyLog(n_1)}{n_1^{1/2}\wedge n_1^{\me}}\right).
\]
In particular, if we further assume $n = \mO(n_1)$, we have
\begin{equation*}
    |\hat w(l) - w_0(l)| = \mO_{L_k}\Big(\frac{\polyLog(n)}{\sqrt{n} \wedge n^{e}}\Big).
\end{equation*}
\end{lemma}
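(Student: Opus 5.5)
The coordinate-wise rate for the source-stage ridge-regularized robust estimator $\hat\bfw$ is essentially the leave-one-predictor-out result of \cite{el2018impact} (the second part of their proof, which corresponds to our Assumptions \textbf{P1}--\textbf{P4}). I would rederive it along those lines for the source sample, so that every quantity is indexed by $n_1$.

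Fix a coordinate $l$. Let $\bfV$ be the $n_1\times(p-1)$ design with the $l$th column removed and $\bfX(l)$ the removed column. Define the leave-one-predictor-out estimator $\hat\bfw_{(l)}$ by solving the ridge-regularized robust problem \eqref{eq:step_1} with the $l$th coordinate fixed at its true value $w_0(l)$, optimizing only over the remaining coordinates. Its residuals $r_{i,(l)}$ are then independent of the column $\bfX(l)$.

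The first-order condition $-n_1^{-1}\sum_i \bfx_i^{(1)}\tilde\psi(R_i^{(1)})+\tau_1\bfw=\bzero$, linearized around $\hat\bfw_{(l)}$, gives an explicit approximation. With $\bfD_{(l)}=\mathrm{diag}\{\tilde\psi'(r_{i,(l)})\}$ and
\[
\xi_n=\frac{1}{n_1}\bfX(l)^\T\bfD_{(l)}^{1/2}\Big[\I-\frac{1}{n_1}\bfD_{(l)}^{1/2}\bfV\Big(\frac{1}{n_1}\bfV^\T\bfD_{(l)}\bfV+\tau_1\I\Big)^{-1}\bfV^\T\bfD_{(l)}^{1/2}\Big]\bfD_{(l)}^{1/2}\bfX(l),
\]
the approximation is
\[
\hat w(l)-w_0(l)\approx \frac{n_1^{-1}\sum_i X_i(l)\tilde\psi(r_{i,(l)})-\tau_1 w_0(l)}{\tau_1+\xi_n}.
\]
The denominator is at least $\tau_1>0$, because the bracketed matrix is positive semidefinite. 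So it suffices to bound the numerator and the linearization error.

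For the numerator, the term $\tau_1|w_0(l)|$ is $\mO(n_1^{-\me})$ by \textbf{B7}. For the stochastic term, condition on $\bfV$ and the errors. Then $n_1^{-1}\sum_i X_i(l)\tilde\psi(r_{i,(l)})$ is a linear form in the column $\Theta_l$ with coefficient vector of norm at most $\|\tilde\psi\|_\infty\sqrt{n_1}/n_1$. Concentration for linear functions of $\Theta_l$ (\textbf{P1}), with $1/\mc_n=\polyLog$ and $\sup|\lambda_i|=\polyLog$, gives $\mO_{L_k}(\polyLog(n_1)n_1^{-1/2})$.

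The remaining step bounds the linearization error, i.e.\ shows $\|\hat\bfw-\tilde\bfw_{(l)}\|$ is of smaller order. Here $\tilde\bfw_{(l)}$ is the one-step approximation built from $\hat\bfw_{(l)}$. I would apply the strong convexity of the objective (modulus $\tau_1$) to the gradient evaluated at $\tilde\bfw_{(l)}$. That gradient is a remainder controlled by $\|\tilde\psi'\|$ being $\mathrm{L}(n)$-Lipschitz (\textbf{P2}--\textbf{P4}) times squared residual perturbations $|X_i(l)(\hat w(l)-w_0(l))+\bfV_i^\T(\cdot)|^2$. This produces a bound of order $n^{\alpha}\polyLog\cdot(n_1^{-1/2}\wedge n_1^{-\me})^{2}$ times $\sqrt{n}$-type factors. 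Under \textbf{P4} this is $o$ of the main rate.

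I expect this step to be the main obstacle. The bound is self-referential: it requires an a priori control of $\sup_i|\bfx_i^{(1)\T}(\hat\bfw-\hat\bfw_{(l)})|$. That control would come from the leave-one-observation-out analysis of the first part (O-assumptions) applied to the source sample, together with a bootstrapping argument.

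Finally, combining the two terms gives the first display. If $n=\mO(n_1)$, then since both $n$ and $n_1$ are proportional to $p$, the polylogs and powers convert, yielding the second display.
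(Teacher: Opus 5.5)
Your decomposition matches the one behind the paper's proof. The paper writes $\hat w(l)-w_0(l)=(\hat w(l)-\mathfrak w_l)+(\mathfrak w_l-w_0(l))$, where $\mathfrak w_l$ is your one-step quantity, and handles the two pieces as follows.
\begin{itemize}
\item It bounds $|\mathfrak w_l-w_0(l)|$ by Proposition~3.12 of \cite{el2018impact}. Your argument ($\xi_n\ge 0$ plus concentration of the linear form in $\Theta_l$) reproduces this correctly.
\item It bounds $|\hat w(l)-\mathfrak w_l|$ by Theorem~3.20 of \cite{el2018impact}, i.e.\ by $\mO_{L_k}(\polyLog(n_1)\,n_1^{\alpha}/[n_1^{1/2}\wedge n_1^{\me}]^2)$, and absorbs it using $m=\min(1/2,\me)>\alpha$ from \textbf{P4}.
\end{itemize}
The gap is in your plan for rederiving the second piece.

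\textbf{The self-reference is not there.} The strong-convexity bound $\|\hat\bfw-\tilde\bfw_{(l)}\|\le\tau_1^{-1}\|g(\tilde\bfw_{(l)})\|$, with $g$ the gradient of the objective in \eqref{eq:step_1}, needs the gradient only at the explicit approximant. The mean-value points lie between $r_{i,(l)}$ and the residual at $\tilde\bfw_{(l)}$, not at $\hat\bfw$.
\begin{itemize}
\item The residual perturbation is therefore $\nu_i=(\mathfrak w_l-w_0(l))\pi_i$, with $\pi_i=\bfv_i^\T(\mathfrak S_l+\tau_1\I)^{-1}\bfu_l-x^{(1)}_i(l)$. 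Here $\mathfrak S_l,\bfu_l$ are the source analogues of $\mathfrak S_p,\bfu_p$, and $\bfv_i$ is the $i$th source predictor with coordinate $l$ removed.
\item $\sup_i|\pi_i|=\mO_{L_k}(\polyLog(n_1))$ follows from concentration of linear forms in the held-out column, as in Proposition~\ref{prop:md_bound}.
\end{itemize}
No a priori bound on $\hat\bfw$ is needed. Control of $\sup_i|\bfx_i^{(1)\T}(\hat\bfw-\hat\bfw_{(l)})|$ is an output of this argument (cf.\ the last display of Theorem~\ref{thm:delta_tildeb_bound}), not an input. Writing the perturbation with $\hat w(l)$ instead of $\mathfrak w_l$ is what made the step look circular. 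Your proposed remedy would not have helped anyway: the leave-one-observation-out analysis controls the effect of deleting an observation, not of deleting predictor $l$.

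\textbf{The rate bookkeeping must avoid any $\sqrt n$ factor.} First, $(n_1^{-1/2}\wedge n_1^{-\me})^2$ should read $(n_1^{1/2}\wedge n_1^{\me})^{-2}=n_1^{-2m}$. Second, a bound of order $n^{\alpha+1/2-2m}$ is never $\mo(n^{-m})$, because $m\le 1/2$; so \textbf{P4} cannot rescue it, contrary to your claim.
\begin{itemize}
\item Such a factor arises if you bound $\|n_1^{-1}\sum_i\md_i\bfx_i^{(1)}\nu_i\|$ term by term through $\|\bfx_i^{(1)}\|\approx\sqrt p$.
\item Instead, write the remainder as $n_1^{-1}\bfX^{(1)\T}\bfD\bfX^{(1)}\bfa$, with $\bfD=\mathrm{diag}(\md_i)$ and $\bfa=(\mathfrak w_l-w_0(l))\,[\{(\mathfrak S_l+\tau_1\I)^{-1}\bfu_l\}^\T,\,-1]^\T$.
\item Then use $\|n_1^{-1}\bfX^{(1)\T}\bfX^{(1)}\|_2=\mO_{L_k}(\polyLog)$, $\|\bfa\|=\mO_{L_k}(\polyLog)\,|\mathfrak w_l-w_0(l)|$ (via the analogue of \eqref{eqn:prop311_iii}), and $\sup_i|\md_i|\le\mathrm{L}(n_1)\sup_i|\nu_i|$.
\end{itemize}
This gives $|\hat w(l)-\mathfrak w_l|\le\|\hat\bfw-\tilde\bfw_{(l)}\|=\mO_{L_k}(\polyLog(n_1)\,n_1^{\alpha-2m})$. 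That is $\mo_{L_k}(\polyLog(n_1)\,n_1^{-m})$ precisely because $m>\alpha$. Alternatively, just cite Theorem~3.20 of \cite{el2018impact}, as the paper does.

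\textbf{A minor point on the final conversion.} You convert using $n\asymp n_1$ from \textbf{A1} and \textbf{B1}. The paper assumes only $n=\mO(n_1)$ and handles $n_1\gg n$ through the eventual monotonicity of $(\log t)^q t^{-m}$. Under the standing assumptions, both routes work.
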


\begin{proof}
By Proposition~3.12 of \citet{el2018impact},
\[
\left| \mathfrak w_l - w_0(l) \right|
=
O_{L_k}\!\left(
\polyLog(n_1)\,n_1^{-1/2}
+\|w_0\|_{\infty}
\right),
\]
where $\mathfrak w_l$ denotes the analogue of $\mathfrak b_p$ defined in Appendix~4 of \citet{el2018impact}. 
Its explicit construction is rather involved and is not needed here; we only use $\mathfrak w_l$ as an intermediate quantity in the argument.
Under Assumption~P3, $\|w_0\|_{\infty}=O(n_1^{-\me})$, hence
\begin{equation}
\label{eq:oracle-part-final}
\left| \mathfrak w_l - w_0(l) \right|
=
O_{L_k}\!\left(
\polyLog(n_1)\,n_1^{-1/2}
+n_1^{-\me}
\right)
=
\mO_{L_k}\!\left(\frac{\polyLog(n_1)}{n_1^{1/2}\wedge n_1^{\me}}\right).
\end{equation}

Moreover, by Theorem~3.20 of \citet{el2018impact},
\begin{equation}
\label{eq:approx-part-final}
\hat w(l)-\mathfrak w_l
=
O_{L_k}\!\left(
\frac{\polyLog(n_1)\,n_1^{\alpha}}
{[n_1^{1/2}\wedge n_1^{\me}]^2}
\right).
\end{equation}
Let $m:=\min(1/2,\me)$. Then $[n_1^{1/2}\wedge n_1^{\me}]^2=n_1^{2m}$ and
\[
\frac{\polyLog(n_1)\,n_1^{\alpha}}
{[n_1^{1/2}\wedge n_1^{\me}]^2}
=
\polyLog(n_1)\,n_1^{-m} n_1^{\alpha-m}.
\]
Assumption~P4 gives $m > \alpha$, thus
\[
\frac{\polyLog(n_1)\,n_1^{\alpha}}
{[n_1^{1/2}\wedge n_1^{\me}]^2}
= o\!\left(\polyLog(n_1)\,n_1^{-m}\right),
\]
and hence \eqref{eq:approx-part-final} yields
\begin{equation}
\label{eq:approx-negl-final}
\hat w(l)-\mathfrak w_l
=
o_{L_k}\!\left(\polyLog(n_1)\,n_1^{-m}\right).
\end{equation}

By the triangle inequality and Minkowski's inequality in $L_k$,
\[
\|\hat w(l)-w_0(l)\|_{L_k}
\le
\|\hat w(l)-\mathfrak w_l\|_{L_k}
+
\|\mathfrak w_l-w_0(l)\|_{L_k}.
\]
Using \eqref{eq:oracle-part-final} and \eqref{eq:approx-negl-final}, we obtain
\[
\|\hat w(l)-w_0(l)\|_{L_k}
=
\mO\!\left(\frac{\polyLog(n_1)}{n_1^{1/2}\wedge n_1^{\me}}\right),
\]
which is equivalent to
\[
|\hat w(l)-w_0(l)|
=
\mO_{L_k}\!\left(\frac{\polyLog(n_1)}{n_1^{1/2}\wedge n_1^{\me}}\right).
\]

Assume $n=\mO(n_1)$, i.e., $n_1\ge n/C$ for some $C>0$ and all large $n$.
Since $\polyLog(t)$ denotes a power of $\log t$, write
$\polyLog(t)=(\log t)^q$ for some fixed $q\ge 0$ (up to a constant factor).
Consider $f(t):=(\log t)^q\,t^{-m}$.
For $t\ge \exp(q/m)$,
\[
f'(t)=\frac{(\log t)^{q-1}}{t^{m+1}}\big(q-m\log t\big)\le 0,
\]
so $f$ is decreasing for all sufficiently large $t$.
Hence, for all sufficiently large $n$ and all $n_1\ge n/C$,
\[
\frac{\polyLog(n_1)}{\sqrt{n_1}\wedge n_1^{e}}
=\frac{(\log n_1)^q}{n_1^{m}}
=f(n_1)
\le f(n/C)
=\frac{(\log (n/C))^q}{(n/C)^m}
\le C^m\,\frac{(\log n)^q}{n^m}
\asymp \frac{\polyLog(n)}{\sqrt n\wedge n^{e}},
\]
where we use $a_n \asymp b_n$ to denote that $a_n = O(b_n)$ and $b_n = O(a_n)$.
Therefore, we have
\[
|\hat w(l)-w_0(l)|
=\mO_{L_k}\!\Big(\frac{\polyLog(n)}{\sqrt n\wedge n^{e}}\Big).
\]
\end{proof}

\begin{proposition}
\label{prop:1}
    Let $\bdelta_1$ and $\bdelta_2$ be the two vectors in $\RR^p$.
    Then, when $\rho$'s are convex and twice-differentiable,
    \begin{equation}
    \label{eq:prop1}
        \|\bdelta_1 - \bdelta_2\| \leq \frac{1}{\tau} \|f(\bdelta_1) - f(\bdelta_2)\|.
    \end{equation}
\end{proposition}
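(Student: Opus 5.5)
The plan is to use strong convexity of the objective $F$. Since $\rho$ is convex and twice differentiable, $F(\bdelta)$ is the sum of a convex function of $\bdelta$ and $\frac{\tau}{2}\|\bdelta\|^2$. Hence $F$ is $\tau$-strongly convex, and its gradient $f$ is $\tau$-strongly monotone. The bound \eqref{eq:prop1} then follows from Cauchy--Schwarz.

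Concretely, I will apply the mean value theorem (in integral form) along the segment $\bdelta_t=\bdelta_2+t(\bdelta_1-\bdelta_2)$, $t\in[0,1]$. Differentiating $f$ gives the Hessian
\[
\nabla f(\bdelta)=\frac{1}{n}\sum_{i=1}^n \psi'\{\epsilon_i+\bfx_i^\T(\bfw_0-\hat\bfw)+\bfx_i^\T(\bdelta_0-\bdelta)\}\bfx_i\bfx_i^\T+\tau\I.
\]
Because $\psi'=\rho''\ge 0$ by convexity, the first term is positive semidefinite. Therefore $\nabla f(\bdelta)\succeq\tau\I$ for every $\bdelta$. Writing
\[
f(\bdelta_1)-f(\bdelta_2)=\Big\{\int_0^1\nabla f(\bdelta_t)\,dt\Big\}(\bdelta_1-\bdelta_2)=:\bfM(\bdelta_1-\bdelta_2),
\]
the matrix $\bfM$ is symmetric and satisfies $\bfM\succeq\tau\I$.

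The conclusion then follows in either of two ways. Since $\lambda_{\min}(\bfM)\ge\tau$, we have $\|\bfM\bfv\|\ge\tau\|\bfv\|$ directly. Equivalently,
\[
\tau\|\bdelta_1-\bdelta_2\|^2\le(\bdelta_1-\bdelta_2)^\T\bfM(\bdelta_1-\bdelta_2)=(\bdelta_1-\bdelta_2)^\T\{f(\bdelta_1)-f(\bdelta_2)\}\le\|\bdelta_1-\bdelta_2\|\,\|f(\bdelta_1)-f(\bdelta_2)\|.
\]
Dividing by $\|\bdelta_1-\bdelta_2\|$ gives the claim; the case $\bdelta_1=\bdelta_2$ is trivial.

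I do not expect a real obstacle. The only point needing care is justifying the integral form of the mean value theorem, which requires $f$ to be $C^1$. If only twice differentiability without continuity of $\psi'$ is available, I would instead use monotonicity of $\psi$ directly. Each summand satisfies $-(\bdelta_1-\bdelta_2)^\T\bfx_i[\psi(a_i-\bfx_i^\T\bdelta_1)-\psi(a_i-\bfx_i^\T\bdelta_2)]\ge 0$ because $\psi$ is nondecreasing. This yields $(\bdelta_1-\bdelta_2)^\T\{f(\bdelta_1)-f(\bdelta_2)\}\ge\tau\|\bdelta_1-\bdelta_2\|^2$ without any Hessian, and the same Cauchy--Schwarz step finishes the proof.
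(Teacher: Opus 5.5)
Your proof is correct and is essentially the paper's: the paper likewise writes $f(\bdelta_1)-f(\bdelta_2)=(\bfS_{\bdelta_1,\bdelta_2}+\tau\I_p)(\bdelta_1-\bdelta_2)$ with $\bfS_{\bdelta_1,\bdelta_2}\succeq 0$ and concludes from $\bfS_{\bdelta_1,\bdelta_2}+\tau\I_p\succeq\tau\I_p$. The only difference is that it applies the scalar mean value theorem to $\psi$ in each summand, with intermediate points $\gamma_i^\star$, instead of your integral average; this already removes the $C^1$ concern you raised, and your monotonicity-plus-Cauchy--Schwarz fallback is a valid, slightly more general variant that needs only $\psi$ nondecreasing.
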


\begin{proof}
We have by definition
\begin{equation*}
f(\bdelta_1) - f(\bdelta_2) = \tau (\bdelta_1 - \bdelta_2) + \frac{1}{n} \sum_{i=1}^{n} \bfx_i [\psi\{ \tilde\epsilon_i +  \bfx_i^\T (\bdelta_0 -  \bdelta_2) \} - \psi\{\tilde\epsilon_i +  \bfx_i^\T (\bdelta_0 -  \bdelta_1) \}].
\end{equation*}
By the mean value theorem we have
\begin{equation*}
\psi\{ \tilde\epsilon_i +  \bfx_i^\T (\bdelta_0 -  \bdelta_2) \} - \psi\{\tilde\epsilon_i +  \bfx_i^\T (\bdelta_0 -  \bdelta_1) \} = \psi'(\gamma_i^\star) \bfx_i^\T (\bdelta_0 -  \bdelta_1),
\end{equation*}
where $\gamma_i^\star$ is in the interval $\big(\tilde\epsilon_i +  \bfx_i^\T (\bdelta_0 -  \bdelta_1), \tilde\epsilon_i +  \bfx_i^\T (\bdelta_0 -  \bdelta_2)\big)$.

Hence,
\begin{eqnarray*}
f(\bdelta_1) - f(\bdelta_2) &=& \tau (\bdelta_1 - \bdelta_2) + \frac{1}{n} \sum_{i=1}^{n} \psi'(\gamma_i^\star) \bfx_i \bfx_i^\T (\bdelta_0 -  \bdelta_1)\\
&=& (\bfS_{\bdelta_1, \bdelta_2} + \tau \I_p)(\bdelta_1 - \bdelta_2),
\end{eqnarray*}
where
\begin{equation*}
\bfS_{\bdelta_1, \bdelta_2} = \frac{1}{n} \sum_{i=1}^{n} \psi'(\gamma_i^\star) \bfx_i \bfx_i^\T.
\end{equation*}
This shows that
\begin{equation*}
\bdelta_1 - \bdelta_2 = (\bfS_{\bdelta_1, \bdelta_2} + \tau \I_p)^{-1} \{ f(\bdelta_1) - f(\bdelta_2)\}.
\end{equation*}

Since $\rho$ is convex, $\psi' = \rho''$ is non-negative and $\bfS_{\bdelta_1, \bdelta_2}$ is positive semi-definite.
In the semi-definite order, we have $\bfS_{\bdelta_1, \bdelta_2} + \tau \I_p \succeq \tau \I_p$.
In particular,
\begin{equation*}
    \|\bdelta_1 - \bdelta_2\| \leq \frac{1}{\tau} \|f(\bdelta_1) - f(\bdelta_2)\|.
\end{equation*}
\end{proof}

Proposition \ref{prop:1} yields the following lemma.

\begin{lemma}
\label{lemma:delta_ineq_1}
For any $\bdelta_1$,
\begin{equation*}
\| \hat \bdelta - \bdelta_1\| \leq \frac{1}{\tau} \|f(\bdelta_1)\|.
\end{equation*}
\end{lemma}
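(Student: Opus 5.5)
The plan is to apply Proposition \ref{prop:1} with $\bdelta_2 = \hat\bdelta$ and $\bdelta_1$ arbitrary, and then use the first-order optimality condition for $\hat\bdelta$.

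First, record that $\hat\bdelta$ satisfies $f(\hat\bdelta)=\bzero_p$. The objective $F$ is the sum of a convex, continuously differentiable function of $\bdelta$ and the strongly convex term $\frac{\tau}{2}\|\bdelta\|^2$ with $\tau>0$. Convexity of the first part follows because $\rho$ is convex and $\bdelta \mapsto \tilde\epsilon_i + \bfx_i^\T(\bdelta_0-\bdelta)$ is affine. Hence $F$ is strongly convex, and its minimizer exists and is unique. Since $F$ is differentiable with $\nabla F = f$, that minimizer is exactly the point where $f$ vanishes. This matches the way $\hat\bdelta$ was defined above, as the solution of $f(\hat\bdelta)=0$.

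Second, apply \eqref{eq:prop1} with $\bdelta_2=\hat\bdelta$. This gives $\|\bdelta_1-\hat\bdelta\| \le \tau^{-1}\|f(\bdelta_1)-f(\hat\bdelta)\| = \tau^{-1}\|f(\bdelta_1)\|$, which is the claimed inequality.

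The only point needing care is that Proposition \ref{prop:1} invokes the mean value theorem, which requires $\psi$ to be differentiable with $\psi'\ge 0$. Both hold under Assumption \ref{ass:target}2 (equivalently \textbf{O2}), so there is no real obstacle; the lemma is a direct corollary.
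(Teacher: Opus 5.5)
Your proposal is correct and is exactly the paper's argument: apply \eqref{eq:prop1} with $\bdelta_2=\hat\bdelta$ and use $f(\hat\bdelta)=\bzero_p$. You also justify that the strongly convex objective has a unique minimizer, characterized by $f(\hat\bdelta)=\bzero_p$. This is a harmless addition; the paper simply takes that characterization as the definition of $\hat\bdelta$.
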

The lemma is a simple consequence of equation \eqref{eq:prop1} by definition $f(\hat \bdelta) = 0$.

\subsection[\appendixname~\thesubsection]{On \texorpdfstring{$\|\hat \bdelta\|$}{betahat} and \texorpdfstring{$\|\hat \bdelta - \bdelta_0\|$}{hat-0}}

\begin{lemma}
\label{lemma:delta_moment_bounds}
Define $\bfq_n(\bfb) = n^{-1} \sum_{i=1}^{n} \bfx_i \psi\{\tilde\epsilon_i + \bfx_i^\T \bfb \}$, $\bfq_n \in \RR^p$.

If $\bfD_{\psi}$ is the $n \times n$ diagonal matrix with $(i,i)$-entry $\psi\{\tilde\epsilon_i + \bfx_i^\T \bdelta_0 \}$,
%inequality \eqref{eq:delta_ineq_1} can be expressed as
\begin{equation*}
\|\hat\bdelta\| \leq\frac{1}{\tau} \|\bfq_n(\bdelta_0)\| =  \frac{1}{\tau} \sqrt{\frac{1}{n^2} \bone^\T \bfD_{\psi} \bfX \bfX^\T \bfD_{\psi} \bone},
\end{equation*}
and if $\bfD_{\psi(\xi_i)}$ is the $n \times n$ diagonal matrix with $(i,i)$-entry $\psi(\tilde\epsilon_i)$, %inequality \eqref{eq:delta_error_ineq} can be expressed as
\begin{equation*}
\label{eq:delta_error_ineq_2}
\|\hat\bdelta - \bdelta_0\| \leq \|\bdelta_0\| + \frac{1}{\tau} \|\bfq_n(\bzero)\| = \|\bdelta_0\| + \frac{1}{\tau} \sqrt{\frac{1}{n^2} \bone^\T \bfD_{\psi(\xi_i)} \bfX \bfX^\T \bfD_{\psi(\xi_i)} \bone},
\end{equation*}
Also,
\begin{equation*}
\|\bfq_n(\bdelta_0)\|^2 \leq \frac{\bone^\T \bfD_{\psi}^2 \bone}{n} \|\bfX \bfX^\T/n\|_2,
\end{equation*}
where $\|\bfA\|_2$ denotes the largest singular value of the matrix $\bfA$.

Therefore, under assumptions \textbf{O1-O6},
\begin{eqnarray*}
    && \E(\|\hat\bdelta\|^2) \leq \frac{1}{\tau^2} \frac{p}{n} C^2 \polyLog(n), \\
    && \E(\|\hat\bdelta\|^4) \leq \frac{1}{\tau^4}  C \polyLog(n).
\end{eqnarray*}

Similarly, for any finite $k$,
    \begin{equation*}
    \E (\|\hat\bdelta - \bdelta_0\|_2^k) \leq C_k [ \|\bdelta_0\|^k + \polyLog(n) / \tau^k].
    \end{equation*}
    In the case $k=2$, we have the more precise bound
    \begin{equation*}
    \E (\|\hat \bdelta - \bdelta_0\|^2) \leq 2 \Big[ \|\bdelta_0\|^2 + \frac{p/n}{\tau^2} \frac{1}{n} \sum_{i=1}^{n} \E \{\psi^2(\tilde\epsilon_i)\} \Big].
    \end{equation*}
\end{lemma}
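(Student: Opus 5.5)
Apply Lemma \ref{lemma:delta_ineq_1} twice, with $\bdelta_1 = \bzero$ and with $\bdelta_1 = \bdelta_0$. Then bound the resulting gradient norms through quadratic forms in $\bfX\bfX^\T$. Throughout, condition on $\hat\bfw$, which is independent of the target sample.

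\emph{Step 1: the two deterministic bounds.} Take $\bdelta_1 = \bzero$. Since $f(\bzero) = -\bfq_n(\bdelta_0)$, Lemma \ref{lemma:delta_ineq_1} gives $\|\hat\bdelta\| \le \tau^{-1}\|\bfq_n(\bdelta_0)\|$. Now take $\bdelta_1 = \bdelta_0$. Since $f(\bdelta_0) = -\bfq_n(\bzero) + \tau\bdelta_0$, the triangle inequality gives $\|\hat\bdelta-\bdelta_0\| \le \|\bdelta_0\| + \tau^{-1}\|\bfq_n(\bzero)\|$. Writing $\bfq_n(\bfb) = n^{-1}\bfX^\T \bfD\bone$, we get $\|\bfq_n\|^2 = n^{-2}\bone^\T\bfD\bfX\bfX^\T\bfD\bone$. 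Bounding this quadratic form by $\|\bfX\bfX^\T\|_2\,\|\bfD\bone\|^2$ gives the third display, $\|\bfq_n(\bdelta_0)\|^2 \le n^{-1}\bone^\T\bfD_\psi^2\bone\,\|\bfX\bfX^\T/n\|_2$.

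\emph{Step 2: moment bounds.} For the second moment, expand
\[
\E\|\bfq_n(\bdelta_0)\|^2 = n^{-2}\sum_{i,j}\E\bigl[\psi(\cdot_i)\psi(\cdot_j)\bfx_i^\T\bfx_j\bigr].
\]
The off-diagonal terms are where the difficulty lies. Each $\psi(\tilde\epsilon_i + \bfx_i^\T\bdelta_0)$ depends on $\bfx_i$ through $\bfx_i^\T(\bfw_0-\hat\bfw+\bdelta_0)$, so the cross terms do not vanish trivially. I would use the symmetry of the law of $\epsilon_i$ together with the oddness of $\psi$. Conditioning on the $\bfx$'s, the summands $\psi(\epsilon_i + a_i)$ are independent across $i$ but not mean zero, so this alone is not enough. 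My first attempt is therefore to avoid the expansion and use the crude operator bound from Step 1. Since $\|\psi\|_\infty \le C$, we have $n^{-1}\bone^\T\bfD_\psi^2\bone \le C^2$, and $\E\|\bfX\bfX^\T/n\|_2^k = \mO(\polyLog(n))$ under \textbf{O4}, \textbf{O6}. The $\polyLog$ factor comes from the $\lambda_i$'s, via $\sup_i|\lambda_i|$, and from standard concentration for $\|\cX\|_2$. This yields $\E\|\hat\bdelta\|^2 \le \tau^{-2} C^2 \polyLog(n)$. The factor $p/n$ in the stated bound is harmless, since $p/n\to\kappa$ is bounded, and $\|\bfX\bfX^\T/n\|_2 \lesssim (1+\sqrt{p/n})^2\,\polyLog(n)$ makes it explicit. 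The fourth moment follows in the same way from $\E\|\bfX\bfX^\T/n\|_2^2$. For general $k$, combine $(a+b)^k \le 2^{k-1}(a^k+b^k)$ with Step 1 and the same operator-norm moments, giving $C_k[\|\bdelta_0\|^k + \polyLog(n)/\tau^k]$.

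\emph{Step 3: the sharper $k=2$ bound.} This step is the main obstacle. Square the Step 1 inequality, $\|\hat\bdelta-\bdelta_0\|^2 \le 2\|\bdelta_0\|^2 + 2\tau^{-2}\|\bfq_n(\bzero)\|^2$, and compute $\E\|\bfq_n(\bzero)\|^2$ exactly instead of through the operator norm. The diagonal terms give $n^{-2}\sum_i \E[\psi^2(\tilde\epsilon_i)\|\bfx_i\|^2]$. To factor this as $(p/n)\,n^{-1}\sum_i\E\psi^2(\tilde\epsilon_i)$, note that $\E\|\bfx_i\|^2 = p$, and that $\tilde\epsilon_i$ depends on $\bfx_i$ only through the one-dimensional projection $\bfx_i^\T(\bfw_0-\hat\bfw)$. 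The decoupling error is then controlled by Lemma \ref{lem:w-coordinate-rate} and concentration of $\|\cX_i\|^2/p$ around $1$; if needed, I would absorb it into the constant. The cross terms are $\E[\psi(\tilde\epsilon_i)\bfx_i]^\T\E[\psi(\tilde\epsilon_j)\bfx_j]$ by independence of the target observations given $\hat\bfw$. For the mean vector $\E[\psi(\epsilon_i + \bfx_i^\T\bfv)\bfx_i]$ with $\bfv = \bfw_0-\hat\bfw$, I would try a Stein-type or symmetry argument: flip $(\epsilon_i,\cX_i)\mapsto(-\epsilon_i,-\cX_i)$, which preserves the law when $\cX_i$ is symmetric, and $\psi$ is odd. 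If this argument fails to kill the cross terms, I would fall back to a bound that is still of the stated order, at the cost of a worse constant than $2$.
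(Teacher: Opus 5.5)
Your Steps 1 and 2 follow the paper's argument: Lemma \ref{lemma:delta_ineq_1} at $\bdelta_1=\bzero$ and at $\bdelta_1=\bdelta_0$, the quadratic-form bound, then $\|\psi\|_\infty$ and $\|\bfX\bfX^\T/n\|_2=\mO_{L_k}(\polyLog(n))$ (Lemma 3.38 of \cite{el2018impact}) for all the $L_k$ statements. They are correct, and this operator-norm route is unaffected by the fact that $\tilde\epsilon_i$ depends on $\bfx_i$.

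The gap is in Step 3, where the sign flip cannot remove the cross terms. Write $\bfv=\bfw_0-\hat\bfw$. The map $(\epsilon_i,\cX_i)\mapsto(-\epsilon_i,-\cX_i)$ sends $\bfx_i\psi(\epsilon_i+\bfx_i^\T\bfv)$ to $(-\bfx_i)\psi(-\epsilon_i-\bfx_i^\T\bfv)=\bfx_i\psi(\epsilon_i+\bfx_i^\T\bfv)$, that is, to itself. So it proves invariance, not a vanishing mean. In fact $g_i(a)=\E\,\psi(\epsilon_i+a)$ is odd and nondecreasing, so $\boldsymbol{\mu}_i=\E\{\bfx_i\psi(\tilde\epsilon_i)\}$ satisfies $\boldsymbol{\mu}_i^\T\bfv=\E\{(\bfx_i^\T\bfv)\,g_i(\bfx_i^\T\bfv)\}>0$ whenever $\bfv\neq\bzero$. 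For Gaussian $\cX_i$, Stein's identity gives $\boldsymbol{\mu}_i=\E\{\lambda_i^2g_i'(\bfx_i^\T\bfv)\}\,\bfv$. The cross terms $n^{-2}\sum_{i\neq j}\boldsymbol{\mu}_i^\T\boldsymbol{\mu}_j$ are therefore of order $\|\bfv\|^2$, not negligible.

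You cannot close this gap as stated, and the paper does not close it either. Its proof of the $k=2$ display asserts $\E\|\bfq_n(\bzero)\|^2=n^{-2}\sum_i\E(\|\bfx_i\|^2)\E\{\psi^2(\tilde\epsilon_i)\}$. That is the single-study computation, valid only when $\tilde\epsilon_i=\epsilon_i$ is independent of $\bfx_i$. With constant $2$ the display is false in general. Take $\bdelta_0=\bzero$, $\lambda_i\equiv1$, Gaussian $\cX_i$, $\epsilon_i\sim N(0,\sigma^2)$, and $\psi$ equal to the identity on a large interval. From $\tau\hat\bdelta=n^{-1}\sum_i\bfx_i\psi(\tilde\epsilon_i-\bfx_i^\T\hat\bdelta)$ and Step 1 you get $\|\tau\hat\bdelta-\bfq_n(\bzero)\|\le\tau^{-1}\|\psi'\|_\infty\|\bfX^\T\bfX/n\|_2\|\bfq_n(\bzero)\|$. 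So for a large fixed $\tau$, $\tau^2\E\|\hat\bdelta\|^2$ is close to $\E\|\bfq_n(\bzero)\|^2\approx\kappa(\sigma^2+\|\bfv\|^2)+\|\bfv\|^2$. The claimed bound gives $2\kappa(\sigma^2+\|\bfv\|^2)$, which fails once $\|\bfv\|^2>\kappa(\sigma^2+\|\bfv\|^2)$.

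What is provable is to bound the cross terms rather than cancel them. By Cauchy--Schwarz and $\cov(\bfx_i)=\I_p$, $\|\boldsymbol{\mu}_i\|^2\le\E\psi^2(\tilde\epsilon_i)$, hence $n^{-2}\sum_{i\neq j}\boldsymbol{\mu}_i^\T\boldsymbol{\mu}_j\le n^{-1}\sum_i\E\psi^2(\tilde\epsilon_i)$ and
\[
\E\|\hat\bdelta-\bdelta_0\|^2\le2\|\bdelta_0\|^2+\frac{2}{\tau^2}\Big[\frac{1}{n^2}\sum_{i=1}^n\E\{\|\bfx_i\|^2\psi^2(\tilde\epsilon_i)\}+\frac1n\sum_{i=1}^n\E\psi^2(\tilde\epsilon_i)\Big].
\]
The first bracketed term is at most $(p/n)\|\psi\|_\infty^2$. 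When $\lambda_i\equiv1$ it equals $(p/n)\,n^{-1}\sum_i\E\psi^2(\tilde\epsilon_i)\{1+\mo(1)\}$. It does not factor for random $\lambda_i$, because $\lambda_i^2$ and $\psi^2(\tilde\epsilon_i)$ are positively associated. This is your fallback made precise, and it is all that is used later: the polylog $L_k$ bounds, and the uniform boundedness of $\E\|\hat\bdelta-\bdelta_0\|^2$ invoked in Lemma \ref{lemma:rii_asymptotic}.
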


Noting that \cite{el2018impact} has shown that $\E (\|\hat \bfw - \bfw_0\|) = \mO(1)$, we have $\E (\|\hat \bdelta + \hat \bfw - \bdelta_0 - \bfw_0\| )$ is bounded by $K \operatorname{polyLog}(n) / \tau^k$.

\begin{proof}
    Recall that $f(\bdelta)  =\frac{1}{n} \sum_{i=1}^{n}- \bfx_i \psi\{\tilde\epsilon_i +  \bfx_i^\T (\bdelta_0 -  \bdelta) \}+\tau \bdelta $. Applying Lemma \ref{lemma:delta_ineq_1} with $\bdelta_1 = \bzero$ we have
\begin{equation*}
\label{eq:delta_ineq_1}
\| \hat \bdelta - \bzero\| \leq \frac{1}{\tau} \|f(\bzero)\| = \frac{1}{\tau} \Big\| \frac{1}{n} \sum_{i=1}^{n}- \bfx_i \psi\{\tilde\epsilon_i +  \bfx_i^\T \bdelta_0 \} \Big\|,
\end{equation*}
which gives the first inequality.
% and using $\bdelta_1 = -\hat\bfw$ we have
% \begin{equation*}
% \| \hat \bdelta + \hat \bfw\| \leq \frac{1}{\tau} \|f(-\hat \bfw)\| = \frac{1}{\tau} \Big\| \frac{1}{n} \sum_{i=1}^{n}- \bfx_i \psi\{\epsilon_i + \bfx_i^\T \bbeta_0 \} - \tau \hat \bfw \Big\|,
% \end{equation*}
% which implies that
% \begin{equation*}
% \|\hat\bdelta\| - \|\hat \bfw\| \leq \frac{1}{\tau} \Big\| \frac{1}{n} \sum_{i=1}^{n}- \bfx_i \psi\{\epsilon_i + \bfx_i^\T \bbeta_0 \} \Big\| + \|\hat \bfw \|.
% \end{equation*}

Using $\bdelta_1 = \bdelta_0$ we have
\begin{equation*}
\label{eq:delta_error_ineq}
\| \hat \bdelta - \bdelta_0\| \leq \frac{1}{\tau} \|f(\bdelta_0)\| = \frac{1}{\tau} \Big\| \frac{1}{n} \sum_{i=1}^{n}- \bfx_i \psi(\tilde\epsilon_i) + \tau \bdelta_0 \Big\|,
\end{equation*}
which gives the second inequality.

We note that under our assumptions, according to Lemma 3.38 from \cite{el2018impact}, we have
\begin{equation*}
\|\bfX \bfX^\T/n\|_2 = \mathrm{O}_{L_k}(\operatorname{polyLog}(n) )
\end{equation*}
and
\begin{equation*}
\frac{1}{n} \sum_{i=1}^{n} \psi^2\{\tilde\epsilon_i + \bfx_i^\T \bdelta_0 \} \leq \frac{1}{n} \sum_{i=1}^{n} \| \psi^2\|_{\infty} = \mathrm{O}(1),
\end{equation*}
which gives all the results about $L_k$ bounds.

For the last result, we note that
\begin{equation*}
    \|\bfq_n(\bzero)\|^2 = \bfq_n(\bzero)^\T \bfq_n(\bzero) = \frac{1}{n^2} \sum_{i,j} \bfx_i^\T \bfx_j \psi(\tilde\epsilon_i) \psi(\tilde\epsilon_j).
\end{equation*}
It implies that
\begin{equation*}
    \E \|\bfq_n(\bzero)\|^2 = \frac{1}{n^2} \sum_{i=1}^n \E (\|\bfx_i\|^2) \E \{\psi^2 (\tilde\epsilon_i)\} .
\end{equation*}
Because $\E (\|\bfx_i\|^2) = p$, we can conclude that
\begin{equation*}
    \E (\| \bfq_n(\bzero)\|^2) = \frac{p}{n}\frac{1}{n} \sum_{i=1}^{n} \E \{\psi^2 (\tilde\epsilon_i)\}.
\end{equation*}
Together with the bound
\begin{equation*}
    \|\hat \bdelta - \bdelta_0\|^2 \leq 2\|\bdelta_0\|^2 + \frac{2}{\tau^2}\|\bfq_n(\bzero)\|^2,
\end{equation*}
it implies the last result about $k=2$.
\end{proof}

% \begin{lemma}[{{\citet[Lemma 3.38]{el2018impact}}}]
% \label{lemma:el_3.38}
% Suppose $\bfx_i$'s are independent random vectors in $\mathbb{R}^p$, satisfying \textbf{O4}, and having mean 0 and covariance $\I_p$. Suppose that $\lambda_i$'s satisfy \textbf{O6}. Let $\widehat{\bSigma}=n^{-1} \sum_{i=1}^{n} \bfx_i \bfx_i^\T$. Then,
% $$
% \| \widehat{\bSigma}  \|_2 = \mathrm{O}_P(\operatorname{polyLog}(n) \mc_n^{-1}) .
% $$
%
% The results hold also in $L_k$.
% \end{lemma}

\subsection[\appendixname~\thesubsection]{Leave-one-observation-out} \label{sec_p1}
In this subsection, we approximate $\hat\bdelta$ by $\hat \bdelta_{(i)}$ via leave-one-observation-out method.

We consider the situation where we leave the $i$-th observation, $(\bfx_i,\epsilon_i)$, out. By definition,
\begin{equation*}
\hat\bdelta_{(i)} = \argmin_{\bdelta \in \RR^p} F_i(\bdelta), \ \text{ where }
F_i(\bdelta) = \frac{1}{n} \sum_{j \neq i} \rho_j\{\tilde\epsilon_j + \bfx_j^\T \bdelta_0 - \bfx_j^\T \bdelta\} + \frac{\tau}{2} \|\bdelta\|^2.
\end{equation*}
We call
\begin{eqnarray*}
f_i(\bdelta) &=& -\frac{1}{n} \sum_{j \neq i} \bfx_j\psi_j\{\tilde\epsilon_j + \bfx_j^\T \bdelta_0 - \bfx_j^\T \bdelta\} + \tau\bdelta \\
&=& f(\bdelta) + \frac{1}{n} \bfx_i\psi\{\tilde\epsilon_i + \bfx_i^\T \bdelta_0 - \bfx_i^\T \bdelta\}.
\end{eqnarray*}
We have
\begin{equation*}
f_i(\hat \bdelta_{(i)}) = 0.
\end{equation*}

We call
\begin{equation*}
\tilde r_{j,(i)} = \tilde\epsilon_j - \bfx_j^\T(\hat \bdelta_{(i)}-\bdelta_0) \text{ and } \bfS_i = \frac{1}{n} \sum_{j \neq i} \psi_j' (\tilde r_{j,(i)}) \bfx_j \bfx_j^\T.
\end{equation*}

Consider
\begin{equation*}
\tilde \bdelta_i = \hat \bdelta_{(i)} + \frac{1}{n} (\bfS_i + \tau\I)^{-1} \bfx_i \psi \{\prox(c_i\rho)(\tilde r_{j,(i)})\} \triangleq \hat  \bdelta_{(i)} + \boldeta_i,
\end{equation*}
where
\begin{eqnarray}
&& c_i = \frac{1}{n} \bfx_i^\T (\bfS_i + \tau\I)^{-1} \bfx_i, \label{eqn:c_i} \\
&& \boldeta_i = \frac{1}{n} (\bfS_i + \tau\I)^{-1} \bfx_i \psi \{\prox(c_i\rho)(\tilde r_{i,(i)})\}. \nonumber
\end{eqnarray}

\subsubsection{Deterministic bounds}
\begin{proposition}
We have
\begin{equation*}
\|\hat\bdelta - \tilde\bdelta_i\| \leq \frac{1}{\tau} \|\cR_i\|,
\end{equation*}
where
\begin{equation*}
\cR_i = \frac{1}{n} \sum_{j\neq i} [\psi_j'\{\gamma^\star (\bfx_j,\hat\bdelta_{(i)},\boldeta_i)\} - \psi_j'(\tilde r_{j, (i)})] \bfx_j \bfx_j^\T \boldeta_i,
\end{equation*}
and $\gamma^\star (\bfx_j,\hat\bdelta_{(i)},\boldeta_i)\}$ is in the (``unordered'') interval $(\tilde r_{j, (i)}, \tilde r_{j, (i)}-\bfx_j^\T \boldeta_i)$.
\end{proposition}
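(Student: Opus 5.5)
The plan is to apply Lemma \ref{lemma:delta_ineq_1} with $\bdelta_1=\tilde\bdelta_i$, which gives $\|\hat\bdelta-\tilde\bdelta_i\|\le \tau^{-1}\|f(\tilde\bdelta_i)\|$. It then suffices to show the exact identity $f(\tilde\bdelta_i)=\cR_i$. (The argument of $\prox(c_i\rho)$ in the definition of $\tilde\bdelta_i$ should read $\tilde r_{i,(i)}$, as in the definition of $\boldeta_i$.)

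\textbf{Step 1: split off observation $i$.} Use the relation $f(\bdelta)=f_i(\bdelta)-\frac1n\bfx_i\psi\{\tilde\epsilon_i+\bfx_i^\T\bdelta_0-\bfx_i^\T\bdelta\}$. At $\bdelta=\tilde\bdelta_i=\hat\bdelta_{(i)}+\boldeta_i$, the residual of observation $i$ equals $\tilde r_{i,(i)}-\bfx_i^\T\boldeta_i$, so
\[
f(\tilde\bdelta_i)=f_i(\tilde\bdelta_i)-\tfrac1n\bfx_i\,\psi(\tilde r_{i,(i)}-\bfx_i^\T\boldeta_i).
\]

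\textbf{Step 2: expand $f_i$ around $\hat\bdelta_{(i)}$.} Since $f_i(\hat\bdelta_{(i)})=0$, we have $f_i(\tilde\bdelta_i)=f_i(\tilde\bdelta_i)-f_i(\hat\bdelta_{(i)})$. For each $j\neq i$, apply the scalar mean value theorem to $\psi$, as in Proposition \ref{prop:1}:
\[
\psi(\tilde r_{j,(i)}-\bfx_j^\T\boldeta_i)-\psi(\tilde r_{j,(i)})=-\psi'(\gamma^\star_j)\,\bfx_j^\T\boldeta_i,
\]
with $\gamma^\star_j$ in the unordered interval $(\tilde r_{j,(i)},\tilde r_{j,(i)}-\bfx_j^\T\boldeta_i)$. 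This yields
\[
f_i(\tilde\bdelta_i)=\tau\boldeta_i+\tfrac1n\textstyle\sum_{j\ne i}\psi'(\gamma_j^\star)\bfx_j\bfx_j^\T\boldeta_i .
\]
Writing $\psi'(\gamma_j^\star)=\psi'(\tilde r_{j,(i)})+[\psi'(\gamma_j^\star)-\psi'(\tilde r_{j,(i)})]$ turns this into $(\bfS_i+\tau\I)\boldeta_i+\cR_i$. By the definition of $\boldeta_i$,
\[
(\bfS_i+\tau\I)\boldeta_i=\tfrac1n\bfx_i\,\psi\{\prox(c_i\rho)(\tilde r_{i,(i)})\}.
\]

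\textbf{Step 3: cancel the two terms involving $\bfx_i$.} Combining Steps 1 and 2 gives
\[
f(\tilde\bdelta_i)=\cR_i+\tfrac1n\bfx_i\big[\psi\{\prox(c_i\rho)(\tilde r_{i,(i)})\}-\psi(\tilde r_{i,(i)}-\bfx_i^\T\boldeta_i)\big].
\]
By \eqref{eqn:c_i}, $\bfx_i^\T\boldeta_i=c_i\,\psi\{\prox(c_i\rho)(\tilde r_{i,(i)})\}$. The first-order condition of the proximal problem gives $\prox(c\rho)(x)+c\,\psi\{\prox(c\rho)(x)\}=x$, and this applies since $\rho$ is convex and differentiable and $c_i\ge0$. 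Hence $\tilde r_{i,(i)}-\bfx_i^\T\boldeta_i=\prox(c_i\rho)(\tilde r_{i,(i)})$, the bracket vanishes, and $f(\tilde\bdelta_i)=\cR_i$, which proves the claim.

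The main point of care is Step 3: checking the sign conventions for the residuals and recognizing that the correction $\boldeta_i$ was built exactly so that the observation-$i$ residual at $\tilde\bdelta_i$ equals the proximal point. Everything else is bookkeeping with the mean value theorem.
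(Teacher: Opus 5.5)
Your proposal is correct and follows the paper's own argument: you show $f(\tilde\bdelta_i)=\cR_i$ by using $f_i(\hat\bdelta_{(i)})=0$, a mean value expansion of $\psi$ in the $j\neq i$ terms, and the identity $\tilde r_{i,(i)}-c_i\psi\{\prox(c_i\rho)(\tilde r_{i,(i)})\}=\prox(c_i\rho)(\tilde r_{i,(i)})$ to cancel the observation-$i$ terms, and then you apply Lemma~\ref{lemma:delta_ineq_1}. Your bookkeeping is slightly cleaner than the paper's: its final cancellation line writes $\psi(\tilde r_{i,(i)})$ where $\psi\{\prox(c_i\rho)(\tilde r_{i,(i)})\}$ is meant.
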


\begin{proof}
Recall that $y_i = \epsilon_i + \bfx_i^\T \bfw_0 + \bfx_i^\T \bdelta_0$.

Since $f_i(\hat \bdelta_{(i)}) = 0$, and $\tilde\bdelta_i = \hat \bdelta_{(i)} + \boldeta_i$, we have
\begin{eqnarray*}
    f(\tilde\bdelta_i) &=& f(\tilde \bdelta_i) - f_i(\hat \bdelta_{(i)})\\
     &=& -\frac{1}{n} \sum_{i=1}^{n} \bfx_i \psi\{\tilde\epsilon_i + \bfx_i^\T \bdelta_0 - \bfx_i^\T \tilde\bdelta_i\} + \tau \tilde\bdelta_i + \frac{1}{n} \sum_{j \neq i} \bfx_j \psi\{\tilde\epsilon_j + \bfx_j^\T \bdelta_0 - \bfx_j^\T \bdelta_{(i)}\} - \tau\bdelta_{(i)} \\
     &=& -\frac{1}{n} \bfx_i \psi\{\tilde\epsilon_i + \bfx_i^\T \bdelta_0 - \bfx_i^\T \tilde\bdelta_i\} + \tau \boldeta_i \\
     &&+ \frac{1}{n} \sum_{j\neq i} \bfx_j \big[\psi_j\{\tilde\epsilon_j + \bfx_j^\T \bdelta_0 - \bfx_j^\T \hat\bdelta_{(i)}\} - \psi_j\{\tilde\epsilon_j + \bfx_j^\T \bdelta_0 - \bfx_j^\T (\hat \bdelta_{(i)} + \boldeta_i)\} \big] .
\end{eqnarray*}

By the mean value theorem, we have
\begin{eqnarray*}
    &&\psi_j\{\tilde\epsilon_j + \bfx_j^\T \bdelta_0 - \bfx_j^\T \hat\bdelta_{(i)}\} - \psi_j\{\tilde\epsilon_j + \bfx_j^\T \bdelta_0 - \bfx_j^\T (\hat \bdelta_{(i)} + \boldeta_i)\}   \\
&&= \psi_j'(\tilde r_{j, (i)})\bfx_j^\T \boldeta_i + [\psi_j'\{\gamma^\star (\bfx_j,\hat\bdelta_{(i)},\boldeta_i)\} - \psi_j'(\tilde r_{j, (i)})] \bfx_j^\T \boldeta_i,
\end{eqnarray*}
where $\gamma^\star (\bfx_j,\hat\bdelta_{(i)},\boldeta_i)$ is in the (``unordered'') interval $(\tilde r_{j, (i)}, \tilde r_{j, (i)}-\bfx_j^\T \boldeta_i)$.

Therefore, if $\cR_i$ is defined as above, we have
\begin{equation*}
    \begin{aligned}
    &\frac{1}{n} \sum_{j\neq i} \bfx_j \big[\psi_j\{\tilde\epsilon_j + \bfx_j^\T \bdelta_0 - \bfx_j^\T \hat\bdelta_{(i)}\} - \psi_j\{\tilde\epsilon_j + \bfx_j^\T \bdelta_0 - \bfx_j^\T (\hat \bdelta_{(i)} + \boldeta_i)\} \big] \\
    =& \frac{1}{n} \sum_{j \neq i} \psi_j' (\tilde r_{j,(i)}) \bfx_j \bfx_j^\T\boldeta_i + \cR_i \\
    =& \bfS_i \boldeta_i + \cR_i.
    \end{aligned}
\end{equation*}

The previous simplicities yield that
\begin{equation*}
    f(\tilde \bdelta_i) = -\frac{1}{n} \bfx_i \psi\{\tilde\epsilon_i + \bfx_i^\T \bdelta_0 - \bfx_i^\T \tilde\bdelta_i\} + (\bfS_i + \tau\I) \boldeta_i + \cR_i .
\end{equation*}
Since by definition, $\boldeta_i = n^{-1} (\bfS_i + \tau\I)^{-1} \bfx_i \psi \{\prox(c_i\rho)(\tilde r_{i,(i)})\}$, we have
\begin{equation*}
    (\bfS_i + \tau\I) \boldeta_i = \frac{1}{n} \bfx_i \psi \{\prox(c_i\rho)(\tilde r_{i,(i)})\}.
\end{equation*}
Also, by definition we have
\begin{equation*}
    \tilde\epsilon_i + \bfx_i^\T \bdelta_0 - \bfx_i^\T \tilde \bdelta_i = \tilde r_{i,(i)} - c_i \psi \{\prox(c_i\rho)(\tilde r_{i,(i)})\}.
\end{equation*}
When $\rho$ is differentiable, $x - c \psi\{\prox(c\rho)(x)\} = \prox(c\rho)(x)$.
Therefore, $\tilde\epsilon_i + \bfx_i^\T \bdelta_0 - \bfx_i^\T \tilde \bdelta_i = \prox(c_i\rho)(\tilde r_{i,(i)})$ and
\begin{equation*}
    -\frac{1}{n} \bfx_i \psi\{\tilde\epsilon_i + \bfx_i^\T \bdelta_0 - \bfx_i^\T \tilde\bdelta_i\} + (\bfS_i + \tau\I) \boldeta_i = - \frac{1}{n} \bfx_i \big[ \psi\{\prox(c_i\rho)(\tilde r_{i,(i)})\} - \psi(\tilde r_{i,(i)})\big] = 0.
\end{equation*}
Therefore, $f(\tilde \bdelta_i) = \cR_i$.
Applying Lemma \ref{lemma:delta_ineq_1} we have
\begin{equation*}
\|\hat\bdelta - \tilde\bdelta_i\| \leq \frac{1}{\tau} \|\cR_i\|.
\end{equation*}
\end{proof}

\paragraph{i. On $\cR_i$}~{}

Next, we provide a bound for $\cR_i$.
\begin{lemma}
We have
\begin{equation*}
\| \boldeta_i\| \leq \frac{1}{\sqrt{n}\tau} \frac{\|\bfx_i\|}{\sqrt{n}} | \psi(\tilde r_{i, (i)})|,
\end{equation*}
and
\begin{equation*}
\|\cR_i\| \leq \|\hat\bSigma\|_2 \sup_{j \neq i} \Big|\psi_j'\{\gamma^\star (\bfx_j,\hat\bdelta_{(i)},\boldeta_i)\} - \psi_j'(\tilde r_{j, (i)})\Big| \frac{1}{\sqrt{n}\tau} \frac{\|\bfx_i\|}{\sqrt{n}} | \psi(\tilde r_{i, (i)})|.
\end{equation*}

\end{lemma}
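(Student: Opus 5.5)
The plan is to bound $\boldeta_i$ first and then feed that bound into $\cR_i$. Both steps are deterministic and rely only on positive semi-definiteness and the contraction property of the proximal map.

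\textbf{Bound on $\boldeta_i$.} By definition $\boldeta_i = n^{-1}(\bfS_i+\tau\I)^{-1}\bfx_i\,\psi\{\prox(c_i\rho)(\tilde r_{i,(i)})\}$. Since $\psi'\ge 0$ by convexity, $\bfS_i\succeq 0$, so $\bfS_i+\tau\I\succeq\tau\I$ and $\|(\bfS_i+\tau\I)^{-1}\|\le 1/\tau$. This gives
\[
\|\boldeta_i\|\le \frac{1}{n\tau}\|\bfx_i\|\,\bigl|\psi\{\prox(c_i\rho)(\tilde r_{i,(i)})\}\bigr| .
\]
It remains to replace $\prox(c_i\rho)(\tilde r_{i,(i)})$ by $\tilde r_{i,(i)}$. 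Set $x=\tilde r_{i,(i)}$, $u=\prox(c_i\rho)(x)$ and $c=c_i\ge 0$. The prox identity gives $x=u+c\psi(u)$. Because $\rho$ is even, convex and satisfies $\sign\psi(u)=\sign u$, the term $c\psi(u)$ has the same sign as $u$, so $|u|\le |x|$ and $u$ lies between $0$ and $x$. Since $\psi$ is odd and nondecreasing, $|\psi(u)|\le|\psi(x)|$. Writing $n^{-1}\|\bfx_i\| = n^{-1/2}\cdot\|\bfx_i\|/\sqrt n$ then yields the stated bound on $\|\boldeta_i\|$.

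\textbf{Bound on $\cR_i$.} Write $\cR_i = n^{-1}\bfX_{(i)}^\T \bfD\,\bfX_{(i)}\boldeta_i$, where $\bfD$ is diagonal with entries $d_j=\psi_j'\{\gamma^\star(\bfx_j,\hat\bdelta_{(i)},\boldeta_i)\}-\psi_j'(\tilde r_{j,(i)})$ for $j\neq i$. Then
\[
\|\cR_i\|\le \|\bfD\|\,\bigl\|n^{-1}\bfX_{(i)}^\T\bfX_{(i)}\bigr\|_2\,\|\boldeta_i\| .
\]
We have $\|\bfD\|=\sup_{j\ne i}|d_j|$. Also $n^{-1}\bfX_{(i)}^\T\bfX_{(i)}\preceq n^{-1}\bfX^\T\bfX=\hat\bSigma$, because removing one rank-one PSD term can only decrease the matrix in the semi-definite order. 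Hence the operator norm is at most $\|\hat\bSigma\|_2$. Combining this with the first bound gives the second inequality.

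\textbf{Main obstacle.} The only delicate point is the claim $|\psi(\prox(c\rho)(x))|\le|\psi(x)|$, which rests on the sign and monotonicity argument above. I would verify it directly from the stationarity equation $x=u+c\psi(u)$, treating $x\ge0$ first and then appealing to symmetry for $x<0$.
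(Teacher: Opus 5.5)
Your proposal is correct and follows the same route as the paper. You bound $\|(\bfS_i+\tau\I)^{-1}\|$ by $1/\tau$, replace $\psi\{\prox(c_i\rho)(\tilde r_{i,(i)})\}$ by $\psi(\tilde r_{i,(i)})$ using the contraction property of the proximal map, and control the sandwiched matrix by $\|\hat\bSigma\|_2\sup_{j\neq i}|d_j|$. The differences are minor: you verify $|\psi(\prox(c\rho)(x))|\le|\psi(x)|$ directly from $x=u+c\psi(u)$ with $c=c_i\ge 0$, where the paper cites Lemma A-1 of El Karoui (2013), and you use $\bfX_{(i)}$ with the Loewner ordering rather than a zero-padded diagonal matrix.
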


\begin{proof}
We have
\begin{equation*}
    \cR_i = \frac{1}{n} \sum_{j\neq i} [\psi_j'\{\gamma^\star (\bfx_j,\hat\bdelta_{(i)},\boldeta_i)\} - \psi_j'(\tilde r_{j, (i)})] \bfx_j \bfx_j^\T \boldeta_i.
\end{equation*}
Note that $\mathcal{S} = n^{-1} \sum_{j\neq i} [\psi_j'\{\gamma^\star (\bfx_j,\hat\bdelta_{(i)},\boldeta_i)\} - \psi_j'(\tilde r_{j, (i)})] \bfx_j \bfx_j^\T $ can be written as $\mathcal{S} = n^{-1} \bfX^\T \bfD \bfX$, where $\bfD$ is a diagonal matrix with $(j,j)$-entry $[\psi_j'\{\gamma^\star (\bfx_j,\hat\bdelta_{(i)},\boldeta_i)\} - \psi_j'(\tilde r_{j, (i)})]$.

Using the property of matrix norm $\|\cdot\|_2$, we have $\|\mathcal{S}\|_2 \leq \|\hat \bSigma\|_2 \|\bfD\|_2 $, which implies that
\begin{equation*}
    \|\cR_i\| \leq \|\hat\bSigma\|_2 \sup_{j \neq i} \Big|\psi_j'\{\gamma^\star (\bfx_j,\hat\bdelta_{(i)},\boldeta_i)\} - \psi_j'(\tilde r_{j, (i)})\Big| \| \boldeta_i\|,
\end{equation*}
where $\hat \bSigma = n^{-1} \sum_{j=1}^n \bfx_j \bfx_j^\T$ is the sample covariance matrix.

We now bound $\| \boldeta_i\|$. Note that
\begin{equation*}
    \|\boldeta_i\| \leq \frac{1}{\sqrt{n}\tau} \frac{\|\bfx_i\|}{\sqrt{n}} |\psi \{\prox(c_i\rho)(\tilde r_{i,(i)})\}|.
\end{equation*}
Using Lemma A-1 in \cite{karoui2013asymptotic}, we see that
\begin{equation*}
    |\psi(\operatorname{prox}_{c_i}(\rho)(\tilde{r}_{i,(i)}))| \leq|\psi(\tilde{r}_{i,(i)})|.
\end{equation*}
\end{proof}

Under our assumptions, we have
\begin{equation*}
    |\psi(\tilde{r}_{i,(i)})| \leq\|\psi\|_{\infty} \leq C \polyLog(n)
\end{equation*}
and later in the proof of Lemma \ref{lemma:3.7_2018} we will show that
\begin{equation*}
    \sup_i \frac{\|\bfx_i\|}{\sqrt{n}} = \sup_i \frac{|\lambda_i|\|\cX_i\|}{\sqrt{n}} = \mathrm{O}_{L_k}(\sup_i |\lambda_i|).
\end{equation*}

\paragraph{ii. On $\gamma^\star (\bfx_j,\hat\bdelta_{(i)},\boldeta_i)$ and related quantities}~{}

We now show how to control $n^{-1/2} \sup_{j \neq i} |\psi_j'\{\gamma^\star (\bfx_j,\hat\bdelta_{(i)},\boldeta_i)\} - \psi_j'(\tilde r_{j, (i)})|$.

\begin{lemma}
Suppose that $\psi'$ is $\mathrm{L}(n)$-Lipschitz. Then
\begin{equation*}
\sup_{j \neq i} \Big|\psi_j'\{\gamma^\star (\bfx_j,\hat\bdelta_{(i)},\boldeta_i)\} - \psi_j'(\tilde r_{j, (i)})\Big| \leq  \mathrm{L}(n) \sup_{j\neq i} |\bfx_j^\T \boldeta_i|.
\end{equation*}
\end{lemma}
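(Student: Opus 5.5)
The bound follows directly from the Lipschitz property of $\psi'$, applied coordinatewise for each $j \neq i$, and then taking the supremum. No probabilistic input is needed; the statement is deterministic.

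Fix $j \neq i$. Since $\psi_j = \psi$ for every $j$, the function $\psi_j'$ is $\mathrm{L}(n)$-Lipschitz, so
\begin{equation*}
\Big|\psi_j'\{\gamma^\star (\bfx_j,\hat\bdelta_{(i)},\boldeta_i)\} - \psi_j'(\tilde r_{j, (i)})\Big| \leq \mathrm{L}(n)\, \big|\gamma^\star (\bfx_j,\hat\bdelta_{(i)},\boldeta_i) - \tilde r_{j,(i)}\big| .
\end{equation*}
By the construction in the preceding proposition, $\gamma^\star(\bfx_j,\hat\bdelta_{(i)},\boldeta_i)$ is a mean-value point lying in the unordered interval with endpoints $\tilde r_{j,(i)}$ and $\tilde r_{j,(i)} - \bfx_j^\T \boldeta_i$. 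Any point of an interval is within the interval's length of either endpoint. Hence
\begin{equation*}
\big|\gamma^\star (\bfx_j,\hat\bdelta_{(i)},\boldeta_i) - \tilde r_{j,(i)}\big| \leq \big|\bfx_j^\T \boldeta_i\big|.
\end{equation*}

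Combining the two displays gives, for each $j \neq i$,
\begin{equation*}
\Big|\psi_j'\{\gamma^\star (\bfx_j,\hat\bdelta_{(i)},\boldeta_i)\} - \psi_j'(\tilde r_{j, (i)})\Big| \leq \mathrm{L}(n)\,|\bfx_j^\T\boldeta_i| .
\end{equation*}
Since $\mathrm{L}(n)$ does not depend on $j$, taking $\sup_{j\neq i}$ on both sides yields the claim.

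The argument has no genuine obstacle. The one point to check is that $\gamma^\star$ was defined through the mean value theorem applied to $\psi_j$ on exactly that interval. It was, so the localization of $\gamma^\star$ holds and the bound is immediate.
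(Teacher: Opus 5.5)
Your proposal is correct and follows essentially the same route as the paper: use that $\gamma^\star(\bfx_j,\hat\bdelta_{(i)},\boldeta_i)$ lies in the interval with endpoints $\tilde r_{j,(i)}$ and $\tilde r_{j,(i)}-\bfx_j^\T\boldeta_i$, so $|\gamma^\star-\tilde r_{j,(i)}|\le|\bfx_j^\T\boldeta_i|$. Then apply the $\mathrm{L}(n)$-Lipschitz property of $\psi_j'=\psi'$ and take the supremum over $j\neq i$.
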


\begin{proof}
    By definition, we have
    \begin{equation*}
        |\gamma^\star (\bfx_j,\hat\bdelta_{(i)},\boldeta_i) - \tilde r_{j, (i)}| \leq |\bfx_j^\T \boldeta_i|.
    \end{equation*}
    Therefore, the bound follows, using the fact that $\psi_j'$ is $\mathrm{L}(n)$-Lipschitz.
\end{proof}

\subsubsection{Stochastic aspects}
Recall that
\begin{equation*}
\bfx_j^\T \boldeta_i = \psi \{\prox(c_i\rho)(\tilde r_{i,(i)})\} \frac{1}{n} \bfx_j^\T(\bfS_i + \tau\I)^{-1} \bfx_i.
\end{equation*}
Therefore we can bound $\| \cR_i\|$ by
\begin{equation*}
\| \cR_i\| \leq \Big[ \sup_{j \neq i} \frac{|\bfx_j^\T(\bfS_i + \tau\I)^{-1} \bfx_i|}{n} \Big] \frac{ \mathrm{L}(n)}{\sqrt{n}\tau} \frac{\|\bfx_i\|}{\sqrt{n}} \|\hat\bSigma\|_2 \|\psi\|^2_\infty.
\end{equation*}

\paragraph{i. On $\sup_{j \neq i}|\bfx_j^\T(\bfS_i + \tau\I)^{-1} \bfx_i|$}~{}

Now we control $\bfx_j^\T(\bfS_i + \tau\I)^{-1} \bfx_i / n$.

\begin{lemma}
    \label{lemma:3.7_2018}
Suppose $\bfx_i$'s are independent and satisfy \textbf{O4}; suppose that $\lambda_i$'s satisfy \textbf{O6}.
Then
\begin{equation*}
\sup_{j \neq i} |\bfx_j^\T(\bfS_i + \tau\I)^{-1} \bfx_i/n | \leq \frac{1}{\sqrt{n}} \sup_{j \neq i} \frac{\|\mathcal{X}_j\|}{\tau \sqrt{n}} \polyLog(n)
\end{equation*}
in $L_{k}$, for any finite $k$.
Note that under Assumption \textbf{O4}, for any finite $k$,
\begin{equation*}
    \sup _{j \neq i}|\|\mathcal{X}_j\| / \sqrt{n}|=\mathrm{O}_{L_k}(1) .
\end{equation*}
\end{lemma}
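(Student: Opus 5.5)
Fix $i$ and condition on $\{\bfx_j\}_{j\neq i}$, the $\epsilon_j$'s and $\hat\bfw$. Then the matrix $\bfM_i := (\bfS_i+\tau\I)^{-1}$ is independent of $\bfx_i = \lambda_i\cX_i$, because $\bfS_i$ depends only on $\tilde r_{j,(i)}$ and $\bfx_j$ for $j\neq i$. Here we use that $\hat\bfw$ is built from the source sample, which is independent of the target data. Since $\bfS_i\succeq 0$, we also have $\|\bfM_i\|_2\leq 1/\tau$.

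For each $j\neq i$, write
\[
\bfx_j^\T\bfM_i\bfx_i = \lambda_j\lambda_i\,\cX_j^\T\bfM_i\cX_i .
\]
The vector $\bfM_i\bfx_j$ depends on $\bfx_j$, but it is still independent of $\cX_i$. Conditionally on everything except $\cX_i$, the map $\cX_i\mapsto \bfu^\T\cX_i$ with $\bfu=\bfM_i\cX_j$ is linear, hence convex, and $\|\bfu\|$-Lipschitz. By the concentration part of \textbf{O4}, $|\bfu^\T\cX_i - m|$ has sub-Gaussian tails at scale $\|\bfu\|/\sqrt{\mc_n}$. The median $m$ is $\mO(\|\bfu\|)$, because $\E(\bfu^\T\cX_i)=0$ and the variance equals $\|\bfu\|^2$. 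This gives conditional moments of all orders:
\[
\E\bigl(|\cX_j^\T\bfM_i\cX_i|^k \,\big|\, \cdot\bigr)\leq C_k\,\mc_n^{-k/2}\|\bfM_i\cX_j\|^k\leq C_k\,\polyLog(n)\,\|\cX_j\|^k/\tau^k .
\]

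Next take the supremum over the $n-1$ indices $j$. The sub-Gaussian tail bounds make a union bound cost only a factor $\sqrt{\log n}$; equivalently, in $L_k$ one bounds $\E\sup_j|\cdot|^k\leq n\max_j\E|\cdot|^{k'}$-type quantities by using a large moment and losing $n^{1/k'}$. The cleaner route is the tail bound $P(\sup_j|\cdot|>t\,\cdots)\leq nC_n e^{-\mc_n t^2}$, which yields $\sup_j|\cX_j^\T\bfM_i\cX_i|/\|\cX_j\|\leq \polyLog(n)/\tau$ in $L_k$. Combining this with $\sup_j|\lambda_j|\,|\lambda_i| = \mO_{L_k}(\polyLog(n))$ from \textbf{O6}, and dividing by $n=\sqrt n\cdot\sqrt n$, gives the stated bound
\[
\sup_{j\neq i}\frac{|\bfx_j^\T\bfM_i\bfx_i|}{n}\leq \frac{1}{\sqrt n}\sup_{j\neq i}\frac{\|\cX_j\|}{\tau\sqrt n}\,\polyLog(n).
\]
The $\lambda$ factors are absorbed into $\polyLog(n)$ via Hölder's inequality.

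For the final claim, $\|\cX_j\|$ is a convex $1$-Lipschitz function of $\cX_j$, so it concentrates around its median at scale $\mc_n^{-1/2}$. The median is at most $\sqrt{2\E\|\cX_j\|^2}=\sqrt{2p}$. A union bound over $j$ therefore gives
\[
\sup_j\|\cX_j\|\leq \sqrt{2p}+\mO_{L_k}\bigl(\sqrt{\log n/\mc_n}\bigr).
\]
Dividing by $\sqrt n$ and using $p/n\to\kappa$ gives $\mO_{L_k}(1)$; this also justifies the earlier claim about $\sup_i\|\bfx_i\|/\sqrt n$.

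The main obstacle is handling the dependence between $\bfM_i$ and $\bfx_j$ inside the supremum. I will resolve it by conditioning only on the data other than $\cX_i$, so that $\bfM_i\cX_j$ is a fixed vector and the concentration inequality applies with constants that are uniform in $j$. The uniform bound $\|\bfM_i\|\leq1/\tau$ is what keeps those constants uniform.
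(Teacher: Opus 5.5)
Your proposal is correct and follows the paper's proof essentially step by step. You condition on everything except $\cX_i$ and view $\cX_j^\T(\bfS_i+\tau\I)^{-1}\cX_i$ as a linear form in $\cX_i$ with Lipschitz constant at most $\|\cX_j\|/\tau$, using mean zero to control the median; you take the supremum over $j$ by a union bound over sub-Gaussian tails, which is what the paper gets from Lemma B-2 of \cite{karoui2013asymptotic}; and the remaining steps also match the paper: concentration of $\|\cX_j\|$ around a median of order $\sqrt{p}$ to obtain $\sup_j\|\cX_j\|/\sqrt{n}=\mO_{L_k}(1)$, and absorbing $|\lambda_i\lambda_j|$ into $\polyLog(n)$ via \textbf{O6} and H\"older.
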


\begin{proof}
    Note that
    \begin{equation*}
        |\bfx_j^\T(\bfS_i + \tau\I)^{-1} \bfx_i/n | = |\lambda_i \lambda_j| |\mathcal{X}_j^\T (\bfS_i + \tau\I)^{-1} \mathcal{X}_i/n |.
    \end{equation*}
    Denote $\bm{\cX}_{(i)} = (\cX_1, \ldots, \cX_{i-1}, \cX_{i+1}, \ldots, \cX_n)$,
    and $\bfv_{j,(i)} = (\bfS_i + \tau\I)^{-1} \cX_j$.
    Then we have the map $F_j(\cX_i) = \cX_j^\T (\bfS_i + \tau\I)^{-1} \cX_i = \cX_i^\T \bfv_{j,(i)}$ is linear in $\cX_i$ and it is Lipschitz with Lipschitz constant $\{\cX_j^\T (\bfS_i + \tau\I)^{-2} \cX_j\}^{1/2} \leq \|\cX_j\|/\tau$.
    Therefore, using Lemma Lemma B-2 in \cite{karoui2013asymptotic} and the fact that $\cX_i$ has mean $0$, we have
    \begin{equation*}
        \sup_{j \neq i} |\mathcal{X}_j^\T (\bfS_i + \tau\I)^{-1} \mathcal{X}_i/n | \big| \bm{\cX}_{(i)} \leq \frac{1}{\sqrt{n}} \sup_{j \neq i} \frac{\|\mathcal{X}_j\|}{\tau \sqrt{n}} \polyLog(n) / \mc_n^{1/2}
    \end{equation*}
    in $L_{k}$, when $\sup_{j \neq i} \|\mathcal{X}_j\| / (\tau n^{1/2}) = \mathrm{O}_{L_k}(1)$.

    To prove it, Using the fact that $\cX_j \to \|\cX_j\|/n^{1/2}$ is $n^{-1/2}$-Lipschitz, we have
    \begin{equation*}
        \sup_{j \neq i} | \|\cX_j\|/ \sqrt{n} - m_{\|\cX_j\|/\sqrt{n}}| \leq \polyLog(n) / \sqrt{n \mc_n} \ \text{ in } \sqrt{L_{2k}}.
    \end{equation*}
    Note that $\E (\|\cX_j\|) \leq \{\E (\|\cX_j\|^2)\}^{1/2} = p^{1/2}$, so $m_{\|\cX_j\|/\sqrt{n}}$ is of order $1$.
    Therefore, by Assumption \textbf{O4} on $\mc_n$ we have
    \begin{equation*}
        \sup_{j \neq i} | \|\cX_j\|/ \sqrt{n}| = \mathrm{O}_{L_k}(1).
    \end{equation*}

    Now our assumptions $\mathbf{O 6}$ concerning $\sup _i|\lambda_i|=\mathrm{O}_{L_k}(\polyLog(n))$ guarantee that the bounds we announced are valid.

\end{proof}

\paragraph{Consequences}~{}

 We have the following result.

\begin{proposition}
Under Assumptions \textbf{O1-O6}, we have
\begin{equation*}
\|\mathcal{R}_i\|=\mO_{L_k}\Big(\frac{[ \mathrm{L}(n)]\|\psi\|_{\infty}^2}{n \tau} \polyLog(n)\Big).
\end{equation*}
Furthermore, the same bound hold for $\sup _{1 \leq i \leq n}\|\mathcal{R}_i\|$.
\end{proposition}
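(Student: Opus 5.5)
The plan is to chain the deterministic and stochastic bounds already obtained in this subsection. From the lemma on $\cR_i$, combined with the Lipschitz lemma for $\psi'$ and the identity $\bfx_j^\T\boldeta_i = \psi\{\prox(c_i\rho)(\tilde r_{i,(i)})\}\, n^{-1}\bfx_j^\T(\bfS_i+\tau\I)^{-1}\bfx_i$, we have the pathwise inequality
\begin{equation*}
\|\cR_i\| \leq \Big[\sup_{j\neq i}\frac{|\bfx_j^\T(\bfS_i+\tau\I)^{-1}\bfx_i|}{n}\Big]\frac{\mathrm{L}(n)}{\sqrt{n}\,\tau}\frac{\|\bfx_i\|}{\sqrt{n}}\|\hat\bSigma\|_2\|\psi\|_\infty^2 .
\end{equation*}
I will bound each random factor in $L_k$ for every finite $k$, and then combine them using H\"older's inequality. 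Because the right-hand side is a product of four random factors, the $L_k$ norm of the product is at most the product of the $L_{4k}$ norms. It is therefore enough to show that each factor is $\mO_{L_{k'}}$ of the claimed order for all finite $k'$.

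The factor bounds come from results already established. \emph{First factor.} By Lemma \ref{lemma:3.7_2018}, $\sup_{j\neq i}|\bfx_j^\T(\bfS_i+\tau\I)^{-1}\bfx_i|/n \leq n^{-1/2}\tau^{-1}\polyLog(n)\sup_{j}\|\cX_j\|/\sqrt n$ in $L_k$. This holds after reinserting the factor $|\lambda_i\lambda_j|\le \mathcal{L}_n^2=\mO_{L_k}(\polyLog(n))$ under \textbf{O6}. Since $\sup_j\|\cX_j\|/\sqrt n=\mO_{L_k}(1)$, the first factor is $\mO_{L_k}(n^{-1/2}\polyLog(n)/\tau)$. \emph{Second factor.} $\|\bfx_i\|/\sqrt n\le \mathcal{L}_n\sup_j\|\cX_j\|/\sqrt n=\mO_{L_k}(\polyLog(n))$. \emph{Third factor.} $\|\hat\bSigma\|_2=\|\bfX^\T\bfX/n\|_2=\|\bfX\bfX^\T/n\|_2=\mO_{L_k}(\polyLog(n))$ by Lemma 3.38 of \cite{el2018impact}, as used in Lemma \ref{lemma:delta_moment_bounds}. \emph{Fourth factor.} $\|\psi\|_\infty^2$ is deterministic.

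Multiplying these, the $n^{-1/2}$ from the first factor and the explicit $n^{-1/2}$ combine into $n^{-1}$. The powers of $\tau$ give $\tau^{-2}$. Since $\tau$ is a fixed constant, this is absorbed into the stated $1/\tau$ up to constants (or one keeps $\tau^{-2}$, which only sharpens the form). All the logarithmic factors merge into a single $\polyLog(n)$. This gives $\|\cR_i\|=\mO_{L_k}(\mathrm{L}(n)\|\psi\|_\infty^2\polyLog(n)/(n\tau))$.

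For the supremum over $i$, the key observation is that the bounds on the second and third factors were already uniform in $i$: they use $\mathcal{L}_n$, $\sup_j\|\cX_j\|$ and the global $\|\hat\bSigma\|_2$. The only genuinely $i$-dependent step is the concentration bound in Lemma \ref{lemma:3.7_2018}. This is the main obstacle, because one must control $\sup_i\sup_{j\neq i}$ of $n(n-1)$ linear forms $\cX_i^\T\bfv_{j,(i)}$. Each form is conditionally sub-Gaussian given $\bm{\cX}_{(i)}$, with scale $\|\cX_j\|/(\tau\sqrt{\mc_n})$, by the concentration assumption in \textbf{O4}. The standard maximal inequality (Lemma B-2 of \cite{karoui2013asymptotic}) then shows that the maximum of polynomially many such variables costs only an extra $\sqrt{\log(n^2)/\mc_n}=\polyLog(n)$ factor in $L_k$. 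Applying it to the double index set $\{(i,j): j\neq i\}$ and integrating out the conditioning therefore yields the same rate for $\sup_{1\le i\le n}\|\cR_i\|$.
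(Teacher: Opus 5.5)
Your proposal is correct and follows essentially the same route as the paper. The paper likewise just aggregates the pathwise bound on $\|\cR_i\|$ with Lemma \ref{lemma:3.7_2018}, the $\polyLog(n)$ control of $\sup_i\|\bfx_i\|/\sqrt{n}$, and Lemma 3.38 of \cite{el2018impact} for $\|\hat\bSigma\|_2$, combining them via H\"older's inequality; your maximal-inequality argument over the index pairs $\{(i,j): j\neq i\}$ for the uniform-in-$i$ claim, and your remark that the extra $\tau^{-1}$ is absorbed for fixed $\tau$, fill in details the paper leaves implicit.
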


\begin{proof}
    By aggregating all the intermediate results, using Holder's inequality and the fact that $\|\hat\bSigma\|_2 = \mathrm{O}_{L_k}(\polyLog(n))$ shown in Lemma 3.38 from \cite{el2018impact}, we finish the proof.
\end{proof}

We can now prove and state the following result.
Recall that
\begin{equation*}
\tilde \bdelta_i = \hat \bdelta_{(i)} + \frac{1}{n} (\bfS_i + \tau\I_p)^{-1} \bfx_i \psi \{\prox(c_i\rho)(\tilde r_{j,(i)})\} \triangleq \hat  \bdelta_{(i)} + \boldeta_i.
\end{equation*}

\begin{theorem}
    \label{th:l1oo}
Under Assumptions \textbf{O1-O7}, we have, for any fixed $k$, when $\tau$ is held fixed and $\mathrm{L}(n) \leq Cn^\alpha$,
\begin{equation*}
\sup_{1\leq i\leq n} \|\hat \bdelta - \tilde \bdelta_i\| = \mO_{L_k}\Big(\frac{\polyLog(n)}{n^{1-\alpha}}\Big).
\end{equation*}
In particular, we have
\begin{equation*}
\forall 1 \leq i \leq n, \E (\|\hat \bdelta - \tilde \bdelta_i\|^2) = \mO \Big(\frac{\polyLog(n)}{n^{2-2\alpha}} \Big).
\end{equation*}

Also,
\begin{equation*}
\sup_{1\leq i \leq n}\sup_{j \neq i} |\tilde r_{j,(i)} - R_j| = \mO_{L_k} \Big(\frac{\polyLog(n)}{n^{1/2-\alpha}} \Big).
\end{equation*}

Finally,
\begin{equation}
    \label{eq:el_27}
\sup_{1 \leq i \leq n} |R_i - \prox (c_i \rho)(\tilde r_{i,(i)})| = \mO_{L_k} \Big(\frac{\polyLog(n)}{n^{1/2-\alpha}} \Big).
\end{equation}
\end{theorem}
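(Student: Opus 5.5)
The plan is to combine the deterministic bound $\|\hat\bdelta-\tilde\bdelta_i\|\le\tau^{-1}\|\cR_i\|$ with the preceding Proposition. That Proposition gives, uniformly in $i$,
\[
\sup_i\|\cR_i\|=\mO_{L_k}\bigl(\mathrm{L}(n)\|\psi\|_\infty^2\polyLog(n)/(n\tau)\bigr).
\]
By \textbf{O3}, $\mathrm{L}(n)\le Cn^\alpha$ and $\|\psi\|_\infty^2\le C$, so with $\tau$ fixed the first claim follows immediately. The $L_2$ statement is the case $k=2$, squared. Throughout I condition on $\hat\bfw$, which is independent of the target sample. Hence $\tilde\epsilon_i=\epsilon_i+\bfx_i^\T(\bfw_0-\hat\bfw)$ are functions of the target data with bounded $\psi(\tilde\epsilon_i)$, and every earlier lemma applies verbatim. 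The only change is that $\tilde\epsilon_i$ is not independent of $\bfx_i$; however, the earlier bounds only used $\|\psi\|_\infty$, not independence.

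For the residual comparison, write, for $j\neq i$,
\[
\tilde r_{j,(i)}-R_j=\bfx_j^\T(\hat\bdelta-\hat\bdelta_{(i)})=\bfx_j^\T\boldeta_i+\bfx_j^\T(\hat\bdelta-\tilde\bdelta_i).
\]
For the first term, $|\bfx_j^\T\boldeta_i|\le\|\psi\|_\infty\,|\bfx_j^\T(\bfS_i+\tau\I)^{-1}\bfx_i|/n$. Lemma \ref{lemma:3.7_2018} bounds this by $\polyLog(n)/\sqrt n$ uniformly in $j\neq i$ and in $L_k$. To get the supremum over $i$ as well, I would apply a union bound over $n^2$ pairs using the sub-Gaussian concentration in \textbf{O4}, which costs only a $\polyLog(n)$ factor. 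For the second term, Cauchy--Schwarz gives $\sup_j\|\bfx_j\|\,\|\hat\bdelta-\tilde\bdelta_i\|$. Since $\sup_j\|\bfx_j\|=\mO_{L_k}(\sqrt n\,\polyLog(n))$ (from $\sup|\lambda_j|$ and the proof of Lemma \ref{lemma:3.7_2018}), this term is $\polyLog(n)\,n^{1/2}n^{\alpha-1}=\polyLog(n)/n^{1/2-\alpha}$. Hölder's inequality then combines the products in $L_k$.

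For \eqref{eq:el_27}, use the identity from the proof of the deterministic proposition:
\[
\tilde\epsilon_i+\bfx_i^\T\bdelta_0-\bfx_i^\T\tilde\bdelta_i=\prox(c_i\rho)(\tilde r_{i,(i)}).
\]
It follows that $R_i-\prox(c_i\rho)(\tilde r_{i,(i)})=\bfx_i^\T(\tilde\bdelta_i-\hat\bdelta)$, which is bounded as above by $\sup_i\|\bfx_i\|\sup_i\|\hat\bdelta-\tilde\bdelta_i\|=\mO_{L_k}(\polyLog(n)/n^{1/2-\alpha})$.

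The main obstacle is making every bound uniform over $i$ (and over $j$) in $L_k$ rather than merely pointwise. I would handle this with the standard device: $\E\sup_i|Y_i|^k\le n\max_i\E|Y_i|^{k}$, applied with a higher moment $k'$ so that $n^{1/k'}$ is absorbed. Alternatively, the sub-Gaussian tail bounds give maxima of order $\polyLog(n)$ directly. Assumption \textbf{O7} ($1-2\alpha>0$) ensures that the resulting rates actually tend to zero.
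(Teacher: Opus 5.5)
Your proposal follows the paper's proof essentially step for step. The first claim comes from $\|\hat\bdelta-\tilde\bdelta_i\|\le\tau^{-1}\|\cR_i\|$ together with the uniform bound on $\sup_i\|\cR_i\|$. The third comes from the split $\bfx_j^\T\boldeta_i+\bfx_j^\T(\hat\bdelta-\tilde\bdelta_i)$, handled by Lemma \ref{lemma:3.7_2018} and Cauchy--Schwarz with $\sup_j\|\bfx_j\|=\mO_{L_k}(\sqrt{n}\,\polyLog(n))$. The last comes from the identity $R_i-\prox(c_i\rho)(\tilde r_{i,(i)})=\bfx_i^\T(\tilde\bdelta_i-\hat\bdelta)$.

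Your extra care about uniformity over $i$, which the paper glosses over, is welcome. Note, though, that only the sub-Gaussian union-bound route gives a genuine $\polyLog(n)$ factor. The moment device with a fixed $k'$ costs $n^{1/k'}$, which is $n^{\varepsilon}$ rather than polylogarithmic.
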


% We note that we could state a slightly finer result involving $\mathrm{L}(n)$ and various powers of $\left\|\psi\right\|_{\infty}$. However, we will not need such fine results in what follows, so we opt for slightly coarser but easier-to-state statements.

\begin{proof}
    The first two results are direct consequences of the previous propositions.

    The third result follows from that
    \begin{equation*}
        \begin{aligned}
            \sup_{j \neq i} |\tilde r_{j,(i)} - R_j| &= \sup_{j \neq i} |\bfx_j^\T (\hat \bdelta - \hat \bdelta_{(i)})| \leq \sup_{j \neq i} |\bfx_j^\T (\hat \bdelta - \tilde \bdelta_i)| + \sup_{j \neq i} |\bfx_j^\T (\tilde \bdelta_i - \hat \bdelta_{(i)})| \\
             &= \Big( \sup_{1 \leq j \leq n} \frac{\|\bfx_j\|}{\sqrt{n}} \Big) \sqrt{n} \|\hat \bdelta - \tilde \bdelta_i\| + \sup_{j \neq i} |\bfx_j^\T \boldeta_i|,
        \end{aligned}
    \end{equation*}
and the fact that $\sup_{1 \leq j \leq n} \|\bfx_j\| / n^{1/2} = \mO_{L_k}(\polyLog(n))$ under our assumptions.
Control of the first term follows from the results on $\|\hat \bdelta - \tilde \bdelta_i\|$.
Control of the second term follows from Lemma \ref{lemma:3.7_2018} and the assumption that $\psi$ is bounded by $C \polyLog(n)$.

For the last result, recall that
\begin{equation*}
    R_i = \tilde\epsilon_i +  \bfx_i^\T (\bdelta_0 -  \hat\bdelta) = \tilde\epsilon_i +  \bfx_i^\T \bdelta_0 -  \bfx_i^\T\tilde\bdelta_i - \bfx_i^\T (\hat\bdelta -  \tilde\bdelta_i).
\end{equation*}
Given the definition of $\tilde \bdelta_i$, we have
\begin{equation*}
    \bfx_i^\T \tilde \bdelta_i = \bfx_i^\T \hat \bdelta_{(i)} + c_i \psi \{\prox(c_i\rho)(\tilde r_{i,(i)})\}.
\end{equation*}
By the property of the proximal operator, if $y=\prox(c \rho)(x)$, $y+c \psi(y)=x$, we have
\begin{equation*}
    \tilde\epsilon_i +  \bfx_i^\T \bdelta_0 -  \bfx_i^\T\tilde\bdelta_i = \tilde{r}_{i,(i)}-c_i \psi[\prox(c_i \rho)(\tilde{r}_{i,(i)})]=\prox(c_i \rho)(\tilde{r}_{i,(i)}) .
\end{equation*}

Therefore, we have
\begin{equation*}
    \sup_i |R_i - \prox (c_i \rho)(\tilde r_{i,(i)})| = \sup_i |\bfx_i^\T (\hat\bdelta - \tilde\bdelta_i)|,
\end{equation*}
which is controlled in the previous results.
\end{proof}

\paragraph{On the limiting variance of $\|\hat \bdelta\|^2$ and $\|\hat \bdelta - \bdelta_0\|^2$}~{}

\begin{proposition}
    \label{prop:delta_variance}
Under assumptions \textbf{O1} - \textbf{O7},
\begin{equation*}
\var (\|\hat \bdelta\|^2) \to 0 \text{ as } n \to \infty.
\end{equation*}
Therefore $\|\hat \bdelta\|^2$ has a deterministic equivalent in probability and in $L_2$.

More precisely, we have
\begin{equation*}
\var (\|\hat \bdelta\|^2) = \mO \Big(\frac{\polyLog(n)}{n^{1-2\alpha}} \Big).
\end{equation*}
The same type of results are true for $\var (\|\hat \bdelta - \bdelta_0\|^2)$ and $\var (\|\hat \bbeta - \bbeta_0\|^2)$ provided that $\| \bdelta_0\| = \mO(\polyLog(n))$.
\end{proposition}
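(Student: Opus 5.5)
We prove the variance bound with the Efron--Stein inequality, applied conditionally on the source-stage estimator $\hat\bfw$. The source sample is independent of the target sample, so conditional on $\hat\bfw$ the target pairs $(\bfx_i,\epsilon_i)$, $i=1,\ldots,n$, are independent. The map $(\bfx_1,\epsilon_1,\ldots,\bfx_n,\epsilon_n)\mapsto\|\hat\bdelta\|^2$ is then a function of independent arguments, and Efron--Stein gives
\[
\var(\|\hat\bdelta\|^2)\le \sum_{i=1}^n \E\big\{(\|\hat\bdelta\|^2-\|\hat\bdelta_{(i)}\|^2)^2\big\}.
\]
This holds because $\hat\bdelta_{(i)}$ does not depend on observation $i$, so it is an admissible choice for the conditional-expectation comparison in Efron--Stein. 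It therefore suffices to show that each summand is $\mO(\polyLog(n)/n^{2-2\alpha})$.

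For each $i$ we factor
\[
\|\hat\bdelta\|^2-\|\hat\bdelta_{(i)}\|^2=(\hat\bdelta-\hat\bdelta_{(i)})^\T(\hat\bdelta+\hat\bdelta_{(i)}).
\]
By Cauchy--Schwarz and H\"older, the expected square is at most $\|\hat\bdelta-\hat\bdelta_{(i)}\|_{L_4}^2\,\|\hat\bdelta+\hat\bdelta_{(i)}\|_{L_4}^2$. The second factor is $\mO(\polyLog(n))$. Indeed, Lemma \ref{lemma:delta_moment_bounds} bounds $\E\|\hat\bdelta\|^4$, and the same argument with $n-1$ observations bounds $\E\|\hat\bdelta_{(i)}\|^4$ uniformly in $i$. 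For the first factor we split
\[
\hat\bdelta-\hat\bdelta_{(i)}=(\hat\bdelta-\tilde\bdelta_i)+\boldeta_i .
\]
Theorem \ref{th:l1oo} gives $\|\hat\bdelta-\tilde\bdelta_i\|=\mO_{L_k}(\polyLog(n)/n^{1-\alpha})$. For $\boldeta_i$, the lemma on $\cR_i$ gives
\[
\|\boldeta_i\|\le \frac{\|\bfx_i\|}{n\tau}\,\|\psi\|_\infty ,
\]
and $\|\bfx_i\|/\sqrt n=\mO_{L_k}(\polyLog(n))$, so $\|\boldeta_i\|=\mO_{L_k}(\polyLog(n)/\sqrt n)$. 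Used as is, this gives only a summand of order $\polyLog(n)/n$ and hence $\var=\mO(\polyLog(n))$, which is not enough.

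The main obstacle is therefore the $\boldeta_i$ term, and the fix is not to bound it in norm. Instead we expand
\[
\|\hat\bdelta\|^2-\|\hat\bdelta_{(i)}\|^2=2\hat\bdelta_{(i)}^\T\boldeta_i+\|\boldeta_i\|^2+(\text{terms involving }\hat\bdelta-\tilde\bdelta_i).
\]
The terms involving $\hat\bdelta-\tilde\bdelta_i$ are $\mO_{L_k}(\polyLog(n)/n^{1-\alpha})$ by Theorem \ref{th:l1oo} and the moment bounds. We also have $\|\boldeta_i\|^2=\mO_{L_k}(\polyLog(n)/n)$. For the cross term,
\[
\hat\bdelta_{(i)}^\T\boldeta_i=\frac{1}{n}\,\psi\{\prox(c_i\rho)(\tilde r_{i,(i)})\}\;\hat\bdelta_{(i)}^\T(\bfS_i+\tau\I)^{-1}\bfx_i .
\]
The vector $\bfv_i=(\bfS_i+\tau\I)^{-1}\hat\bdelta_{(i)}$ is independent of $\bfx_i=\lambda_i\cX_i$. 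So, exactly as in Lemma \ref{lemma:3.7_2018}, the linear form $\cX_i^\T\bfv_i$ is $\|\bfv_i\|$-Lipschitz in $\cX_i$ with $\|\bfv_i\|\le\|\hat\bdelta_{(i)}\|/\tau$. The concentration in \textbf{O4} then gives $|\cX_i^\T\bfv_i|=\mO_{L_k}(\polyLog(n))$. Together with the boundedness of $\psi$ and \textbf{O6} for $\lambda_i$, the cross term is $\mO_{L_k}(\polyLog(n)/n)$. Every piece of the difference is thus $\mO_{L_k}(\polyLog(n)/n^{1-\alpha})$, each Efron--Stein summand is $\mO(\polyLog(n)/n^{2-2\alpha})$, and summing over $i$ gives $\var(\|\hat\bdelta\|^2)=\mO(\polyLog(n)/n^{1-2\alpha})$. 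This tends to $0$ since $1-2\alpha>0$ by \textbf{O7}, and Chebyshev's inequality then supplies the deterministic equivalent $\E\|\hat\bdelta\|^2$ in $L_2$ and in probability.

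For $\|\hat\bdelta-\bdelta_0\|^2$ we repeat the argument with the fixed vector $\bdelta_0$ subtracted. The linear term becomes $(\hat\bdelta_{(i)}-\bdelta_0)^\T\boldeta_i$, and it is handled the same way because $\bdelta_0$ is deterministic given $\hat\bfw$. The $L_4$ moment bound comes from Lemma \ref{lemma:delta_moment_bounds}, where the hypothesis $\|\bdelta_0\|=\mO(\polyLog(n))$ keeps it polylogarithmic. Finally, conditional on $\hat\bfw$,
\[
\hat\bbeta-\bbeta_0=\hat\bdelta-(\bbeta_0-\hat\bfw)=\hat\bdelta-\bdelta_0-(\bfw_0-\hat\bfw),
\]
which is $\hat\bdelta$ minus a fixed vector. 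The same proof therefore applies, using $\E\|\hat\bfw-\bfw_0\|=\mO(1)$ from \citet{el2018impact} to keep the relevant norms polylogarithmic.
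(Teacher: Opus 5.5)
Your proposal is correct and follows essentially the same route as the paper. It uses Efron--Stein with $W_{(i)}=\|\hat\bdelta_{(i)}\|^2$ and splits the difference through $\tilde\bdelta_i$. It bounds $\|\hat\bdelta\|^2-\|\tilde\bdelta_i\|^2$ by Theorem~\ref{th:l1oo} and the moment bounds, and controls the cross term $\hat\bdelta_{(i)}^\T(\bfS_i+\tau\I)^{-1}\bfx_i$ through its independence from $\bfx_i$ and concentration. Your treatment of $\|\hat\bbeta-\bbeta_0\|^2$ as $\hat\bdelta$ minus a vector fixed given $\hat\bfw$ matches the paper's closing $\tilde W$ argument, and is cleaner than its appeal to the variance of $\|\hat\bfw-\bfw_0\|^2$.
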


\begin{proof}
    We use the Efron-Stein inequality \citep{efron1981jackknife}: if $W$ is a function of $n$ independent random variables, and $W_{(i)}$ is any function of all those random variables except the $i$-th,
    $$
    \var(W) \leq \sum_{i=1}^n \var(W-W_{(i)}) \leq \sum_{i=1}^n \E((W-W_{(i)})^2) .
    $$

    We apply this inequality to $W=\|\hat \bdelta\|^2$ and $W_{(i)}=\|\hat \bdelta_{(i)}\|^2$.
    We first note that
    \begin{equation*}
        \E (| \|\hat \bdelta\|^2 - \|\hat \bdelta_{(i)}\|^2 |^2) = 2 \Big[\E \big(|\|\hat \bdelta\|^2 - \|\tilde \bdelta_i\|^2 |^2 \big) + \E \big(| \|\tilde \bdelta_i\|^2 - \|\hat \bdelta_{(i)}\|^2 |^2 \big) \Big].
    \end{equation*}
    For the first term, we have $|\|\hat \bdelta\|^2 - \|\tilde \bdelta_i\|^2 |^2 = [( \hat \bdelta - \tilde \bdelta_i)^\T ( \hat \bdelta + \tilde \bdelta_i)]^2$ and $( \hat \bdelta - \tilde \bdelta_i)^\T ( \hat \bdelta + \tilde \bdelta_i) = 2 ( \hat \bdelta - \tilde \bdelta_i)^\T \hat \bdelta - \|\hat \bdelta - \tilde \bdelta_i\|^2$.
    Therefore, by the Cauchy–Schwarz inequality we have
    \begin{equation*}
        |\|\hat \bdelta\|^2 - \|\tilde \bdelta_i\|^2 |^2 = \mO_{L_1} (\|\hat \bdelta - \tilde \bdelta_i\|^4) + \sqrt{\mO_{L_1} (\polyLog(n)) \|\hat \bdelta - \tilde \bdelta_i\|^4},
    \end{equation*}
    since $\E (\|\hat \bdelta\|^k)$ exists and is bounded by $k \polyLog(n) / \tau^k$.

    Using the results of Theorem \ref{th:l1oo} we have
    \begin{equation*}
        \E \big(|\|\hat \bdelta\|^2 - \|\tilde \bdelta_i\|^2 |^2 \big) = \mO \Big(\frac{\polyLog(n)}{n^{2-2\alpha}} \Big) = \mo(n^{-1}),
    \end{equation*}
    provided that $\alpha <1/2$.

    For the second term, by definition we have
    \begin{equation*}
        \begin{aligned}
            \|\tilde \bdelta_i\|^2 - \|\hat \bdelta_{(i)}\|^2 =& \frac{2}{n} \hat\bdelta_{(i)}^{\T}(\bfS_i+\tau \I)^{-1} \bfx_i \psi(\prox(c_i \rho)(\tilde{r}_{i,(i)})) \\
            &+ \frac{1}{n^2} \bfx_i^{\T}(\bfS_i+\tau \I)^{-2} \bfx_i \psi^2(\prox(c_i \rho)(\tilde{r}_{i,(i)})).
        \end{aligned}
    \end{equation*}
    Since $\|\hat \bdelta_{(i)}\|^2$ and $\bfS_i$ are independent of $\bfx_i$, and $\|(\bfS_i+\tau \I)^{-1}\|_2 \leq \tau^{-1}$, we have $\hat\bdelta_{(i)}^{\T}(\bfS_i+\tau \I)^{-1} \bfx_i = \mO_{L_2} (|\lambda_i| \|\hat \bdelta_{(i)}\| / \mc_n^{1/2})$.
    Recall also that $\sup_i \|\psi\|_{\infty} = \mO(\polyLog(n))$.
    Therefore, both terms are of order $\mO_{L_2}(\polyLog(n) / n \mc_n^{1/2})$.

    We can now conclude that
    \begin{equation*}
        \E \big(| \|\tilde \bdelta_i\|^2 - \|\hat \bdelta_{(i)}\|^2 |^2 \big) = \mO \Big(\frac{\polyLog(n)}{n^{2}} \Big).
    \end{equation*}
    Therefore, we have
    \begin{equation*}
        \var (\|\hat \bdelta\|^2) = \mO \Big(\frac{\polyLog(n)}{n^{1-2\alpha}} \Big) = \mo(1).
    \end{equation*}
    This shows that $\|\hat \bdelta\|^2$ has a deterministic equivalent in probability and in $L_2$.

    For the second part of the result, similarly we write that
    \begin{equation*}
        \E (| \|\hat \bdelta - \bdelta_0\|^2 - \|\hat \bdelta_{(i)} - \bdelta_0\|^2 |^2) = 2 \Big[\E \big(|\|\hat \bdelta - \bdelta_0\|^2 - \|\tilde \bdelta_i - \bdelta_0\|^2 |^2 \big) + \E \big(| \|\tilde \bdelta_i - \bdelta_0\|^2 - \|\hat \bdelta_{(i)} - \bdelta_0\|^2 |^2 \big) \Big].
    \end{equation*}
    Using the fact that $|\|\hat \bdelta - \bdelta_0\|^2 - \|\tilde \bdelta_i - \bdelta_0\|^2 |^2 = [( \hat \bdelta - \tilde \bdelta_i)^\T ( \hat \bdelta + \tilde \bdelta_i - 2\bdelta_0)]^2$ and $( \hat \bdelta - \tilde \bdelta_i)^\T ( \hat \bdelta + \tilde \bdelta_i - 2\bdelta_0) = 2 ( \hat \bdelta - \tilde \bdelta_i)^\T (\hat \bdelta - \bdelta_0) - \|\hat \bdelta - \tilde \bdelta_i\|^2$, by the Cauchy–Schwarz inequality we have
    \begin{equation*}
        |\|\hat \bdelta - \bdelta_0\|^2 - \|\tilde \bdelta_i - \bdelta_0\|^2 |^2 = \mO_{L_1} (\|\hat \bdelta - \tilde \bdelta_i\|^4) + \sqrt{\mO_{L_1} (\polyLog(n)) \|\hat \bdelta - \tilde \bdelta_i\|^4},
    \end{equation*}
    since $\E (\|\hat \bdelta - \bdelta_0\|^k)$ exists and is bounded by $k \polyLog(n) / \tau^k$ following from assumption \textbf{O7} and Lemma \ref{lemma:delta_moment_bounds}.

    Using the results of Theorem \ref{th:l1oo} we have
    \begin{equation*}
        \E \big(|\|\hat \bdelta - \bdelta_0\|^2 - \|\tilde \bdelta_i - \bdelta_0\|^2 |^2 \big) = \mO \Big(\frac{\polyLog(n)}{n^{2-2\alpha}} \Big) = \mo(n^{-1}),
    \end{equation*}
    provided that $\alpha <1/2$.

    Similarly, by definition we have
    \begin{equation*}
        \begin{aligned}
            \|\tilde \bdelta_i - \bdelta_0\|^2 - \|\hat \bdelta_{(i)} - \bdelta_0\|^2 =& \frac{2}{n} (\hat\bdelta_{(i)} - \bdelta_0)^{\T}(\bfS_i+\tau \I)^{-1} \bfx_i \psi(\prox(c_i \rho)(\tilde{r}_{i,(i)})) \\
            &+ \frac{1}{n^2} \bfx_i^{\T}(\bfS_i+\tau \I)^{-2} \bfx_i \psi^2(\prox(c_i \rho)(\tilde{r}_{i,(i)})).
        \end{aligned}
    \end{equation*}
    Since $(\hat \bdelta_{(i)} - \bdelta_0)^{\T}(\bfS_i+\tau \I)^{-1} \bfx_i = \mO_{L_2} (|\lambda_i| \|\hat \bdelta_{(i)} - \bdelta_0\| / \mc_n^{1/2})$, both terms are of order $\mO_{L_2}(\polyLog(n) / n \mc_n^{1/2})$.

    Similarly, we have
    \begin{equation*}
        \var (\|\hat \bdelta - \bdelta_0\|^2) = \mO \Big(\frac{\polyLog(n)}{n^{1-2\alpha}} \Big) = \mo(1).
    \end{equation*}

    The results for $\var (\|\hat \bbeta - \bbeta_0\|^2)$ are simply followed by the fact that $\var (\|\hat \bfw - \bfw_0\|^2) = \mo(1)$ in Proposition 3.10 of \cite{el2018impact}.

    By assuming $\tilde W = \|\hat \bdelta + \hat \bfw - \bdelta_0 - \bfw_0\|^2$ and $\tilde W_{(i)} = \|\hat \bdelta_{(i)} + \hat \bfw - \bdelta_0 - \bfw_0\|^2$, we can similarly have
    \begin{equation*}
        \E \big(| \|\hat \bdelta + \hat \bfw - \bdelta_0 - \bfw_0\|^2 - \|\hat \bdelta_{(i)} + \hat \bfw - \bdelta_0 - \bfw_0\|^2 |^2 \big) = \mo (1).
    \end{equation*}
    Note that $\E (\|\hat \bfw - \bfw_0\|) = \mO(1)$, we have $\E (\|\hat \bdelta + \hat \bfw - \bdelta_0 - \bfw_0\| )$ is bounded by $K \operatorname{polyLog}(n) / \tau^k$.

\end{proof}

\subsection[\appendixname~\thesubsection]{Leaving out a predictor} \label{sec_p2}

% {\color{red} For simplicity, we call $n$ the sample size of target data instead of $n$.
% Also for $\bfX$, $\bfx$, $epsilon$, etc.}

Let $\bfV$ be the $n \times(p-1)$ matrix corresponding to the first $(p-1)$ columns of the design matrix $\bfX$. We call $\bfv_i$ in $\RR^{p-1}$ the vector corresponding to the first $p-1$ entries of $\bfx_i$, i.e $\bfv_i^{\T}=(x_i(1), \ldots, x_i(p-1))$.
Denote $\cX_i = (\cX_i(1), \ldots, \cX_i(p))^\T$.
We call $\bfX(p)$ the vector in $\RR^n$ with $j$-th entry $x_j(p)$, i.e the $p$-th entry of the vector $\bfx_j$. When this does not create problems, we also use the standard notation $x_{j, p}$ for $x_j(p)$.
We further denote $\bdelta_0 = (\bgamma_0^\T, \delta_0(p))^\T$.

Call $\hat \bgamma$ the solution of
\begin{equation}
    \label{eq:def_gamma_hat}
\hat{\bgamma}=\argmin_{\bgamma \in \mathbb{R}^{p-1}} \frac{1}{n} \sum_{i=1}^n \rho\{\epsilon_i + \bfv_i^{\T}(\bfw_0 - \hat\bfw) -\bfv_i^{\T}(\bgamma-\bgamma_0)\} + \frac{\tau}{2}\|\bgamma\|^2.
\end{equation}
Note that $\Big[\begin{array}{l}\widehat{\gamma} \\ 0\end{array}\Big]$ is the solution of the original optimization problem \eqref{eq:opt_ori} when $x_i(p)$ is replaced by $0$.

% \subsubsection{Approximation to \texorpdfstring{\(\hat \bdelta\)}~{} via leave-one-predictor-out}

\paragraph{Approximation to \(\hat \bdelta\) via leave-one-predictor-out}~{}

We use the notations and partitions
\begin{equation}
    \bfx_i=\left[\begin{array}{c}
    \bfv_i \\
    x_i(p)
    \end{array}\right], \ \hat{\bdelta}=\left[\begin{array}{c}
        \hat{\bdelta}_{-p} \\
    \hat{\delta}(p)
    \end{array}\right] .
\end{equation}
Naturally, $\hat \bgamma$ satisfies
\begin{equation}
    \label{eq:gamma_hat}
    -\frac{1}{n} \sum_{i=1}^n \bfv_i \psi\{\epsilon_i + \bfv_i^{\T}(\bfw_{0,-p} - \hat\bfw_{-p})  -\bfv_i^{\T}(\hat\bgamma-\bgamma_0)\} + \tau \hat \bgamma = \bzero_{p-1}.
\end{equation}
Denote
\begin{equation*}
    r_{i,[p]}=\epsilon_i + \bfv_i^{\T}(\bfw_{0,-p} - \hat\bfw_{-p}) -\bfv_i^{\T}(\hat\bgamma - \bgamma_0).
\end{equation*}
i.e. the residuals based on $p-1$ predictors.

Recall that
\begin{equation}
    \label{eq:delta_hat}
    -\frac{1}{n} \sum_{i=1}^n \bfx_i \psi\{\epsilon_i + \bfx_i^{\T}(\bfw_0 - \hat\bfw) -\bfx_i^{\T}(\hat\bdelta-\bdelta_0)\} + \tau\hat\bdelta = \bzero_p,
\end{equation}
and
\begin{equation*}
    R_i = \epsilon_i + \bfx_i^\T (\bfw_0 - \hat \bfw) -  \bfx_i^\T (\hat\bdelta - \bdelta_0).
\end{equation*}

Taking the difference between \eqref{eq:gamma_hat} and \eqref{eq:delta_hat}, we have
\begin{equation*}
    \frac{1}{n} \sum_i^n \Big\{ \bfx_i \psi(R_i)-\Big[\begin{array}{c}
        \bfv_i \\
        0
        \end{array}\Big] \psi(r_{i,[p]}) \Big\} - \tau\Big(\hat \bdelta - \Big[\begin{array}{c}
            \hat \bgamma \\
            0
            \end{array}\Big] \Big) = \bzero_p.
\end{equation*}
Note that this $p$-dimensional equation can separate into a scalar and a vector equation, namely,
\begin{equation*}
    \begin{aligned}
        &\frac{1}{n} \sum_i^n \Big\{ x_i(p) \psi(R_i) \Big\} - \tau \hat \delta(p) = 0, \\
        &\frac{1}{n} \sum_i^n \bfv_i \Big\{ \psi(R_i)- \psi(r_{i,[p]}) \Big\} - \tau (\hat \bdelta_{-p} - \hat \bgamma) = \bzero_{p-1}.
    \end{aligned}
\end{equation*}
Using a first-order Taylor expansion of $\psi(R_i)$ around $\psi(r_{i,[p]})$ and noting that $R_i - r_{i,[p]} = \bfv_i^{\T}(\hat \bgamma - \hat \bdelta_{-p}) + x_i(p) \{w_0(p) - \hat w(p) + \delta_0(p) - \hat \delta(p)\}$, we can transform the first equation above into
\begin{equation*}
    \frac{1}{n} \sum_i^n x_i(p) \Big( \psi(r_{i,[p]}) + \psi'(r_{i,[p]}) \big[ \bfv_i^{\T}(\hat \bgamma - \hat \bdelta_{-p}) + x_i(p) \{w_0(p) - \hat w(p) + \delta_0(p) - \hat \delta(p)\} \big] \Big) - \tau \hat \delta(p) \simeq 0.
\end{equation*}
This gives the near equivalence
\begin{equation*}
    \hat \delta(p) \simeq \frac{\frac{1}{n} \sum x_i(p) \Big( \psi(r_{i,[p]}) + \psi'(r_{i,[p]}) \big[ \bfv_i^{\T}(\hat \bgamma - \hat \bdelta_{-p}) + x_i(p) \{w_0(p) - \hat w(p) + \delta_0(p)\} \big] \Big)}{ \frac{1}{n} \sum x_i^2(p) \psi'(r_{i,[p]}) + \tau}.
\end{equation*}

Working similarly on the second equation involving $\bfv_i$, we have
\begin{equation*}
    \frac{1}{n} \sum_i^n \psi'(r_{i,[p]}) \bfv_i (R_i - r_{i,[p]}) - \tau (\hat \bdelta_{-p} - \hat \bgamma) \simeq \bzero_{p-1}.
\end{equation*}
Since $R_i - r_{i,[p]} = \bfv_i^{\T}(\hat \bgamma - \hat \bdelta_{-p}) + x_i(p) \{w_0(p) - \hat w(p) + \delta_0(p) - \hat \delta(p)\}$, the above equation can be transformed into
\begin{equation*}
    \Big[ \frac{1}{n} \sum_i^n \psi'(r_{i,[p]}) \bfv_i \bfv_i^{\T} \Big] (\hat \bgamma - \hat \bdelta_{-p}) + \{w_0(p) - \hat w(p) + \delta_0(p) - \hat \delta(p)\} \frac{1}{n} \sum_i^n \psi'(r_{i,[p]}) \bfv_i x_i(p) - \tau (\hat \bdelta_{-p} - \hat \bgamma)
    \simeq \bzero_{p-1}.
\end{equation*}
Denote
\begin{equation*}
    \bfu_p=\frac{1}{n} \sum_{i=1}^n \psi' (r_{i,[p]}) \bfv_i x_i(p), \text { and } \mathfrak{S}_p=\frac{1}{n} \sum_{i=1}^n \psi' (r_{i,[p]}) \bfv_i \bfv_i^{\T},
\end{equation*}
we see that $\hat \bgamma - \hat \bdelta_{-p} \simeq - \{w_0(p) - \hat w(p) + \delta_0(p) - \hat \delta(p)\} (\mathfrak{S}_p + \tau \I)^{-1} \bfu_p$.
Using the above approximation in the equation for $\hat \delta(p)$, we can write

\begin{equation*}
    \hat \delta(p) \simeq \frac{\frac{1}{n} \sum x_i(p) \psi(r_{i,[p]}) + \{w_0(p) - \hat w(p) + \delta_0(p)\} \{\frac{1}{n} \sum x_i^2(p) \psi'(r_{i,[p]}) - \bfu_p^{\T}(\mathfrak{S}_p+\tau \I)^{-1}\} \bfu_p }{ \frac{1}{n} \sum x_i^2(p) \psi'(r_{i,[p]}) - \bfu_p^{\T}(\mathfrak{S}_p+\tau \I)^{-1} \bfu_p + \tau}.
\end{equation*}

Denote
\begin{equation*}
    \xi_n \triangleq \frac{1}{n} \sum_{i=1}^n x_i^2(p) \psi'(r_{i,[p]}) - \bfu_p^{\T}(\mathfrak{S}_p+\tau \I)^{-1} \bfu_p,
\end{equation*}
and
\begin{equation*}
    N_p \triangleq \frac{1}{\sqrt{n}} \sum_{i=1}^n x_i(p) \psi(r_{i,[p]}) .
\end{equation*}
We have
\begin{equation*}
    (\xi_n + \tau) \hat \delta(p) \simeq \sqrt{n} N_p + \{w_0(p) - \hat w(p) + \delta_0(p)\} \xi_n.
\end{equation*}
Also we have
\begin{equation*}
    \hat \bdelta_{-p} \simeq \hat \bgamma + \{w_0(p) - \hat w(p) + \delta_0(p) - \hat\delta(p)\} (\mathfrak{S}_p + \tau \I)^{-1} \bfu_p.
\end{equation*}

Thus, we construct a approximation to $\hat \bdelta$.
As a summary, we introduce the following definitions:

\textbf{Definition.} We call the residuals corresponding to this optimization problem $\{r_{i,[p]}\}_{i=1}^n$, in other words
\begin{equation*}
r_{i,[p]}=\epsilon_i + \bfv_i^{\T}(\bfw_{0,-p} - \hat\bfw_{-p})  -\bfv_i^{\T}(\hat\bgamma - \bgamma_0).
\end{equation*}

We call
\begin{equation*}
\bfu_p=\frac{1}{n} \sum_{i=1}^n \psi' (r_{i,[p]}) \bfv_i x_i(p), \text { and } \mathfrak{S}_p=\frac{1}{n} \sum_{i=1}^n \psi' (r_{i,[p]}) \bfv_i \bfv_i^{\T}.
\end{equation*}

Note that $\bfu_p \in \RR^{p-1}$ and $\mathfrak{S}_p$ is $(p-1) \times(p-1)$. We call
\begin{equation*}
\xi_n \triangleq \frac{1}{n} \sum_{i=1}^n x_i^2(p) \psi'(r_{i,[p]}) - \bfu_p^{\T}(\mathfrak{S}_p+\tau \I)^{-1} \bfu_p,
\end{equation*}
and
\begin{equation*}
N_p \triangleq \frac{1}{\sqrt{n}} \sum_{i=1}^n x_i(p) \psi(r_{i,[p]}) .
\end{equation*}

We consider
\begin{equation*}
\mathfrak{b}_p \triangleq \{w_0(p) - \hat w(p) + \delta_0(p)\} \frac{\xi_n}{\tau+\xi_n}+\frac{1}{\sqrt{n}} \frac{N_p}{\tau+\xi_n} .
\end{equation*}

Note that when $\xi_n>0$, we have
$$
\mathfrak{b}_p-\delta_0(p) = w_0(p) - \hat w(p) + \frac{n^{-1 / 2} N_p-\tau \mathfrak{b}_p}{\xi_n} .
$$

We call
$$
\tilde{\bfb}=\left[\begin{array}{c}
\widehat{\bgamma} \\
\delta_0(p) - \hat w(p) + w_0(p)
\end{array}\right] + \{\mathfrak{b}_p - \delta_0(p) + \hat w(p) - w_0(p)\} \left[\begin{array}{c}
-\left(\mathfrak{S}_p+\tau \I\right)^{-1} \bfu_p \\
1
\end{array}\right] .
$$

\subsubsection{Deterministic aspects}
\begin{proposition}
We have
\begin{eqnarray}
\|\hat\bdelta-\tilde\bfb\| \leq \frac{1}{\tau}|\mathfrak{b}_p - \delta_0(p) + \hat w(p) - w_0(p)| \sup _{1 \leq i \leq n}|\mathrm{d}_{i, p}| \|\widehat{\Sigma}\|_2 \sqrt{\|(\mathfrak{S}_p+\tau \I)^{-1} \bfu_p\|^2+1} \label{eqn:prop311_i}
\end{eqnarray}
where $\mathrm{d}_{i, p}=[\psi' (\gamma_{i, p}^*)-\psi'(r_{i,[p]})]$ and $\gamma_{i, p}^*$ is in the interval $(\epsilon_i + \bfv_i^{\T}(\bfw_{0,-p} - \hat\bfw_{-p})  -\bfv_i^{\T}(\hat\bgamma-\bgamma_0), \epsilon_i + \bfx_i^{\T}(\bfw_0 - \hat\bfw) -\bfx_i^{\T}(\tilde\bfb-\bbeta_0))$. Furthermore,
\begin{equation}
    \label{eqn:prop311_iii}
\|(\mathfrak{S}_p+\tau \I)^{-1} u_p\|^2 \leq \frac{1}{n \tau} \sum_{i=1}^n x_i^2(p) \psi'(r_{i,[p]})=\frac{1}{n \tau} \sum_{i=1}^n \lambda_i^2 \psi'(r_{i,[p]}) \mathcal{X}_i^2(p).
\end{equation}
\end{proposition}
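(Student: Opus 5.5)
The strategy is the same as in the leave-one-observation-out proposition: evaluate the gradient $f$ at the candidate $\tilde\bfb$, show that it reduces to a single remainder term, and then apply Lemma \ref{lemma:delta_ineq_1}, which gives $\|\hat\bdelta-\tilde\bfb\|\le \tau^{-1}\|f(\tilde\bfb)\|$.

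Write $\theta = \mathfrak b_p-\delta_0(p)+\hat w(p)-w_0(p)$, so that $\tilde\bfb = (\hat\bgamma - \theta(\mathfrak S_p+\tau\I)^{-1}\bfu_p,\ \mathfrak b_p)^\T$. The residual at $\tilde\bfb$ is
\[
\epsilon_i+\bfx_i^\T(\bfw_0-\hat\bfw)-\bfx_i^\T(\tilde\bfb-\bdelta_0)=r_{i,[p]}+\theta\{\bfv_i^\T(\mathfrak S_p+\tau\I)^{-1}\bfu_p - x_i(p)\}.
\]
Call the bracketed quantity $\theta a_i$. Applying the mean value theorem gives
\[
\psi(\cdot)=\psi(r_{i,[p]})+\psi'(r_{i,[p]})\theta a_i+\mathrm d_{i,p}\theta a_i,
\]
with $\gamma^*_{i,p}$ lying in the stated interval.

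Substituting this into $f(\tilde\bfb)$, I split the result into its first $p-1$ coordinates and its last coordinate.

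For the first block, the terms $-n^{-1}\sum\bfv_i\psi(r_{i,[p]})+\tau\hat\bgamma$ vanish by \eqref{eq:gamma_hat}. The linear terms are
\[
-\theta\{\mathfrak S_p(\mathfrak S_p+\tau\I)^{-1}\bfu_p-\bfu_p\}-\tau\theta(\mathfrak S_p+\tau\I)^{-1}\bfu_p=0.
\]

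For the last coordinate, the linear terms are
\[
-n^{-1/2}N_p-\theta\{\bfu_p^\T(\mathfrak S_p+\tau\I)^{-1}\bfu_p-n^{-1}\textstyle\sum x_i^2(p)\psi'(r_{i,[p]})\}+\tau\mathfrak b_p=-n^{-1/2}N_p+\theta\xi_n+\tau\mathfrak b_p .
\]
By the definition of $\mathfrak b_p$, we have $(\tau+\xi_n)\mathfrak b_p=\{\delta_0(p)+w_0(p)-\hat w(p)\}\xi_n+n^{-1/2}N_p$. This makes the expression vanish, and this is the identity to verify carefully, tracking signs.

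Hence $f(\tilde\bfb)=-\theta n^{-1}\sum_i \mathrm d_{i,p}a_i\bfx_i=-\theta n^{-1}\bfX^\T\bfD\bfa$, where $\bfD=\mathrm{diag}(\mathrm d_{i,p})$ and $\bfa = -\bfX(-(\mathfrak S_p+\tau\I)^{-1}\bfu_p,1)^\T$. Its norm is bounded by
\[
|\theta|\,\sup_i|\mathrm d_{i,p}|\,\|n^{-1/2}\bfX\|_2^2\,\|(-(\mathfrak S_p+\tau\I)^{-1}\bfu_p,1)\|,
\]
and $\|n^{-1/2}\bfX\|_2^2=\|\hat\Sigma\|_2$. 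This yields \eqref{eqn:prop311_i}.

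For \eqref{eqn:prop311_iii}, write $\bfD'=\mathrm{diag}(\psi'(r_{i,[p]}))\succeq0$, so $\bfu_p=n^{-1}\bfV^\T\bfD'\bfX(p)$ and $\mathfrak S_p=n^{-1}\bfV^\T\bfD'\bfV$. Set $\bfB=n^{-1/2}\bfD'^{1/2}\bfV$. Then $(\mathfrak S_p+\tau\I)^{-1}\bfu_p=(\bfB^\T\bfB+\tau\I)^{-1}\bfB^\T\bfz$ with $\bfz=n^{-1/2}\bfD'^{1/2}\bfX(p)$. Using the SVD of $\bfB$, the operator $(\bfB^\T\bfB+\tau\I)^{-1}\bfB^\T$ has singular values $s/(s^2+\tau)\le 1/(2\sqrt\tau)$. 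That bound would give $\|\bfz\|^2/(4\tau)$, which is stronger than needed; even the cruder bound $s^2/(s^2+\tau)^2\le1/\tau$ suffices. Since $\|\bfz\|^2=n^{-1}\sum x_i^2(p)\psi'(r_{i,[p]})$, the claim follows.

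The main obstacle is the bookkeeping in the last-coordinate identity, which depends on the $\mathfrak b_p$ definition and the sign conventions for $\hat w(p)-w_0(p)$. I would check it first with $\hat\bfw=\bfw_0$, where the argument reduces to El Karoui's.
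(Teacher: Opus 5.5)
Your proposal is correct and is essentially the paper's own proof. Like the paper, you evaluate $f$ at $\tilde\bfb$, expand each $\psi$ by the mean value theorem around $r_{i,[p]}$, cancel the linear terms (in the first $p-1$ coordinates via \eqref{eq:gamma_hat} and $\mathfrak S_p(\mathfrak S_p+\tau\I)^{-1}+\tau(\mathfrak S_p+\tau\I)^{-1}=\I$, in the last coordinate via the definition of $\mathfrak b_p$), bound the remainder, and apply Lemma~\ref{lemma:delta_ineq_1}. For \eqref{eqn:prop311_iii}, your SVD bound on $(\bfB^\T\bfB+\tau\I)^{-1}\bfB^\T$ is the same spectral idea as the paper's hat-matrix bound $\preceq \I$, but it handles the squared inverse correctly: the paper's displayed identity actually computes $\bfu_p^\T(\mathfrak S_p+\tau\I)^{-1}\bfu_p$ and needs an extra factor $1/\tau$ to reach the stated bound, while your argument even gives the sharper constant $1/(4\tau)$.
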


As in Lemma \ref{lemma:delta_ineq_1}, we have
\begin{equation*}
\|\hat{\bdelta}-\tilde{\bfb}\| \leq \frac{1}{\tau}\|f(\tilde{\bfb})\|,
\end{equation*}
where
\begin{equation}
f(\tilde{\bfb})= -\frac{1}{n} \sum_{i=1}^n \bfx_i \psi\{\epsilon_i + \bfx_i^{\T}(\bfw_0 - \hat\bfw) -\bfx_i^{\T}(\tilde\bfb-\bdelta_0)\} + \tau\tilde\bfb.
\end{equation}
We note furthermore that, by deﬁnition of $\hat \bgamma$,
\begin{equation*}
g(\hat{\bgamma})= -\frac{1}{n} \sum_{i=1}^n \bfv_i \psi\{\epsilon_i + \bfv_i^{\T}(\bfw_{0,-p} - \hat\bfw_{-p})  -\bfv_i^{\T}(\hat\bgamma-\bgamma_0)\} + \tau \hat \bgamma = \bzero_{p-1}.
\end{equation*}

\begin{proof}
    \textbf{i. Work on the first $p-1$ coordinates of $f(\tilde{\bfb})$}

    Denote $f_{p-1} (\bdelta)$ the first $p-1$ coordinates of $f(\bdelta)$.
    Denote $\hat \bgamma_{ext}$ the $p$-dimensional vector whose first $p-1$ coordinates are $\hat \bgamma$ and last coordinate is $\bdelta_0(p)$, i.e.
    \begin{equation*}
        \hat \bgamma_{ext} = \Big[\begin{array}{c}
            \hat \bgamma \\
            \bdelta_0(p) - \hat w(p) + w_0(p)
        \end{array}\Big].
    \end{equation*}

    For a vector $\bfv$, we use the notation $\bfv_{-k}$ to denote the vector obtained by removing the $k$-th coordinate of $\bfv$.

    Note that
    \begin{eqnarray*}
        f_{p-1}(\tilde\bfb) &=& f_{p-1}(\tilde{\bfb}) - g(\hat \bgamma) \\
        &=& -\frac{1}{n} \sum_{i=1}^n \bfv_i \Big[ \psi\{ \epsilon_i + \bfx_i^\T (\bfw_0 - \hat \bfw) + \bfx_i^\T (\bdelta_0 - \tilde \bfb) \} - \psi\{ \epsilon_i + \bfv_i^\T (\bfw_{0,-p} - \hat\bfw_{-p})  + \bfv_i^\T (\bgamma_0 - \hat \bgamma) \} \Big]\\
        && + \tau (\tilde \bfb_{-p} - \hat \bgamma).
    \end{eqnarray*}
    By the mean value theorem, for $\gamma_{i, p}^*$ in the interval $(\epsilon_i + \bfv_i^{\T}(\bfw_0 - \hat\bfw) -\bfv_i^{\T}(\bfw_{0,-p} - \hat\bfw_{-p}) , \epsilon_i + \bfx_i^{\T}(\bfw_0 - \hat\bfw) -\bfx_i^{\T}(\tilde\bfb-\bdelta_0))$, we have
    \begin{eqnarray*}
        &&\psi\{ \epsilon_i + \bfx_i^\T (\bfw_0 - \hat \bfw) + \bfx_i^\T (\bdelta_0 - \tilde \bfb) \} - \psi\{ \epsilon_i + \bfv_i^\T (\bfw_{0,-p} - \hat\bfw_{-p})  + \bfv_i^\T (\bgamma_0 - \hat \bgamma) \} \\
        &=& \psi'(\gamma_{i, p}^*) \bfx_i^\T (\hat \bgamma_{ext} - \tilde \bfb) \\
        &=& \psi'(r_{i,[p]}) \bfx_i^\T(\hat \bgamma_{ext} - \tilde \bfb) + \{\psi' (\gamma_{i, p}^*)-\psi'(r_{i,[p]})\} \bfx_i^\T(\hat \bgamma_{ext} - \tilde \bfb).
    \end{eqnarray*}
    Denote
    \begin{equation*}
        \begin{aligned}
            \md_{i,p} &= \psi' (\gamma_{i, p}^*)-\psi'(r_{i,[p]}), \\
            \delta_{i,p} &= \{\psi' (\gamma_{i, p}^*)-\psi'(r_{i,[p]})\} \bfx_i^\T(\hat \bgamma_{ext} - \tilde \bfb),\\
            \mathrm{R}_p &= -\frac{1}{n} \sum_{i=1}^n \bfv_i \Big[ \{\psi' (\gamma_{i, p}^*)-\psi'(r_{i,[p]})\} \bfx_i^\T(\hat \bgamma_{ext} - \tilde \bfb) \Big].
        \end{aligned}
    \end{equation*}

    Therefore, we have
    \begin{equation*}
        f_{p-1}(\tilde\bfb) = -\frac{1}{n} \sum_{i=1}^n \psi'(r_{i,[p]}) \bfv_i  \bfx_i^\T(\hat \bgamma_{ext} - \tilde \bfb) + \tau (\tilde \bfb_{-p} - \hat \bgamma) + \mathrm{R}_p \triangleq \mathrm{A}_p+\mathrm{R}_p.
    \end{equation*}

    Note that by definition,
    \begin{equation*}
        \begin{aligned}
            \hat{\bgamma}_{ext}-\tilde{\bfb} & =\{\mathfrak{b}_p - \delta_0(p) + \hat w(p) - w_0(p)\}\left[\begin{array}{c}
            (\mathfrak{S}_p+\tau \I)^{-1} \bfu_p \\
            -1
            \end{array}\right], \\
            \tilde{\bfb}_{-p}-\hat{\bgamma} & =-\{\mathfrak{b}_p - \delta_0(p) + \hat w(p) - w_0(p)\}(\mathfrak{S}_p+\tau \I)^{-1} \bfu_p .
            \end{aligned}
    \end{equation*}
    Therefore, we have $\bfx_i^\T (\hat \bgamma_{ext} - \tilde \bfb)= \{\mathfrak{b}_p - \delta_0(p) + \hat w(p) - w_0(p)\} \{ \bfv_i^\T (\mathfrak{S}_p + \tau \I)^{-1} \bfu_p - x_i(p) \}$, and
    
    \begin{equation*}
        \mathrm{A}_p = -\{\mathfrak{b}_p - \delta_0(p) + \hat w(p) - w_0(p)\} \Big[ \frac{1}{n} \sum_{i=1}^{n} \psi'(r_{i,[p]}) \bfv_i \big\{ \bfv_i^\T (\mathfrak{S}_p + \tau \I)^{-1} \bfu_p - x_i(p) \big\} + \tau (\mathfrak{S}_p+\tau \I)^{-1} \bfu_p \Big].
    \end{equation*}

    By the definition of $\mathfrak{S}_p$ and $\bfu_p$, we have
    \begin{equation*}
        \mathrm{A}_p = -\{\mathfrak{b}_p - \delta_0(p) + \hat w(p) - w_0(p)\} \{ \mathfrak{S}_p (\mathfrak{S}_p + \tau \I)^{-1} \bfu_p - \bfu_p + \tau (\mathfrak{S}_p+\tau \I)^{-1} \bfu_p \} = \bzero_{p-1},
    \end{equation*}
    since $\mathfrak{S}_p (\mathfrak{S}_p + \tau \I)^{-1}  + \tau (\mathfrak{S}_p+\tau \I)^{-1} = \I$.

    Therefore, we conclude that
    \begin{equation*}
        f_{p-1}(\tilde\bfb) = \mathrm{R}_p.
    \end{equation*}

    \textbf{ii. Work on the last coordinate of $f(\tilde{\bfb})$}

    Denote $[f (\tilde\bdelta)]_p$ the last coordinate of $f(\tilde\bdelta)$.
    We have shown that
    \begin{eqnarray*}
        &&\psi\{ \epsilon_i + \bfx_i^\T (\bfw_0 - \hat \bfw) + \bfx_i^\T (\bdelta_0 - \tilde \bfb) \} - \psi\{ \epsilon_i + \bfv_i^\T (\bfw_{0,-p} - \hat\bfw_{-p}) + \bfv_i^\T (\bgamma_0 - \hat \bgamma) \} \\
        &=& \psi'(r_{i,[p]}) \bfx_i^\T(\hat \bgamma_{ext} - \tilde \bfb) + \{\psi' (\gamma_{i, p}^*)-\psi'(r_{i,[p]})\} \bfx_i^\T(\hat \bgamma_{ext} - \tilde \bfb) .
    \end{eqnarray*}
    Recall that
    \begin{equation*}
        \begin{aligned}
            r_{i,[p]} &= \epsilon_i + \bfv_i^{\T}(\bfw_{0,-p} - \hat\bfw_{-p}) -\bfv_i^{\T}(\hat\bgamma-\bgamma_0), \\
            \delta_{i,p} &= \{\psi' (\gamma_{i, p}^*)-\psi'(r_{i,[p]})\} \bfx_i^\T(\hat \bgamma_{ext} - \tilde \bfb) .
        \end{aligned}
    \end{equation*}
    We note that
    \begin{eqnarray*}
        &&\psi\{ \epsilon_i + \bfx_i^\T (\bfw_0 - \hat \bfw) + \bfx_i^\T (\bdelta_0 - \tilde \bfb) \} \\
        &=& \psi(r_{i,[p]}) + \psi'(r_{i,[p]}) \bfx_i^\T(\hat \bgamma_{ext} - \tilde \bfb) + \delta_{i,p} \\
        &=& \psi(r_{i,[p]}) + \psi'(r_{i,[p]}) \{\mathfrak{b}_p - \delta_0(p) + \hat w(p) - w_0(p)\} \{ \bfv_i^\T (\mathfrak{S}_p + \tau \I)^{-1} \bfu_p - x_i(p) \} + \delta_{i,p}.
    \end{eqnarray*}
    Therefore, we have
    \begin{eqnarray*}
            &&[f (\tilde\bdelta)]_p + \frac{1}{n} \sum_{i=1}^n x_i(p) \delta_{i,p} \\
            &=& -\frac{1}{n} \sum_{i=1}^n x_i(p) \Big[\psi(r_{i,[p]}) + \psi'(r_{i,[p]}) \{\mathfrak{b}_p - \delta_0(p) + \hat w(p) - w_0(p)\} \{ \bfv_i^\T (\mathfrak{S}_p + \tau \I)^{-1} \bfu_p - x_i(p) \} \Big] + \tau \tilde b(p), \\
            &=& -\frac{1}{n} \sum_{i=1}^n x_i(p) \psi(r_{i,[p]}) - \{\mathfrak{b}_p - \delta_0(p) + \hat w(p) - w_0(p)\} \bfu_p^\T (\mathfrak{S}_p + \tau \I)^{-1} \bfu_p \\
            &&+ \{\mathfrak{b}_p - \delta_0(p) + \hat w(p) - w_0(p)\} \frac{1}{n} \sum_{i=1}^n \psi'(r_{i,[p]}) x_i^2(p) + \tau \mathfrak{b}_p, \\
            &=& - \Big\{ \frac{1}{n} \sum_{i=1}^n x_i(p) \psi(r_{i,[p]}) - \tau \mathfrak{b}_p \Big\} \\
            &&+ \{\mathfrak{b}_p - \delta_0(p) + \hat w(p) - w_0(p)\} \Big\{ \frac{1}{n} \sum_{i=1}^n \psi'(r_{i,[p]}) x_i^2(p) - \bfu_p^\T (\mathfrak{S}_p + \tau \I)^{-1} \bfu_p \Big\}, \\
            &=& - \Big( \frac{1}{\sqrt{n}} N_p - \tau \mathfrak{b}_p \Big) + \{\mathfrak{b}_p - \delta_0(p) + \hat w(p) - w_0(p)\} \xi_n, \\
            &=& 0.
    \end{eqnarray*}

    We conclude that
    \begin{eqnarray*}
        [f (\tilde\bdelta)]_p &=& -\frac{1}{n} \sum_{i=1}^n x_i(p) \delta_{i,p} \\
        &=& -\frac{1}{n} \sum_{i=1}^n x_i(p) \{\psi' (\gamma_{i, p}^*)-\psi'(r_{i,[p]})\} \bfx_i^\T(\hat \bgamma_{ext} - \tilde \bfb).
    \end{eqnarray*}

    \textbf{iii. Representation of $f(\tilde\bfb)$}

    Aggregating all the results we have obtained so far, we see that
    \begin{equation}
        \label{eq:f_tilde_b_iii}
        \begin{aligned}
            f(\tilde\bfb) =& - \frac{1}{n} \sum_{i=1}^n \mathrm{d}_{i,p} \bfx_i \bfx_i^\T(\hat \bgamma_{ext} - \tilde \bfb) \\
            =& \{\mathfrak{b}_p - \delta_0(p) + \hat w(p) - w_0(p)\} \Big\{\frac{1}{n} \sum_{i=1}^n \mathrm{d}_{i, p} \bfx_i \bfx_i^{\T}\Big\} \left[\begin{array}{c}
                (\mathfrak{S}_p+\tau \I)^{-1} \bfu_p \\
                -1
                \end{array}\right] ,
        \end{aligned}
    \end{equation}
    which implies \eqref{eqn:prop311_i}.

    % For the first term we have
    % \begin{equation*}
    %     \frac{1}{n} \sum_{i=1}^n \mathrm{d}_{i,p} \bfx_i \bfx_i^\T(\hat \bgamma_{ext} - \tilde \bfb) = \{\mathfrak{b}_p - \delta_0(p) + \hat w(p) - w_0(p)\} \Big\{\frac{1}{n} \sum_{i=1}^n \mathrm{d}_{i, p} \bfx_i \bfx_i^{\T}\Big\} \left[\begin{array}{c}
    %         (\mathfrak{S}_p+\tau \I)^{-1} \bfu_p \\
    %         -1
    %         \end{array}\right] ,
    % \end{equation*}
    % which implies \eqref{eqn:prop311_i}.

    For $\|(\mathfrak{S}_p+\tau \I)^{-1} \bfu_p\|^2$, denote $\bfD_{\psi'(r_{\cdot,[p]})}$ the diagonal matrix with $(i,i)$ entry $\psi'(r_{i,[p]})$. We have
    \begin{equation*}
        \bfu_p = \frac{1}{n} \bfV^\T \bfD_{\psi'(r_{\cdot,[p]})} \bfX(p) .
    \end{equation*}
    Therefore,
    \begin{eqnarray*}
        &&\|(\mathfrak{S}_p+\tau \I)^{-1} \bfu_p\|^2 \\
        &=& \frac{\bfX(p)}{\sqrt{n}} \bfD_{\psi'(r_{\cdot,[p]})}^{1/2}  \frac{\bfD_{\psi'(r_{\cdot,[p]})}^{1/2} \bfV}{\sqrt{n}} \Big( \frac{\bfV^\T \bfD_{\psi'(r_{\cdot,[p]})} \bfV}{n} + \tau \I \Big)^{-1} \frac{\bfV^\T \bfD_{\psi'(r_{\cdot,[p]})}^{1/2}}{\sqrt{n}} \bfD_{\psi'(r_{\cdot,[p]})}^{1/2} \frac{\bfX(p)}{\sqrt{n}}.
    \end{eqnarray*}
    Note that
    \begin{equation*}
        \frac{\bfD_{\psi'(r_{\cdot,[p]})}^{1/2} \bfV}{\sqrt{n}} \Big( \frac{\bfV^\T \bfD_{\psi'(r_{\cdot,[p]})} \bfV}{n} + \tau \I \Big)^{-1} \frac{\bfV^\T \bfD_{\psi'(r_{\cdot,[p]})}^{1/2}}{\sqrt{n}} \preceq \I.
    \end{equation*}
    So we have
    \begin{equation*}
        \|(\mathfrak{S}_p+\tau \I)^{-1} \bfu_p\|^2 \leq \frac{1}{n} \bfX(p)^\T \bfD_{\psi'(r_{\cdot,[p]})} \bfX(p) = \frac{1}{n} \sum_{i=1}^n x_i^2(p) \psi'(r_{i,[p]}) .
    \end{equation*}
    % For the second term in \eqref{eq:f_tilde_b_iii}, we have
    % \begin{equation*}
    %     \E \Big\|\frac{1}{n} \sum_{i=1}^n \bfx_i x_i(p) \Big\|^2 = p.
    % \end{equation*}
    % \begin{equation*}
    %     \frac{1}{n} \sum_{i=1}^n \bfx_i \psi'(\gamma_{i, p}^*) x_i(p) = \Big\{\frac{1}{n} \sum_{i=1}^n \psi'(\gamma_{i, p}^*) \bfx_i \bfx_i^{\T}\Big\} \left[\begin{array}{c}
    %         \bzero_{p-1} \\
    %         1
    %         \end{array}\right].
    % \end{equation*}
    % Thus
    % \begin{equation*}
    %     \Big\|\frac{1}{n} \sum_{i=1}^n \bfx_i \psi'(\gamma_{i, p}^*) x_i(p) \Big\| \leq \sup_i \|\psi'\|_\infty \| \hat \bSigma\|_2.
    % \end{equation*}
    % Note that $\|\hat \bSigma\|_2 = \mO_{L_k}(\polyLog(n))$ by Lemma \ref{lemma:el_3.38}, and $\sup_i \|\psi'\|_\infty \leq C \polyLog(n)$ under assumption \textbf{O3}.
\end{proof}

\subsubsection{Stochastic aspects}

Assume that $\cX(p) = (\cX_1(p), \ldots, \cX_n(p))^\T$ is independent of $\{\cV_i, \epsilon_i\}_{i=1}^n$, where $\cV_i = \bfV_i / \lambda_i$. This is consistent with assumption \textbf{P1}.

To bound $\|\hat \bdelta - \tilde \bfb\|$, using Equation \eqref{eqn:prop311_iii} we have
\begin{equation*}
    \|(\mathfrak{S}_p+\tau \I)^{-1} u_p\|^2 \leq \frac{1}{n \tau} \sum_{i=1}^n \lambda_i^2 \|\psi'\|_\infty \mathcal{X}_i^2(p),
\end{equation*}
and
\begin{equation*}
    \|(\mathfrak{S}_p+\tau \I)^{-1} u_p\|^2 \leq \frac{\sup_i \|\psi'\|_\infty}{\tau} \frac{1}{n } \sum_{i=1}^n \lambda_i^2  \mathcal{X}_i^2(p).
\end{equation*}
Therefore, under assumption \textbf{O3-O4} and \textbf{O6}, we have for any fixed $k$ and at fixed $\tau$,
\begin{equation*}
    \|(\mathfrak{S}_p+\tau \I)^{-1} u_p\|^2 = \mO_{L_k} (\polyLog (n)).
\end{equation*}
It guarantees that
\begin{equation*}
    \left\|\begin{array}{c}
    (\mathfrak{S}_p+\tau \I)^{-1} \bfu_p \\
    -1
    \end{array}\right\|^2 \leq(1+\|(\mathfrak{S}_p+\tau \I)^{-1} \bfu_p\|^2)=\mO_{L_k}(\polyLog(n)) .
\end{equation*}
Thus
\begin{equation*}
    \|\hat \bdelta - \tilde \bfb\| = \mO_{L_k}\Big(\frac{1}{\tau} \polyLog(n) |\mathfrak{b}_p - \delta_0(p) + \hat w(p) - w_0(p)| \sup _{1 \leq i \leq n}|\mathrm{d}_{i, p}| \|\widehat{\Sigma}\|_2 \Big).
\end{equation*}

% {\color{red} Prove that $|\hat w(p) - w_0(p)|$ is far less than $|\mathfrak{b}_p - \delta_0(p)|$.}

Recall that Lemma 3.38 from \cite{el2018impact} guarantees that $\|\hat \bSigma\| = \mO_{L_k}(\polyLog(n))$ under assupmtion \textbf{O1-O7}.
Also we will show in Proposition \ref{prop:bias_coordinate_bound} that $|\mathfrak{b}_p-\delta_0(p) + \hat w(p) - w_0(p)| = \mO_{L_k}\{\polyLog(n)  (n^{-1/2} \vee n^{-e})\}$ and in Proposition \ref{prop:md_bound} that $\sup _{1 \leq i \leq n}|\mathrm{d}_{i, p}| = \{\polyLog(n)  (n^{\alpha-1/2} \vee n^{\alpha-e})\}$ to show that $M_1$ is small.

\paragraph{On $\mathfrak{b}_p-\delta_0(p)$}~{}

Recall the notations
\begin{equation*}
    \begin{aligned}
        N_p &= \frac{1}{\sqrt{n}} \sum_{i=1}^n x_i(p) \psi(r_{i,[p]}) = \frac{1}{\sqrt{n}} \sum_{i=1}^n \lambda_i \psi(r_{i,[p]}) \cX_i(p), \\
        \xi_n &= \frac{1}{n} \sum_{i=1}^n x_i^2(p) \psi'(r_{i,[p]}) - \bfu_p^{\T}(\mathfrak{S}_p+\tau \I)^{-1} \bfu_p.
    \end{aligned}
\end{equation*}

Under our assumptions, we have $\E (\cX_i) = \bzero_{p}$ and $\cov (\cX_i) = \I_p$, and $\cX(p)$ is independent of $\{ r_{i,[p]} \}_{i=1}^n$.

\begin{proposition}
    \label{prop:bias_coordinate_bound}
    We have
    \begin{equation*}
        |\mathfrak{b}_p-\delta_0(p) + \hat w(p) - w_0(p)| \leq \frac{1}{\sqrt{n}\tau} |N_p| + \|\bdelta_0\|_\infty + |\hat w(p) - w_0(p)|.
    \end{equation*}
    Furthermore, under assumptions \textbf{O1-O7} and \textbf{P1}, $N_p = \mO_{L_k}(\polyLog(n))$ and therefore, when $\tau$ is fixed,
    \begin{equation*}
        |\mathfrak{b}_p-\delta_0(p) + \hat w(p) - w_0(p)| = \mO_{L_k}(\polyLog(n)n^{-1/2} + \|\bdelta_0\|_\infty + |\hat w(p) - w_0(p)|).
    \end{equation*}
\end{proposition}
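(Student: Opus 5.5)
The plan is to treat the deterministic inequality and the stochastic bound on $N_p$ separately. Set $a_p:=w_0(p)-\hat w(p)+\delta_0(p)$. The quantity of interest is then exactly $\mathfrak{b}_p-a_p$.

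\textbf{Step 1: $\xi_n\ge 0$.} Write $\bfD=\bfD_{\psi'(r_{\cdot,[p]})}$, which is positive semidefinite because $\rho$ is convex. Using $\bfu_p=n^{-1}\bfV^\T\bfD\bfX(p)$, we get
\[
\xi_n=\frac{1}{n}\bfX(p)^\T\bfD^{1/2}\Big[\I_n-\frac{\bfD^{1/2}\bfV}{\sqrt n}\Big(\frac{\bfV^\T\bfD\bfV}{n}+\tau\I\Big)^{-1}\frac{\bfV^\T\bfD^{1/2}}{\sqrt n}\Big]\bfD^{1/2}\bfX(p).
\]
The bracketed matrix was shown to be $\succeq 0$ in the proof of \eqref{eqn:prop311_iii}, since the projection-like matrix there is $\preceq\I$. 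Hence $\xi_n\ge0$. This gives $\tau/(\tau+\xi_n)\le1$ and $1/(\tau+\xi_n)\le 1/\tau$.

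\textbf{Step 2: the deterministic inequality.} From the definition of $\mathfrak{b}_p$,
\[
\mathfrak{b}_p-a_p=-\frac{\tau}{\tau+\xi_n}\,a_p+\frac{1}{\sqrt n}\frac{N_p}{\tau+\xi_n}.
\]
By Step 1,
\[
|\mathfrak{b}_p-a_p|\le |a_p|+\frac{|N_p|}{\sqrt n\,\tau}\le \|\bdelta_0\|_\infty+|\hat w(p)-w_0(p)|+\frac{|N_p|}{\sqrt n\,\tau},
\]
which is the first claim.

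\textbf{Step 3: $N_p=\mO_{L_k}(\polyLog(n))$.} The residuals $r_{i,[p]}$ depend only on $\epsilon_i$, $\bfv_i$ and $\hat\bfw$. Here $\hat\bfw$ is independent of the target sample. By \textbf{P1} and the independence of the entries of $\cX_i$, the column $\Theta_p=(\cX_1(p),\dots,\cX_n(p))$ is independent of $\{r_{i,[p]},\lambda_i\}_{i=1}^n$. Conditional on these, $N_p=\Theta_p^\T\bfa$ with $a_i=\lambda_i\psi(r_{i,[p]})/\sqrt n$. This is a linear and hence convex $\|\bfa\|$-Lipschitz function of $\Theta_p$, with conditional mean $0$.

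The concentration in \textbf{P1} (for both $G$ and $-G$) then gives conditional sub-Gaussian tails around the median, with scale $\|\bfa\|/\mc_n^{1/2}$. Since the mean is zero, the median is itself $\mO(\|\bfa\|/\mc_n^{1/2})$. Therefore
\[
\E\{|N_p|^k\mid \{r_{i,[p]},\lambda_i\}\}\le C_k\,\|\bfa\|^k\,\polyLog(n),
\]
using $1/\mc_n=\mO(\polyLog(n))$. Finally,
\[
\|\bfa\|^2\le \|\psi\|_\infty^2\,\frac1n\sum_{i=1}^n\lambda_i^2\le \|\psi\|_\infty^2\sup_i\lambda_i^2,
\]
which is $\mO_{L_k}(\polyLog(n))$ by \textbf{O3} and \textbf{O6}. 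Taking expectations yields $N_p=\mO_{L_k}(\polyLog(n))$. As a fallback, one could apply Rosenthal's inequality conditionally to this sum of independent mean-zero terms, using the bounded moments of $\cX_i(p)$.

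\textbf{Step 4: conclusion.} Combining Steps 2 and 3 with $\tau$ fixed gives the stated $L_k$ rate. The only delicate point is the independence used in Step 3, namely that $r_{i,[p]}$ does not involve the $p$-th column. This holds because $\hat\bgamma$ is computed from $\bfV$ alone and $\hat\bfw$ comes from the source sample.
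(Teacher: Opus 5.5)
Your proof is correct and follows the paper's route. It uses the same identity $\mathfrak{b}_p-a_p=-\frac{\tau}{\tau+\xi_n}a_p+\frac{N_p}{\sqrt n(\tau+\xi_n)}$ together with $\xi_n\ge 0$, which you prove as in Lemma~\ref{lemma:xi_nonnegative}, and the same independence of the column $\Theta_p$ from $\{r_{i,[p]},\lambda_i\}$ to control $N_p$. The paper only computes $\E(N_p^2)$ and asserts the $L_k$ bound ``similarly,'' whereas your conditional concentration (or Rosenthal) argument actually supplies it. One small correction: $r_{i,[p]}$ depends on all $(\epsilon_j,\bfv_j)$, $j\le n$, through $\hat\bgamma$, not only on $(\epsilon_i,\bfv_i)$; this is harmless since none of these involve $\Theta_p$.
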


Note that Lemma \ref{lem:w-coordinate-rate} has shown that
\begin{equation}
    \label{eq:el2018_p}
    |\hat w(p) - w_0(p)| = \mO_{L_k}\Big(\frac{\polyLog(n)}{\sqrt{n} \wedge n^{e}}\Big).
\end{equation}
Under Assumption \textbf{P3}, $n\|\bdelta_0\|_\infty^2 \polyLog(n) n^{2\alpha - 1/2} \to 0$, and therefore we have
\begin{equation}
    \label{eq:wbias}
    |\mathfrak{b}_p-\delta_0(p) + \hat w(p) - w_0(p)| = \mO_{L_k} \Big( \frac{\polyLog(n) }{ \sqrt{n} \wedge n^{\me}} \Big).
\end{equation}

\begin{proof}
    From the definition of $\mathfrak{b}_p$, we have, when $\xi_n>0$,
    % \begin{equation*}
    %     \mathfrak{b}_p-\delta_0(p) = \frac{1}{\sqrt{n}} \frac{N_p}{\tau + \xi_n} - \frac{\tau \delta_0(p)}{\tau + \xi_n} + \{w_0(p) - \hat w(p)\} \frac{\xi_n}{\tau + \xi_n}.
    % \end{equation*}
    \begin{equation*}
        \mathfrak{b}_p-\delta_0(p) + \hat w(p) - w_0(p) = \frac{1}{\sqrt{n}} \frac{N_p}{\tau + \xi_n} - \frac{\tau}{\tau + \xi_n} \{ \delta_0(p) + w_0(p) - \hat w(p)\}.
    \end{equation*}
    We will see later that $\xi_n \geq 0$ in Lemma \ref{lemma:xi_nonnegative}. It follows that
    % \begin{equation*}
    %     |\mathfrak{b}_p-\delta_0(p)| \leq \frac{1}{\sqrt{n}\tau} |N_p| + |\delta_0(p)| + |\hat w(p) - w_0(p)| .
    % \end{equation*}
    \begin{equation*}
        |\mathfrak{b}_p-\delta_0(p) + \hat w(p) - w_0(p)| \leq \frac{1}{\sqrt{n}\tau} |N_p| + |\delta_0(p) + w_0(p) - \hat w(p)| .
    \end{equation*}
    Using indenpendence of $\cX(p)$ and $\{ \cV_i, \epsilon_i \}_{i=1}^n$, and the fact that $\E (\cX_i) = \bzero_{p}$, we have
    \begin{equation*}
        \E (N_p^2) = \frac{1}{n} \sum_{i=1}^n \E\{\cX_i^2(p)\} \E\{\lambda_i^2 \psi^2(r_{i,[p]})\} ,
    \end{equation*}
    whether the right hand side is finite or not.
    Using the bounds on $\max \lambda_i^2$ and $\sup_i \| \psi\|_\infty$, we have
    \begin{equation*}
        \E (N_p^2) \leq \frac{1}{\sqrt{n}} \sum_{i=1}^n \E\{\cX_i^2(p)\}  \| \psi\|_\infty \E \{\lambda_i^2\} = \mO(1) = \mO(\polyLog(n)).
    \end{equation*}
    Similarly, it is clear that
    \begin{equation*}
        N_p = \mO_{L_k}(\polyLog(n)).
    \end{equation*}
    Therefore, we have
    \begin{equation*}
        |\mathfrak{b}_p-\delta_0(p) + \hat w(p) - w_0(p)| = \frac{1}{\sqrt{n}\tau}\mO_{L_k}(\polyLog(n)) + \sup_{1\leq k \leq p}|\delta_0(k)| + |\hat w(p) - w_0(p)|.
    \end{equation*}
\end{proof}

\paragraph{On $\xi_n$}~{}

Write $\xi_n$ in matrix form. Let $\bfD_{\psi'(r_{\cdot,[p]})}$ be the diagonal matrix with $(i,i)$ entry $\psi'(r_{i,[p]})$.
Denote $\bfX(p)$ the last column of the design matrix $\bfX$.
Then we have
\begin{equation*}
    \xi_n=\frac{1}{n} \bfX(p)^\T \bfD_{\psi'(r_{\cdot,[p]})}^{1 / 2} \bfM \bfD_{\psi'(r_{\cdot,[p]})}^{1 / 2} \bfX(p),
\end{equation*}
where
\begin{equation}
    \bfM = \I_n - \frac{\bfD_{\psi'(r_{\cdot,[p]})}^{1 / 2} \bfV}{\sqrt{n}} \Big(\frac{1}{n} \bfV^\T \bfD_{\psi'(r_{\cdot,[p]})} \bfV + \tau \I \Big)^{-1} \frac{\bfV^\T \bfD_{\psi'(r_{\cdot,[p]})}^{1 / 2}}{\sqrt{n}}.
\end{equation}

\begin{lemma}
    \label{lemma:xi_nonnegative}
    We have
    \begin{equation*}
        \xi_n \geq 0.
    \end{equation*}

    Furthermore, under assumptions \textbf{O1-O7} and \textbf{P1}, if $\bfD_{\lambda_i}$ is the diagonal matrix with $(i,i)$ entry $\lambda_i$,
    \begin{equation*}
        \Big| \xi_n - \frac{1}{n} \tr \Big( \bfD_{\lambda_i} \bfD_{\psi'(r_{\cdot,[p]})}^{1/2} \bfM \bfD_{\psi'(r_{\cdot,[p]})}^{1/2} \bfD_{\lambda_i} \Big) \Big| = \mO_{L_k}\Big(\sup_{1\leq i \leq n} \lambda_i^2 \psi'(r_{i,[p]}) / \sqrt{n \mc_n}\Big).
    \end{equation*}
\end{lemma}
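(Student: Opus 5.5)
The plan has two parts: a deterministic argument for nonnegativity, and a conditional quadratic-form concentration argument for the approximation of $\xi_n$ by a trace.

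\emph{Nonnegativity.} Write $\bfD=\bfD_{\psi'(r_{\cdot,[p]})}$, which is diagonal with nonnegative entries because $\rho$ is convex. Set $\bfB=\bfD^{1/2}\bfV/\sqrt n$. Then
\[
\bfM=\I_n-\bfB(\bfB^\T\bfB+\tau\I)^{-1}\bfB^\T .
\]
By the push-through identity, $\bfB(\bfB^\T\bfB+\tau\I)^{-1}\bfB^\T=\bfB\bfB^\T(\bfB\bfB^\T+\tau\I)^{-1}$. Its eigenvalues have the form $s/(s+\tau)\in[0,1)$, where $s\ge 0$ ranges over the eigenvalues of $\bfB\bfB^\T$. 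It follows that
\[
\bfM=\tau(\bfB\bfB^\T+\tau\I_n)^{-1},
\]
so $\bfM$ is positive definite with $\|\bfM\|_2\le 1$. Consequently
\[
\xi_n=n^{-1}\bfX(p)^\T\bfD^{1/2}\bfM\bfD^{1/2}\bfX(p)\ge 0
\]
as a quadratic form in a positive semidefinite matrix.

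\emph{Concentration.} Write $\bfX(p)=\bfD_{\lambda_i}\cX(p)$, so that
\[
\xi_n=n^{-1}\cX(p)^\T\bfA\,\cX(p),\qquad \bfA=\bfD_{\lambda_i}\bfD^{1/2}\bfM\bfD^{1/2}\bfD_{\lambda_i}.
\]
The matrix $\bfA$ depends only on $\{\lambda_i,\cV_i,\epsilon_i\}$ and on $\hat\bfw$. The quantity $\hat\bfw$ comes from the independent source sample, and the residuals $r_{i,[p]}$ involve only $\bfv_i$. Hence, under \textbf{P1}, $\cX(p)$ is independent of $\bfA$. I would condition on $\bfA$. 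Since $\cX(p)$ has independent mean-zero, unit-variance entries,
\[
\E\{\cX(p)^\T\bfA\cX(p)\mid\bfA\}=\tr(\bfA),
\]
which is exactly the centering in the statement.

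For the fluctuation, I would apply the concentration assumption on the column $\Theta_p=\cX(p)$ from \textbf{P1}/\textbf{M4} to the map
\[
G(\bfz)=\|\bfA^{1/2}\bfz\|/\sqrt n .
\]
This map is convex and Lipschitz with constant $\|\bfA\|_2^{1/2}/\sqrt n$, which is exactly the "square root of a quadratic form" type mentioned in the main text. Using $\|\bfM\|_2\le 1$, I get $\|\bfA\|_2\le \sup_i\lambda_i^2\psi'(r_{i,[p]})$. The standard argument, as in Lemma B-2 of \cite{karoui2013asymptotic} and Lemma 3.7/the corresponding lemma for $\xi_n$ in \cite{el2018impact}, then transfers concentration of $G$ around its median to concentration of $G^2=\xi_n$ around $\tr(\bfA)/n$. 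Concretely, $|G^2-m^2|\le |G-m|(|G-m|+2m)$ with $m=\mathrm{O}(\|\bfA\|_2^{1/2}\polyLog(n))$. The mean and median differ by at most $\mathrm{O}(\|\bfA\|_2/\sqrt{n\mc_n})$. Together these give, conditionally in $L_k$,
\[
|\xi_n-n^{-1}\tr(\bfA)|=\mO_{L_k}\bigl(\|\bfA\|_2/\sqrt{n\mc_n}\bigr),
\]
up to polylog factors, which are absorbed under \textbf{O6}. Finally I would take expectations over $\bfA$ to obtain the unconditional $L_k$ bound.

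\emph{Expected obstacle.} The main obstacle is the squared-versus-root step: keeping the bound proportional to $\|\bfA\|_2/\sqrt{n\mc_n}$ rather than to $\sqrt{\tr\bfA\,\|\bfA\|_2}/n$. I would handle this via the bound $\tr(\bfA)/n\le\|\bfA\|_2$. A secondary point is verifying the independence of $\cX(p)$ from $\bfA$ despite the presence of $\hat\bfw$ inside $r_{i,[p]}$. This independence holds because only $\hat\bfw_{-p}$ enters $r_{i,[p]}$ and $\hat\bfw$ is independent of the target sample.
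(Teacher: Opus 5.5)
Your proposal is correct and follows essentially the same route as the paper: nonnegativity comes from the eigenvalues $\tau/(\sigma_i^2+\tau)\in(0,1]$ of $\bfM$, and the trace approximation comes from the independence of $\cX(p)$ from $\bfD_{\psi'(r_{\cdot,[p]})}$, $\bfM$ and the $\lambda_i$'s, together with a quadratic-form concentration bound whose operator norm is controlled through $0\preceq \bfD_{\psi'(r_{\cdot,[p]})}^{1/2}\bfM\bfD_{\psi'(r_{\cdot,[p]})}^{1/2}\preceq \bfD_{\psi'(r_{\cdot,[p]})}$. The only difference is that you rederive that concentration bound from the Lipschitz-convex assumption on the column $\Theta_p$ via $G(\bfz)=\|\bfA^{1/2}\bfz\|/\sqrt n$, where the paper simply cites Lemma 3.37 of \cite{el2018impact}; the step you flag as the \emph{obstacle} is harmless, since the bound $\sqrt{\tr(\bfA)\|\bfA\|_2}/(n\sqrt{\mc_n})$ is already at most $\|\bfA\|_2/\sqrt{n\mc_n}$.
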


\begin{proof}
    When $\tau>0$, all the eigenvalues of $\bfM$ are positive.
    Indeed, if the singular values of $n^{1/2} \bfD_{\psi'(r_{\cdot,[p]})}^{1/2} \bfV$ are denoted by $\sigma_1$, the eigenvalues of $\bfM$ are $\tau/(\sigma_i^2 + \tau)$.

    Therefore, since $\xi_n = n^{-1} \bfv^\T \bfM \bfv$ with $\bfv = \bfD_{\psi'(r_{\cdot,[p]})}^{1/2} \bfX(p)$, we have $\xi_n \geq 0$.

    Since $\bfM$ is symmetric and has eigenvalues between $0$ and $1$,
    using Lemma V.1.5 in \cite{bhatia1997matrix}, we have
    \begin{equation*}
        0 \preceq \bfD_{\psi'(r_{\cdot,[p]})}^{1/2} \bfM \bfD_{\psi'(r_{\cdot,[p]})}^{1/2} \preceq \bfD_{\psi'(r_{\cdot,[p]})}.
    \end{equation*}

    Under Assumption \textbf{P1}, the matrix $\bfM$ is independent of $\cX(p)$.
    $\bfD_{\psi'(r_{\cdot,[p]})}$ is also independent of $\cX(p)$.
    By definition, $\bfX(p) = \bfD_{\lambda_i} \cX(p)$.

    Under Assumption \textbf{P1}, $\cX_p$ satisfy the necessary concentration assumptions. Using Lemma 3.37 in \cite{el2018impact}, we have
    \begin{equation*}
        \begin{aligned}
            &\Big| \frac{1}{n} \bfX(p)^\T \bfD_{\psi'(r_{\cdot,[p]})}^{1/2} \bfM \bfD_{\psi'(r_{\cdot,[p]})}^{1/2} \bfX(p) - \frac{1}{n} \tr \Big( \bfD_{\lambda_i} \bfD_{\psi'(r_{\cdot,[p]})}^{1/2} \bfM \bfD_{\psi'(r_{\cdot,[p]})}^{1/2} \bfD_{\lambda_i} \Big) \Big| \\
            &= \mO_{L_k}\Big( \frac{1}{\sqrt{n \mc_n}}\sup_{1\leq i \leq n} \lambda_i^2 \psi'(r_{i,[p]}) \Big).
        \end{aligned}
    \end{equation*}
\end{proof}

\paragraph{About $\frac{1}{n} \tr \Big( \bfD_{\lambda_i} \bfD_{\psi'(r_{\cdot,[p]})}^{1/2} \bfM \bfD_{\psi'(r_{\cdot,[p]})}^{1/2} \bfD_{\lambda_i} \Big)$}~{}

\begin{lemma}%[{{\citet[Lemma 3.14]{el2018impact}}}]
    \label{lemma:trace_approx}
    Denote
    \begin{equation*}
        \mathfrak{S}_p=\frac{1}{n} \sum_{i=1}^n \psi' (r_{i,[p]}) \bfv_i \bfv_i^{\T},
        \text{ and } \ \mathfrak{S}_p (i) = \mathfrak{S}_p - \frac{1}{n} \psi' (r_{i,[p]}) \bfv_i \bfv_i^{\T}.
    \end{equation*}
    Denote
    \begin{equation}
        \begin{aligned}
        \mc_{\tau, p} & =\frac{1}{n} \tr \{(\mathfrak{S}_p+\tau \I)^{-1}\}, \\
        \zeta_i & =\frac{1}{n} \bfv_i^{\T}\{\mathfrak{S}_p(i)+\tau \I\}^{-1} \bfv_i-\lambda_i^2 \mc_{\tau, p} .
        \end{aligned}
    \end{equation}
    Then we have under assumptions \textbf{O1-O7} and \textbf{P1}, if $\bfM$ is the matrix defined in Lemma \ref{lemma:xi_nonnegative},
    \begin{equation*}
        \Big| \frac{1}{n} \tr (\I_n - \bfM) - \Big\{ \frac{1}{n} \tr \Big( \bfD_{\lambda_i} \bfD_{\psi'(r_{\cdot,[p]})}^{1/2} \bfM \bfD_{\psi'(r_{\cdot,[p]})}^{1/2} \bfD_{\lambda_i} \Big) \Big\} \mc_{\tau, p} \Big| \leq \sup_i |\zeta_i| \cdot \frac{1}{n} \sum_{i=1}^n \psi'(r_{i,[p]}) .
    \end{equation*}
    We also have
    \begin{equation*}
        \frac{1}{n} \tr (\I_n - \bfM) = \frac{p-1}{n} - \tau \mc_{\tau, p} .
    \end{equation*}
\end{lemma}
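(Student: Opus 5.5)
Both claims are deterministic linear-algebra identities, so I would not use any probabilistic input here. The approach is to express both sides through the diagonal entries of $\bfM$ and apply the Sherman--Morrison formula.

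Write $\psi'_i=\psi'(r_{i,[p]})\ge 0$ and $\bfA=\bfD_{\psi'(r_{\cdot,[p]})}^{1/2}\bfV/\sqrt n$, an $n\times(p-1)$ matrix. Then $\bfA^\T\bfA=\mathfrak{S}_p$ and $\I_n-\bfM=\bfA(\mathfrak{S}_p+\tau\I)^{-1}\bfA^\T$.

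\emph{Second identity.} I would start with it because it is immediate. By cyclicity of the trace,
\[
\tr(\I_n-\bfM)=\tr\{(\mathfrak{S}_p+\tau\I)^{-1}\mathfrak{S}_p\}=\tr\{\I_{p-1}-\tau(\mathfrak{S}_p+\tau\I)^{-1}\}=(p-1)-\tau n\,\mc_{\tau,p}.
\]
Dividing by $n$ gives $\frac1n\tr(\I_n-\bfM)=\frac{p-1}{n}-\tau\mc_{\tau,p}$.

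\emph{Diagonal of $\bfM$.} Set $q_i=\frac1n\bfv_i^\T\{\mathfrak{S}_p(i)+\tau\I\}^{-1}\bfv_i\ge0$, so that $\zeta_i=q_i-\lambda_i^2\mc_{\tau,p}$. The $(i,i)$ entry of $\bfM$ is
\[
M_{ii}=1-\psi'_i\,\tfrac1n\bfv_i^\T(\mathfrak{S}_p+\tau\I)^{-1}\bfv_i .
\]
Since $\mathfrak{S}_p=\mathfrak{S}_p(i)+\frac1n\psi'_i\bfv_i\bfv_i^\T$, the Sherman--Morrison formula gives $\frac1n\bfv_i^\T(\mathfrak{S}_p+\tau\I)^{-1}\bfv_i=q_i/(1+\psi'_iq_i)$. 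Hence
\[
M_{ii}=\frac{1}{1+\psi'_iq_i}\in(0,1],\qquad 1-M_{ii}=\psi'_iq_iM_{ii}.
\]

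\emph{First inequality.} Two rewritings are needed. First, $\frac1n\tr(\I_n-\bfM)=\frac1n\sum_i\psi'_iq_iM_{ii}$. Second, since $\bfD_{\lambda_i}$ and $\bfD_{\psi'(r_{\cdot,[p]})}^{1/2}$ are diagonal, only the diagonal of $\bfM$ contributes to the other trace:
\[
\frac1n\tr\big(\bfD_{\lambda_i}\bfD_{\psi'(r_{\cdot,[p]})}^{1/2}\bfM\bfD_{\psi'(r_{\cdot,[p]})}^{1/2}\bfD_{\lambda_i}\big)=\frac1n\sum_i\lambda_i^2\psi'_iM_{ii}.
\]
Multiplying this second expression by $\mc_{\tau,p}$ and subtracting it from the first gives $\frac1n\sum_i\psi'_iM_{ii}\zeta_i$. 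Using $0<M_{ii}\le1$ and $\psi'_i\ge0$, its absolute value is at most $\sup_i|\zeta_i|\cdot\frac1n\sum_i\psi'_i$, which is the claimed bound.

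The only real step is recognizing the identity $1-M_{ii}=\psi'_iq_iM_{ii}$ from Sherman--Morrison. Once that is in hand everything is a one-line computation. The assumptions \textbf{O1-O7} and \textbf{P1} are not needed for the inequality itself; they only matter later, when $\sup_i|\zeta_i|$ is shown to be small.
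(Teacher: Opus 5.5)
Your proof is correct and follows essentially the same route as the paper: Sherman--Morrison to get $M_{ii}=1/(1+\psi'_i q_i)$, trace cyclicity for the second identity, and writing the difference as $\frac1n\sum_i\psi'_iM_{ii}\zeta_i$ before bounding it using $0<M_{ii}\le1$. Your remark that \textbf{O1-O7} and \textbf{P1} are not needed is fair, with the caveat that the argument does use $\psi'\ge0$ (convexity of $\rho$ in \textbf{O2}) and $\tau>0$.
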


\begin{proof}
    Denote $d_{i,i} = \psi'(r_{i,[p]}) / n$.
    By using the Sherman-MorrisonWoodbury formula (see, e.g., \cite{johnson1985matrix}, p.19),
    \begin{equation*}
        \begin{aligned}
            M_{i, i} & =1-d_{i, i} \bfv_i^{\T} \left(\bfV^{\T} \bfD_{\psi'(r_{\cdot,[p]})} \bfV / n+\tau \I \right)^{-1} \bfv_i , \\
            & =1-d_{i, i} \frac{\bfv_i^{\T} \left(\mathfrak{S}_p(i)+\tau \I \right)^{-1} \bfv_i}{1+d_{i, i} \bfv_i^{\T} \left(\mathfrak{S}_p(i)+\tau \I \right)^{-1} \bfv_i} , \\
            & =\frac{1}{1+d_{i, i} \bfv_i^{\T} \left(\mathfrak{S}_p(i)+\tau \I\right)^{-1} \bfv_i}.
        \end{aligned}
    \end{equation*}
    Recall that we are interested in
    \begin{equation*}
        \frac{1}{n} \sum_i \lambda_i^2 \psi' (r_{i,[p]}) M_{i,i} = \frac{1}{n} \Big( \bfD_{\lambda_i} \bfD_{\psi'(r_{\cdot,[p]})}^{1/2} \bfM \bfD_{\psi'(r_{\cdot,[p]})}^{1/2} \bfD_{\lambda_i} \Big).
    \end{equation*}
    By the property of trace, we have
    \begin{equation*}
        \begin{aligned}
            \tr(\I_n - \bfM) &= \tr\{(\mathfrak{S}_p+\tau \I)^{-1} \mathfrak{S}_p\} \\
            &= p - 1 - \tau \tr\{(\mathfrak{S}_p+\tau \I)^{-1} \}
            = p - 1 - n \tau \mc_{\tau, p}.
        \end{aligned}
    \end{equation*}
    This shows the second result of the lemma.

    The first result follows from the fact that
    \begin{equation}
        \label{eq_tr_I_M}
        \begin{aligned}
            \tr (\I_n - \bfM) =& \sum_i (1 - M_{i,i}) \\
            =&  \sum_i d_{i,i} \frac{\bfv_i^{\T} \left(\mathfrak{S}_p(i)+\tau \I\right)^{-1} \bfv_i}{1 +d_{i, i} \bfv_i^{\T} \left(\mathfrak{S}_p(i)+\tau \I\right)^{-1} \bfv_i}.
        \end{aligned}
    \end{equation}
    With our definitions, we have, since $\lambda_i^2 \mc_{\tau, p} + \zeta_i = n^{-1} \bfv_i^{\T} (\mathfrak{S}_p(i)+\tau \I)^{-1} \bfv_i$,
    \begin{equation*}
        \begin{aligned}
            \frac{1}{n} \tr (\I_n - \bfM) =& \Big( \frac{1}{n} \sum_i \lambda_i^2 \psi' (r_{i,[p]}) M_{i,i} \Big) \mc_{\tau, p}
            + \frac{1}{n} \sum_i \psi' (r_{i,[p]}) \frac{\zeta_i}{1 +d_{i, i} \bfv_i^{\T} \left(\mathfrak{S}_p(i)+\tau \I\right)^{-1} \bfv_i}.
        \end{aligned}
    \end{equation*}
    It immediately follows that
    \begin{equation*}
        \Big| \frac{1}{n} \tr (\I_n - \bfM) - \Big( \frac{1}{n} \sum_i \lambda_i^2 \psi' (r_{i,[p]}) M_{i,i} \Big) \mc_{\tau, p} \Big| \leq \sup_i |\zeta_i| \cdot \frac{1}{n} \sum_{i=1}^n \psi'(r_{i,[p]}).
    \end{equation*}
\end{proof}

\paragraph{Controlling $\zeta_i$}~{}

\begin{lemma}%[{{\citet[Lemma 3.15]{el2018impact}}}]
    \label{lemma:zeta_bound}
    Suppose we can find $\{ \mr_{i,[p]}^{(i)}\}_{j \neq i}$ independent of $(\lambda_i, \cV_i)$ and $K_n$ such that
    \begin{equation*}
        \sup_i \sup_{j \neq i} |\psi_j'(\mr_{j,[p]}^{(i)}) - \psi_j'(r_{j,[p]})| \leq K_n.
    \end{equation*}
    Then
    \begin{equation*}
        \sup_i |\zeta_i| = \mO_{L_k} \Big[ \Big\{ \frac{1}{\tau^2} K_n \|\hat \bSigma\|_2 + \frac{\polyLog(n)}{\tau \sqrt{n \mc_n}} + \frac{1}{n \tau}\Big\} \polyLog(n) \Big],
    \end{equation*}
    provided that $K_n$ has $3k$ uniformly bounded moments.
\end{lemma}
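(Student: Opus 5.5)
We bound $\zeta_i$ by splitting it into a quadratic-form fluctuation term and a rank-one/perturbation term, as in Lemma 3.15 of \cite{el2018impact}. Introduce the surrogate matrix
\[
\mathfrak{S}_p^{(i)} = \frac{1}{n}\sum_{j\neq i}\psi_j'(\mr^{(i)}_{j,[p]})\bfv_j\bfv_j^\T .
\]
By hypothesis, $\mathfrak{S}_p^{(i)}$ is independent of $(\lambda_i,\cV_i)$. Write
\[
\zeta_i = \frac{1}{n}\bfv_i^\T\big[\{\mathfrak{S}_p(i)+\tau\I\}^{-1}-\{\mathfrak{S}_p^{(i)}+\tau\I\}^{-1}\big]\bfv_i
+ \Big[\frac{1}{n}\bfv_i^\T\{\mathfrak{S}_p^{(i)}+\tau\I\}^{-1}\bfv_i - \lambda_i^2\frac{1}{n}\tr\{(\mathfrak{S}_p^{(i)}+\tau\I)^{-1}\}\Big]
\]
\[
+ \lambda_i^2\frac{1}{n}\Big[\tr\{(\mathfrak{S}_p^{(i)}+\tau\I)^{-1}\}-\tr\{(\mathfrak{S}_p+\tau\I)^{-1}\}\Big],
\]
and bound the three pieces separately.

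\textbf{First term.} Use the resolvent identity $A^{-1}-B^{-1}=A^{-1}(B-A)B^{-1}$ together with $\|(\cdot+\tau\I)^{-1}\|_2\le\tau^{-1}$. The difference $\mathfrak{S}_p(i)-\mathfrak{S}_p^{(i)}=n^{-1}\bfV_{(i)}^\T \bfD\,\bfV_{(i)}$ has diagonal $\bfD$ with entries bounded by $K_n$, so its operator norm is at most $K_n\|\hat\bSigma\|_2$. This gives a bound of $\tau^{-2}K_n\|\hat\bSigma\|_2\,\|\bfv_i\|^2/n$. Since $\|\bfv_i\|^2/n=\lambda_i^2\|\cV_i\|^2/n=\mO_{L_k}(\polyLog(n))$ uniformly in $i$ (Lemma \ref{lemma:3.7_2018} and \textbf{O6}), this yields the $\tau^{-2}K_n\|\hat\bSigma\|_2\polyLog(n)$ contribution.

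\textbf{Second term.} Conditional on everything except $\cV_i$, this is a centered quadratic form in $\cV_i$ with a deterministic matrix of operator norm at most $\tau^{-1}$, scaled by $\lambda_i^2$. The concentration Lemma 3.37 of \cite{el2018impact} (the quadratic-form version of \textbf{O4}, using that square roots of quadratic forms are Lipschitz convex) gives a bound of order $\polyLog(n)/(\tau\sqrt{n\mc_n})$ in $L_k$. A union bound over $i\le n$ costs only a $\polyLog(n)$ factor, because the tails are sub-Gaussian and $L_k$ holds for every $k$. Combining with $\sup_i\lambda_i^2=\mO_{L_k}(\polyLog(n))$ gives the middle term.

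\textbf{Third term.} Split it into two differences. First, $\mathfrak{S}_p$ versus $\mathfrak{S}_p(i)$ differ by a rank-one positive semidefinite matrix, so the standard interlacing bound gives $|\tr\{(A+uu^\T+\tau\I)^{-1}\}-\tr\{(A+\tau\I)^{-1}\}|\le 1/\tau$; after division by $n$ this is the $1/(n\tau)$ term. Second, $\mathfrak{S}_p(i)$ versus $\mathfrak{S}_p^{(i)}$ is handled exactly like the first term: $\frac1n|\tr(A^{-1}(B-A)B^{-1})|\le \frac{p}{n}\tau^{-2}K_n\|\hat\bSigma\|_2$, and $p/n$ is bounded. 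This piece is multiplied by $\lambda_i^2=\mO_{L_k}(\polyLog(n))$.

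\textbf{Assembling the bounds.} The products of random factors ($K_n$, $\|\hat\bSigma\|_2$, $\sup_i\lambda_i^2$, $\sup_i\|\cV_i\|^2/n$) are controlled in $L_k$ by H\"older's inequality. This is where the assumption that $K_n$ has $3k$ moments enters, since it is paired with the other two polyLog-bounded factors.

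\textbf{Main obstacle.} The delicate step is the uniform control of the second term. The concentration must be applied conditionally, which requires the surrogate matrix to be genuinely independent of $\cV_i$; this is precisely why the surrogate residuals $\mr^{(i)}$ are introduced. The passage from conditional to unconditional $L_k$ bounds, and then to the supremum over $i$, must be done carefully via the sub-Gaussian tails rather than through moment bounds alone.
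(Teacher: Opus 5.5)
Your proposal is correct and follows the paper's proof essentially step for step. Both use the same independent surrogate matrix (the paper's $\mathbf{AM}_{i,p}$), the resolvent identity for the quadratic-form and trace perturbations, Lemma 3.37 of \cite{el2018impact} to concentrate the quadratic form around $\lambda_i^2 n^{-1}\tr$, the rank-one bound $1/(n\tau)$ to pass from $\mathfrak{S}_p(i)$ to $\mathfrak{S}_p$, and H\"older's inequality to combine the random factors, which is where the $3k$ moments of $K_n$ are needed.
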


\begin{proof}
    Denote
    \begin{equation*}
        \mathbf{AM}_{i,p} = \frac{1}{n} \sum_{j \neq i} \psi'(r_{j,[p]}^{(i)}) \bfv_j \bfv_j^{\T}.
    \end{equation*}
    Then, using the fact that $\bfA^{-1} - \bfB^{-1} = \bfA^{-1} (\bfB - \bfA) \bfB^{-1}$, we have
    \begin{equation*}
        \| (\mathfrak{S}_p(i) + \tau \I)^{-1} - (\mathbf{AM}_{i,p} + \tau \I)^{-1} \| \leq \frac{1}{\tau^2} K_n \|\hat \bSigma\|_2,
    \end{equation*}
    since $\| n^{-1} \sum_i \bfv_i \bfv_i^{\T} \| \leq \|\hat \bSigma\|_2$.
    In particular, we have
    \begin{equation*}
        \Big| \frac{1}{n} \bfv_i^{\T} (\mathfrak{S}_p(i) + \tau \I)^{-1} \bfv_i - \frac{1}{n} \bfv_i^{\T} (\mathbf{AM}_{i,p} + \tau \I)^{-1} \bfv_i \Big| \leq \frac{\| \bfv_i\|^2}{n} \frac{1}{\tau^2} K_n \|\hat \bSigma\|_2.
    \end{equation*}
    Since $\mathbf{AM}_{i,p}$ is independent of $(\lambda_i, \cV_i)$, we can use Lemma 3.37 in \cite{el2018impact} and since $\bfv = \lambda_i \cV_i$, we have
    \begin{equation*}
        \sup_{1 \leq i \leq n} \Big| \frac{1}{n} \bfv_i^{\T} (\mathbf{AM}_{i,p} + \tau \I)^{-1} \bfv_i - \frac{\lambda_i^2}{n} \tr \{ (\mathbf{AM}_{i,p} + \tau \I)^{-1} \} \Big| = \mO_{L_k} \Big( \frac{\polyLog(n)}{\tau \sqrt{n \mc_n}} \sup_{1 \leq i \leq n} \lambda_i^2 \Big),
    \end{equation*}
    by using the fact that $\lambda_{\max} ((\mathbf{AM}_{i,p} + \tau \I)^{-1}) \leq \tau^{-1}$.

    Using the operator norm bound we gave above, we also have
    \begin{equation*}
        \Big| \frac{1}{n} \tr \{ (\mathbf{AM}_{i,p} + \tau \I)^{-1} \} - \frac{1}{n} \tr \{ (\mathfrak{S}_p(i) + \tau \I)^{-1} \} \Big| \leq \frac{1}{\tau^2} K_n \|\hat \bSigma\|_2 \frac{p}{n}.
    \end{equation*}

    We can now conclude that
    \begin{equation*}
        \begin{aligned}
            &\sup_{1 \leq i \leq n} \Big| \frac{1}{n} \bfv_i^{\T} (\mathfrak{S}_p(i) + \tau \I)^{-1} \bfv_i - \frac{\lambda_i^2}{n} \tr \{ (\mathfrak{S}_p(i) + \tau \I)^{-1} \} \Big| \\
            & \quad = \mO \bigg[ \bigg\{ \frac{1}{\tau^2} K_n \|\hat \bSigma\|_2 \sup_{1 \leq i \leq n} \Big( \frac{p}{n} + \frac{\| \bfv_i\|^2}{n} \Big) + \frac{\polyLog(n)}{\tau \sqrt{n \mc_n}} \bigg\} \Big( \sup_{1 \leq i \leq n} \lambda_i^2 \vee 1 \Big) \bigg].
        \end{aligned}
    \end{equation*}

    So under \textbf{O1} and \textbf{O4}, $\sup_{1 \leq i \leq n} \| \bfv_i\|^2 / n = \mO_{L_k}(1)$ and finally
    \begin{equation*}
        \begin{aligned}
            &\sup_{1 \leq i \leq n} \Big| \frac{1}{n} \bfv_i^{\T} (\mathfrak{S}_p(i) + \tau \I)^{-1} \bfv_i - \frac{\lambda_i^2}{n} \tr \{ (\mathfrak{S}_p(i) + \tau \I)^{-1} \} \Big| \\
            & \quad = \mO_{L_k} \bigg\{ \Big( \frac{1}{\tau^2} K_n \|\hat \bSigma\|_2 + \frac{\polyLog(n)}{\tau \sqrt{n \mc_n}} \Big) \Big( \sup_{1 \leq i \leq n} \lambda_i^2 \vee 1 \Big) \bigg\}.
        \end{aligned}
    \end{equation*}

    \paragraph{Control of $n^{-1} \tr\{ (\mathfrak{S}_p(i) + \tau \I)^{-1} \} - n^{-1} \tr\{ (\mathfrak{S}_p + \tau \I)^{-1} \} $}~{}

    Using the Sherman-Woodbury-Morrison formula, we have
    \begin{equation*}
        (\mathfrak{S}_p(i) + \tau \I)^{-1} - (\mathfrak{S}_p + \tau \I)^{-1} = \frac{\psi'(r_{i, [p]})}{n} \frac{(\mathfrak{S}_p(i) + \tau \I)^{-1} \bfv_i \bfv_i^{\T} (\mathfrak{S}_p(i) + \tau \I)^{-1}}{ 1 + \frac{\psi'(r_{i, [p]})}{n} \bfv_i^{\T} (\mathfrak{S}_p(i) + \tau \I)^{-1} \bfv_i}.
    \end{equation*}
    Take the trace of both sides, we have
    \begin{equation*}
        0 \leq \tr \{ (\mathfrak{S}_p(i) + \tau \I)^{-1} \} - \tr \{ (\mathfrak{S}_p + \tau \I)^{-1} \} \leq \frac{1}{\tau},
    \end{equation*}
    since $\bfv_i^{\T} (\mathfrak{S}_p(i) + \tau \I)^{-2} \bfv_i \leq \tau^{-1} \bfv_i^{\T} (\mathfrak{S}_p(i) + \tau \I)^{-1} \bfv_i$.

    Therefore,
    \begin{equation*}
        0 \leq \frac{1}{n} \tr\{ (\mathfrak{S}_p(i) + \tau \I)^{-1} \} - \frac{1}{n} \tr\{ (\mathfrak{S}_p + \tau \I)^{-1} \} \leq \frac{1}{n \tau}.
    \end{equation*}

    We can now conclude that
    \begin{equation*}
        \sup_i |\zeta_i| = \mO_{L_k} \Big[ \Big\{ \frac{1}{\tau^2} K_n \|\hat \bSigma\|_2 + \frac{\polyLog(n)}{\tau \sqrt{n \mc_n}} + \frac{1}{n \tau}\Big\} \Big( \sup_{1 \leq i \leq n} \lambda_i^2 \vee 1 \Big) \Big].
    \end{equation*}
    provided we can use Holder's inequality.
    This, in turn, requires $K_n$ to have $3k$ moments.
\end{proof}

\paragraph{Control of $K_n$}~{}

A natural chioce for $\{ \mr_{i,[p]}^{(i)}\}_{j \neq i}$ defined in Lemma \ref{lemma:zeta_bound} is to use a leave one out estimator of $\hat \bgamma$, where the $i$-th observation (and hence $\bfv_i$) is removed.
Hence, all the work done in Theorem \ref{th:l1oo} can be used here.

\begin{lemma}% [{{\citet[Lemma 3.16]{el2018impact}}}]
    \label{lemma:kn_bound}
    Suppose we use for $\mr_{i,[p]}^{(i)}$ the residuals we would get by using a leave-one-out estimator of $\hat \bgamma$, i.e. excluding $(\bfv_i, \epsilon_i)$ from problem \eqref{eq:def_gamma_hat}.

    With the notations of Lemma \ref{lemma:zeta_bound}, we have under assumptions \textbf{O1-O7} and \textbf{P1},
    \begin{equation*}
        K_n = \mO_{L_k} (n^{2\alpha - 1/2} \polyLog(n)).
    \end{equation*}
    In particular, for any fixed $\tau$,
    \begin{equation*}
        \sup_i |\zeta_i| = \mO_{L_k} (n^{2\alpha - 1/2} \polyLog(n)).
    \end{equation*}

\end{lemma}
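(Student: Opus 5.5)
The plan is to transfer the leave-one-observation-out analysis of Theorem \ref{th:l1oo} to the reduced problem \eqref{eq:def_gamma_hat}, and then read off $K_n$ from the residual comparison bound.

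\textbf{Setting up the comparison.} Let $\hat\bgamma_{(i)}$ be the minimizer of \eqref{eq:def_gamma_hat} with the $i$-th summand removed. Set $\mr^{(i)}_{j,[p]}=\epsilon_j+\bfv_j^\T(\bfw_{0,-p}-\hat\bfw_{-p})-\bfv_j^\T(\hat\bgamma_{(i)}-\bgamma_0)$. This depends only on $\{(\bfv_j,\epsilon_j)\}_{j\neq i}$ and on $\hat\bfw$. Since $\hat\bfw$ is computed from the independent source sample, $\mr^{(i)}_{j,[p]}$ is independent of $(\lambda_i,\cV_i)$, as Lemma \ref{lemma:zeta_bound} requires. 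By the Lipschitz property of $\psi'$ (constant $\mathrm{L}(n)\le Cn^\alpha$, \textbf{O3}),
\begin{equation*}
\sup_i\sup_{j\neq i}|\psi'(\mr^{(i)}_{j,[p]})-\psi'(r_{j,[p]})|\le Cn^{\alpha}\sup_i\sup_{j\neq i}|\bfv_j^\T(\hat\bgamma-\hat\bgamma_{(i)})|.
\end{equation*}

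\textbf{Applying Theorem \ref{th:l1oo} to the reduced problem.} The reduced problem has the same structure as the original one: $p-1$ predictors $\bfv_i=\lambda_i\cV_i$, the same $\psi$, and effective errors $\epsilon_i+\bfv_i^\T(\bfw_{0,-p}-\hat\bfw_{-p})$. Assumptions \textbf{O1}--\textbf{O7} hold for it, conditionally on $\hat\bfw$, with $\|\bgamma_0\|\le\|\bdelta_0\|$. The proof of Theorem \ref{th:l1oo} never used the distribution of $\tilde\epsilon_i$ beyond independence from the target design. Hence its third conclusion transfers verbatim and gives
\begin{equation*}
\sup_i\sup_{j\neq i}|\mr^{(i)}_{j,[p]}-r_{j,[p]}|=\mO_{L_k}\big(\polyLog(n)\,n^{\alpha-1/2}\big).
\end{equation*}
Multiplying by $\mathrm{L}(n)$ yields $K_n=\mO_{L_k}(n^{2\alpha-1/2}\polyLog(n))$. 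This $K_n$ has all moments, in particular $3k$ uniformly bounded moments.

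\textbf{Consequence for $\zeta_i$.} Plug this into Lemma \ref{lemma:zeta_bound}, using $\|\hat\bSigma\|_2=\mO_{L_k}(\polyLog(n))$ (Lemma 3.38 of \cite{el2018impact}), $1/\mc_n=\mO(\polyLog(n))$ and $\sup_i\lambda_i^2=\mO_{L_k}(\polyLog(n))$, and combine by H\"older. The three terms are then of order $n^{2\alpha-1/2}$, $n^{-1/2}$ and $n^{-1}$ up to $\polyLog(n)$. Since $\alpha\ge0$, the first dominates, which gives the claimed bound on $\sup_i|\zeta_i|$.

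\textbf{Main obstacle.} The delicate point is checking that conditioning on $\hat\bfw$ does not break the ingredients of Theorem \ref{th:l1oo} for the reduced problem. Two things must hold. First, the effective errors $\epsilon_j+\bfv_j^\T(\bfw_{0,-p}-\hat\bfw_{-p})$ are not independent of $\bfv_j$. However, the argument only needs $\psi$ bounded and independence of $\bfv_i$ from the leave-$i$-out quantities, which is preserved. Second, the bound $\sup_j|\bfv_j^\T\boldeta_i|$ from Lemma \ref{lemma:3.7_2018} applies unchanged to $(\mathfrak S_p(i)+\tau\I)^{-1}$. I would verify both explicitly, conditioning on $\hat\bfw$ throughout.
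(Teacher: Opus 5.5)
Your proposal is correct and follows the paper's proof. You apply Theorem~\ref{th:l1oo} to the reduced $(p-1)$-predictor problem \eqref{eq:def_gamma_hat}, with $r_{j,[p]}$ and $\mr^{(i)}_{j,[p]}$ in the roles of $R_j$ and $\tilde r_{j,(i)}$, multiply by the $Cn^{\alpha}$ Lipschitz constant of $\psi'$, and read off the bound on $\sup_i|\zeta_i|$ from Lemma~\ref{lemma:zeta_bound}.

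Your intermediate rate $\mO_{L_k}(\polyLog(n)\,n^{\alpha-1/2})$ for the residual gap is the right one. The paper's proof writes $n^{2\alpha-1/2}$ for $\sup_i\delta_n(i)$, which taken literally would only give $n^{3\alpha-1/2}$ after the Lipschitz step, so your version is the consistent reading.
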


\begin{proof}
    Denote $\delta_n(i)$ random variables such that
    \begin{equation*}
        \sup_{j \neq i} | \mr_{j,[p]}^{(i)} - r_{j,[p]} | \leq \delta_n(i).
    \end{equation*}
    Applying Theorem \ref{th:l1oo} with $R_j = r_{j,[p]}$ and $\tilde r_{j,(i)} = \mr_{j,[p]}^{(i)}$, we have
    \begin{equation*}
        \sup_i \delta_n(i) = \mO_{L_k} (n^{2\alpha - 1/2} \polyLog(n)).
    \end{equation*}

    The control of $K_n$ follows by using the fact that $\psi'$ is $C n^\alpha$-Lipschitz.
\end{proof}

\begin{corollary}
    \label{cor:ci_ctau_gap}
    Recall that in \eqref{eqn:c_i} and \eqref{eq:proof_def}
    \begin{equation*}
        c_i = \frac{1}{n} \bfx_i^\T (\bfS_i + \tau\I)^{-1} \bfx_i , \quad
        c_\tau =\frac{1}{n} \tr(\bfS+\tau \I)^{-1}.
    \end{equation*}
    Then under assumptions \textbf{O1-O7} and \textbf{P1}, we have
    \begin{equation*}
        \sup_i |c_i - \lambda_i^2 c_\tau| = \mO_{L_k} (n^{2\alpha - 1/2} \polyLog(n)).
    \end{equation*}

\end{corollary}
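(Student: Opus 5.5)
Since Corollary~\ref{cor:ci_ctau_gap} concerns only the full-dimensional leave-one-observation-out quantities $c_i$ and $c_\tau$, I would prove it by repeating the argument of Lemmas~\ref{lemma:trace_approx}--\ref{lemma:kn_bound} with the $(p-1)$-dimensional objects $(\bfv_i,\mathfrak{S}_p,\mathfrak{S}_p(i),r_{j,[p]})$ replaced by their $p$-dimensional counterparts $(\bfx_i,\bfS,\bfS_i,R_j)$. Concretely, the quantity $\zeta_i$ of Lemma~\ref{lemma:trace_approx} becomes exactly $c_i-\lambda_i^2\cdot n^{-1}\tr\{(\bfS+\tau\I)^{-1}\}=c_i-\lambda_i^2 c_\tau$, because $\bfS_i=\bfS-n^{-1}\psi'(R_i)\bfx_i\bfx_i^\T$ up to the fact that $\bfS_i$ is built from the leave-one-out residuals $\tilde r_{j,(i)}$ rather than $R_j$. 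So the plan is to decompose
\[
c_i-\lambda_i^2c_\tau=\Big[c_i-\tfrac{\lambda_i^2}{n}\tr\{(\bfS_i+\tau\I)^{-1}\}\Big]+\lambda_i^2\Big[\tfrac1n\tr\{(\bfS_i+\tau\I)^{-1}\}-\tfrac1n\tr\{(\bfS_{(i)}+\tau\I)^{-1}\}\Big]+\lambda_i^2\Big[\tfrac1n\tr\{(\bfS_{(i)}+\tau\I)^{-1}\}-c_\tau\Big],
\]
where $\bfS_{(i)}=\bfS-n^{-1}\psi'(R_i)\bfx_i\bfx_i^\T$.

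The first term is the cleanest: $\bfS_i$ is built from $\{\bfx_j,\epsilon_j\}_{j\neq i}$ and $\hat\bdelta_{(i)}$ (and $\hat\bfw$, which is independent of the target sample and conditioned on), hence is independent of $\cX_i$. Lemma~3.37 of \cite{el2018impact} applied to the quadratic form $\cX_i^\T(\bfS_i+\tau\I)^{-1}\cX_i/n$, with $\|(\bfS_i+\tau\I)^{-1}\|\le\tau^{-1}$, gives $\mO_{L_k}(\polyLog(n)/\sqrt{n\mc_n})$ uniformly in $i$ after multiplying by $\sup_i\lambda_i^2=\mO_{L_k}(\polyLog(n))$ (Assumption \textbf{O6}) and Hölder. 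The third term is handled by Sherman--Morrison exactly as in the proof of Lemma~\ref{lemma:zeta_bound}: it lies in $[-1/(n\tau),0]$.

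The middle term is the main obstacle and where the rate $n^{2\alpha-1/2}$ comes from. Using $\bfA^{-1}-\bfB^{-1}=\bfA^{-1}(\bfB-\bfA)\bfB^{-1}$, its absolute value is bounded by $(p/n)\tau^{-2}\|\hat\bSigma\|_2\sup_{j\neq i}|\psi'(\tilde r_{j,(i)})-\psi'(R_j)|$. Since $\psi'$ is $\mathrm{L}(n)\le Cn^\alpha$-Lipschitz, and Theorem~\ref{th:l1oo} gives $\sup_i\sup_{j\ne i}|\tilde r_{j,(i)}-R_j|=\mO_{L_k}(\polyLog(n)n^{\alpha-1/2})$, this is $\mO_{L_k}(n^{2\alpha-1/2}\polyLog(n))$, using $\|\hat\bSigma\|_2=\mO_{L_k}(\polyLog(n))$ (Lemma~3.38 of \cite{el2018impact}) and Hölder, which needs $K_n$ to have enough moments; this is the same role as $K_n$ in Lemma~\ref{lemma:kn_bound}. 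The care needed is that the sup over $i$ must be taken inside the $L_k$ bound, which Theorem~\ref{th:l1oo} already provides.

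Combining, $\sup_i|c_i-\lambda_i^2c_\tau|=\mO_{L_k}(n^{2\alpha-1/2}\polyLog(n)+\polyLog(n)n^{-1/2}+n^{-1})$. Because $\alpha\ge0$, $n^{2\alpha-1/2}$ dominates the other two terms, which yields the stated rate.
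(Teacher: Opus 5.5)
Your proposal is correct and is essentially the paper's argument made explicit. The paper obtains the corollary by transcribing Lemmas~\ref{lemma:zeta_bound}--\ref{lemma:kn_bound} to $p$ dimensions, with $c_i$ playing the role of the quadratic form in $\mathbf{AM}_{i,p}$ and $K_n$ supplied by Theorem~\ref{th:l1oo} together with the $Cn^{\alpha}$-Lipschitz property of $\psi'$; this is precisely your three-term decomposition, and working directly with $\bfS_i$, $\bfS_{(i)}$, $\bfS$ spares you the paper's detour through $\mc_{\tau,p}$ and Theorem~\ref{thm:delta_tildeb_bound}.

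One harmless slip: since $\bfS_{(i)}\preceq\bfS$, the third bracket lies in $[0,1/(n\tau)]$, not $[-1/(n\tau),0]$. This does not affect the bound.
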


\begin{proof}
    We have showed that
    \begin{equation*}
        \sup_i \Big| \frac{1}{n} \bfv_i^{\T} (\mathfrak{S}_p(i) + \tau \I)^{-1} \bfv_i - \lambda_i^2 \mc_{\tau, p} \Big| = \mO_{L_k} (n^{2\alpha - 1/2} \polyLog(n)).
    \end{equation*}
    Recall that
    \begin{equation*}
        c_\tau = \frac{1}{n} \tr \bigg[ \Big\{ \frac{1}{n} \sum_{i=1}^n \psi'(R_i) \bfx_i \bfx_i^\T + \tau \I \Big\}^{-1} \bigg].
    \end{equation*}
    We see that $c_\tau$ is analogous to $\mc_{\tau, p}$ when we use all the data, rather than $(p-1)$ of them.

    Indeed, $c_i$ in \eqref{eqn:c_i} is defined, in the notation of the proof of Lemma \ref{lemma:zeta_bound} as an analog of $n^{-1} \bfv_i^{\T} (\mathbf{AM}_{i,p} + \tau \I)^{-1} \bfv_i$, with the the role of $\{ \mr_{i,[p]}^{(i)}\}_{j \neq i}$ being played by the residuals obtained by the leave-one-out estimator of $\hat \bgamma$, excluding $(\bfx_i, y_i)$ from the problem.
    Lemma \ref{lemma:zeta_bound} in connection with Theorem \ref{thm:delta_tildeb_bound} shows that $\sup_i | n^{-1} \bfv_i^{\T} (\mathbf{AM}_{i,p} + \tau \I)^{-1} \bfv_i - \lambda_i^2 \mc_{\tau, p} | = \mO_{L_k} (n^{2\alpha - 1/2} \polyLog(n))$.
    Passing from the $p-1$ dimensional version of this result, i.e., Lemma \ref{lemma:zeta_bound}, to the $p$-dimensional version gives the approximation stated in the corollary.

    We therefore have
    \begin{equation*}
        \sup_i |c_i - \lambda_i^2 c_\tau| = \mO_{L_k} (n^{2\alpha - 1/2} \polyLog(n)).
    \end{equation*}
\end{proof}

\paragraph{Further results on $\xi_n$ and $\mathfrak{b}_p$}~{}

\begin{proposition}
    \label{prop:mc_xi_relation}
    Under Assumptions \textbf{O1-O7} and \textbf{P1}, we have
    \begin{equation}
        \label{eq:el_3.18_i}
        \Big| \mc_{\tau, p} (\xi_n + \tau) - \frac{p-1}{n} \Big| = \mO_{L_k} \Big( \frac{\polyLog(n)}{n^{1/2 - 2\alpha}}  \Big).
    \end{equation}
    Furthermore, under Assumptions \textbf{O1-O7} and \textbf{P1-P3}, since $\|\bdelta_0\|_\infty = \mO_{L_k}(n^{-e})$,
    \begin{equation}
        \label{eq:el_3.18_ii}
        \begin{aligned}
            \Big( \frac{p}{n} \Big)^2 n \E [ \{ \mathfrak{b}_p - \delta_0(p) + \hat w(p) - w_0(p)\}^2] =& \frac{1}{n} \sum_{i=1}^n \E \Big[ \{ \mc_{\tau, p} \lambda_i \psi(r_{i,[p]})\}^2 \Big] \\
            &+ n \tau^2 \{ \delta_0(p) -  \hat w(p) + w_0(p) \}^2 \E (\mc_{\tau, p}^2) + \mo(1).
        \end{aligned}
    \end{equation}
\end{proposition}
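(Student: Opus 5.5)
For \eqref{eq:el_3.18_i}, I will combine the two conclusions of Lemma \ref{lemma:trace_approx} with Lemma \ref{lemma:xi_nonnegative}. The second part of Lemma \ref{lemma:trace_approx} gives $n^{-1}\tr(\I_n-\bfM)=(p-1)/n-\tau\mc_{\tau,p}$. Write $T_n$ for $n^{-1}\tr(\bfD_{\lambda_i}\bfD_{\psi'(r_{\cdot,[p]})}^{1/2}\bfM\bfD_{\psi'(r_{\cdot,[p]})}^{1/2}\bfD_{\lambda_i})$. The first part then reads $|(p-1)/n-\tau\mc_{\tau,p}-T_n\mc_{\tau,p}|\le\sup_i|\zeta_i|\,n^{-1}\sum_i\psi'(r_{i,[p]})$.

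Lemma \ref{lemma:kn_bound} gives $\sup_i|\zeta_i|=\mO_{L_k}(n^{2\alpha-1/2}\polyLog(n))$. Also $\psi'$ is bounded by $C\polyLog(n)$ (\textbf{O3}), so the right-hand side is $\mO_{L_k}(\polyLog(n)n^{2\alpha-1/2})$.

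Next I replace $T_n$ by $\xi_n$ using Lemma \ref{lemma:xi_nonnegative}, at cost $\mO_{L_k}(\sup_i\lambda_i^2\psi'(r_{i,[p]})/\sqrt{n\mc_n})=\mO_{L_k}(\polyLog(n)n^{-1/2})$. For this I use $\mc_{\tau,p}\le p/(n\tau)$, which is bounded, and $\sup_i\lambda_i^2=\mO_{L_k}(\polyLog(n))$. Hölder's inequality then combines the bounds in $L_k$, giving \eqref{eq:el_3.18_i}.

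For \eqref{eq:el_3.18_ii}, I start from the exact identity used in Proposition \ref{prop:bias_coordinate_bound}. Write $a_p=\delta_0(p)+w_0(p)-\hat w(p)$. Then
$\sqrt n\{\mathfrak b_p-\delta_0(p)+\hat w(p)-w_0(p)\}=(N_p-\sqrt n\tau a_p)/(\tau+\xi_n)$.
By \eqref{eq:el_3.18_i}, $1/(\tau+\xi_n)$ can be replaced by $\mc_{\tau,p}\,n/(p-1)$. The factor $(p/n)^2$ absorbs $(n/(p-1))^2$ up to $\mo(1)$.

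This replacement needs $\xi_n+\tau\ge\tau$, which holds because $\xi_n\ge0$, so the reciprocal is bounded. The resulting error is $|N_p-\sqrt n\tau a_p|\cdot\mO_{L_k}(\polyLog(n)n^{2\alpha-1/2})$.

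Squaring and taking expectations gives $\E[\mc_{\tau,p}^2(N_p-\sqrt n\tau a_p)^2]+\mo(1)$. The $\mo(1)$ holds because $N_p=\mO_{L_k}(\polyLog n)$ and $\sqrt n|a_p|=\mO_{L_k}(\polyLog(n)\,n^{1/2-\min(1/2,\me)})$ by Lemma \ref{lem:w-coordinate-rate} and \textbf{P3}. The product of the squared rates vanishes by \textbf{P4}, i.e. $\min(1/2,\me)-\alpha-1/4>0$.

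Expanding the square gives three terms. For the $N_p^2$ term: $\mc_{\tau,p}$ and $r_{i,[p]}$ are independent of $\cX(p)$ (\textbf{P1}) and the $\cX_i(p)$ have mean zero and unit variance. Conditioning on everything except $\cX(p)$ therefore yields $\E[\mc_{\tau,p}^2N_p^2]=n^{-1}\sum_i\E[\{\mc_{\tau,p}\lambda_i\psi(r_{i,[p]})\}^2]$. The $a_p^2$ term gives $n\tau^2\E(\mc_{\tau,p}^2a_p^2)$. I will match this to the stated form $n\tau^2a_p^2\E(\mc_{\tau,p}^2)$ by conditioning on the source sample, as in the theorem, so that $\hat w$ is fixed.

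The cross term $-2\sqrt n\tau\,\E[\mc_{\tau,p}^2a_pN_p]$ vanishes by the same conditional mean-zero argument in $\cX(p)$.

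The main obstacle is exactly this conditioning. $\hat w$ enters $r_{i,[p]}$ only through $\hat\bfw_{-p}$. Still, $\mc_{\tau,p}$ and $N_p$ must be independent of $\cX(p)$ jointly with $\hat w(p)$. Since $\hat\bfw$ is computed from the independent source sample, conditioning on it resolves this. I must also check that the error terms are genuinely $\mo(1)$ in $L_1$ after squaring, which needs the $3k$-moment Hölder bookkeeping of Lemma \ref{lemma:zeta_bound}.
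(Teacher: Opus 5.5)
Your proposal is correct and follows the paper's own route. For \eqref{eq:el_3.18_i} you combine both parts of Lemma~\ref{lemma:trace_approx} with Lemmas~\ref{lemma:kn_bound} and~\ref{lemma:xi_nonnegative}. For \eqref{eq:el_3.18_ii} you use the exact identity $\sqrt n(\tau+\xi_n)\{\mathfrak b_p-\delta_0(p)+\hat w(p)-w_0(p)\}=N_p-\sqrt n\,\tau\{\delta_0(p)+w_0(p)-\hat w(p)\}$, the independence and mean zero of $\cX(p)$ to compute the square and kill the cross term, and \eqref{eq:el_3.18_i} to trade $\mc_{\tau,p}(\tau+\xi_n)$ for $(p-1)/n$. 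The only difference is that you make that trade on $1/(\tau+\xi_n)$ before squaring rather than on the left-hand side afterwards; you also spell out two points the paper leaves implicit, namely conditioning on the source sample so the $\hat w(p)$ term can be pulled out of the expectation, and the \textbf{P4} rate condition that makes the error term $\mo(1)$.
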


\begin{proof}
    \textbf{For Equation \eqref{eq:el_3.18_i}}:

    By Lemma \ref{lemma:trace_approx}, we have
    \begin{equation*}
        \frac{p-1}{n} - \tau \mc_{\tau, p} = \frac{1}{n} \tr (\I_n - \bfM) \geq 0.
    \end{equation*}
    The latter quantity was approximated in Lemma \ref{lemma:trace_approx} by
    \begin{equation*}
        \frac{1}{n} \tr \Big( \bfD_{\lambda_i} \bfD_{\psi'(r_{\cdot,[p]})}^{1/2} \bfM \bfD_{\psi'(r_{\cdot,[p]})}^{1/2} \bfD_{\lambda_i} \Big) \mc_{\tau, p},
    \end{equation*}
    which approximate $\xi$ as in Lemma \ref{lemma:xi_nonnegative}.
    This gives the result of Equation \eqref{eq:el_3.18_i}, by simply keeping track of the approximation errors we make at each step.

    \textbf{For Equation \eqref{eq:el_3.18_ii}}:

    Recall that by definition:
    \begin{equation*}
        \sqrt{n} \big[ (\tau + \xi_n) \mathfrak{b}_p - \xi_n \delta_0(p) + \xi_n \{ \hat w(p) - w_0(p) \} \big] = N_p = \frac{1}{\sqrt{n}} \sum_{i=1}^n \lambda_i  \psi(r_{i,[p]}) \cX_i(p).
    \end{equation*}
    Therefore, we have
    \begin{equation*}
        \begin{aligned}
            \mc_{\tau,p} \sqrt{n} (\tau + \xi_n) \{ \mathfrak{b}_p - \delta_0(p) + \hat w(p) - w_0(p) \}  =&  \frac{1}{\sqrt{n}} \sum_{i=1}^n \mc_{\tau,p} \lambda_i  \psi(r_{i,[p]}) \cX_i(p)\\
            & - \mc_{\tau,p} \sqrt{n} \tau \{ \delta_0(p) -  \hat w(p) + w_0(p) \}.
        \end{aligned}
    \end{equation*}
    Note that $\mc_{\tau,p} \lambda_i  \psi(r_{i,[p]})$ is independent of $\cX_i(p)$, and $\cX_i(p)$'s are independent with mean zero and variance one.
    We have
    \begin{equation}
        \label{eq:pf_el_3.18_i}
        \begin{aligned}
            \E \Big[\mc_{\tau,p}^2 n (\tau + \xi_n)^2 \{ \mathfrak{b}_p - \delta_0(p) + \hat w(p) - w_0(p) \}^2 \Big] =& \frac{1}{n} \sum_{i=1}^n \E \Big[ \{ \mc_{\tau,p} \lambda_i  \psi(r_{i,[p]}) \}^2 \Big] \\
            &+ n \tau^2 \{ \delta_0(p) -  \hat w(p) + w_0(p) \}^2 \E (\mc_{\tau,p}^2).
        \end{aligned}
    \end{equation}

    Recall that Proposition \ref{prop:bias_coordinate_bound} gives that
    \begin{equation*}
        |\mathfrak{b}_p-\delta_0(p) + \hat w(p) - w_0(p)| \leq \frac{1}{\sqrt{n}\tau} |N_p| + \|\bdelta_0\|_\infty + |\hat w(p) - w_0(p)|.
    \end{equation*}
    Under Assumption \textbf{P3}, $n\|\bdelta_0\|_\infty^2 \polyLog(n) n^{2\alpha - 1/2} \to 0$.
    Together with Equations \eqref{eq:el_3.18_i} and \eqref{eq:el2018_p}, the LHS of \eqref{eq:pf_el_3.18_i} can be written as
    \begin{equation*}
        \begin{aligned}
            &\E \Big[\mc_{\tau,p}^2 (\tau + \xi_n)^2 n \{ \mathfrak{b}_p - \delta_0(p) + \hat w(p) - w_0(p) \}^2 \Big] \\
            =& \E \Big[ \Big\{\mc_{\tau,p}(\tau + \xi_n) - \frac{p-1}{n} + \frac{p-1}{n} \Big\}^2 n \{ \mathfrak{b}_p - \delta_0(p) + \hat w(p) - w_0(p) \}^2 \Big] \\
            =& \Big(\frac{p}{n} \Big)^2 n \E [ \{ \mathfrak{b}_p - \delta_0(p) + \hat w(p) - w_0(p) \}^2] + \mo(1).
        \end{aligned}
    \end{equation*}
    This implies Equation \eqref{eq:el_3.18_ii}.

\end{proof}

\paragraph{On $\md_{i,p}$}~{}

Recall that
\begin{equation*}
    \md_{i,p} = \psi' (\gamma_{i, p}^*)-\psi'(r_{i,[p]}),
\end{equation*}
where $\gamma_{i, p}^*$ in the interval $(r_{i,[p]}, r_{i,[p]}+\nu_i)$, with
\begin{equation*}
    \begin{aligned}
        \nu_i &= \{\mathfrak{b}_p - \delta_0(p) + \hat w(p) - w_0(p)\} \bfx_i^\T \left[\begin{array}{c}
            (\mathfrak{S}_p+\tau \I)^{-1} \bfu_p \\
            -1
            \end{array}\right] \\
            &= \{\mathfrak{b}_p - \delta_0(p) + \hat w(p) - w_0(p)\} \pi_i.
    \end{aligned}
\end{equation*}

We have the following result.

\begin{proposition}
    \label{prop:md_bound}
    Under Assumptions \textbf{O1-O7} and \textbf{P1-P3}, we have, at fixed $\tau$,
    \begin{equation*}
        \sup_i |\md_{i,p}| = \mO_{L_k} \Big( \frac{\polyLog(n) n^\alpha}{n^{1 / 2} \wedge n^{\me}} \Big).
    \end{equation*}
\end{proposition}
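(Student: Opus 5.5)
We bound $\sup_i|\md_{i,p}|$ through the Lipschitz property of $\psi'$. Since $\gamma_{i,p}^*$ lies between $r_{i,[p]}$ and $r_{i,[p]}+\nu_i$, and $\psi'$ is $\mathrm{L}(n)$-Lipschitz with $\mathrm{L}(n)\le Cn^\alpha$, we have
\begin{equation*}
\sup_i|\md_{i,p}| \le C n^{\alpha}\sup_i|\nu_i| = C n^\alpha\,|\mathfrak{b}_p-\delta_0(p)+\hat w(p)-w_0(p)|\,\sup_i|\pi_i|.
\end{equation*}
The middle factor is already controlled by \eqref{eq:wbias}, which follows from Proposition \ref{prop:bias_coordinate_bound} together with Lemma \ref{lem:w-coordinate-rate} and \textbf{P3}: it is $\mO_{L_k}(\polyLog(n)/(n^{1/2}\wedge n^{\me}))$. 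Once we show $\sup_i|\pi_i|=\mO_{L_k}(\polyLog(n))$, H\"older's inequality gives the claim, since products of $\mO_{L_{2k}}$ quantities are $\mO_{L_k}$ and polyLog factors absorb.

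To bound $\pi_i$, write
\begin{equation*}
\pi_i=\bfv_i^\T(\mathfrak{S}_p+\tau\I)^{-1}\bfu_p - x_i(p).
\end{equation*}
The term $\sup_i|x_i(p)|=\sup_i|\lambda_i||\cX_i(p)|$ is $\mO_{L_k}(\polyLog(n))$. This uses the polyLog growth of $\sup_i|\lambda_i|$ (\textbf{O6}) and the sub-Gaussian concentration of coordinates under \textbf{O4}, since a coordinate is a $1$-Lipschitz linear function, followed by a union bound over $n$ indices at cost $\sqrt{\log n/\mc_n}$.

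The first term is the main obstacle. The crude bound $\|\bfv_i\|\,\|(\mathfrak{S}_p+\tau\I)^{-1}\bfu_p\|$ loses a factor $\sqrt n$, so I would instead exploit independence structure. Put $\bfa=(\mathfrak{S}_p+\tau\I)^{-1}\bfu_p$. By \eqref{eqn:prop311_iii}, $\|\bfa\|^2=\mO_{L_k}(\polyLog(n))$.

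However, $\bfa$ depends on $\bfv_i$, so I would remove this dependence by a leave-one-out decomposition. Define $\mathfrak{S}_p(i)$ and $\bfu_p(i)=\bfu_p-n^{-1}\psi'(r_{i,[p]})\bfv_ix_i(p)$, and use the leave-$i$-out residuals $\mr^{(i)}_{j,[p]}$ from Lemma \ref{lemma:kn_bound}. This gives a vector $\bfa^{(i)}$ that is independent of $(\lambda_i,\cV_i)$ and has norm $\mO_{L_k}(\polyLog(n))$.

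The difference $\bfa-\bfa^{(i)}$ then splits into three pieces:
\begin{itemize}
\item a rank-one Sherman--Morrison correction, whose inner product with $\bfv_i$ is bounded by $c\,\|\bfv_i\|^2/n\cdot|x_i(p)|\,\|\psi'\|_\infty$, which is polyLog;
\item the direct removal of the $i$th summand of $\bfu_p$, also polyLog by the same estimate;
\item the perturbation from replacing $r_{j,[p]}$ by $\mr^{(i)}_{j,[p]}$, which is bounded through $K_n\|\hat\bSigma\|_2/\tau^2$ times $\|\bfv_i\|\,\|\bfu_p\|$ and is $\mO_{L_k}(n^{2\alpha}\polyLog(n))$ at worst.
\end{itemize}
For the last piece I would instead pair it with the finer estimate $\sqrt n\cdot n^{2\alpha-1/2}$, which is acceptable under \textbf{P4} after checking powers. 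If this is too lossy, I would note that the final rate only needs $\sup_i|\nu_i|$ small, and absorb any extra $n^{2\alpha}$ factor using \textbf{P4}.

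Given independence, $\bfv_i^\T\bfa^{(i)}=\lambda_i\cV_i^\T\bfa^{(i)}$ is a Lipschitz linear function of $\cV_i$ with constant $\|\bfa^{(i)}\|$. Concentration under \textbf{O4}, via Lemma B-2 of \cite{karoui2013asymptotic} as in Lemma \ref{lemma:3.7_2018}, plus a union bound over $i$, then yields $\sup_i|\bfv_i^\T\bfa^{(i)}|=\mO_{L_k}(\polyLog(n))$.

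Combining these bounds gives $\sup_i|\pi_i|=\mO_{L_k}(\polyLog(n))$, and hence the stated rate for $\sup_i|\md_{i,p}|$.
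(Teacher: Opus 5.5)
Your outer reduction matches the paper: $\sup_i|\md_{i,p}|\le \mathrm{L}(n)\sup_i|\nu_i|$, the scalar factor controlled by \eqref{eq:wbias}, and the bound on $\sup_i|x_i(p)|$. So everything hinges on showing $\sup_i|\pi_i|=\mO_{L_k}(\polyLog(n))$, and that is where your argument breaks.

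The first two pieces of your leave-one-observation-out decomposition are harmless. The identity $\bfv_i^\T(\mathfrak{S}_p+\tau\I)^{-1}=\bfv_i^\T(\mathfrak{S}_p(i)+\tau\I)^{-1}/\{1+n^{-1}\psi'(r_{i,[p]})\bfv_i^\T(\mathfrak{S}_p(i)+\tau\I)^{-1}\bfv_i\}$ is a contraction, so they are fine. The third piece is not. Replacing $r_{j,[p]}$ by $\mr^{(i)}_{j,[p]}$ moves $\bfa$ by a vector of norm $\mO_{L_k}(K_n\polyLog(n))$ that still depends on $\bfv_i$. The bound your estimates give for its inner product with $\bfv_i$ is therefore $\|\bfv_i\|\,K_n\polyLog(n)\asymp n^{1/2}n^{2\alpha-1/2}\polyLog(n)=n^{2\alpha}\polyLog(n)$. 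Your ``finer estimate'' $\sqrt n\cdot n^{2\alpha-1/2}$ is the same quantity, not a finer one. Absorbing the extra $n^{2\alpha}$ via \textbf{P4} weakens the conclusion rather than proving it: you would end with $\sup_i|\md_{i,p}|=\mO_{L_k}\big(n^{3\alpha}\polyLog(n)/(n^{1/2}\wedge n^{\me})\big)$. That agrees with the stated rate only when $\alpha=0$.

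The missing idea is that you are exploiting the wrong independence. The leave-one-predictor-out construction was set up so that the column $\cX(p)$ is independent of $\cV$, the $\lambda_i$'s, the $\epsilon_i$'s and $\hat\bfw$. Hence it is independent of $r_{\cdot,[p]}$, $\mathfrak{S}_p$ and every $\bfv_i$.

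Write $\bfD$ for $\bfD_{\lambda_i^2\psi'(r_{\cdot,[p]})}$, so that $\bfu_p=n^{-1}\cV^\T\bfD\,\cX(p)$. Then each $\bfv_i^\T(\mathfrak{S}_p+\tau\I)^{-1}\bfu_p$ is a linear form in $\cX(p)$, with coefficient vector $\bfc_i=n^{-1}\lambda_i\bfD\cV(\mathfrak{S}_p+\tau\I)^{-1}\cV_i\in\RR^n$ independent of $\cX(p)$. Since $n^{-1}\cV^\T\bfD^2\cV\preceq\|\bfD\|_2\,\mathfrak{S}_p$, you get $\|\bfc_i\|^2\le\lambda_i^2\sup_j\lambda_j^2\psi'(r_{j,[p]})\,\|\cV_i\|^2/(n\tau)=\mO_{L_k}(\polyLog(n))$.

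Column concentration (\textbf{P1}) plus a union bound over $i$ (Lemma 3.36 of \cite{el2018impact}) then give $\sup_i|\pi_i|=\mO_{L_k}(\polyLog(n))$ with no $n^{\alpha}$ loss. That yields the stated rate, and it is the route the paper takes; no leave-one-observation-out step is needed.
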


\begin{proof}
    Note that we can write
    \begin{equation*}
        \pi_i = \bfx_i^\T \left[\begin{array}{c}
            (\mathfrak{S}_p+\tau \I)^{-1} \bfu_p \\
            -1
            \end{array}\right] = \bfv_i^\T (\mathfrak{S}_p+\tau \I)^{-1} \bfu_p - x_i(p).
    \end{equation*}
    Recall that $\bfu_p = n^{-1} \bfV^\T \bfD_{\psi'(r_{\cdot,[p]})} \bfX(p)$.
    We can also write it as
    \begin{equation*}
        \bfu_p = \frac{1}{n} \cV^\T \bfD_{\lambda_i^2 \psi'(r_{\cdot,[p]})} \cX(p).
    \end{equation*}

    Using the indenpendence of $\cX(p)$ with $\{(\cV_i, \epsilon_i)\}_{i=1}^n$, and the concentration assumptions on $\cX(p)$ in Assumption \textbf{P1}, according to Lemma 3.36 in \cite{el2018impact}, we have
    \begin{equation*}
        \sup_i | \bfv_i^\T (\mathfrak{S}_p+\tau \I)^{-1} \bfu_p | = \mO_{L_k} \Big( \frac{\polyLog(n) }{\mc_n^{1/2}} \sup_i \big\| \frac{1}{n} \bfD_{\lambda_i^2 \psi'(r_{\cdot,[p]})} \cV (\mathfrak{S}_p+\tau \I)^{-1} \cV_i \big\| \Big),
    \end{equation*}
    where we view $\bfv_i^\T (\mathfrak{S}_p+\tau \I)^{-1} \bfu_p$ as a linear form in $\cV_i$.
    Note that we have absorbed the $\sup_i|\lambda_i|$ in the $\polyLog(n)$ term.

    We can write
    \begin{equation*}
        \big\| \frac{1}{n} \bfD_{\lambda_i^2 \psi'(r_{\cdot,[p]})} \cV (\mathfrak{S}_p+\tau \I)^{-1} \cV_i \big\| = \frac{1}{n} \cV_i^\T (\mathfrak{S}_p+\tau \I)^{-1} \frac{\cV^\T \bfD_{\lambda_i^2 \psi'(r_{\cdot,[p]})}^2 \cV}{n} (\mathfrak{S}_p+\tau \I)^{-1} \cV_i.
    \end{equation*}
    We notice that $\mathfrak{S}_p = n^{-1} \cV^\T \bfD_{\lambda_i^2 \psi'(r_{\cdot,[p]})} \cV$.
    Hence,
    \begin{equation*}
        \frac{\cV^\T \bfD_{\lambda_i^2 \psi'(r_{\cdot,[p]})}^2 \cV}{n} \preceq \|\bfD_{\lambda_i^2 \psi'(r_{\cdot,[p]})}\|_2 \mathfrak{S}_p.
    \end{equation*}
    Therefore we conclude that
    
    \begin{equation*}
        \frac{1}{n} \cV_i^\T (\mathfrak{S}_p+\tau \I)^{-1} \frac{\cV^\T \bfD_{\lambda_i^2 \psi'(r_{\cdot,[p]})}^2 \cV}{n} (\mathfrak{S}_p+\tau \I)^{-1} \cV_i \leq \frac{\|\cV_i\|^2}{n \tau} \|\bfD_{\lambda_i^2 \psi'(r_{\cdot,[p]})}\|_2 = \frac{\|\cV_i\|^2}{n \tau} \sup_i \lambda_i^2 \psi'(r_{i,[p]}).
    \end{equation*}

    Note that $\sup_i | x_i(p)| = \mO_{L_k} (\polyLog(n) / \sqrt{\mc_n})$ under Assumption \textbf{O4}, \textbf{O6} and \textbf{P1}, according to Appendix 7 in \cite{el2018impact}.
    Therefore, we have
    \begin{equation*}
        \begin{aligned}
            \sup_i |\pi_i| &= \mO_{L_k} \bigg[ \frac{\polyLog(n)}{\mc_n^{1/2}} \Big\{ 1 + \sqrt{ \sup_i \lambda_i^2 \psi'(r_{i,[p]}) \sup_i \frac{\|\cV_i\|^2}{n \tau}} \Big\} \bigg] \\
            &= \mO_{L_k} \bigg[ \frac{\polyLog(n)}{\mc_n^{1/2}} \Big\{ 1 + \sqrt{ \sup_i \lambda_i^2 \psi'(r_{i,[p]}) } \Big\} \bigg] \\
            &= \mO_{L_k} \{ \polyLog(n)\}.
        \end{aligned}
    \end{equation*}
    Recall that $|\mathfrak{b}_p-\delta_0(p) + \hat w(p) - w_0(p)| = \mO_{L_k}(\polyLog(n)n^{-1/2} + \|\bdelta_0\|_\infty + |\hat w(p) - w_0(p)|)$ and by Lemma \ref{lem:w-coordinate-rate} we have
    \begin{equation*}
        |\hat w(p) - w_0(p)| = \mO_{L_k}\Big(\frac{\polyLog(n)}{\sqrt{n} \wedge n^{e}}\Big).
    \end{equation*}
    We have
    \begin{equation*}
        \sup_i |\nu_i| = \mO_{L_k} \Big( \frac{\polyLog(n) }{ \sqrt{n} \wedge n^{\me}} \Big).
    \end{equation*}
    Note that \cite{el2018impact} has shown that
    Under our assumption that $\psi'$ is $C n^\alpha$-Lipschitz, we see that
    \begin{equation*}
        \sup _i|\md_{i, p}|=\mO_{L_k}\Big(\frac{\polyLog(n) n^\alpha}{n^{1 / 2} \wedge n^{\me}}\Big) .
    \end{equation*}
\end{proof}

\subsubsection{Final conclusions}

Gathering all the results, we have the following Theorem.

\begin{theorem}
    \label{thm:delta_tildeb_bound}
    Under Assumptions \textbf{O1-O7} and \textbf{P1-P3}, we have, for any fixed $\tau$,
    \begin{equation*}
        \| \hat \bdelta - \tilde \bfb\| \leq \mO_{L_k} \Big( \frac{\polyLog(n) n^\alpha}{(n^{1 / 2} \wedge n^{\me})^2} \Big).
    \end{equation*}
    In particular,
    \begin{equation*}
        \begin{aligned}
            \sqrt{n} (\hat \delta_p - \mathfrak{b}_p) &= \mO_{L_k} \Big( \frac{\polyLog(n) n^{\alpha+1/2}}{(n^{1 / 2} \wedge n^{\me})^2} \Big),\\
            \sup_i |\bfx_i^\T (\hat \bdelta - \tilde \bfb)| &= \mO_{L_k} \Big( \frac{\polyLog(n) n^{\alpha+1/2}}{(n^{1 / 2} \wedge n^{\me})^2} \Big),\\
            \sup_i |R_i - r_{i, [p]}| &= \mO_{L_k}\bigg(\Big\{\frac{\polyLog(n)}{\sqrt{n} \wedge n^{\me}}\Big\} \vee \Big\{ \frac{\polyLog(n) n^{\alpha+1/2}}{(n^{1 / 2} \wedge n^{\me})^2} \Big\} \bigg) .
        \end{aligned}
    \end{equation*}
\end{theorem}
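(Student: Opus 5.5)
The bound on $\|\hat\bdelta-\tilde\bfb\|$ will come from the deterministic inequality \eqref{eqn:prop311_i} combined with the stochastic controls already established in this subsection. By \eqref{eqn:prop311_i}, $\|\hat\bdelta-\tilde\bfb\|$ is at most $\tau^{-1}$ times the product of four factors. The first is $|\mathfrak{b}_p-\delta_0(p)+\hat w(p)-w_0(p)|$, which is $\mO_{L_k}\{\polyLog(n)/(\sqrt n\wedge n^{\me})\}$ by \eqref{eq:wbias}. The second is $\sup_i|\md_{i,p}|$, which is $\mO_{L_k}\{\polyLog(n)n^\alpha/(\sqrt n\wedge n^{\me})\}$ by Proposition \ref{prop:md_bound}. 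The third is $\|\hat\bSigma\|_2=\mO_{L_k}(\polyLog(n))$ by Lemma 3.38 of \cite{el2018impact}. The fourth is $\{1+\|(\mathfrak{S}_p+\tau\I)^{-1}\bfu_p\|^2\}^{1/2}=\mO_{L_k}(\polyLog(n))$ by \eqref{eqn:prop311_iii} and the bounds on $\sup_i\lambda_i$ and $\|\psi'\|_\infty$. Since every factor has all moments controlled, I will combine them with Hölder's inequality in $L_k$, using finitely many higher moments of each factor. This gives $\|\hat\bdelta-\tilde\bfb\|=\mO_{L_k}\{\polyLog(n)n^\alpha/(\sqrt n\wedge n^{\me})^2\}$. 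The only care needed is that the polyLog powers absorb the $\sup_i\lambda_i^2$ factors and the $1/\mc_n$ factors, which is allowed by \textbf{O4} and \textbf{O6}.

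For the consequences, I proceed as follows.

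\textbf{Last coordinate.} The last coordinate of $\tilde\bfb$ is exactly $\mathfrak{b}_p$, by the definition of $\tilde\bfb$: $\delta_0(p)-\hat w(p)+w_0(p)+\{\mathfrak{b}_p-\delta_0(p)+\hat w(p)-w_0(p)\}=\mathfrak{b}_p$. Hence $|\hat\delta(p)-\mathfrak{b}_p|\le\|\hat\bdelta-\tilde\bfb\|$, and multiplying by $\sqrt n$ gives the first claim.

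\textbf{Fitted values.} By Cauchy--Schwarz, $\sup_i|\bfx_i^\T(\hat\bdelta-\tilde\bfb)|\le\sqrt n\,\sup_i(\|\bfx_i\|/\sqrt n)\,\|\hat\bdelta-\tilde\bfb\|$. The factor $\sup_i\|\bfx_i\|/\sqrt n=\mO_{L_k}(\polyLog(n))$ was shown in the proof of Lemma \ref{lemma:3.7_2018}, and Hölder's inequality again finishes the second claim.

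\textbf{Residuals.} For the third claim I decompose
$$R_i-r_{i,[p]}=-\bfx_i^\T(\hat\bdelta-\tilde\bfb)+\bigl[\epsilon_i+\bfx_i^\T(\bfw_0-\hat\bfw)-\bfx_i^\T(\tilde\bfb-\bdelta_0)-r_{i,[p]}\bigr].$$
The bracket is exactly $\nu_i=\{\mathfrak{b}_p-\delta_0(p)+\hat w(p)-w_0(p)\}\pi_i$, by the identity $\bfx_i^\T(\hat\bgamma_{ext}-\tilde\bfb)=\{\cdots\}\pi_i$ from the proof of the deterministic proposition. The proof of Proposition \ref{prop:md_bound} already gives $\sup_i|\nu_i|=\mO_{L_k}\{\polyLog(n)/(\sqrt n\wedge n^{\me})\}$, since $\sup_i|\pi_i|=\mO_{L_k}(\polyLog(n))$. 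The triangle inequality then yields the maximum of the two rates.

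The main obstacle is bookkeeping rather than any new idea, and I expect it at two points. First, the $\mathfrak{b}_p$ bound from Proposition \ref{prop:bias_coordinate_bound} must be justified in $L_k$ for high $k$, which requires $\xi_n\ge0$ (Lemma \ref{lemma:xi_nonnegative}). Second, I must verify that the $\hat w(p)-w_0(p)$ term from Lemma \ref{lem:w-coordinate-rate}, which is inherited from the independent source sample, does not worsen the rate. I would handle the latter by conditioning on $\hat\bfw$: it is independent of the target data, so its $L_k$ bounds multiply cleanly with the target-side ones.
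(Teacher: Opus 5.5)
Your proposal is correct and follows the paper's proof. The paper also combines the deterministic bound \eqref{eqn:prop311_i}, via H\"older, with \eqref{eq:wbias}, Proposition~\ref{prop:md_bound}, the $\mO_{L_k}(\polyLog(n))$ bound on $\|\hat\bSigma\|_2$ and \eqref{eqn:prop311_iii}, then uses the fact that the last coordinate of $\tilde\bfb$ is $\mathfrak{b}_p$, Cauchy--Schwarz with $\sup_i\|\bfx_i\|/\sqrt n$, and the decomposition of $R_i-r_{i,[p]}$ into $\bfx_i^\T(\hat\bdelta-\tilde\bfb)$ and $\nu_i$; your sign convention there is the right one, and the paper's displayed identity has both signs flipped, which is harmless.

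Your closing remark about conditioning on $\hat\bfw$ is unnecessary and slightly off. H\"older over the joint source/target randomness already handles the products, and the factors do not split into source-only and target-only pieces anyway, since $r_{i,[p]}$ and $\md_{i,p}$ depend on $\hat\bfw$.
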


% We note that the index $p$ in the previous theorem plays no particular role and similar results holds when $p$ is replaced by any $k$ in the range $1 \leq k \leq p$.

\begin{proof}
    The first two results are direct consequences of all our results, using the key bound on $\|\hat \bdelta -\tilde \bfb\|$ of Proposition \ref{prop:bias_coordinate_bound}.

    The third result is a direct consequence of the fact that $\sup_i \|\cX_i\| / n^{1/2} = \mO_{L_k} (1)$, which was shown in the proof of Lemma \ref{lemma:3.7_2018}.

    The last result follows from the fact that $R_i - r_{i, [p]} = \bfx_i^\T (\hat \bdelta - \tilde \bfb) - \nu_i$.
    The result on $\nu_i$ is given in the proof of Proposition \ref{prop:md_bound}.
\end{proof}

Recall Equation \eqref{eq:el_3.18_ii}
\begin{equation*}
    \begin{aligned}
        \Big( \frac{p}{n} \Big)^2 n \E [ \{ \mathfrak{b}_p - \delta_0(p) + \hat w(p) - w_0(p)\}^2] =& \frac{1}{n} \sum_{i=1}^n \E \Big[ \{ \mc_{\tau, p} \lambda_i \psi(r_{i,[p]})\}^2 \Big] \\
        &+ n \tau^2 \{ \delta_0(p) -  \hat w(p) + w_0(p) \}^2 \E (\mc_{\tau, p}^2) + \mo(1).
    \end{aligned}
\end{equation*}
and Equation \eqref{eq:wbias}
\begin{equation*}
    |\mathfrak{b}_p-\delta_0(p) + \hat w(p) - w_0(p)| = \mO_{L_k} \Big( \frac{\polyLog(n) }{ \sqrt{n} \wedge n^{\me}} \Big).
\end{equation*}
We have
\begin{equation*}
    \begin{aligned}
        &\Big( \frac{p}{n} \Big)^2 n \E [ \{ \hat\delta_p - \delta_0(p) + \hat w(p) - w_0(p)\}^2] \\
        &\qquad \qquad= \Big( \frac{p}{n} \Big)^2 n \E [ \{ \hat\delta_p - \mathfrak{b}_p + \mathfrak{b}_p - \delta_0(p) + \hat w(p) - w_0(p)\}^2] \\
        &\qquad \qquad = \frac{1}{n} \sum_{i=1}^n \E \Big[ \{ \mc_{\tau, p} \lambda_i \psi(r_{i,[p]})\}^2 \Big] + n \tau^2 \{ \delta_0(p) -  \hat w(p) + w_0(p) \}^2 \E (\mc_{\tau, p}^2) + \mo(A),
    \end{aligned}
\end{equation*}
where $A = \frac{1}{n} \sum_{i=1}^n \E \Big[ \{ \mc_{\tau, p} \lambda_i \psi(r_{i,[p]})\}^2 \Big] + n \tau^2 \{ \delta_0(p) -  \hat w(p) + w_0(p) \}^2 \E (\mc_{\tau, p}^2)$.

We note that the index $p$ in Equation \eqref{eq:el_3.18_ii} and the previous theorem plays no particular role and similar results holds when $p$ is replaced by any $k$ in the range $1 \leq k \leq p$.
Summing over all the coordinates, we have
% \begin{equation*}
%     \Big(\frac{p}{n} \Big)^2 \E ( \| \Vec{\mathfrak{b}} - \bdelta_0 + \hat\bfw - \bfw_0\|^2) = \frac{1}{n} \sum_{k=1}^p \bigg(\frac{1}{n} \sum_{i=1}^n \E \Big[ \{ \mc_{\tau, k} \lambda_i \psi(r_{i,[k]})\}^2 \Big] \bigg) + \tau^2 \sum_{k=1}^p \delta_0^2(k) \E (\mc_{\tau, k}^2) + \mo(1),
% \end{equation*}
% where $\Vec{\mathfrak{b}} = (\mathfrak{b}_1, \ldots, \mathfrak{b}_p)^\T$.
under Assumptions \textbf{O1-O7} and \textbf{P1-P4},
\begin{equation*}
    \begin{aligned}
        \Big(\frac{p}{n} \Big)^2 \E ( \| \hat \bdelta - \bdelta_0 + \hat\bfw - \bfw_0\|^2)
        =& \frac{1}{n} \sum_{k=1}^p \bigg(\frac{1}{n} \sum_{i=1}^n \E \Big[ \{ \mc_{\tau, k} \lambda_i \psi(r_{i,[k]})\}^2 \Big] \bigg) \\
        &+ \tau^2 \sum_{k=1}^p \{ \delta_0(k) -  \hat w(k) + w_0(k) \}^2 \E (\mc_{\tau, k}^2) + \mo(B),
    \end{aligned}
\end{equation*}
where $B = \frac{1}{n} \sum_{k=1}^p \big(\frac{1}{n} \sum_{i=1}^n \E \big[ \{ \mc_{\tau, k} \lambda_i \psi(r_{i,[k]})\}^2 \big] \big) + \tau^2 \sum_{k=1}^p \{ \delta_0(k) -  \hat w(k) + w_0(k) \}^2 \E (\mc_{\tau, k}^2)$.

Note that $0 \leq \mc_{\tau, k} \leq \kappa/\tau$, $n^{-1} \sum_{i=1}^n \|\psi^2\|_{\infty} \leq C$ from Assumption \textbf{O3}, and $\|\bdelta_0\|^2 = \mO(1)$.
We have $B = \mO(1)$.
Therefore, we have
\begin{equation*}
    \begin{aligned}
        \Big(\frac{p}{n} \Big)^2 \E ( \| \hat \bdelta - \bdelta_0 + \hat\bfw - \bfw_0\|^2)
        =& \frac{1}{n} \sum_{k=1}^p \bigg(\frac{1}{n} \sum_{i=1}^n \E \Big[ \{ \mc_{\tau, k} \lambda_i \psi(r_{i,[k]})\}^2 \Big] \bigg) \\
        &+ \tau^2 \sum_{k=1}^p \{ \delta_0(k) -  \hat w(k) + w_0(k) \}^2 \E (\mc_{\tau, k}^2) + \mo(1).
    \end{aligned}
\end{equation*}

 \paragraph{On $\mc_{\tau, k}$ and $c_\tau$}~{}

 We now show that $\mc_{\tau, k}$ and $c_\tau$ are close to each other..

 \begin{proposition}%[{{\citet[Proposition 3.21]{el2018impact}}}]
    \label{prop:mctau_ctau_gap}
    We have, under Assumptions \textbf{O1-O7} and \textbf{P1-P3},
    \begin{equation*}
        \sup_{1 \leq k \leq p} |\mc_{\tau, k} - c_\tau| = \mO_{L_k} \bigg( \Big[\frac{\polyLog(n) n^\alpha}{\sqrt{n} \wedge n^{\me}} \Big] \vee \Big[\frac{\polyLog(n) n^{2 \alpha+1 / 2}}{(n^{1 / 2} \wedge n^{\me})^2}\Big] \vee \frac{\polyLog(n)}{n} \bigg) .
    \end{equation*}
    Of course, we also have $0 \leq c_\tau \leq p/(n\tau)$ and $0 \leq \mc_{\tau, k} \leq p/(n\tau)$.
 \end{proposition}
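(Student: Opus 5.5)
The plan is to compare the two resolvents $(\mathfrak{S}_k+\tau\I_{p-1})^{-1}$ and $(\bfS+\tau\I_p)^{-1}$ in two stages, working with a generic coordinate $k$ (the index $p$ plays no special role). First, the dimension is changed. Let $\bfS^{[k]}$ be the $p\times p$ matrix $n^{-1}\sum_i\psi'(r_{i,[k]})\bfx_i\bfx_i^\T$. This matrix has the same weights as $\mathfrak{S}_k$ but keeps the $k$th column. Since $\mathfrak{S}_k$ is the principal $(p-1)\times(p-1)$ submatrix of $\bfS^{[k]}$, the block inversion (Schur complement) formula gives
\begin{equation*}
\tr\{(\bfS^{[k]}+\tau\I)^{-1}\}-\tr\{(\mathfrak{S}_k+\tau\I)^{-1}\}=\frac{1+\bfu_k^\T(\mathfrak{S}_k+\tau\I)^{-2}\bfu_k}{\tau+\xi_n}.
\end{equation*}
Lemma \ref{lemma:xi_nonnegative} gives $\xi_n\ge 0$. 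Equation \eqref{eqn:prop311_iii} gives $\|(\mathfrak{S}_k+\tau\I)^{-1}\bfu_k\|^2=\mO_{L_k}(\polyLog(n))$. Together these bound the right-hand side by $\mO_{L_k}(\polyLog(n))/\tau$. Dividing by $n$ produces the $\polyLog(n)/n$ term in the statement, uniformly over $k$ after a union/H\"older step with high moments.

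Second, the weights are changed. We have
\begin{equation*}
(\bfS^{[k]}+\tau\I)^{-1}-(\bfS+\tau\I)^{-1}=(\bfS^{[k]}+\tau\I)^{-1}(\bfS-\bfS^{[k]})(\bfS+\tau\I)^{-1},
\end{equation*}
and $\bfS-\bfS^{[k]}=n^{-1}\bfX^\T\bfD\bfX$ with $\bfD=\mathrm{diag}\{\psi'(R_i)-\psi'(r_{i,[k]})\}$. Each resolvent has operator norm at most $\tau^{-1}$. So the normalized trace difference is at most
\begin{equation*}
\frac{p}{n\tau^2}\,\|\hat\bSigma\|_2\,\sup_i|\psi'(R_i)-\psi'(r_{i,[k]})|.
\end{equation*}
By Lemma 3.38 of \cite{el2018impact}, $\|\hat\bSigma\|_2=\mO_{L_k}(\polyLog(n))$. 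Since $\psi'$ is $Cn^\alpha$-Lipschitz, the supremum is at most $Cn^\alpha\sup_i|R_i-r_{i,[k]}|$. Theorem \ref{thm:delta_tildeb_bound} bounds $\sup_i|R_i-r_{i,[k]}|$ by the maximum of $\polyLog(n)/(\sqrt n\wedge n^{\me})$ and $\polyLog(n)n^{\alpha+1/2}/(n^{1/2}\wedge n^{\me})^2$. Multiplying by $n^\alpha$ yields exactly the first two terms of the stated rate.

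The uniformity in $k$ is the main obstacle, since the $\mO_{L_k}$ bounds above are per coordinate. I would handle it by noting that every bound holds in $L_m$ for every fixed $m$. A supremum over $p\asymp n$ indices then costs only a factor $p^{1/m}$, which is absorbed by choosing $m$ large: equivalently, $\E\sup_k|\cdot|^{m}\le\sum_k\E|\cdot|^m$. The $\polyLog$ factors from the concentration assumptions in Lemma 3.36/3.37 of \cite{el2018impact} are also what make this work. The H\"older step requires finite high moments of $\|\hat\bSigma\|_2$ and of $\sup_i|R_i-r_{i,[k]}|$, and both are available. Finally, the bounds $0\le c_\tau,\mc_{\tau,k}\le p/(n\tau)$ are immediate: $\bfS$ and $\mathfrak{S}_k$ are positive semidefinite because $\psi'\ge0$, so each resolvent has eigenvalues in $(0,\tau^{-1}]$. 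The trace is thus at most $p/\tau$ (respectively $(p-1)/\tau$), and dividing by $n$ gives the claim.
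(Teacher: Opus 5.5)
Your proposal is correct and is essentially the paper's proof with the two steps taken in the opposite order. The paper first deletes the last row and column of $\bfS$ via Lemma 3.40 of \cite{el2018impact}; that lemma's bound $(1+a/\tau)/(n\tau)$, with $a$ the corner entry, is exactly what your Schur-complement identity, $\xi_n\ge 0$ and \eqref{eqn:prop311_iii} give. It then switches the weights from $\psi'(R_i)$ to $\psi'(r_{i,[p]})$ in dimension $p-1$, using the same resolvent identity, $Cn^\alpha$-Lipschitz bound on $\psi'$, $\|\hat\bSigma\|_2=\mO_{L_k}(\polyLog(n))$ and Theorem~\ref{thm:delta_tildeb_bound} that you use in dimension $p$.

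The only soft spot is uniformity in $k$, which the paper's one-line justification shares. A union bound at a fixed moment order $m$ costs a factor $p^{1/m}\asymp n^{1/m}$, not $\polyLog(n)$. To get the rate exactly as stated you need $m\asymp\log n$, together with control of how the $L_m$ constants grow in $m$. Your dimension-change term, by contrast, needs no union bound at all, since $n^{-1}\sum_i\psi'(r_{i,[k]})x_i^2(k)\le\|\psi'\|_\infty\|\hat\bSigma\|_2$ for every $k$.
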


 \begin{proof}
    Recall that
    \begin{equation*}
        \bfS = \frac{1}{n} \sum_{i=1}^{n} \psi'(R_i) \bfx_i \bfx_i^\T.
    \end{equation*}
    Denote $\boldsymbol{\Gamma} = n^{-1} \sum_{i=1}^{n} \psi'(R_i) \bfv \bfv^\T$ and $a = n^{-1} \sum_{i=1}^{n} \psi'(R_i) x_i^2(p)$, we have
    \begin{equation*}
        \bfS = \left(\begin{array}{ll}
            \boldsymbol{\Gamma} & \mathbf{v} \\
            \mathbf{v} & a
            \end{array}\right) .
    \end{equation*}
    According to Lemma 3.40 in \cite{el2018impact}, we have, since $c_\tau = n^{-1} \tr \{ (\bfS + \tau \I_p)^{-1} \}$,
    \begin{equation*}
        \Big| c_\tau - \frac{1}{n} \tr \{ (\boldsymbol{\Gamma} + \tau \I_{p-1})^{-1} \} \Big| \leq \frac{1}{n} \frac{1 + a/\tau}{\tau}.
    \end{equation*}
    We also have
    \begin{equation*}
        a = \frac{1}{n} \sum_{i=1}^n \lambda_i^2 \cX_i^2(p) \psi'(R_i) \leq \polyLog(n) \frac{1}{n} \sum_{i=1}^n \lambda_i^2 \cX_i^2(p) = \mO_{L_k} (\polyLog(n) ).
    \end{equation*}
    Since $\psi'$ is $C n^\alpha$-Lipschitz and
    \begin{equation*}
        \sup_i |R_i - r_{i, [p]}| = \mO_{L_k}\bigg(\Big\{\frac{\polyLog(n)}{\sqrt{n} \wedge n^{\me}}\Big\} \vee \Big\{ \frac{\polyLog(n) n^{\alpha+1/2}}{(n^{1 / 2} \wedge n^{\me})^2} \Big\} \bigg),
    \end{equation*}
    we have
    \begin{equation*}
        \sup_i | \psi'(R_i) - \psi'(r_{i, [p]}) | = \mO_{L_k}\bigg(\Big\{\frac{\polyLog(n) n^\alpha}{\sqrt{n} \wedge n^{\me}}\Big\} \vee \Big\{ \frac{\polyLog(n) n^{2\alpha+1/2}}{(n^{1 / 2} \wedge n^{\me})^2} \Big\} \bigg),
    \end{equation*}
    Hence, using arguments similar to those in the proof of Lemma \ref{lemma:zeta_bound}, we have
    
    \begin{equation*}
        \Big| \frac{1}{n} \tr \{ (\boldsymbol{\Gamma} + \tau \I_{p-1})^{-1} \} - \frac{1}{n} \tr \{ (\mathfrak{S}_p + \tau \I_{p})^{-1} \} \Big| = \mO_{L_k} \bigg( \Big[\frac{\polyLog(n) n^\alpha}{\sqrt{n} \wedge n^{\me}} \Big] \vee \Big[\frac{\polyLog(n) n^{2 \alpha+1 / 2}}{(n^{1 / 2} \wedge n^{\me})^2}\Big] \bigg).
    \end{equation*}

    Since $\mc_{\tau, k} = n^{-1} \tr \{ (\mathfrak{S}_p + \tau \I_p)^{-1} \}$, the result follows immediately.

    We note that $p$ did not play a particular role in the proof and hence taking the sup over those indices only add a $\polyLog(n)$ term.
    Hence the result holds for $\sup_{1 \leq k \leq n} |\mc_{\tau, k} - c_\tau|$.
 \end{proof}

 We are now ready to prove the last proposition of this section, which will help us get the second equation of our System.

\begin{proposition}
    \label{prop:second_system}
    \begin{equation}
        \label{eq_system_2}
        \begin{aligned}
            \Big(\frac{p}{n} \Big)^2 \E ( \| \hat \bdelta - \bdelta_0 + \hat\bfw - \bfw_0\|^2) =& \frac{p}{n} \frac{1}{n} \sum_{i=1}^n \E \Big[ \{ c_{\tau} \lambda_i \psi(\prox(c_\tau \lambda_i^2 \rho) (\tilde r_{i,(i)}))\}^2 \Big] \\
            &+ \tau^2 \|\bdelta_0 + \bfw_0 - \hat \bfw\|^2 \E (c_{\tau}^2) + \mo(1).
        \end{aligned}
    \end{equation}
    Furthermore, when all $\lambda_i$'s are non-zero,
    \begin{equation*}
        \frac{1}{n} \sum_{i=1}^n \E \Big[ \{ c_{\tau} \lambda_i \psi(\prox(c_\tau \lambda_i^2 \rho) (\tilde r_{i,(i)}))\}^2 \Big] = \frac{1}{n} \sum_{i=1}^n \E \bigg( \frac{\{ \tilde r_{i,(i)} - \prox(c_\tau \lambda_i^2 \rho)(\tilde r_{i,(i)})\}^2}{\lambda_i^2} \bigg).
    \end{equation*}
\end{proposition}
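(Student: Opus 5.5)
Starting from the summed identity just derived, namely the display giving $(p/n)^2\E(\|\hat\bdelta-\bdelta_0+\hat\bfw-\bfw_0\|^2)$ in terms of $\mc_{\tau,k}$ and $r_{i,[k]}$, my plan is to replace every coordinate-dependent quantity by a single leave-one-observation-out quantity. Each replacement should cost only $\mo(1)$ after summing over $k$ and $i$.

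\textbf{Replacing $\mc_{\tau,k}$ by $c_\tau$.} By Proposition \ref{prop:mctau_ctau_gap}, $\sup_k|\mc_{\tau,k}-c_\tau|$ vanishes in $L_k$ under \textbf{P3}--\textbf{P4}, and both quantities lie in $[0,p/(n\tau)]$. In the second term I will write $\mc_{\tau,k}^2-c_\tau^2=(\mc_{\tau,k}-c_\tau)(\mc_{\tau,k}+c_\tau)$. Since $\sum_k\{\delta_0(k)-\hat w(k)+w_0(k)\}^2=\|\bdelta_0+\bfw_0-\hat\bfw\|^2$ is $\mO(1)$ conditional on $\hat\bfw$ (using B7 and Corollary \ref{thm:step_1}), the second term becomes $\tau^2\|\bdelta_0+\bfw_0-\hat\bfw\|^2\E(c_\tau^2)+\mo(1)$. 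In the first term $\psi$ is bounded and $\E\lambda_i^2=1$, so the same replacement costs $\mo(1)$ uniformly in $k$. Averaging $p$ terms with weight $1/n$ then produces the factor $p/n$.

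\textbf{Replacing $\psi(r_{i,[k]})$ by $\psi(\prox(c_\tau\lambda_i^2\rho)(\tilde r_{i,(i)}))$.} I will chain three bounds. First, Theorem \ref{thm:delta_tildeb_bound} gives $\sup_i|R_i-r_{i,[k]}|\to0$ in $L_k$ (uniformly in $k$ up to a $\polyLog$ factor), so $|\psi(r_{i,[k]})-\psi(R_i)|\le\|\psi'\|_\infty|R_i-r_{i,[k]}|$. Second, \eqref{eq:el_27} in Theorem \ref{th:l1oo} gives $R_i\approx\prox(c_i\rho)(\tilde r_{i,(i)})$. Third, I will pass from $c_i$ to $\lambda_i^2c_\tau$ using Corollary \ref{cor:ci_ctau_gap}, together with Lipschitz continuity of $c\mapsto\prox(c\rho)(x)$ when $\psi$ is bounded. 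Since $\prox(c\rho)(x)=x-c\psi(\prox(c\rho)(x))$, the derivative in $c$ is bounded by $\|\psi\|_\infty$. Each error is $\mo_{L_k}(1)$ and is multiplied by bounded quantities, specifically $c_\tau\le p/(n\tau)$ and $\lambda_i$ with a bounded fourth moment. Hölder's inequality then makes the accumulated error $\mo(1)$. Summing over $k$ yields \eqref{eq_system_2}.

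\textbf{The second identity.} This is the proximal relation applied with $c=c_\tau\lambda_i^2$:
\[
x-\prox(c\rho)(x)=c\,\psi(\prox(c\rho)(x)).
\]
Dividing by $\lambda_i\neq0$ gives $\{x-\prox(c_\tau\lambda_i^2\rho)(x)\}/\lambda_i=c_\tau\lambda_i\psi(\cdot)$. Squaring and taking expectations at $x=\tilde r_{i,(i)}$ gives the identity.

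\textbf{Main obstacle.} I expect the hardest step to be the uniformity of the approximations over $k$, since $r_{i,[k]}$ is defined through a different leave-one-predictor-out fit for each coordinate. The rates from Theorem \ref{thm:delta_tildeb_bound} are stated for coordinate $p$. I will argue, as the paper notes, that the index plays no special role, and that a union bound over $k\le p$ only adds a $\polyLog(n)$ factor because the bounds hold in every $L_k$. The same reasoning must control the cross term $\E[(\hat\delta_k-\mathfrak b_k)(\mathfrak b_k-\delta_0(k)+\cdots)]$ already absorbed in the preceding display, so I will take that display as given.
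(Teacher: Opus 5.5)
Your proposal is correct and follows the paper's proof essentially step for step. You replace $\mc_{\tau,k}$ by $c_\tau$ via Proposition \ref{prop:mctau_ctau_gap}. You then chain $\psi(r_{i,[k]})\to\psi(R_i)\to\psi\{\prox(c_i\rho)(\tilde r_{i,(i)})\}\to\psi\{\prox(c_\tau\lambda_i^2\rho)(\tilde r_{i,(i)})\}$ using Theorem \ref{thm:delta_tildeb_bound}, \eqref{eq:el_27}, and Corollary \ref{cor:ci_ctau_gap} together with Lemma \ref{lemma:prox_c_derivative}. You finish with $x=\prox(c\rho)(x)+c\,\psi\{\prox(c\rho)(x)\}$.

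The differences are cosmetic. You swap the order of the first two replacements. You justify boundedness of $\|\bbeta_0-\hat\bfw\|$ via \textbf{B7} and Corollary \ref{thm:step_1}, whereas the paper cites \textbf{P3}. Finally, since the $k$-sum is only an average, bounds uniform in $k$ suffice, so the union-bound concern you raise is unnecessary.
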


\begin{proof}
    In light of Proposition \ref{prop:mctau_ctau_gap} and Assumption \textbf{P3} which guarantees that $\|\bdelta_0\|^2$ is uniformly bounded in $p$ and $n$, we have
    \begin{equation*}
        \sum_{k=1}^p \{ \delta_0(k) -  \hat w(k) + w_0(k) \}^2 \E (\mc_{\tau, k}^2) =  \|\bdelta_0 + \bfw_0 - \hat \bfw\|^2 \E (c_{\tau}^2) + \mo(1).
    \end{equation*}

    Using Theorem \ref{thm:delta_tildeb_bound} and the bound on $\|\psi'\|_\infty$ in Assumption \textbf{O3},  we have
    \begin{equation*}
        \frac{1}{p} \sum_{k=1}^p \E \Big[ \{ \mc_{\tau, k} \lambda_i \psi(r_{i,[k]})\}^2 \Big] = \frac{1}{p} \sum_{k=1}^p \E \Big[ \{ \mc_{\tau, k} \lambda_i \psi(R_i)\}^2 \Big] + \mo(1).
    \end{equation*}
    With the help of Proposition \ref{prop:mctau_ctau_gap}, we have
    \begin{equation*}
        \frac{1}{p} \sum_{k=1}^p \E \Big[ \{ \mc_{\tau, k} \lambda_i \psi(R_i)\}^2 \Big] = \frac{1}{p} \sum_{k=1}^p \E \Big[ \{ c_{\tau} \lambda_i \psi(R_i)\}^2 \Big] + \mo(1) = \E \Big[ \{ c_{\tau} \lambda_i \psi(R_i)\}^2 \Big] + \mo(1).
    \end{equation*}
    In light of Equation \eqref{eq:el_27}, we have
    \begin{equation*}
        \frac{1}{n} \sum_{i=1}^n \E \{ (c_\tau \lambda_i \psi(R_i))^2 \} = \frac{1}{n} \sum_{i=1}^n \E \Big( [ c_\tau \lambda_i \psi\{ \prox(c_i \rho) (\tilde r_{i,(i)}) \} ]^2 \Big) + \mo(1).
    \end{equation*}

    Lemma \ref{lemma:prox_c_derivative} gives the computation of the derivative of $\prox(c \rho)(x)$ with respect to $c$, which allows us to bound the error $|\psi\{ \prox(c_i \rho)(x)\} - \psi\{ \prox(c_\tau \lambda_i^2 \rho)(x)\}|$.
    In light of this, by using the fact that $\sup_i |c_i - \lambda_i^2 c_\tau| = \mO_{L_k} (n^{2\alpha - 1/2} \polyLog(n))$ in Corollary \ref{cor:ci_ctau_gap}, we can re-express the previous equation as
    \begin{equation*}
        \frac{1}{n} \sum_{i=1}^n \E \{ (c_\tau \lambda_i \psi(R_i))^2 \} = \frac{1}{n} \sum_{i=1}^n \E \Big( [ c_\tau \lambda_i \psi\{ \prox(c_\tau \lambda_i^2 \rho) (\tilde r_{i,(i)}) \} ]^2 \Big) + \mo(1).
    \end{equation*}

    When all $\lambda_i$'s are non-zero, we have
    \begin{equation*}
        \frac{1}{n} \sum_{i=1}^n \E \{ (c_\tau \lambda_i \psi(R_i))^2 \} = \frac{1}{n} \sum_{i=1}^n \E \bigg( \frac{[ c_\tau \lambda_i^2 \psi\{ \prox(c_\tau \lambda_i^2 \rho) (\tilde r_{i,(i)}) \} ]^2}{\lambda_i^2} \bigg) + \mo(1).
    \end{equation*}
    Finally, since by definition,
    \begin{equation*}
        \forall x \in \RR, \quad x = \prox(c \rho)(x) + c\psi(\prox(c \rho)(x)),
    \end{equation*}
    we have
    \begin{equation*}
        \frac{1}{n} \sum_{i=1}^n \E \bigg( \frac{[ c_\tau \lambda_i^2 \psi\{ \prox(c_\tau \lambda_i^2 \rho) (\tilde r_{i,(i)}) \} ]^2}{\lambda_i^2} \bigg) = \frac{1}{n} \sum_{i=1}^n \E \bigg( \frac{[ \tilde r_{i,(i)} - \prox(c_\tau \lambda_i^2 \rho) (\tilde r_{i,(i)}) ]^2}{\lambda_i^2} \bigg).
    \end{equation*}
\end{proof}

\begin{lemma}[{{\citet[Lemma 3.32]{el2018impact}}}]
    \label{lemma:prox_c_derivative}
    Suppose $x$ is a real and $\rho$ is twice differentiable and convex. Then, for $c>0$, we have
$$
\frac{\partial}{\partial c} \operatorname{prox}(c \rho)(x)=-\frac{\psi(\operatorname{prox}(c \rho)(x))}{1+c \psi'(\operatorname{prox}(c \rho)(x))},
$$
and
$$
\frac{\partial}{\partial c} \rho(\operatorname{prox}(c \rho)(x))=-\frac{\psi^2(\operatorname{prox}(c \rho)(x))}{1+c \psi'(\operatorname{prox}(c \rho)(x))} .
$$

In particular, at $x$ given $c \rightarrow \rho(\operatorname{prox}(c \rho)(x))$ is decreasing in $c$.
\end{lemma}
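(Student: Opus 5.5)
The plan is to prove Lemma~\ref{lemma:prox_c_derivative} directly from the first-order characterization of the proximal map. For $c>0$ and fixed $x$, set $p(c)=\prox(c\rho)(x)$. Because $\rho$ is convex and differentiable, $p(c)$ is the unique minimizer of $y\mapsto c\rho(y)+\tfrac12(x-y)^2$. It is therefore characterized by the identity
\begin{equation*}
G(c,p)\triangleq p+c\,\psi(p)-x=0,
\end{equation*}
which the paper already uses in the proof of Theorem~\ref{th:l1oo}. The whole argument then reduces to implicit differentiation of this identity in $c$.

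First, I check the hypotheses of the implicit function theorem. Since $\rho$ is twice differentiable, $G$ is $C^1$ in $(c,p)$. Its partial derivative in $p$ is
\begin{equation*}
\partial_p G=1+c\,\psi'(p).
\end{equation*}
Convexity gives $\psi'=\rho''\ge 0$, so $\partial_p G\ge 1>0$. Hence $c\mapsto p(c)$ is differentiable on $(0,\infty)$, and uniqueness of the minimizer ensures the local implicit solution coincides with $p(c)$. If $\psi'$ is only continuous, I note that the one-variable implicit function theorem still applies because $G$ is $C^1$.

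Next, I differentiate $p(c)+c\psi(p(c))=x$ with respect to $c$. This yields
\begin{equation*}
p'(c)\{1+c\psi'(p(c))\}+\psi(p(c))=0,
\end{equation*}
which is exactly the first formula. For the second formula I use the chain rule:
\begin{equation*}
\frac{\partial}{\partial c}\rho(p(c))=\psi(p(c))\,p'(c)=-\frac{\psi^2(p(c))}{1+c\psi'(p(c))}.
\end{equation*}

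Finally, monotonicity follows because the numerator $\psi^2\ge0$ and the denominator is at least $1$. So the derivative is nonpositive, and $c\mapsto\rho(\prox(c\rho)(x))$ is nonincreasing.

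The only delicate point is justifying differentiability of $p(c)$, and this is settled by the positivity of $1+c\psi'$. No probabilistic input is needed.
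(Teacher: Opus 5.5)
Your proof is correct and takes the same approach as the paper. The paper gives no proof and imports the result from Lemma 3.32 of El Karoui (2018); the standard argument there is what you do: implicitly differentiate $p(c)+c\,\psi(p(c))=x$ in $c$, using $1+c\,\psi'\geq 1$, then apply the chain rule.

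One minor caveat: twice differentiability of $\rho$ does not make $\psi'$ continuous, so $G$ need not be $C^1$ as you assert (your fallback sentence about "$\psi'$ only continuous" does not address this). It is harmless here, because $\psi'$ is Lipschitz under the paper's assumptions. In general you can also argue directly with difference quotients, using that $\psi$ is nondecreasing to get continuity of $c\mapsto p(c)$.
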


\subsection[\appendixname~\thesubsection]{Last steps of the proof} \label{sec_p3}

\subsubsection{On the asymptotic behavior of \texorpdfstring{$\tilde r_{i,(i)}$}{rii}}

We have the following result.

\begin{lemma}
    \label{lemma:rii_asymptotic}
    Under Assumptions \textbf{O1-O7} and \textbf{P1-P4}, as $n$ and $p$ tend to infinity, $\tilde r_{i,(i)} = \epsilon_i + \bfx_i^\T (\bfw_0 - \hat \bfw) - \bfx_i^\T (\hat \bdelta_{(i)} - \bdelta_0)$ behaves like $\epsilon_i + \lambda_i \sqrt{\E (\| \hat \bbeta - \bbeta_0\|^2)} Z_i$.
    where $Z_i \sim N(0,1)$ is independent of $\epsilon_i$ and $\lambda_i$, in the sense of weak convergence.

    Furthermore, if $i \neq j$, $\tilde r_{i,(i)}$ and $\tilde r_{j,(j)}$ are asymptotically (pairwise) independent.
    The same is true for $(\tilde r_{i,(i)}, \lambda_i)$ and $(\tilde r_{j,(j)}, \lambda_j)$.
\end{lemma}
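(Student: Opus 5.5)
The plan is to split $\tilde r_{i,(i)}$ into the independent noise $\epsilon_i$ plus a linear form in $\bfx_i$ whose coefficient vector is independent of $\bfx_i$, and then prove a conditional CLT for that linear form. Write
\[
\tilde r_{i,(i)} = \epsilon_i + \lambda_i \cX_i^\T \bfb_{(i)}, \qquad \bfb_{(i)} := (\bfw_0 - \hat\bfw) - (\hat\bdelta_{(i)} - \bdelta_0) = -(\hat\bdelta_{(i)} + \hat\bfw - \bbeta_0).
\]
The vector $\bfb_{(i)}$ depends only on the source sample and on $\{(\bfx_j,\epsilon_j)\}_{j\neq i}$. Hence, conditionally on everything except $(\cX_i,\lambda_i,\epsilon_i)$, it is a fixed vector, and $\cX_i$, $\lambda_i$, $\epsilon_i$ are mutually independent and independent of it. The target is the limit law of $\cX_i^\T \bfb_{(i)}$ given $\bfb_{(i)}$.

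\textbf{Step 1: conditional CLT.} Since $\cX_i$ has independent mean-zero, unit-variance entries with bounded moments, $\cX_i^\T\bfb_{(i)}/\|\bfb_{(i)}\|$ is approximately $N(0,1)$ whenever $\|\bfb_{(i)}\|_\infty/\|\bfb_{(i)}\| \to 0$. Lyapunov's condition $\sum_k |b(k)|^3/\|\bfb\|^3 \le \|\bfb\|_\infty/\|\bfb\| \to 0$ suffices.

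The coordinatewise control comes from Section \ref{sec_p2}. There, $\hat\delta(k)-\delta_0(k)+\hat w(k)-w_0(k)$ is approximated by $\mathfrak b_k - \delta_0(k)+\hat w(k)-w_0(k)$, which is $\mO_{L_k}(\polyLog(n)/(\sqrt n\wedge n^{\me}))$ by \eqref{eq:wbias}, with the remainder controlled by Theorem \ref{thm:delta_tildeb_bound}. Lemma \ref{lem:w-coordinate-rate} controls the $\hat\bfw$ part. Passing from $\hat\bdelta$ to $\hat\bdelta_{(i)}$ uses Theorem \ref{th:l1oo}, together with $\|\boldeta_i\|_\infty$ being small: $\boldeta_i$ is $n^{-1}(\bfS_i+\tau\I)^{-1}\bfx_i$ times a bounded factor. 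A union bound over $k$ costs only a $\polyLog(n)$ factor, so $\|\bfb_{(i)}\|_\infty\to0$ in probability. Because $\me>1/3$, this is also fine in the case where the $\bbeta_0$ entries dominate.

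\textbf{Step 2: deterministic norm.} I need $\|\bfb_{(i)}\|^2 \to \E\|\hat\bbeta-\bbeta_0\|^2$ in probability. Theorem \ref{th:l1oo} gives $\|\hat\bdelta-\hat\bdelta_{(i)}\|\le \|\hat\bdelta-\tilde\bdelta_i\|+\|\boldeta_i\| = \mO_{L_k}(n^{-1/2}\polyLog(n))$. Combined with the moment bounds of Lemma \ref{lemma:delta_moment_bounds}, this shows $\|\bfb_{(i)}\|^2-\|\hat\bbeta-\bbeta_0\|^2\to0$ in $L_1$. Proposition \ref{prop:delta_variance} gives $\var(\|\hat\bbeta-\bbeta_0\|^2)\to0$, conditional on $\hat\bfw$ as in the theorem. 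Together with Step 1, this gives that $\lambda_i\cX_i^\T\bfb_{(i)}$ converges to $\lambda_i r Z_i$ in law, with $r^2=\lim\E\|\hat\bbeta-\bbeta_0\|^2$ and $Z_i$ independent of $(\epsilon_i,\lambda_i)$. Adding the independent $\epsilon_i$ via characteristic functions yields the first claim.

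\textbf{Step 3: pairwise independence.} For $i\ne j$, introduce the leave-two-out estimator $\hat\bdelta_{(ij)}$, which drops both observations. Applying the argument of Theorem \ref{th:l1oo} to the leave-$i$ problem shows $\sup|\bfx_i^\T(\hat\bdelta_{(i)}-\hat\bdelta_{(ij)})| = \mO_{L_k}(n^{\alpha-1/2}\polyLog(n))$, and symmetrically for $j$. So, up to $o_P(1)$, $(\tilde r_{i,(i)},\tilde r_{j,(j)})$ equals $(\epsilon_i+\lambda_i\cX_i^\T\bfb_{(ij)},\ \epsilon_j+\lambda_j\cX_j^\T\bfb_{(ij)})$ with a common $\bfb_{(ij)}$ independent of both observations. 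Conditionally on $\bfb_{(ij)}$, the two entries are independent. Each converges to the same deterministic-variance limit by Steps 1–2, so the joint conditional characteristic function factorizes in the limit, and dominated convergence removes the conditioning. The same factorization holds for the pairs $(\tilde r_{i,(i)},\lambda_i)$ and $(\tilde r_{j,(j)},\lambda_j)$.

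The main obstacle is Step 1's sup-norm bound on $\bfb_{(i)}$. The coordinatewise results of Section \ref{sec_p2} are stated for $\hat\bdelta$ rather than $\hat\bdelta_{(i)}$, so they must be transferred uniformly over $k$. I would handle this via $\|\boldeta_i\|_\infty\le n^{-1}\|\psi\|_\infty\max_k|e_k^\T(\bfS_i+\tau\I)^{-1}\bfx_i|$, using concentration of the linear form in $\cX_i$ as in Lemma \ref{lemma:3.7_2018}. If that proves delicate, the fallback is to avoid sup norms altogether: use Gaussian-comparison (Lindeberg) for the linear form, which only needs $\sum_k b(k)^4/\|\bfb\|^4\to0$, and that follows from the $L_k$ coordinate bounds in expectation.
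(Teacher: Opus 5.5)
Your proposal is correct and follows the paper's proof essentially step for step. It uses a conditional Lindeberg-type CLT for $\cX_i^\T(\hat\bdelta_{(i)}-\bdelta_0+\hat\bfw-\bfw_0)$ fed by the coordinatewise $\mO_{L_k}(\polyLog(n)/(n^{1/2}\wedge n^{\me}))$ bounds of Section~\ref{sec_p2}, concentration of the norm via Proposition~\ref{prop:delta_variance} plus leave-one-out closeness and Slutsky, and a leave-two-out conditional characteristic-function factorization for pairwise independence. The differences are minor. You verify Lyapunov through $\|\bfb_{(i)}\|_\infty/\|\bfb_{(i)}\|\to 0$ with a union bound, whereas the paper uses the unnormalized condition $\sum_k \E|a_{n,p}(k)|^3=\mo(1)$, which is where $\me>1/3$ enters. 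Also, the step you flag as the main obstacle is immediate, since $\|\hat\bdelta_{(i)}-\hat\bdelta\|_\infty\le\|\hat\bdelta_{(i)}-\hat\bdelta\|=\mO_{L_k}(n^{-1/2}\polyLog(n))$ by your own Step~2.
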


\begin{proof}
    % Using the fact that $\bfx_i^\T (\bfw_0 - \hat \bfw) = \mo_p(1)$, we have $\tilde r_{i,(i)} = \epsilon_i - \lambda_i \cX_i^\T (\hat \bdelta_{(i)} - \bdelta_0) + \mo_p(1)$.

    \textbf{First part.}

    In this part, we will show that $\cX_i^\T (\hat \bdelta_{(i)} - \bdelta_0 + \hat \bfw - \bfw_0)$ is asymptotically $N(0, \E (\| \hat \bbeta - \bbeta_0\|^2))$.
    Recall that $\hat \bdelta_{(i)} $ is independent of $\cX_i$ and that $\E (\cX_i) = \bzero$, $\cov (\cX_i) = \I$ and that, for any finite $k$, the first $k$ moments of its entries are bounded uniformly in $n$.

    We have shown that in Proposition \ref{prop:delta_variance} that $\var (\| \hat \bbeta - \bbeta_0\|^2) \to 0$.
    In light of Lemma \ref{lemma:delta_moment_bounds}, we also know that $\E (\| \hat \bbeta - \bbeta_0\|^2)$ is uniformly bounded.
    Furthermore, in the proof of Proposition \ref{prop:delta_variance}, we showed that $\E (\|\hat \bdelta + \hat \bfw - \bdelta_0 - \bfw_0 \|^2 - \|\hat\bdelta_{(i)} + \hat \bfw - \bdelta_0 - \bfw_0\|^2) \to 0$.

    We use a simple generalization of the standard Lindeberg-Feller theorem (see e.g. \cite{stroock2010probability}).
    Indeed, if $a_{n,p}(k)$ are random variables with $\sqrt{\sum_{k=1}^{p} a_{n,p}(k)^2} = A_n$, $\E(A_n^2)$ remains bounded in $n$, and $a_{n,p}(k)$'s are independent of $\cX_i$, we see that:
    a) if $\bfz \sim N(\bzero, \I_p)$, independent of $a_{n,p}(k)$, then $\bfa_{n,p}^\T \bfz \sim A_n \mathrm{N}$ where $\mathrm{N} \sim  N(0, 1)$ and independent of $A_n$ (conditionally and unconditionally on $\bfa_{n,p}$);
    b) Theorem 2.1.5 and its proof in \cite{stroock2010probability} hold provided that $\sum_{i=1}^n \E(|a_{n,p}(k)|^3) = \mo(1)$.
    The proof simply needs to be started conditionally on $a_{n,p}(k)$, and the final moment bounds are then taken unconditionally.
    This very mild generalization gives, if $\phi$ is a $C^3$ function, with bounded $2$nd and $3$rd derivatives,
    \begin{equation*}
        \begin{aligned}
        & \forall \epsilon>0,\big|\E\{\phi(\bfa_{n, p}^{\T} \mathcal{X}_i)\}-\E\{\phi(A_n \mathrm{N})\} \big| \\
        & \qquad \leq K \bigg[\epsilon\|\phi^{(3)}\|_{\infty} \E\Big\{\sum_{k=1}^p a_{n, p}(k)^2\Big\} + \frac{\|\phi^{(2)}\|_{\infty}}{\epsilon} \sum_{k=1}^p \E(|a_{n, p}(k)|^3) \bigg],
        \end{aligned}
    \end{equation*}
    where $K$ is a constant depending on the second and third absolute moments of the entries of $\cX_i$.
    It is therefore independent of $n$ and $p$ under our assumptions on $\cX_i$.

    We can apply this result to $a_{n,p}(k) = \hat \delta_{(i)}(k) - \delta_0(k) + \hat w(k) - w_0(k)$.
    Recall that we have shown that
    \begin{equation*}
        \hat \delta(k) - \delta_0(k) + \hat w(k) - w_0(k) = \mO_{L_k} \Big( \frac{\polyLog(n)}{n^{1 / 2} \wedge n^{\me}} \Big)
    \end{equation*}
    for each coordinate $k$. The same arguments apply also to $\hat \delta_{(i)} (k)$, the $k$th coordinate of $\hat \bdelta_{(i)}$.
    We have
    \begin{equation*}
        \E (|\hat \delta_{(i)}(k) - \delta_0(k) + \hat w(k) - w_0(k)|^3) = \mO \Big\{ \frac{\polyLog(n)}{(n^{1 / 2} \wedge n^{\me})^3} \Big\}
    \end{equation*}
    Provided that $\me > 1/3$, we have
    \begin{equation*}
        \E \Big( \sum_{k=1}^p |\hat \delta_{(i)}(k) - \delta_0(k) + \hat w(k) - w_0(k)|^3 \Big) = \mO \Big\{ \frac{\polyLog(n) n}{(n^{1 / 2} \wedge n^{\me})^3} \Big\} = \mo(1).
    \end{equation*}
    Therefore, in connection with Corollary 2.1.9 in \cite{stroock2010probability}, $\cX_i^\T (\hat \bdelta_{(i)} - \bdelta_0 + \hat \bfw - \bfw_0)$ behaves asymptotically like $\|\hat \bdelta_{(i)} - \bdelta_0 + \hat \bfw - \bfw_0\| \mathrm{N}$ in the sense of weak convergence.

    Since $\|\hat \bdelta_{(i)} - \bdelta_0 + \hat \bfw - \bfw_0\| - \E \|\hat \bdelta_{(i)} - \bdelta_0 + \hat \bfw - \bfw_0\| \to 0$ in probability and the expectations are uniformly bounded, Slutsky's lemma gives that
    \begin{equation*}
        \cX_i^\T (\hat \bdelta_{(i)} - \bdelta_0 + \hat \bfw - \bfw_0) \text{ behaves like } \E (\|\hat \bdelta_{(i)} - \bdelta_0 + \hat \bfw - \bfw_0\|) \mathrm{N}
    \end{equation*}
    asymptotically in the sense of weak convergence.
    Using the fact that $\E (\|\hat \bdelta - \bdelta_0 + \hat \bfw - \bfw_0\|^2 - \|\hat\bdelta_{(i)} - \bdelta_0 + \hat \bfw - \bfw_0\|^2) \to 0$, another application of Slutsky's lemma yields
    \begin{equation*}
        \cX_i^\T (\hat \bdelta_{(i)} - \bdelta_0 + \hat \bfw - \bfw_0) \text{ behaves like } \E (\|\hat \bdelta - \bdelta_0 + \hat \bfw - \bfw_0\|) \mathrm{N}
    \end{equation*}
    in the sense of weak convergence.

    We note that the same reasoning applies when replacing $a_{n,p} (k) = \hat \delta_{(i)}(k) - \delta_0(k) + \hat w(k) - w_0(k)$ by $\tilde a_{n,p} (k) = \lambda_i \{\hat \delta_{(i)}(k) - \delta_0(k) + \hat w(k) - w_0(k)\}$, provided that $\lambda_i$ has $3$ moments.
    It shows that
    \begin{equation*}
        \cX_i^\T (\hat \bdelta_{(i)} - \bdelta_0 + \hat \bfw - \bfw_0) \text{ behaves like } \lambda_i \E (\|\hat \bbeta - \bbeta_0\|) \mathrm{N}.
    \end{equation*}
    This shows the first part of the lemma, since $\E (\|\hat \bbeta - \bbeta_0\|) = \sqrt{\E (\|\hat \bbeta - \bbeta_0\|^2)} + \mo(1)$ by Proposition \ref{prop:delta_variance}.

    \textbf{Second part.}

    For the second part, we use a leave-two-out approach. More precisely, we use the approximation
    \[
        \tilde r_{i,(i)} = \epsilon_i + \bfx_i^\T (\bfw_0 - \hat \bfw) - \bfx_i^\T (\hat \bdelta_{(i)} - \bdelta_0)
        = \epsilon_i + \bfx_i^\T (\bfw_0 - \hat \bfw) - \bfx_i^\T (\hat \bdelta_{(ij)} - \bdelta_0) + \mo_{L_k} (1),
    \]
    and similarly for $\tilde r_{j,(j)}$, which follows from Theorem \ref{th:l1oo}.
    Here $\hat \bdelta_{(ij)}$ is computed by solving Problem \eqref{eq:opt_ori} without $(\bfx_i, y_i)$ and $(\bfx_j, y_j)$.
    It is clear that $\tilde r_{i,(i)}$ and $\tilde r_{j,(j)}$ are asymptotically independent conditional on $\bfX_{(ij)}$, i.e., the design matrix without the $i$-th and $j$-th rows.

    Similarly, we have
    \begin{equation*}
        \E \Big( \sum_{k=1}^p |\hat \delta_{(ij)}(k) - \delta_0(k) + \hat w(k) - w_0(k)|^3 \Big) = \mO \Big\{ \frac{\polyLog(n) n}{(n^{1 / 2} \wedge n^{\me})^3} \Big\} = \mo(1).
    \end{equation*}
    Therefore, we also have
    \begin{equation*}
        \sum_{k=1}^p |\hat \delta_{(ij)}(k) - \delta_0(k) + \hat w(k) - w_0(k)|^3 = \mo_p(1).
    \end{equation*}
    Note that $\hat \bdelta_{(ij)}$ depends only on $\{\bfX_{(ij)}, \bepsilon_{(ij)}\}$.
    We call $P_{(ij)}$ the joint probability measure $P_{(ij)} = \prod_{k \neq(i, j)} P_{\bfx_k, \epsilon_k}$, i.e., probability computed with respect to all our random variables except $(\bfx_i, \epsilon_i)$ and $(\bfx_j, \epsilon_j)$.

    So we have found $E_{(ij)}^n$, which depends only on $(\bfX_{(ij)}, \bepsilon_{(ij)})$, such that $P_{(ij)} (E_{(ij)}^n) \to 1$ and $\sum_{k=1}^p |\hat \delta_{(ij)}(k) - \delta_0(k) + \hat w(k) - w_0(k)|^3 = \mo(1)$ when $(\bfX_{(ij)}, \{\epsilon_k\}_{k \neq (i,j)}) \in E_{(ij)}^n$.
    By treating $a_{n,p}$'s as deterministic quantities, the arguments we gave above then imply that, when $(\bfX_{(ij)}, \bepsilon_{(ij)}) \in E_{(ij)}^n$,
    \begin{equation*}
        \cX_i^\T (\hat \bdelta_{(ij)} - \bdelta_0 + \hat \bfw - \bfw_0) | (\bfX_{(ij)}, \bepsilon_{(ij)}) \text{ behaves like } (\|\hat \bdelta_{(ij)} - \bdelta_0\|) \mathrm{N}.
    \end{equation*}

    We now use characteristic function arguments.
    Let $\alpha_i = \cX_i^\T (\hat \bdelta_{(ij)} - \bdelta_0 + \hat \bfw - \bfw_0)$ and $\alpha_j = \cX_j^\T (\hat \bdelta_{(ij)} - \bdelta_0 + \hat \bfw - \bfw_0)$.

    Let $(w_i, w_j) \in \RR^2$ be fixed and
    \begin{equation}
        \chi(w_i, w_j)=\E\big\{e^{i(w_1 \alpha_i+w_2 \alpha_j)}\big\}=\E \Big\{e^{i(w_1 \alpha_i+w_2 \alpha_j)}\Big(1_{E_{(i j)}^n}+1_{[E_{(i j)}^n]^c}\Big)\Big\} .
    \end{equation}
    Since $P (E_{(ij)}^n) = P_{(ij)} (E_{(ij)}^n) \to 1$, we can just focus on $\E \{e^{i(w_1 \alpha_i+w_2 \alpha_j)} 1_{E_{(i j)}^n} \}$, since the modulus of the functions we are integrating is bounded by $1$.

    Now, we have
    \begin{equation*}
        \E \Big\{e^{i(w_1 \alpha_i+w_2 \alpha_j)} 1_{E_{(i j)}^n} \Big\} = \E \Big[1_{E_{(i j)}^n} \E \Big\{e^{i(w_1 \alpha_i+w_2 \alpha_j)} | \bfX_{(ij)}, \bepsilon_{(ij)} \Big\} \Big],
    \end{equation*}
    since $1_{E_{(i j)}^n}$ is a deterministic function of $(\bfX_{(ij)}, \bepsilon_{(ij)})$.
    Independence of $\cX_i$ and $\cX_j$ gives
    \begin{equation*}
        \E \Big\{e^{i(w_1 \alpha_i+w_2 \alpha_j)} | \bfX_{(ij)}, \bepsilon_{(ij)} \Big\} = \E \Big(e^{i w_1 \alpha_i} | \bfX_{(ij)}, \bepsilon_{(ij)} \Big) \E \Big(e^{i w_2 \alpha_j} | \bfX_{(ij)}, \bepsilon_{(ij)} \Big).
    \end{equation*}
    Also, the conditional Gaussian approximation established above implies that
    \begin{equation*}
        1_{E_{(i j)}^n} \Big\{ \E \Big(e^{i(w_1 \alpha_i+w_2 \alpha_j)} | \bfX_{(ij)}, \bepsilon_{(ij)} \Big) - e^{-(w_1^2/2 + w_2^2/2) \| \hat \bdelta_{(ij)} - \bdelta_0\|^2} \Big\} \to 0
    \end{equation*}
    in $P_{(ij)}$-probability.

    So we conclude that
    \begin{equation*}
        \E \Big\{ 1_{E_{(i j)}^n} e^{i(w_1 \alpha_i+w_2 \alpha_j)} \Big\} - \E \Big\{ 1_{E_{(i j)}^n} e^{-(w_1^2/2 + w_2^2/2) \| \hat \bdelta_{(ij)} - \bdelta_0\|^2} \Big\} \to 0.
    \end{equation*}

    Since $P(E_{(ij)}^n) \to 1$ and $\| \hat \bdelta_{(ij)} - \bdelta_0\|^2$ is asymptotically deterministic by arguments similar to those in the proof of Proposition \ref{prop:delta_variance}, we have
    \begin{equation*}
        \E \Big\{ 1_{E_{(i j)}^n} e^{-(w_1^2/2 + w_2^2/2) \| \hat \bdelta_{(ij)} - \bdelta_0\|^2} \Big\} - e^{-(w_1^2/2 + w_2^2/2) \E (\| \hat \bdelta_{(ij)} - \bdelta_0\|^2)} \to 0.
    \end{equation*}

    Therefore,
    \begin{equation*}
        \E\big\{e^{i(w_1 \alpha_i+w_2 \alpha_j)}\big\} - \E \big( e^{iw_1 \alpha_i} \big) \E \big( e^{iw_2 \alpha_j} \big) \to 0.
    \end{equation*}
    This shows that $\alpha_i$ and $\alpha_j$ are asymptotically independent.
    It easily follows that $\tilde r_{i,(i)}$ and $\tilde r_{j,(j)}$ are asymptotically independent.

    The same reasoning applies to $(\tilde r_{i,(i)}, \lambda_i)$ and $(\tilde r_{j,(j)}, \lambda_j)$, since $\hat \bdelta_{(ij)}$ is independent of $\lambda_i$ and $\lambda_j$ under Assumption \textbf{O6}.

\end{proof}

\subsubsection{On the asymptotic behavior of \texorpdfstring{$c_\tau$}{ctau}}

We now show that $c_\tau = n^{-1} \tr\{ (\bfS + \tau \I_p)^{-1} \}$ is asymptotically deterministic.
This require several steps.

\begin{lemma}%{\citet[Lemma 3.24]{el2018impact}}
    \label{lemma:gn_equicontinuity}
    We work under Assumptions \textbf{O1-O7}, \textbf{P1-P4} and \textbf{F2-F4}.
    Consider the random function
    \begin{equation*}
        g_n(x) = \frac{1}{n} \sum_{i=1}^n \frac{1}{1 + x \lambda_i^2 \psi' \{\prox(x \lambda_i^2 \rho) (\tilde r_{i,(i)})\}}, \text{ defined for } x \geq 0.
    \end{equation*}

    Let $B>0$ be in $\RR_+$.
    We have, for any $(x, y) \in \RR_+^2$, and $0 \leq x \leq B$, $0 \leq y \leq B$,
    \begin{equation*}
        \sup_{(x,y): |x-y| \leq \eta, 0 \leq x \leq B, 0 \leq y \leq B} |g_n(x) - g_n(y)| \leq \eta \frac{1}{n} \sum_{i=1}^n (\lambda_i^2 \| \psi'\|_\infty + B \mathrm{L}(n) \lambda_i^4 \|\psi\|_\infty).
    \end{equation*}
    In particular, under Assumption \textbf{P2} and \textbf{F3-F4}, we have for $\mC$ a constant independent of $n$ and $p$,
    \begin{equation}
        \label{eq_bound_gn}
        {\P}^* \Big( \sup_{(x,y): |x-y| \leq \eta, 0 \leq x \leq B, 0 \leq y \leq B} |g_n(x) - g_n(y)| > \delta \Big) \leq \frac{\eta}{\delta} \mC .
    \end{equation}
    Hence, $g_n$ is stochastic equicontinuous on $[0, B]$ for any $B>0$ given.

\end{lemma}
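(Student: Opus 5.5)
We prove a deterministic Lipschitz bound for each summand of $g_n$ and then average. Fix $i$ and $x\ge0$. Write $c=x\lambda_i^2$ and $p(c)=\prox(c\rho)(\tilde r_{i,(i)})$. Define
\[
h_i(c)=\frac{1}{1+c\,\psi'\{p(c)\}},
\]
so that $g_n(x)=n^{-1}\sum_i h_i(x\lambda_i^2)$. Since $\rho$ is convex and twice differentiable, $\psi'\ge0$, so the denominator is at least $1$ and $h_i$ takes values in $(0,1]$. The plan is to differentiate in $x$ and bound the derivative uniformly over $x\in[0,B]$. The mean value theorem then gives the first inequality.

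By the chain rule, $\frac{d}{dx}h_i(x\lambda_i^2)=\lambda_i^2 h_i'(c)$, and
\[
h_i'(c)=-h_i(c)^2\Big[\psi'\{p(c)\}+c\,\psi''\{p(c)\}\,p'(c)\Big],
\]
where $\psi''$ is understood a.e., or we use difference quotients with the Lipschitz constant $\mathrm{L}(n)$ of $\psi'$. Lemma~\ref{lemma:prox_c_derivative} gives
\[
p'(c)=-\frac{\psi\{p(c)\}}{1+c\,\psi'\{p(c)\}},
\]
so $|p'(c)|\le\|\psi\|_\infty$.

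Using $h_i^2\le 1$ and $c\le B\lambda_i^2$, we get
\[
\Big|\tfrac{d}{dx}h_i(x\lambda_i^2)\Big|\le \lambda_i^2\|\psi'\|_\infty+B\lambda_i^4\,\mathrm{L}(n)\|\psi\|_\infty .
\]
To avoid needing $\psi''$ to exist, we can write the increment directly instead:
\[
|h_i(c_1)-h_i(c_2)|\le |c_1\psi'(p(c_1))-c_2\psi'(p(c_2))|.
\]
We then split this as
\[
|c_1-c_2|\,\|\psi'\|_\infty+c_2\,\mathrm{L}(n)\,|p(c_1)-p(c_2)|,
\]
and control $|p(c_1)-p(c_2)|\le\|\psi\|_\infty|c_1-c_2|$ by integrating the derivative formula. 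With $|c_1-c_2|=\lambda_i^2|x-y|\le\lambda_i^2\eta$, averaging over $i$ yields the first displayed bound. This step is the only one with real content, and it needs care only where $\psi'$ is merely Lipschitz.

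For \eqref{eq_bound_gn}, the right-hand side of the first bound does not depend on $(x,y)$. So the event in question is contained in
\[
\Big\{\eta\,n^{-1}\textstyle\sum_i(\lambda_i^2\|\psi'\|_\infty+B\mathrm{L}(n)\lambda_i^4\|\psi\|_\infty)>\delta\Big\}.
\]
We use outer probability ${\P}^*$ because the supremum may fail to be measurable.

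By Markov's inequality, the probability of this event is at most
\[
\frac{\eta}{\delta}\,n^{-1}\textstyle\sum_i\big(\E(\lambda_i^2)\|\psi'\|_\infty+B\,\mathrm{L}(n)\|\psi\|_\infty\E(\lambda_i^4)\big).
\]
Assumption \textbf{P2} gives $\|\psi'\|_\infty=\mO(1)$, and $\E(\lambda_i^2)=1$. Assumption \textbf{F2} makes $\mathrm{L}(n)\|\psi\|_\infty=\mO(1)$, and \textbf{F4} bounds $\E(\lambda_i^4)\le C$. This produces a constant $\mC$ depending only on $B$ and these bounds. Since the bound is $\mC\eta/\delta$, it tends to $0$ as $\eta\to0$ for every fixed $\delta$, uniformly in $n$. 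This is exactly stochastic equicontinuity of $g_n$ on $[0,B]$.
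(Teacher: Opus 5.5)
Your proposal is correct and follows essentially the same route as the paper: the bound $|1/(1+a)-1/(1+b)|\le|a-b|$ for $a,b\ge0$, the split into a $|x-y|\,\|\psi'\|_\infty$ term and a $y\,\mathrm{L}(n)\,|\prox(x\lambda_i^2\rho)(u)-\prox(y\lambda_i^2\rho)(u)|$ term, the bound $|\partial_c\prox(c\rho)(u)|\le\|\psi\|_\infty$ from Lemma~\ref{lemma:prox_c_derivative}, and then an $L_1$/Markov step. Your appeal to \textbf{F2} for $\mathrm{L}(n)\|\psi\|_\infty=\mO(1)$ is the assumption this last step actually needs; the paper's text cites only \textbf{P2} and \textbf{F3}--\textbf{F4} at that point.
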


We used the notation ${\P}^*$ above to denote outer probability and avoid a discussion of potential measure theoretic issues associated with taking a supremum over a noncountable collection of random variables (see e.g. \citet[Sect. 18.2]{vd1998asymptotic}).

\begin{proof}
    Consider the function defined for $x \geq 0$,
    \begin{equation*}
        h_u^{(i)} (x) = \frac{1}{1 + x \lambda_i^2 \psi' \{\prox(x \lambda_i^2 \rho) (u)\}} = \frac{\partial}{\partial u} \prox(x \lambda_i^2 \rho)(u).
    \end{equation*}
    The last equality comes from Lemma 3.33 from \citet{el2018impact}.

    Since $\psi'$ is non-negative,
    \begin{equation*}
        \forall u, \quad |h_u^{(i)}(x)-h_u^{(i)}(y)| \leq|x \lambda_i^2 \psi'(\prox(x \lambda_i^2 \rho)(u))-y \lambda_i^2 \psi'(\prox(y \lambda_i^2 \rho)(u))| \wedge 1 .
    \end{equation*}
    Thus, since $x, y \geq 0$, for all u,
    \begin{equation*}
        \begin{aligned}
            |h_u^{(i)}(x)-h_u^{(i)}(y)| \leq& \lambda_i^2|x-y| \psi'(\prox(x \lambda_i^2 \rho)(u))\\
            &+\lambda_i^2 y |\psi'(\prox(x \lambda_i^2 \rho)(u)) -\psi'(\prox(y \lambda_i^2 \rho)(u)) |.
        \end{aligned}
    \end{equation*}
    In particular, if $|x-y| \leq \eta$, and $x \vee y \leq B$, with $x, y \geq 0$, for all $u$,
    
    \begin{equation*}
        \begin{aligned}
            \sup_{y: |x-y| \leq \eta; x \vee y \leq B} \Big| h_u^{(i)}(x) - h_u^{(i)}(y) \Big| \leq& \lambda_i^2 \eta \psi' \{ \prox(x \lambda_i^2 \rho)(u) \} \\
            &+ B \lambda_i^2 \sup_{y: |x-y| \leq \eta, x \vee y \leq B} | \psi' \{ \prox(x \lambda_i^2 \rho)(u) \} - \psi' \{ \prox(y \lambda_i^2 \rho)(u) \} |.
        \end{aligned}
    \end{equation*}

    Under assumption \textbf{O3}, $\psi'$ is $\mathrm{L}(n)$-Lipschitz.
    Therefore, for $x_i = x \lambda_i^2$, $y_i = y \lambda_i^2 \geq 0$, we have
    \begin{equation*}
        \forall u, | \psi' \{ \prox(x_i \rho)(u) \} - \psi' \{ \prox(y_i \rho)(u) \} | \leq \mathrm{L}(n) |\prox(x_i \rho)(u) - \prox(y_i \rho)(u)|.
    \end{equation*}
    Recall that according to Lemma \ref{lemma:prox_c_derivative},
    \begin{equation*}
        \frac{\partial}{\partial x} \prox(x \rho)(u) = -\frac{\psi\{\prox(x \rho)(u)\}}{1 + x \psi'\{\prox(x \rho)(u)\}}.
    \end{equation*}
    Hence we have
    \begin{equation*}
        \sup_x \Big| \frac{\partial}{\partial x} \prox(x \rho)(u) \Big| \leq \|\psi\|_\infty.
    \end{equation*}
    We conclude that
    \begin{equation*}
        \forall u, \quad | \psi' \{ \prox(x_i \rho)(u) \} - \psi' \{ \prox(y_i \rho)(u) \} | \leq \{ \mathrm{L}(n) \|\psi\|_\infty |x_i - y_i| \}\wedge 2 \| \psi\|_\infty.
    \end{equation*}
    We therefore have, for $ x \vee y \leq B$ and $x, y \geq 0$,
    \begin{equation*}
        \forall u, \sup_{y: |x-y| \leq \eta} | h_u^{(i)}(x) - h_u^{(i)}(y) | \leq \lambda_i^2 \eta \psi' \{ \prox(x \lambda_i^2 \rho)(u) \} + B \lambda_i^4 \mathrm{L}(n) \|\psi\|_\infty \eta.
    \end{equation*}

    Thus, for $x,y \geq 0$,
    \begin{equation*}
        \forall u, \sup_{(x,y): |x-y| \leq \eta, x \vee y \leq B} | h_u^{(i)}(x) - h_u^{(i)}(y) | \leq \lambda_i^2 \eta \|\psi'\|_\infty + B \lambda_i^4 \mathrm{L}(n) \|\psi\|_\infty \eta.
    \end{equation*}
    Since the right-hand side does not depend on $u$, we have
    \begin{equation*}
        \sup_u \sup_{(x,y): |x-y| \leq \eta, x \vee y \leq B} | h_u^{(i)}(x) - h_u^{(i)}(y) | \leq \lambda_i^2 \eta \|\psi'\|_\infty + B \lambda_i^4 \mathrm{L}(n) \|\psi\|_\infty \eta.
    \end{equation*}
    Naturally, $g_n(x)$ can be written as
    \begin{equation*}
        g_n(x) = \frac{1}{n} \sum_{i=1}^n h_{\tilde r_{i,(i)}}^{(i)}(x).
    \end{equation*}
    Therefore, for any $x,y \geq 0$,
    \begin{equation*}
        | g_n(x) - g_n(y) | \leq \frac{1}{n} \sum_{i=1}^n | h_{\tilde r_{i,(i)}}^{(i)}(x) - h_{\tilde r_{i,(i)}}^{(i)}(y) |.
    \end{equation*}
    The bound we have obtained above on $\sup_u | h_u^{(i)}(x) - h_u^{(i)}(y) |$ when $x$ and $y$ are sufficiently close to one another can now be used.
    This shows that for $x$ given, if $x,y \geq 0$, $|x-y| \leq \eta$, and $x \vee y \leq B$, we have
    \begin{equation*}
        \sup_{(x,y): |x-y| \leq \eta, 0 \leq x \leq B, 0 \leq y \leq B} |g_n(x) - g_n(y)| \leq \eta \frac{1}{n} \sum_{i=1}^n (\lambda_i^2 \| \psi'\|_\infty + B \mathrm{L}(n) \lambda_i^4 \|\psi\|_\infty).
    \end{equation*}
    Under Assumption \textbf{P2} and \textbf{F3-F4}, all the terms on the right hand side are bounded in $L_1$.
    We can now take expectations and get the result in $L_1$.

\end{proof}

\begin{lemma}%{\citet[Lemma 3.25]{el2018impact}}
    \label{lemma:gn_uniform_lln}
    Let us call $G_n(x) = \E (g_n(x))$.
    Let $B>0$ be given.
    For any given $x_0 \leq B$,
    \begin{equation*}
        g_n(x_0) - G_n(x_0) = \mo_{L_2} (1).
    \end{equation*}

    Under Assumptions \textbf{O1-O7}, \textbf{P1-P4} and \textbf{F1-F5}, we have
    \begin{equation*}
        \E^* \Big( \sup_{0 \leq x \leq B} |g_n(x) - G_n(x)| \Big) \to 0.
    \end{equation*}

\end{lemma}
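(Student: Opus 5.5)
The plan is to prove the two claims in order: first the pointwise $L_2$ statement by a variance computation that uses the asymptotic pairwise independence of Lemma \ref{lemma:rii_asymptotic}, then the uniform statement by a finite-grid argument built on the stochastic equicontinuity of Lemma \ref{lemma:gn_equicontinuity}.

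\textbf{Pointwise convergence.} Fix $x_0\in[0,B]$ and write
\[
g_n(x_0)=n^{-1}\sum_{i=1}^n h^{(i)}_{\tilde r_{i,(i)}}(x_0),
\qquad
h^{(i)}_u(x)=\{1+x\lambda_i^2\psi'(\prox(x\lambda_i^2\rho)(u))\}^{-1}.
\]
Since $\psi'\ge 0$, each summand lies in $[0,1]$. Therefore
\[
\var\{g_n(x_0)\}\le \frac1n+\frac1{n^2}\sum_{i\neq j}\cov\big(h^{(i)}_{\tilde r_{i,(i)}}(x_0),\,h^{(j)}_{\tilde r_{j,(j)}}(x_0)\big),
\]
and it suffices to show that the average of the off-diagonal covariances tends to $0$.

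For fixed $x_0$, the map $(u,\lambda)\mapsto\{1+x_0\lambda^2\psi'(\prox(x_0\lambda^2\rho)(u))\}^{-1}$ is bounded and continuous, because $\psi'$ is continuous and the proximal map is continuous in both $u$ and $c$. Lemma \ref{lemma:rii_asymptotic} states that $(\tilde r_{i,(i)},\lambda_i)$ and $(\tilde r_{j,(j)},\lambda_j)$ are asymptotically independent. By weak convergence, each covariance therefore tends to $0$.

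To pass from individual pairs to the average over $i\neq j$, I would use \textbf{F1} and \textbf{F5}. There are only finitely many possible laws of $(\epsilon_i,\lambda_i)$, so the pairs $(i,j)$ fall into finitely many types. Within a type, the joint law of the pair does not depend on the particular indices, by exchangeability of the leave-two-out construction. Hence each covariance depends only on the type of the pair. Each of these finitely many covariances tends to $0$, and the fraction of pairs of each type is bounded by $1$. This gives $g_n(x_0)-G_n(x_0)=\mo_{L_2}(1)$.

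\textbf{Uniform convergence.} Fix $\eta>0$ and a grid $x_1,\dots,x_K$ of $[0,B]$ with mesh $\eta$. For every $x\in[0,B]$, choosing the nearest grid point $x_k$,
\[
|g_n(x)-G_n(x)|\le \max_k|g_n(x_k)-G_n(x_k)|+\sup_{|x-y|\le\eta}|g_n(x)-g_n(y)|+\sup_{|x-y|\le\eta}|G_n(x)-G_n(y)|.
\]
I bound the three terms in expectation as follows.
\begin{itemize}
\item The first term: since $K$ is finite, the pointwise result gives expectation $\mo(1)$.
\item The second term: the deterministic bound in Lemma \ref{lemma:gn_equicontinuity} gives $\E^*\le \eta\,n^{-1}\sum_i\E(\lambda_i^2\|\psi'\|_\infty+B\mathrm L(n)\lambda_i^4\|\psi\|_\infty)\le \mC\eta$. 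This uses \textbf{P2}, \textbf{F2} (which gives $\mathrm L(n)\|\psi\|_\infty=\mO(1)$) and \textbf{F4}.
\item The third term: taking expectations of the same pathwise bound shows $|G_n(x)-G_n(y)|\le\mC\eta$.
\end{itemize}
Combining, $\limsup_n\E^*\sup_{[0,B]}|g_n-G_n|\le 2\mC\eta$, and letting $\eta\to0$ gives the claim.

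\textbf{Main obstacle.} The delicate step is making the covariance vanishing uniform over the $n(n-1)$ pairs. Lemma \ref{lemma:rii_asymptotic} is stated for a fixed pair $i\ne j$, so the type and exchangeability reduction above is essential. If that reduction proves insufficient, my fallback is to redo the leave-two-out characteristic-function argument of Lemma \ref{lemma:rii_asymptotic} directly for the bounded test functions $h^{(i)}$. Its error terms, such as $\sup_i$ bounds from Theorem \ref{th:l1oo} and the $\mo(1)$ third-moment sums, are already uniform in $i,j$, so this would yield $\sup_{i\neq j}|\cov|\to0$ directly.
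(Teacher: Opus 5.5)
Your proposal is correct and follows essentially the same route as the paper. In both arguments, the variance of $g_n(x_0)$ is shown to vanish by asymptotic pairwise independence of the $(\tilde r_{i,(i)},\lambda_i)$, reduced to finitely many pair types via \textbf{F1} and \textbf{F5}, and uniformity then follows from a finite grid, the equicontinuity bound of Lemma \ref{lemma:gn_equicontinuity}, and the three-term triangle inequality. Your use of the pathwise modulus bound to control $\E^*\sup_{|x-y|\le\eta}|g_n(x)-g_n(y)|$ is, if anything, slightly more careful than the paper's pointwise-in-$x$ phrasing of that step.
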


\begin{proof}
    Under assumptions \textbf{F1} and \textbf{F5}, we can divide the index set $\{1, \ldots, n\}$ into finite $K$ subsets $A_1, \ldots, A_K$, in which $(\bfx_i, \epsilon_i)_{i \in A_j}$ play a symmetric role.
    Hence, $\var(g_n(x_0))$ can be expressed as a sum of variances and covariances of finitely many functions of finitely many random variables $(\lambda_i, \tilde r_{i,(i)})$: for those random  variables, we just need to pick a representative in each subset $\{A_j\}_{j=1}^K$.

    Since $\psi'$ is Lipschitz, $g_n$ is an average of bounded continuous functions of the random variables of interest to us.

    Asymptotic pairwise independence of $(\lambda_i, \tilde r_{i,(i)})$'s implies that
    \begin{equation*}
        \var (g_n(x_0)) = \mo(1).
    \end{equation*}
    and therefore gives the first result.

    Now we pick $\epsilon > 0$.
    By the stochastic equicontinuity of $g_n$ and the bound in \eqref{eq_bound_gn}, we can find $x_1, \ldots, x_K$, independent of $n$, such that for all $x \in [0, B]$, there exists $l$ such that when $n$ is large enough,
    \begin{equation*}
        \E (|g_n(x) - g_n(x_l)|) \leq \epsilon.
    \end{equation*}
    Notice that
    \begin{equation*}
        |g_n(x) - G_n(x)| \leq |g_n(x) - g_n(x_l)| + |g_n(x_l) - G_n(x_l)| + |G_n(x_l) - G_n(x)|.
    \end{equation*}
    We immediately get
    \begin{equation*}
        \E^* \Big( \sup_{0 \leq x \leq B} |g_n(x) - G_n(x)| \Big) \leq 2 \epsilon + \E \Big( \sup_{1 \leq l \leq K} |g_n(x_l) - G_n(x_l)| \Big).
    \end{equation*}
    Because $K$ is finite, the fact that for all $l$, $g_n(x_l) - G_n(x_l) = \mo_{L_2}(1)$ implies that $\sup_{1 \leq l \leq K} |g_n(x_l) - G_n(x_l)| = \mo_{L_2}(1)$.
    In particular, if $n$ is sufficiently large, we have
    \begin{equation*}
        \E \Big(\sup_{1 \leq l \leq K} |g_n(x_l) - G_n(x_l)| \Big) \leq \epsilon.
    \end{equation*}
    This gives the result.
\end{proof}

\begin{lemma}
    \label{lemma:ctau_near_solution}
    Assume \textbf{O1-O7}, \textbf{P1-P4} and \textbf{F1-F5}.
    Recall that $c_\tau = n^{-1} \tr\{ (\bfS + \tau \I_p)^{-1} \}$.
    Call as before
    \begin{equation*}
        g_n(x) = \frac{1}{n} \sum_{i=1}^n \frac{1}{1 + x \lambda_i^2 \psi' \{\prox(x \lambda_i^2 \rho) (\tilde r_{i,(i)})\}}.
    \end{equation*}
    Then $c_\tau$ is a near solution of
    \begin{equation*}
        \frac{p}{n} - \tau x - 1 + g_n(x) = 0,
    \end{equation*}
    i.e. $p/n - \tau c_\tau - 1 + g_n(c_\tau) = \mo_{L_k} (1)$, when $3\alpha - 1/2 < 0$.

    Asymptotically, near solution of
    \begin{equation*}
        \delta_n(x) \triangleq \frac{p}{n}-\tau x-1+g_n(x)=0,
    \end{equation*}
    are close to solutions of
    \begin{equation*}
        \Delta_n(x)=\frac{p}{n}-\tau x-1+\E \{g_n(x)\}=0.
    \end{equation*}
    More precisely, call $T_{n, \epsilon} = \{ x: |\Delta_n(x)| \leq \epsilon \}$.
    We note that $T_{n, \epsilon} \subseteq(0, p /(n \tau)+\epsilon / \tau)$.
    For any given $\epsilon$, as $n \to \infty$, near solutions of $\delta_n(x) = 0$ belong to $T_{n, \epsilon}$ with high probability.

    Our assumptions concerning the possible distributions of $\epsilon_i$'s, specifically Assumption \textbf{F1}, imply that as $n \to \infty$, there is a unique solution to $\Delta_n(x) = 0$.

    Hence $c_\tau$ is asymptotically deterministic.

\end{lemma}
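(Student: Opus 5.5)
We need to prove Lemma \ref{lemma:ctau_near_solution}. The plan has four steps: an exact trace identity for $c_\tau$, replacement of the diagonal quantities by their leave-one-out versions, uniform approximation of $g_n$ by $G_n$, and uniqueness of the root of $\Delta_n$.

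\textbf{Step 1: the exact identity.} From $\tr\{(\bfS+\tau\I)^{-1}\bfS\} = p - n\tau c_\tau$, as in Lemma \ref{lemma:trace_approx}, we get
\[
\frac{p}{n} - \tau c_\tau = \frac{1}{n}\tr\{(\bfS+\tau\I)^{-1}\bfS\} = \frac{1}{n}\sum_{i=1}^n \psi'(R_i)\,\frac{1}{n}\bfx_i^\T(\bfS+\tau\I)^{-1}\bfx_i .
\]
Sherman--Morrison applied to $\bfS = \bfS_{(i)}' + n^{-1}\psi'(R_i)\bfx_i\bfx_i^\T$ turns each summand into
\[
1 - \frac{1}{1+\psi'(R_i)\,n^{-1}\bfx_i^\T(\bfS_{(i)}'+\tau\I)^{-1}\bfx_i}.
\]
Here $\bfS_{(i)}'$ uses the full-data residuals $R_j$, $j\neq i$. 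This gives the exact identity
\[
\frac{p}{n}-\tau c_\tau - 1 + \frac{1}{n}\sum_i \frac{1}{1+\psi'(R_i)\,q_i} = 0, \qquad q_i = n^{-1}\bfx_i^\T(\bfS_{(i)}'+\tau\I)^{-1}\bfx_i .
\]

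\textbf{Step 2: replacement by leave-one-out quantities.} Each summand $1/(1+\psi'(R_i) q_i)$ must be replaced by $h^{(i)}_{\tilde r_{i,(i)}}(c_\tau)$, in three substitutions.

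First, replace $\bfS_{(i)}'$ by $\bfS_i$. Theorem \ref{th:l1oo} gives $\sup_{j\neq i}|\tilde r_{j,(i)}-R_j| = \mO_{L_k}(\polyLog(n)n^{\alpha-1/2})$. Since $\psi'$ is $Cn^\alpha$-Lipschitz, the resolvent-difference argument of Lemma \ref{lemma:zeta_bound} yields $\sup_i|q_i - c_i| = \mO_{L_k}(\polyLog(n)n^{2\alpha-1/2})$.

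Second, replace $c_i$ by $\lambda_i^2 c_\tau$ using Corollary \ref{cor:ci_ctau_gap}.

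Third, replace $\psi'(R_i)$ by $\psi'\{\prox(c_\tau\lambda_i^2\rho)(\tilde r_{i,(i)})\}$. Use \eqref{eq:el_27} to pass from $R_i$ to $\prox(c_i\rho)(\tilde r_{i,(i)})$. Then use Lemma \ref{lemma:prox_c_derivative}, which bounds the $c$-derivative of the prox by $\|\psi\|_\infty$, together with $|c_i-\lambda_i^2c_\tau|$ to pass to $\prox(c_\tau\lambda_i^2\rho)$. Composing with the $Cn^\alpha$-Lipschitz function $\psi'$ costs a further factor $n^\alpha$.

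The map $t\mapsto 1/(1+t)$ is $1$-Lipschitz on $t\ge0$. Multiplying out, the accumulated error is $\mO_{L_k}(\polyLog(n)\,n^{3\alpha-1/2})$ up to bounded factors in $\lambda_i$, $\|\psi'\|_\infty$ and $c_\tau\le p/(n\tau)$. This is $\mo_{L_k}(1)$ exactly when $3\alpha-1/2<0$, which gives $\delta_n(c_\tau)=\mo_{L_k}(1)$. I expect this bookkeeping, and making the $n^{\alpha}$ factors line up to give precisely $3\alpha$, to be the main technical obstacle.

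\textbf{Step 3: from near solutions of $\delta_n$ to $T_{n,\epsilon}$.} Every quantity lies in the compact interval $[0,B]$ with $B=p/(n\tau)+1$. Lemma \ref{lemma:gn_uniform_lln} gives $\sup_{[0,B]}|g_n-G_n|\to 0$ in $L_1$, hence $\sup_{[0,B]}|\delta_n-\Delta_n|\to0$ in probability. Therefore any near solution of $\delta_n=0$ lies in $T_{n,\epsilon}$ with probability tending to one.

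The containment $T_{n,\epsilon}\subseteq(0,p/(n\tau)+\epsilon/\tau)$ follows from $0< g_n\le 1$. At $x=0$ we have $\Delta_n=p/n>0$, which rules out a neighbourhood of $0$. For $x>p/(n\tau)+\epsilon/\tau$ we have $\Delta_n\le p/n-\tau x<-\epsilon$.

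\textbf{Step 4: uniqueness of the root of $\Delta_n$.} $\Delta_n$ has $-\tau x$ strictly decreasing. The expectation $\E\{g_n(x)\}$ is computed via Lemma \ref{lemma:rii_asymptotic}: asymptotically $\tilde r_{i,(i)}\sim\epsilon_i+\lambda_i r Z_i$, whose density is unimodal by \textbf{F1}. Each term $x\mapsto \E[\partial_u\prox(x\lambda_i^2\rho)(W_i)]$ then equals $\E[f'_{i,r}\cdots]$-type integrals, shown nonincreasing in $x$ following the argument in \citet{el2018impact}. Hence $\Delta_n$ is strictly decreasing, its limit has a unique zero, and $T_{n,\epsilon}$ shrinks to that point as $\epsilon\to0$.

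Combining Steps 1--4, $c_\tau$ converges in probability to a deterministic limit, so $c_\tau$ is asymptotically deterministic.
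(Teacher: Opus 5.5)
Your proposal is correct and follows the paper's proof in substance. It uses the same Sherman--Morrison trace identity, the same chain of substitutions via Theorem~\ref{th:l1oo}, Corollary~\ref{cor:ci_ctau_gap} and Lemma~\ref{lemma:prox_c_derivative} giving the $\polyLog(n)\,n^{3\alpha-1/2}$ error, Lemma~\ref{lemma:gn_uniform_lln} to place near solutions in $T_{n,\epsilon}$, and the monotonicity argument behind Lemma~3.39 of \citet{el2018impact} for uniqueness, applied (as in the paper) to the limit of $\Delta_n$ rather than to $\Delta_n$ itself. The only difference is organizational: you apply Sherman--Morrison directly in dimension $p$ and pass $q_i\to c_i\to\lambda_i^2c_\tau$, whereas the paper first does the computation for the $(p-1)$-dimensional $\mc_{\tau,p}$ and then asserts that the same computation works for $c_\tau$, so your version is if anything more explicit.
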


\begin{proof}
    Note that $g_n(x) \leq 1$.

    Let $\delta_n(x)$ be the function
    \begin{equation*}
        \delta_n(x) = \frac{p}{n} - \tau x - 1 + g_n(x),
    \end{equation*}
    and $\Delta_n(x) = \E \{ \delta_n(x)\}$.
    Call $x_n$ a solution of $\delta_n(x_n) = 0$ and $x_{n,0}$ a solution of $\Delta_n(x_{n,0}) = 0$.

    Since $0 \leq g_n(x) \leq 1$, we have $x_n \leq p/(n \tau)$, for otherwise, $\delta_n(x) < 0$.
    The same arguments shows that if $x > (p/n + \epsilon) / \tau$, then $\delta_n(x) < - \epsilon$ and $x \notin T_{n, \epsilon}$.
    Similarly, near solutions of $\delta_n(x) = 0$ must be less or equal to $(p/n + \epsilon) / \tau$.

    \textbf{$\bullet$ Proof of the fact that $c_\tau$ is such that $\delta_n(c_\tau) = \mo(1)$}

    Recall that in the notation of Lemma \ref{lemma:trace_approx}, we have
    \begin{equation*}
        \frac{p-1}{n} - \tau \mc_{\tau,p} = \frac{1}{n} \tr(\I_n - \bfM).
    \end{equation*}
    According to \eqref{eq_tr_I_M},
    \begin{equation*}
        \frac{1}{n} \tr(\I_n - \bfM) = 1 - \frac{1}{n} \sum_{i = 1}^n \frac{1}{1 + \psi'(r_{i,[p]}) \frac{1}{n} \bfv_i^\T (\mathfrak{S}_p(i) + \tau \I_p)^{-1} \bfv_i}.
    \end{equation*}
    According to Lemmas \ref{lemma:trace_approx}, \ref{lemma:zeta_bound} and \ref{lemma:kn_bound}, we have
    \begin{equation*}
        \sup_i \Big| \frac{1}{n} \bfv_i^\T (\mathfrak{S}_p(i) + \tau \I_p)^{-1} \bfv_i - \lambda_i^2 \mc_{\tau,p} \Big| = \mO_{L_k} \Big(\frac{\polyLog(n)}{n^{1/2-2\alpha}}\Big).
    \end{equation*}
    When $x\geq 0$ and $y \geq 0$, $|1/(1+x) - 1/(1+y)| \leq |x-y| \wedge 1$.
    Hence, we have
    \begin{equation*}
        \begin{aligned}
            &\Big| \frac{1}{n} \sum_{i = 1}^n \frac{1}{1 + \psi'(r_{i,[p]}) \frac{1}{n} \bfv_i^\T (\mathfrak{S}_p(i) + \tau \I_p)^{-1} \bfv_i} - \frac{1}{n} \sum_{i=1}^n \frac{1}{1+ \psi'(r_{i,[p]}) \lambda_i^2 \mc_{\tau,p}} \Big| \\
            & \quad \leq \sup_{1 \leq i \leq n} \Big| \frac{1}{n} \bfv_i^\T (\mathfrak{S}_p(i) + \tau \I_p)^{-1} \bfv_i - \lambda_i^2 \mc_{\tau,p} \Big| \|\psi'\|_\infty .
        \end{aligned}
    \end{equation*}
    We conclude that
    \begin{equation*}
        p/n - \tau \mc_{\tau,p} - 1 + \frac{1}{n} \sum_{i = 1}^n \frac{1}{1 + \lambda_i^2 \mc_{\tau,p} \psi' (r_{i,[p]})} = \mO_{L_k} (n^{-1/2+2\alpha} \polyLog(n)).
    \end{equation*}
    Exactly the same computations can be done for $c_\tau$.
    We have established that
    \begin{equation*}
        p/n - \tau c_\tau - 1 + \frac{1}{n} \sum_{i = 1}^n \frac{1}{1 + \lambda_i^2 c_{\tau} \psi' (R_i)} = \mO_{L_k} (n^{-1/2+2\alpha} \polyLog(n)).
    \end{equation*}
    Now we have seen in Theorem \ref{th:l1oo} that
    \begin{equation*}
        \sup_i | R_i - \prox(c_i \rho) (\tilde r_{i,(i)}) | = \mO_{L_k} (n^{-1/2+\alpha} \polyLog(n)).
    \end{equation*}
    By the assumptions on $\psi'$, this implies that
    \begin{equation*}
        \sup_i | \psi'(R_i) - \psi' \{ \prox(c_i \rho) (\tilde r_{i,(i)}) \} | = \mO_{L_k} (n^{-1/2+\alpha} \polyLog(n)).
    \end{equation*}
    We have furthermore noted that $\sup_i |c_i - \lambda_i^2 c_\tau| = \mO_{L_k} (n^{-1/2+2\alpha} \polyLog(n))$ in Corollary \ref{cor:ci_ctau_gap}.
    Using Lemma \ref{lemma:prox_c_derivative}, we can write
    \begin{equation*}
        | \prox(c_i \rho) (\tilde r_{i,(i)}) - \prox(\lambda_i^2 c_\tau \rho) (\tilde r_{i,(i)}) | \leq \|\psi\|_\infty |c_i - \lambda_i^2 c_\tau|
    \end{equation*}
    and hence
    \begin{equation*}
        | \psi' \{ \prox (c_i \rho) (\tilde r_{i,(i)}) \} - \psi' \{ \prox(\lambda_i^2 c_\tau \rho) (\tilde r_{i,(i)}) \} | = \mO_{L_k} (\|\psi\|_\infty n^{-1/2+3\alpha} \polyLog(n)).
    \end{equation*}
    Gathering all these results, we have
    \begin{equation*}
        | \psi'(R_i) - \psi'( \prox(\lambda_i^2 c_\tau \rho) (\tilde r_{i,(i)}) ) | = \mO_{L_k} \{ (\|\psi\|_\infty + 1) n^{-1/2+3\alpha} \polyLog(n) \}.
    \end{equation*}
    So we have shown that $\delta_n(c_\tau) = \mO_{L_k} (n^{-1/2+3\alpha} \polyLog(n))$.

    \textbf{$\bullet$ Final details}

    By Lemma \ref{lemma:gn_uniform_lln}, we have $\delta_n(x) - \Delta_n(x) = \mo_p(1)$ for any given $x$.
    In our case, using the notation of this lemma, $B = p/(n \tau) + \eta/\tau$, for $\eta >0$ given.

    This implies that for any given $\epsilon > 0$,
    \begin{equation*}
        \sup_{x \in (0, p/(n \tau) + \eta/\tau)} | \delta_n(x) - \Delta_n(x) | < \epsilon,
    \end{equation*}
    with high probability when $n$ is large enough.
    Therefore, for any $\epsilon >0$, if $x_n$ is a solution to $\delta_n(x) = 0$,
    \begin{equation*}
        |\Delta_n(x_n)| < \epsilon \quad \text{with high probability}.
    \end{equation*}
    This means that $x_n \in T_{n, \epsilon}$ with high probability.
    The same reasoning applies for near solution of $\delta_n(x) = 0$, which must belong to $T_{n, \epsilon}$ as $n \to \infty$ with high probability for any given $\epsilon > 0$.
    Note that $T_{n, \epsilon}$ is compact because it is bounded and closed, using the fact that $G_n = \E(g_n)$ is continuous.

    If $T_{n,0}$ were reduced to a single point, we would have established the asymptotically deterministic character of $c_\tau$.

    Given our work on the asymptotic behavior of $\tilde r_{i,(i)}$ and our assumptions on $\epsilon_i$'s, we see that Lemma 3.39 from \citet{el2018impact} applies to $\lim_{n \to \infty} \Delta_n(x)$ under assumption \textbf{F1}.
    Therefore, $T_{n,0}$ is reduced to a single point as $n \to \infty$ and $c_\tau$ is asymptotically deterministic.

\end{proof}

\paragraph{Proof of Theorem \ref{thm:step_2}}

\begin{proof}[Proof of Theorem \ref{thm:step_2}]
    Notice that
    \begin{equation*}
        \frac{\partial}{\partial t} \prox(c \rho)(t)=\prox(c \rho)'(t)=\frac{1}{1+c \psi'(\prox(c \rho)(t))} .
    \end{equation*}
    Therefore, $\Delta_n$ can be written as
    \begin{equation*}
        \Delta_n(x)=\frac{p}{n}-\tau x-1+\frac{1}{n} \sum_{i=1}^n \E \big\{ \prox'(x \lambda_i^2 \rho) (\tilde r_{i,(i)}) \big\}.
    \end{equation*}
    Hence, the limiting root of $\Delta_n(x)=0$ satisfies the first fixed-point equation in Theorem \ref{thm:step_2}.
    Since Lemma \ref{lemma:ctau_near_solution} shows that $c_\tau$ is asymptotically arbitrarily close to this root, the first equation follows.
    The second equation comes from \eqref{eq_system_2}.
    Theorem \ref{thm:step_2} is proved, with $c_\rho(\kappa)$ being the limit of $c_\tau$.

\end{proof}
\end{document}